\title{Quantization of the Proca field on curved spacetimes}
\author{Maximilian Schambach}
\date{\today}
\setlist[enumerate]{noitemsep,label=(\roman*),itemsep=1ex}
\definecolor{linkblue}{cmyk}{1,.7,0,0}
\definecolor{linkred}{cmyk}{0,.92,.92,.27}
\definecolor{ownwhite}{cmyk}{1,1,1,1}
\numberwithin{equation}{section}
\numberwithin{figure}{section}	
\numberwithin{table}{section}	
\titleformat*{\section}{\sffamily \bfseries \Large}
\titleformat*{\subsection}{\sffamily \bfseries \large}
\titleformat*{\subsubsection}{\sffamily \bfseries \normalsize}
\let\nobreakitem\item
\let\@nobreakitem\@item
\patchcmd{\nobreakitem}{\@item}{\@nobreakitem}{}{}
\patchcmd{\nobreakitem}{\@item}{\@nobreakitem}{}{}
\patchcmd{\@nobreakitem}{\@itempenalty}{\@M}{}{}
\patchcmd{\@xthm}{\ignorespaces}{\nobreak\ignorespaces}{}{}
\patchcmd{\@ythm}{\ignorespaces}{\nobreak\ignorespaces}{}{}
\theoremstyle{break}
\newtheorem{definition}[theorem]{Definition}
\theoremstyle{nonumberplain}
\newtheorem{proof}{Proof:}[section]
\newtheorem{proof-idea}{Proof idea:}[section]
\definecolor{grey}{rgb}{0.925,0.925,0.925}
\DeclarePairedDelimiter\abs{\lvert}{\rvert}%
\DeclarePairedDelimiter\norm{\lVert}{\rVert}%
\let\oldabs\abs
\def\abs{\@ifstar{\oldabs}{\oldabs*}}
\let\oldnorm\norm
\def\norm{\@ifstar{\oldnorm}{\oldnorm*}}
\newcommand{\largewedge}{\mbox{\Large $\wedge$}}
\newcommand{\e}{\ensuremath{\mathrm{e}}}
\newcommand{\IC}{\mathbb{C}} 
\newcommand{\IR}{\mathbb{R}} 
\newcommand{\IN}{\mathbb{N}} 
\newcommand{\M}{\ensuremath{\mathcal{M}}}    
\newcommand{\N}{\ensuremath{\mathcal{N}}}    
\newcommand{\F}{\ensuremath{\mathcal{F}}}    
\newcommand{\A}{\ensuremath{\mathcal{A}}}    
\renewcommand{\AA}{\ensuremath{\mathscr{A}}}    
\newcommand{\W}{\ensuremath{\mathcal{W}}}    
\newcommand{\Wm}{\ensuremath{\mathcal{W}_m}}    
\newcommand{\Wdyn}{\ensuremath{\mathcal{W}^\text{dyn}}}    
\newcommand{\Wmdyn}{\ensuremath{\mathcal{W}^\text{dyn}_m}}    
\newcommand{\BUOmega}{\ensuremath{{\mathcal{BU}\big(\Omega^1_0(\M)\big)}}}    
\newcommand{\BUmz}{\ensuremath{\mathcal{BU}_{m,0}}}    
\newcommand{\BUmj}{\ensuremath{{\mathcal{BU}_{m,j}}}}    
\newcommand{\BUmjdyn}{\ensuremath{\mathcal{BU}_{m,j}^\text{dyn}}}    
\newcommand{\BUmzdyn}{\ensuremath{\mathcal{BU}_{m,0}^\text{dyn}}}    
\newcommand{\BU}{\ensuremath{\mathcal{BU}}}    
\newcommand{\K}{\ensuremath{\mathcal{K}}}    
\newcommand{\D}{\ensuremath{\mathcal{D}}}    
\newcommand{\Dzs}{\ensuremath{\mathcal{D}_{0}(\Sigma)}}    
\newcommand{\Dzsp}{\ensuremath{\mathcal{D}_{0}(\Sigma')}}    
\newcommand{\PPM}{\ensuremath{\mathscr{P}_{\mathcal{M}}}}    
\newcommand{\PPN}{\ensuremath{\mathscr{P}_{\mathcal{N}}}}    
\newcommand{\Green}[2]{\ensuremath{\mathcal{G}_m(#1,#2)}}    
\newcommand{\GreenM}[2]{\ensuremath{\mathcal{G}_{m,\mathcal{M}}(#1,#2)}}    
\newcommand{\GreenN}[2]{\ensuremath{\mathcal{G}_{m,\mathcal{N}}(#1,#2)}}    
\newcommand{\Ker}[1]{\ensuremath{\mathrm{ker}{}(#1)}}    
\newcommand{\IMG}[1]{\ensuremath{\mathrm{img}{}(#1)}}    
\newcommand{\Span}[1]{\ensuremath{\textrm{span}\big\{#1\big\}}}   
\newcommand{\Real}[1]{\ensuremath{\textrm{Re}\big(#1\big)}}   
\newcommand{\C}{\ensuremath{\mathcal{C}}}    
\newcommand{\J}{\ensuremath{\mathcal{J}}}    
\newcommand{\IMDYN}{\ensuremath{\mathcal{I}_m^\text{\,dyn}}}    
\newcommand{\IMJDYN}{\ensuremath{\mathcal{I}_{m,j}^\text{\,dyn}}}    
\newcommand{\IMZDYN}{\ensuremath{\mathcal{I}_{m,0}^\text{\,dyn}}}    
\newcommand{\IMCCR}{\ensuremath{\mathcal{I}_m^\text{\,CCR}}}    
\newcommand{\ICCR}{\ensuremath{{\mathcal{I}}_{\sim}^\text{\,CCR}}}    
\newcommand{\ICCRS}{\ensuremath{{\mathcal{I}}_{\sim}^{\text{\,CCR},\Sigma}}}    
\newcommand{\ICCRSP}{\ensuremath{{\mathcal{I}}_{\sim}^{\text{\,CCR},\Sigma'}}}    
\newcommand{\IM}{\ensuremath{\mathcal{I}_m}}    
\newcommand{\IMJ}{\ensuremath{\mathcal{I}_{m,j}}}    
\newcommand{\IMZ}{\ensuremath{\mathcal{I}_{m,0}}}    
\newcommand{\JMDYN}{\ensuremath{\mathcal{J}_m^\text{dyn}}}    
\newcommand{\JM}{\ensuremath{\mathcal{J}_m}}    
\newcommand{\KERN}[1]{\ensuremath{\textrm{ker}{\left(#1\right)}}}   
\newcommand{\SPAN}[1]{\ensuremath{\textrm{span}{\left\{#1\right\} }}}   
\newcommand{\TsS}{\ensuremath{T^*\Sigma}}    
\DeclareDocumentCommand\Gm{ m g }{%
	{\ensuremath{\mathcal{G}_m %
		\IfNoValueF {#2} {(#1, #2)}%
	}%
}
}
\DeclareDocumentCommand\Gmz{ m g }{%
	{\ensuremath{\mathcal{G}_{m_0} %
			\IfNoValueF {#2} {(#1, #2)}%
		}%
	}
}
\DeclareDocumentCommand\Em{ m g }{%
	{\ensuremath{\mathcal{E}_m %
			\IfNoValueF {#2} {(#1, #2)}%
		}%
	}
}
\DeclareDocumentCommand\Ez{ m g }{%
	{\ensuremath{\mathcal{E}_{0} %
			\IfNoValueF {#2} {(#1, #2)}%
		}%
	}
}
\newcommand{\Quotient}[2]{\ensuremath{\sfrac{#1}{#2}}}    
\newcommand{\Quotientscale}[2]{\ensuremath{{\scalebox{1.2}{\Quotient{#1}{#2}}}}}    
\newcommand{\comp}{\mathbin{\mathchoice
  {\xcirc\scriptstyle}
  {\xcirc\scriptstyle}
  {\xcirc\scriptscriptstyle}
  {\xcirc\scriptscriptstyle}
}}
\newcommand{\xcirc}[1]{\vcenter{\hbox{$#1\circ$}}}
\newcommand{\rhoz}{\ensuremath{\rho_{(0)}}}    
\newcommand{\rhon}{\ensuremath{\rho_{(n)}}}    
\newcommand{\rhod}{\ensuremath{\rho_{(d)}}}    
\newcommand{\rhodelta}{\ensuremath{\rho_{(\delta)}}}    
\newcommand{\Az}{\ensuremath{{A_{(0)}}}}    
\newcommand{\Azp}{\ensuremath{{A'_{(0)}}}}    
\newcommand{\An}{\ensuremath{{A_{(n)}}}}    
\newcommand{\Ad}{\ensuremath{{A_{(d)}}}}    
\newcommand{\Adelta}{\ensuremath{{A_{(\delta)}}}}    
\newcommand{\supp}[1]{\ensuremath{\text{\upshape{supp}}\left(#1\right)}}    
\newcommand{\dvolg}{d\textrm{vol}_g} 
\newcommand{\dvolh}{d\textrm{vol}_h} 
\newcommand{\detk}{\sqrt{\abs{k_{\mu\nu}}}}
\newcommand{\dvolk}{d\textrm{vol}_k} 
\newcommand{\name}[1]{\text{#1}}    
\newcommand{\formspace}{\;}
\renewcommand{\i}{\ensuremath{\text{\upshape{i}}}}
\newcommand{\Spac}{\ensuremath{\mathsf{Spac}}}  
\newcommand{\SpacCurr}{\ensuremath{\mathsf{SpacCurr}}}  
\newcommand{\Alg}{\ensuremath{\mathsf{Alg}}}  
\newcommand{\Ldens}{\ensuremath{\mathcal{L}}}    
\renewcommand{\vector}[1]{\ensuremath{\bm{#1}}}  
\newcommand\restr[2]{{
  \left.\kern-\nulldelimiterspace 
  #1 
  \vphantom{\big|} 
  \right|_{#2} 
  }}
\newbox\usefulbox
\def\getslant #1{\strip@pt\fontdimen1 #1}
\def\skoverline #1{\mathchoice
 {{\setbox\usefulbox=\hbox{$\m@th\displaystyle #1$}%
    \dimen@ \getslant\the\textfont\symletters \ht\usefulbox
    \divide\dimen@ \tw@
    \kern\dimen@
 \hspace{2pt}   \overline{\mkern -3mu \kern-\dimen@ \box\usefulbox\kern\dimen@ \mkern -1mu}\kern-\dimen@ }}
 {{\setbox\usefulbox=\hbox{$\m@th\textstyle #1$}%
    \dimen@ \getslant\the\textfont\symletters \ht\usefulbox
    \divide\dimen@ \tw@
    \kern\dimen@
 \hspace{2pt}  \overline{\mkern -3mu \kern-\dimen@ \box\usefulbox\kern\dimen@ \mkern -1mu}\kern-\dimen@ }}
 {{\setbox\usefulbox=\hbox{$\m@th\scriptstyle #1$}%
    \dimen@ \getslant\the\scriptfont\symletters \ht\usefulbox
    \divide\dimen@ \tw@
    \kern\dimen@
 \hspace{2pt} \overline{\mkern -3mu \kern-\dimen@ \box\usefulbox\kern\dimen@ \mkern -1mu}\kern-\dimen@ }}
 {{\setbox\usefulbox=\hbox{$\m@th\scriptscriptstyle #1$}%
    \dimen@ \getslant\the\scriptscriptfont\symletters \ht\usefulbox
    \divide\dimen@ \tw@
    \kern\dimen@
 \hspace{2pt}  \overline{\mkern -3mu \kern-\dimen@ \box\usefulbox\kern\dimen@ \mkern -1mu}\kern-\dimen@ }}%
 {}}
\newglossaryentry{causalfuturepast}
{
	name=\ensuremath{J^\pm (x)},
	description={causal future/past of point $x$ of manifold $\N$,},
	sort={Jpm}
}
\newglossaryentry{domainofdependence}
{
	name=\ensuremath{D (S)},
	description={domain of dependence of subset $S \subset \N$ of manifold $\N$,},
	sort={Domainofdependence}
}
\newglossaryentry{futurepastdomainofdependence}
{
	name=\ensuremath{D^\pm (S)},
	description={future/past domain of dependence of subset $S \subset \N$ of manifold $\N$,},
	sort={Domainofdependencefuturepast}
}
\newglossaryentry{TxN}
{
	name=\ensuremath{T_x \N},
	description={tangent space at point $x$ of manifold $\N$,},
	sort={Tangentspace}
}
\newglossaryentry{TsxN}
{
	name=\ensuremath{T^*_x \N},
	description={co-tangent space at point $x$ of manifold $\N$,},
	sort={TangentspacestarA}
}
\newglossaryentry{TN}
{
	name=\ensuremath{T \N},
	description={tangent bundle  of manifold $\N$,},
	sort={Tangentspacebundle}
}
\newglossaryentry{TsN}
{
	name=\ensuremath{T^*\N},
	description={co-tangent bundle of manifold $\N$,},
	sort={TangentspacestarB}
}
\newglossaryentry{vectorbundle}
{
	name=\ensuremath{\mathfrak{X}},
	description={smooth vector bundle, usually over the spacetime $\M$,},
	sort={X}
}
\newglossaryentry{whitneysum}
{
	name=\ensuremath{\mathfrak{X} \oplus \mathfrak{Y}},
	description={Whitney sum of the vector bundles $\mathfrak{X}$ and $\mathfrak{Y}$,},
	sort={XplusY}
}
\newglossaryentry{outerproductbundle}
{
	name=\ensuremath{\mathfrak{X} \boxtimes \mathfrak{Y}},
	description={outer product of the vector bundles $\mathfrak{X}$ and $\mathfrak{Y}$,},
	sort={XtimesY}
}
\newglossaryentry{gammax}
{
	name=\ensuremath{\Gamma(\mathfrak{X})},
	description={space of smooth sections of the vector bundle $\mathfrak{X}$,},
	sort={GammaX}
}
\newglossaryentry{gammaxzero}
{
	name=\ensuremath{\Gamma_0(\mathfrak{X})},
	description={space of compactly supported smooth sections of the vector bundle $\mathfrak{X}$,},
	sort={GammaXz}
}
\newglossaryentry{sigma}
{
	name=\ensuremath{\Sigma},
	description={Cauchy surface with Riemannian metric $h$,},
	sort={Sigma}
}
\newglossaryentry{sigmapm}
{
	name=\ensuremath{\Sigma^\pm},
	description={causal future/past of the Cauchy surface $\Sigma$,},
	sort={Sigmapm}
}
\newglossaryentry{DSz}
{
	name=\ensuremath{\Dzs},
	description={space of initial data with respect to Proca's equation,},
	sort={DSz}
}
\newglossaryentry{N}
{
	name=\ensuremath{\N},
	description={manifold of dimension $N$,},
	sort={N}
}
\newglossaryentry{M}
{
	name=\ensuremath{\M},
	description={globally hyperbolic, four-dimensional spacetime with metric $g$ and Cauchy surface $\Sigma$,},
	sort={M}
}
\newglossaryentry{nabla}
{
	name=\ensuremath{\nabla},
	description={Levi-Civita connection (covariant derivative),},
	sort={nabla}
}
\newglossaryentry{omega}
{
	name=\ensuremath{\Omega(\N)},
	description={exterior algebra of differential forms on the manifold $\N$,},
	sort={OmegaN}
}
\newglossaryentry{omegaoz}
{
	name=\ensuremath{\Omega^1_0(\N)},
	description={space of test one-forms on the manifold $\N$,},
	sort={Omegaonezero}
}
\newglossaryentry{omegapz}
{
	name=\ensuremath{\Omega^p_0(\N)},
	description={space of compactly supported $p$-forms on the manifold $\N$,},
	sort={Omegapzero}
}
\newglossaryentry{omegap}
{
	name=\ensuremath{\Omega^p(\N)},
	description={space of $p$-forms on the manifold $\N$,},
	sort={Omegap}
}
\newglossaryentry{omegapdelta}
{
	name=\ensuremath{\Omega^p_{\delta}(\N)},
	description={space of co-closed $p$-forms on the manifold $\N$,},
	sort={Omegapdelta}
}
\newglossaryentry{omegapd}
{
	name=\ensuremath{\Omega^p_{d}(\N)},
	description={space of closed $p$-forms on on the manifold $\N$,},
	sort={Omegapd}
}
\newglossaryentry{deltaomegap}
{
	name=\ensuremath{\delta \Omega^{p+1}(\N)},
	description={space of co-exact $p$-forms on the manifold $\N$,},
	sort={dOmegapafter}
}
\newglossaryentry{domegap}
{
	name=\ensuremath{d \Omega^{p-1}(\N)},
	description={space of exact $p$-forms on the manifold $\N$,},
	sort={dOmegap}
}
\newglossaryentry{delta}
{
	name=\ensuremath{\delta},
	description={interior derivative,},
	sort={delta}
}
\newglossaryentry{d}
{
	name=\ensuremath{d},
	description={exterior derivative,},
	sort={d}
}
\newglossaryentry{dalembert}
{
	name=\ensuremath{\square},
	description={d'Alembert operator,},
	sort={deltaafter}
}
\newglossaryentry{hodge}
{
	name=\ensuremath{*},
	description={Hodge star operator,},
	sort={Hodgedual}
}
\newglossaryentry{innerprod}
{
	name=\ensuremath{\langle \cdot , \cdot \rangle_{\N}},
	description={bilinear map on $\Omega^p(\N)$,},
	sort={innerprod}
}
\newglossaryentry{wedge}
{
	name=\ensuremath{\wedge},
	description={exterior product of differential forms,},
	sort={wedge}
}
\newglossaryentry{levicivita}
{
	name=\ensuremath{\epsilon_{\mu_1 \mu_2 ... \mu_N}},
	description={the fully anti-symmetric tensor of rank $N$ (Levi-Civita symbol) on a $N$-dimensional manifold,},
	sort={epsilon}
}
\newglossaryentry{dvolk}
{
	name=\ensuremath{\dvolk},
	description={volume element with respect to the metric $k$,},
	sort={dvolk}
}
\newglossaryentry{alg}
{
	name=\ensuremath{\Alg},
	description={category of unital *-algebras,},
	sort={Alg}
}
\newglossaryentry{algprime}
{
	name=\ensuremath{\Alg'},
	description={subcategory of unital *-algebras with injective morphisms,},
	sort={Algp}
}
\newglossaryentry{spaccurr}
{
	name=\ensuremath{\SpacCurr},
	description={category of globally hyperbolic spacetimes and background currents,},
	sort={Algp}
}
\newglossaryentry{bualgebra}
{
	name=\ensuremath{\BU(V)},
	description={Borchers-Uhlmann algebra over the vector space $V$,},
	sort={BUalgebra}
}
\newglossaryentry{bumj}
{
	name=\ensuremath{\BUmj},
	description={Borchers-Uhlmann algebra of the Proca field,},
	sort={BUalgebramj}
}
\newglossaryentry{phimf}
{
	name=\ensuremath{\phi_{m,j}(F) },
	description={quantum Proca field of mass $m$,},
	sort={phimf}
}
\newglossaryentry{Km}
{
	name=\ensuremath{K_m },
	description={BU-algebra homeomorphism,},
	sort={Km}
}
\newglossaryentry{Xim}
{
	name=\ensuremath{\Xi_m },
	description={BU-algebra homeomorphism,},
	sort={Xim}
}
\newglossaryentry{Lambdam}
{
	name=\ensuremath{\Lambda_m },
	description={BU-algebra homeomorphism,},
	sort={Lambdam}
}
\newglossaryentry{Gammamj}
{
	name=\ensuremath{\Gamma_{m,j,\varphi}},
	description={BU-algebra homeomorphism,},
	sort={Gammamj}
}
\newglossaryentry{Psimj}
{
	name=\ensuremath{\Psi_{m,j,\varphi}},
	description={BU-algebra homeomorphism,},
	sort={Psimj}
}
\newglossaryentry{involution}
{
	name=\ensuremath{^*},
	description={involution,},
	sort={involution}
}
\newglossaryentry{mass}
{
	name=\ensuremath{m},
	description={mass,},
	sort={mass}
}
\newglossaryentry{current}
{
	name=\ensuremath{j},
	description={external current,},
	sort={j}
}
\newglossaryentry{rhoz}
{
	name=\ensuremath{\rhoz},
	description={initial data mapping operator, pullback,},
	sort={rho1}
}
\newglossaryentry{rhod}
{
	name=\ensuremath{\rhod},
	description={initial data mapping operator, forward normal derivative,},
	sort={rho2}
}
\newglossaryentry{rhodelta}
{
	name=\ensuremath{\rhodelta},
	description={initial data mapping operator, pullback of the divergence,},
	sort={rho3}
}
\newglossaryentry{rhon}
{
	name=\ensuremath{\rhon},
	description={initial data mapping operator, forward normal,},
	sort={rho4}
}
\newglossaryentry{inclusionmap}
{
	name=\ensuremath{i},
	description={inclusion operator, usually $i:\Sigma \hookrightarrow \M$,},
	sort={inclusion}
}
\newglossaryentry{Epm}
{
	name=\ensuremath{E_m^\pm},
	description={retarded/advanced fundamental solution of $(\square + m^2)$,},
	sort={Epm}
}
\newglossaryentry{Em}
{
	name=\ensuremath{E_m},
	description={advanced minus retarded fundamental solution of $(\square + m^2)$,},
	sort={Epm2}
}
	\newglossaryentry{Ezcurly}
{
		name=\ensuremath{\mathcal{E}_0},
		description={propagator of $\square$,},
		sort={Epm2}
}
\newglossaryentry{Gpm}
{
	name=\ensuremath{G_m^\pm},
	description={retarded/advanced fundamental solution of $(\delta d + m^2)$,},
	sort={Gpm}
}
\newglossaryentry{Gm}
{
	name=\ensuremath{G_m},
	description={advanced minus retarded fundamental solution of $(\delta d + m^2)$,},
	sort={Gpm2}
}
\newglossaryentry{Gmcurly}
{
	name=\ensuremath{\mathcal{G}_m},
	description={propagator of $(\delta d + m^2)$,},
	sort={Gpm3}
}
\newglossaryentry{E}
{
	name=\ensuremath{E},
	description={real vector space of real-valued test one-forms,},
	sort={E}
}
\newglossaryentry{WmF}
{
	name=\ensuremath{W_m(F)},
	description={Weyl element for $F \in E$,},
	sort={WmF}
}
\newglossaryentry{Wmtilde}
{
	name=\ensuremath{\widetilde{\W}_m},
	description={span of Weyl elements,},
	sort={Wmtilde}
}
\newglossaryentry{Wm}
{
	name=\ensuremath{\W_m},
	description={Weyl algebra, for mass $m$, generated over $(E,\Gm{})$,},
	sort={Wm}
}
\newglossaryentry{CEGm}
{
	name=\ensuremath{\mathcal{C}(E,\Gm{})},
	description={convex set of states on the pre-symplectic space $(E,\Gm{})$,},
	sort={CEGm}
}
\newglossaryentry{omegaC}
{
	name=\ensuremath{\omega_C},
	description={state on the span $\widetilde{W}_m$ of Weyl elements corresponding to a state $C$ on the pre-symplectic space $(E,\Gm{})$,},
	sort={OmegaC}
}
\begin{document}

\begin{titlepage}
{\sffamily \centering
{
{\LARGE  {Universität Leipzig}}\\[2mm]
{\large {Fakultät für Physik und Geowissenschaften}}\\[2mm]
{\large {Institut für Theoretische Physik}}\\[2cm]
\textcolor{black}{\hrule height 1pt} \vspace{2.5mm}
}
{\LARGE \noindent
{Quantization of the Proca field in curved spacetimes\\  - \\A study of mass dependence and the zero mass limit}
\vspace{3.5mm}
\textcolor{black}{\hrule height 1pt}
\par}
{
\vspace{2cm}
{\huge \textbf {Masterarbeit}}\\[3mm]
{\large {zur Erlangung des akademischen Grades}}\\[2mm]
{\large {Master of Science (M. Sc.)}}\\[2mm]
{\large {eingereicht am 31. Mai 2016}}
}

\vspace{9cm}

  {
\begin{tabularx}{\textwidth}{lXr}
eingereicht von: & & Gutachter: \\
Maximilian Schambach & & Prof. Dr. Stefan Hollands \\
geboren am 16.02.1990 & & Dr. Jacobus Sanders \\
in Kassel & &
\end{tabularx}
  }
}
\end{titlepage}

\clearpage{\pagestyle{empty}\cleardoublepage}
\newpage
\clearpage
\pagenumbering{roman}
\onehalfspacing
\section*{Abstract}
In this thesis we investigate the Proca field in arbitrary globally hyperbolic curved spacetimes. We rigorously construct solutions to the classical Proca equation, including external sources and without restrictive assumptions on the topology of the spacetime, and investigate the classical zero mass limit. We formulate necessary and sufficient conditions for the limit to exist in terms of initial data. We find that the limit exists if we restrict the class of test one-forms, that we smear the distributional solutions to Proca's equations with, to those that are co-closed, effectively implementing a gauge invariance by exact distributional one-forms of the vector potential. In order to obtain also the Maxwell dynamics in the limit, one has to restrict the initial data such that the Lorenz constraint is well behaved. With this, we naturally find conservation of current and the same constraints on the initial data that are independently found in the investigation of the Maxwell field by Pfenning.\par
For the quantum problem we first construct the generally covariant quantum Proca field theory in curved spacetimes in the framework of Brunetti, Fredenhagen and Verch and show that the theory is local. Using the Borchers-Uhlmann algebra and an initial value formulation, we define a precise notion of continuity of the quantum Proca field with respect to the mass. With this notion at our disposal we investigate the zero mass limit in the quantum case and find that, like in the classical case, the limit exists if and only if the class of test one-forms is restricted to co-closed ones, again implementing a gauge equivalence relation by exact distributional one-forms. It turns out that in the limit the fields do not solve Maxwell's equation in a distributional sense. We will discuss the reason from different perspectives and suggest possible solutions to find the correct Maxwell dynamics in the zero mass limit.

\newpage
\clearpage{\pagestyle{empty}\cleardoublepage}
\newpage
\clearpage \singlespacing
\tableofcontents
\newpage
\clearpage{\pagestyle{empty}\cleardoublepage}
\newpage
\clearpage
\pagestyle{fancy}
\pagenumbering{arabic}


\onehalfspacing
\hypersetup{linkcolor=linkblue}

\section{Introduction and Motivation}\label{chpt:introduction}
As of now there exist two very well tested theories describing two highly diverse realms of the vast landscape of physical phenomena: That is, on the one hand the theory of gravitation, called \emph{General Relativity} (GR), and on the other hand the \emph{Standard Model of Particle Physics}, describing the remaining three of the four known fundamental interactions, namely the electromagnetic, weak and strong interaction. \par
GR is a \emph{classical} field theory and describes gravitational large scale phenomena, as for example observed in astronomy, and provides our current understanding of the universe as an increasingly expanding one originating from a Big Bang. It was introduced by \name{Einstein} in the early twentieth century and has since been intensively tested, in its scope of application, and confirmed to be valid up to astonishing accuracy\footnote{Just this year, 2016, one of the last predictions of GR lacking experimental confirmation, gravitational waves, have been confirmed by the LIGO Scientific and Virgo Collaboration \cite{grav_waves_detection}.}. GR is a generalization of Einstein's theory of special relativity, which itself generalizes the principles of Newton's classical mechanics and was needed to account for the experimentally confirmed principle that the speed of light has the same constant value for all observers, even when moving relatively to each other. This counter intuitive fact changed the physical perception of space and time. In GR, gravitation is indirectly described via the curvature of spacetime, a four-dimensional space consisting of the observed three spatial dimensions together with one dimension describing time. According to GR, mass and energy, which are considered equivalent, curve the initially flat spacetime, similar to a rubber surface being deformed when putting masses on it. The connection between the curvature of spacetime and mass, or, more precisely, between the Einstein tensor and the stress-energy tensor, is described by Einstein's field equations.\par
The Standard Model on the other hand is a \emph{quantum} field theory (QFT) and unifies the description of electromagnetic interaction (quantum electrodynamics), weak interaction (quantum flavourdynamics) and strong interaction (quantum chromodynamics). It describes short scale and subatomic physical phenomena and has also been confirmed to very high accuracy. At very short scales, matter behaves very differently to what we are used to from our own perception of our surrounding world. In particular, experiments regarding the spectra of excited gases, the photoelectric effect and the so called Rutherford scattering led to a quantum description of matter, which includes a probabilistic behavior of observables.  In QFT, matter is described by quantum \emph{fields}, fulfilling non-trivial commutation relations that were abstracted from the earlier theory of Quantum Mechanics, which was also introduced in the early twentieth century. Under certain circumstances\footnote{For example in the case of free fields or the asymptotic ``in" and ``out" states of scattering processes.}, one can think of these fields as particles, called elementary particles. Even though the Standard Model really is a theory of fields rather than particles, one often uses the two words equivalently. In that sense, there are two classes of particles, the \emph{fermions} (for example electrons or neutrinos) with half integer spin, making up most of the matter around us, and the \emph{bosons} (for example photons or gluons) with integer spin. In the Standard Model, the gauge bosons are the transmitter of the field interactions, for example the photon (the ``light particle") is the transmitter of the electromagnetic interaction. The corresponding quantum field is a quantized version of Maxwell's electromagnetic field describing classical electromagnetism.\par
Even though both theories by themselves have been tremendously successful, it is a priori clear that neither of them describes all of the physical phenomena. While in most (terrestrial) microscopic scenarios gravity, being almost {32} orders of magnitude weaker then the weak interaction, can be neglected, it should play a role in extreme astronomical situations, for example near black holes or at the very early times in the beginning of the universe. Moreover, since matter is responsible for gravitation and is itself made up of elementary particles, there should exist a quantum description of gravitation. Also, there are observed phenomena that both of the theories cannot explain: investigating the rotation speeds of galaxies one finds that the observable mass in the universe cannot account for the measured speeds alone. It turns out that actually about one third of the gravitational matter is \emph{not observable}, that is, only interacting via gravitation and none of the other known interactions. This so called \emph{dark matter} is not described by the Standard Model.
Physicists have therefore tried to find a \emph{Theory of Everything} (ToE) for example by \emph{unifying} the Standard Model and the theory of gravitation into one theory describing all known interactions. So far, all attempts on formulating a quantum description of gravity and unifying it with the Standard Model have failed, ranging from early work by Kaluza \cite{kaluza1921}, Klein \cite{Klein1926} and Bronstein \cite{bronstein1936} to work in the 1960's and 1970's where it became clear that GR, as a QFT, is non-renormalizable \cite{tHooft1973,tHooft1974,Deser1974}, that is, simplifying, it yields unphysical infinite measurement results which cannot be brought under control.  There are also some alternative approaches, not based on QFT, to find a ToE, for example theoretical frameworks collected under the name \emph{string theories}, that seemed promising at first but failed to provide a consistent description of physical phenomena. Even though some important physicists, most prominently \name{Edward Witten}, claim that string theory, or the different parts of the underlying M-theory, is the correct theory to describe all observable physical phenomena, there is a lot of criticism against it. Many critics, prominent figures being \name{Lee Smolin} and \name{Peter Woit}, claim that, while string theory provides elegant and beautiful ideas about physics and mathematics, it lacks a clear description as a theory. It is said to provide only some fragmental descriptions and ideas and, most severely, lacks to be a physical theory a priori as it cannot be falsified: As there is an infinite number of possibilities to compactify the excrescent dimensions\footnote{The mathematical formalism of M-theory only works in {10} rather then the four dimensions that we observe.} and there is no preferred principle, string theory provides a description of \emph{all} possible physical theories and can always be adapted when in conflict with experiments and therefore cannot provide any insight or predictions at all.\par
Instead of constructing a ToE, one might therefore take a step back and try to approximately describe scenarios in which quantum matter is under the influence of gravitation, or find a quantum description of gravitation without unifying it with the other fundamental interactions. In doing so, one hopes to find and understand basic underlying principles that a ToE ought to have. In this thesis we will investigate quantum fields in curved spacetimes, that is, quantum fields under the influence of gravitation, and neglect the influence that the quantum fields themselves have on gravitation. Early investigations of quantum fields in curved spacetimes include the investigation of the influence of an expanding universe on quantum fields and its connection to particle creation by Parker \cite{Parker1969}, the study of radiating black holes, most successfully by Hawking in the 1970's \cite{Hawking1975}, and the description of what is now called the Unruh effect \cite{Unruh1976}. Investigating quantum fields on curved rather than flat spacetime as one does in the Standard Model, one is forced to rethink the underlying principles of QFT. In particular, QFT usually relies heavily on symmetries of the underlying spacetime, such as time translation and Lorentz invariance, implementing the special relativistic effects of quantum mechanics. Searching for Hilbert space representations of the canonical commutation relations (CCR) together with a unitary representation of the Lorentz group, one finds many (unitary) equivalent possibilities and picks out a convenient one specified by a vacuum state - the unique state that is Lorentz invariant.  In a general spacetime, such a global symmetry is of course not present. Hence, the different possibilities of the Hilbert space representation are not equivalent anymore and there is no preferred vacuum state. One therefore takes a different approach and formulates the theory purely algebraically, independent of any Hilbert space representation. This algebraic description of quantum field theory (originally on flat spacetime), AQFT, was studied and axiomatized by \name{Haag} and \name{Kastler} \cite{HaagKastler1964}. From the algebraic description one can construct the corresponding Hilbert space formulation via the so called GNS construction, named after Gelfand, Naimark and Segal. Dyson \cite{dyson1972} realized that the algebraic approach to QFT is suitable for a generalization to account for general covariance. Together with a generalization of the spectrum condition, known as the microlocal spectrum condition \cite{Brunetti1996}, the framework has then been further refined, leading to a categorical formulation of Quantum Field Theory on Curved Spacetimes (QFTCS) by \name{Brunetti, Fredenhagen} and \name{Verch} \cite{Brunetti_Fredenhagen_Verch}. Details on the principles and development of QFTCS can be found in the literature \cite{wald_QFT,baer_ginoux_pfaeffle,wald_hollands_review}.\par
In this thesis we investigate the Proca field in curved spacetimes. The Proca field is a massive vector\footnote{That is, it has spin one.} field first studied by \name{Proca} \cite{proca_original} as the most straightforward massive generalization of the electromagnetic field. Since the photon associated with the electromagnetic field has mass zero, Proca's theory is also called \emph{massive electrodynamics}. On a classical level, it can be used experimentally to find a lower bound of the photon mass. Assuming Proca's equation to describe electromagnetism, one finds that the corresponding electric potential is of Yukawa rather then Coulomb type as it is for a massless photon. Experimentally, one finds at very high accuracy at many orders of length-scales that the electric potential is indeed of Coulomb and not of Yukawa type. With these and other sophisticated methods the photon mass has been determined to be smaller then $4 \times 10^{-51}$\,kg (see \cite[Section I.2]{jackson})\footnote{More recent studies even suggest the bound to be lowered to $1.5 \times 10^{-54}$\,kg \cite{photon_mass}.}. Besides the photon there are several other elementary particles that are described by vector fields. In fact, all gauge bosons in the Standard Model are vector bosons. While the photon and the gluons\footnote{The gluons are the transmitters of the strong interaction.} are massless, the gauge bosons of the weak interaction, the W- and Z-bosons, are massive vector fields and may be described using Proca's equation\footnote{Of course, this is not the case in the Standard Model, as the gauge bosons are by construction massless and only appear massive by their interaction with the Higgs field.}. Further examples of massive vector fields include certain mesons, for example the $\omega$- or the $\varphi$-meson. It is thus desirable to study Proca's equation in a curved spacetime. This was first done by Furlani \cite{FURLANI} in the case of vanishing external sources and under a restrictive assumption on the topology of the spacetime\footnote{In particular, Furlani assumes the Cauchy surface of the spacetime to be compact.}. We are going to formulate the theory as general as possible, including external sources and without topological restrictions. More importantly, we are interested in the zero mass limit of the theory. As we shall see at many points, the massive (Proca-) and the massless (Maxwell-) theory differ enormously in detail. Most severely, the massless theory possesses a gauge invariance while the massive theory does not. While there have been several studies regarding the Maxwell field in curved spacetimes \cite{Sanders,pfenning,dimock1992quantizedEM,Dappiaggi2012}, there are questions regarding locality and the choice of gauge that remain open for discussion. In flat spacetimes, these questions do not arise as the topology is trivial, therefore, in particular, all closed $p$-forms are exact, as we will discuss later in more detail. In curved spacetimes, choosing the vector potential as the fundamental physical entity rather than the field strength tensor, it is a priori not clear if the gauge invariance by closed distributional one-forms is too general to account for all physical phenomena. As argued in \cite{Sanders}, implementing a gauge invariance by closed distributional one-forms rather than exact ones, one cannot capture experimentally established phenomena like the Aharonov-Bohm effect. Furthermore, one finds that the quantum Maxwell theory is not local, as opposed to the quantum Proca theory, and one might look for alternative implementations of locality in the theory. Recent proposals, with emphasis on the question of formulating the same physics in all spacetimes, are discussed in \cite{Fewster2012, Fewster2016}.
One reason to look at the massless limit of the Proca theory is to find answers to these questions naturally arising in the limiting procedure. In the zero mass limit, we indeed find a natural gauge invariance by exact rather than closed distributional one-forms. Questions concerning locality in the limit remain open for further investigations as they are not discussed in detail in this thesis. As a first step, our investigation will be purely based on observables, states are not included in the description. It should in principle be possible to extend the presented framework to include states. Throughout this thesis we work in natural units, that is, in particular we set $\text c = 1 = \hbar$.\par
The structure of this thesis is as follows:
In Chapter \ref{chpt:preliminiaries} we will recap some basic mathematical notations and definitions. Most of the discussion is kept rather brief as it is expected that the reader is familiar with the basics of differential geometry as it is the mathematical framework of GR. We will recap some notions regarding the spacetime geometry and vector bundles. In a bit more detail, we discuss differential forms as it is usually not part of the curriculum for physicists and the used formulation relies heavily on it. Furthermore, we give a brief overview of hyperbolic partial differential operators and their connection to global hyperbolicity. We conclude the first chapter by introducing some basics of category theory and a summary of the chosen sign conventions. \par
In Chapter \ref{chpt:classical} we will investigate the classical problem. We will find solutions to the classical Proca equation including external sources as a generalization of the work by Furlani \cite{FURLANI} by decomposing Proca's equation into a hyperbolic differential equation and a Lorenz constraint. We will then solve the hyperbolic equation and implement the constraint by restricting the initial data. As a foundation of understanding the quantum problem, we will investigate the classical zero mass limit. As it turns out, the existence of the zero mass limit in the quantum case is deeply connected to the classical one. \par
In Chapter \ref{chpt:quantum} we study the quantum problem. First, we will construct the generally covariant quantum Proca field theory in curved spacetimes in the categorical framework of Brunetti, Fredenhagen and Verch and show that the obtained theory is local, as opposed to the quantization of Maxwell's field (see \cite{Dappiaggi2012,Sanders}). Using the Borchers-Uhlmann algebra, we will define a appropriate notion of continuity of quantum fields with respect to the mass and will ultimately investigate the zero mass limit of the quantum Proca theory. It turns out that in the zero mass limit, the quantum fields do not solve Maxwell's equation in a distributional sense. We will discuss the reason from several perspectives and possible solutions. \par
A conclusion and outlook is presented in Chapter \ref{chpt::conclusion}. Our previous attempts that we formulated using a C*-algebraic approach to find a notion of continuity of the quantum Proca theory are presented in Appendix \ref{app:weyl-algebra}. We will discuss that this approach is not suited for the investigation of the zero mass limit but nevertheless present the results obtained along the way as they contain some mathematical results on continuous families of pre-symplectic spaces that have to our knowledge not been discussed in the literature. For clarity, some of the mathematical work needed along the investigation is put in Appendix \ref{app:lemmata} - despite their crucial importance for the results. Finally, a list of used symbols and references can be found at the very end of this thesis.

\section{Preliminaries}\label{chpt:preliminiaries}
In this chapter we will introduce some of the mathematical background and notation needed for this thesis. In particular, we will shortly introduce the differential geometric description of spacetime in Section \ref{sec:spacetime_geometry} and give an introduction to the notion of global hyperbolicity and its connection to Green- and normally-hyperbolic operators in Section \ref{sec:global_hyperbolicity}. In a bit more detail, we will introduce the notion of differential forms and give explicit definitions, also in terms of an index based notation, in Section \ref{sec:differential_forms}. For completeness, in Section \ref{sec:cat-theory}, we present basic definitions of category theory. The reader familiar with these topics can safely skip this chapter and refer to it when interested in the chosen conventions.
%
%
%
%
%
%
%
\subsection{Spacetime geometry}\label{sec:spacetime_geometry}
In GR, the universe is mathematically described as a four dimensional \emph{spacetime}, consisting of a smooth, four dimensional manifold \gls{M} (assumed to be Hausdorff, connected, oriented, time-oriented and para-compact) and a Lorentzian metric $g$. We will assume the signature of the Lorentzian metric $g$ to be $(-,+,+,+)$. The Levi-Civita connection on $(\M,g)$ is as usual denoted by \gls{nabla}.
Throughout this thesis, we treat spacetime as fixed, implementing a gravitational background determined classically by Einstein's field equations. Hence, we neglect any back-reaction of the fields on the metric, both in the quantum and the classical case. In that sense, we treat the fields as \emph{test fields}.\par
For the basic mathematical theory regarding Lorentzian manifolds, we refer to the literature: An introduction to the topic with an emphasis on the physical application in GR is for example given in \cite{wald_GR} and \cite{carroll_spacetime-and-gr}.
Here, we will shortly recap the notion of a tangent space and tangent bundle and generalize to the notion of a vector bundle which we will use in the general description of normally hyperbolic operators and differential forms.
In the following, we generalize the setting to an arbitrary smooth manifold $\N$ of dimension $N$ with either Lorentzian or Riemannian metric $k$.\par
A \emph{tangent vector} $v_x$ at point $x \in \N$ is a linear map $v_x : C^\infty(\N , \IR) \to \IR$ that obeys the Leibniz rule, that is, for $f,g \in C^\infty (\N,\IR)$ it holds $v_x(fg) = f(x)v_x(g) + v_x(f)g(x)$.
We define the \emph{tangent space} \gls{TxN} of $\N$ at $x$ as the real $N$-dimensional vector space of all tangent vectors at point $x$.
The disjoint union of all tangent spaces is called the \emph{tangent bundle} \gls{TN} of $\N$ and is itself a manifold of dimension $2N$. A \emph{vector field} is a map $v: \N \to T\N$ such that $v(x) \in T_x\N$.
The respective dual spaces, that is the space of all linear functionals, the \emph{co-tangent space} and the \emph{co-tangent bundle}, are denoted by \gls{TsxN} and \gls{TsN} respectively.\par
For Lorentzian manifolds, we call a tangent vector $v$ at $x \in \N$ \emph{timelike} if $k_{\mu \nu} v^\mu v^\nu < 0$, \emph{spacelike} if $k_{\mu \nu} v^\mu v^\nu > 0$ and \emph{null} (or lightlike) if $k_{\mu \nu} v^\mu v^\nu = 0$. At every point $x \in \N$, we define the set of all \emph{causal}, that is, either timelike or null, tangent vectors in the tangent space at $x$. This set is called the \emph{light cone} at $x$ and it is split up into two distinct parts, one that we call the future light cone, and one that we call the past light cone at $x$. Since we assume the manifold to be time orientable, there exists a smooth vector field $t$ that is timelike at every $x \in \N$. Given this time orientation, we identify the future (past) light cone with the set of tangent vectors $v \in T_x\N$ such that $k_{\mu\nu} v^\mu t^\nu < 0$ (respectively $> 0$). Therefore, a tangent vector $v$ at $x$ is called \emph{future directed} (past directed) if it lies in the future (past) light cone at $x$.\\
Accordingly, a curve $\gamma : I \to \N$ is called timelike (spacelike, null, causal, future or past directed) if its tangent vector $\dot{\gamma}$ is timelike (spacelike, null, causal, future or past directed) at every $x \in \N$.  For every point $x \in \N$ we define the \emph{causal future/past} \gls{causalfuturepast} of $x$ as the set of all points $q \in \N$ that can be reached by a future directed causal curve originating in $x$. For any subset $S \in \N$ we define $J^\pm (S) = \bigcup_{x \in S} J^\pm(x)$ and $J(S) = J^+(S) \cup J^- (S)$. Finally, the future/past domain of dependence $\gls{futurepastdomainofdependence}$ of a set $S \subset \N$ is the set of all points $x \in \N$ such that every inextendible causal curve through $x$ intersects $S$. The \emph{domain of dependence} \gls{domainofdependence} of $S$ is the union of the future and past domain of dependence of the set $S$.
For more details on the causal structure of spacetime we refer to for example \cite[Chapter 8]{wald_GR}.\par
The notion of tangent bundles can be generalized to the notion of a vector bundle. Instead of ``attaching'' the vector spaces $T_x \N$ to every point $x$ of the manifold, we allow for the occurrence of arbitrary vector spaces, called the fibres of the vector bundle. A vector bundle then consists of the base manifold, in our case $\N$, the total space and a map $\pi$ from the total space to the base manifold, that can be locally trivialized. At each point of the base manifold, the pre-image of $\pi$ is the fibre of the vector bundle. To be precise we define, following \cite{rudolph_schmidt}:
\begin{definition}[Vector bundle]
	A smooth \emph{vector bundle} over $\N$ is a tuple $\gls{vectorbundle} = (E,\N, \pi)$, where $E$ is a smooth manifold and $\pi : E \to \N$ is a smooth surjective map satisfying:
	\begin{enumerate}
		\item For every $x \in \N$, $\pi^{-1}(x)$ is a vector space, called the fibre of the bundle at point $x$.
		\item There exists a finite dimensional vector space $F$, an open covering $\left\{ U_\alpha\right\}_\alpha$ of $\N$ and a family of diffeomorphisms $\chi_\alpha : \pi^{-1}(U_\alpha) \to U_\alpha \times F$ such that for all $\alpha$ it holds $\chi_\alpha \comp \text{pr}_1 =  \restr{\pi}{\pi^{-1}(U_\alpha)}$ and for every $x \in \N$ the map $\text{pr}_2 \comp \restr{\chi_\alpha}{\pi^{-1}(x)} : \pi^{-1}(x) \to F$ is linear.
	\end{enumerate}
\end{definition}
Here, the maps $\text{pr}_1$ and $\text{pr}_2$ denote the projection onto the first respectively second component of an element in $U_\alpha \times F$. The properties graphically mean that \emph{locally}, the vector bundle ``looks like" the product of the base manifold with the fibre. The tuples $(U_\alpha, \chi_\alpha)$ are called \emph{local trivializations} of the vector bundle. Like for vector spaces, we can define the sum and product of vector bundles, by using the according vector space definitions on the fibres of the bundle.\par
Let $\mathfrak{X}, \mathfrak{Y}$ be vector bundles over $\N$ with fibres $X_x$ and $Y_x$ at $x \in \N$. We denote by \gls{whitneysum} the \emph{Whitney sum} of the two vector bundles - the vector bundle over $\N$ whose fibres are given by the direct sum $X_x \oplus Y_x$. Similarly, one obtains the local trivializations of the Whitney sum from the trivializations of $\mathfrak{X}, \mathfrak{Y}$ and direct sums.\par
Accordingly, let $\mathfrak{X}, \mathfrak{Y}$ be vector bundles over $\N$ and $\widetilde{\N}$, with fibres $X_x$ and $Y_{\tilde{x}}$ at $x \in \N$, $\tilde{x} \in \widetilde{\N}$ respectively. We denote by \gls{outerproductbundle} the \emph{outer product} of the two vector bundles - the vector bundle over $\N \times \widetilde{\N}$ whose fibres are given by the tensor products $X_x \otimes Y_x$. Similarly, one obtains the local trivializations of the outer product from the trivializations of $\mathfrak{X}, \mathfrak{Y}$ and tensor products. \par
Finally, we generalize the notion of vector fields:
\begin{definition}[Sections of vector bundles]
Let $\mathfrak{X}=(E,\N,\pi)$ be a vector bundle with fibres $X_x=\pi^{-1}(x)$ at $x \in \N$. A \emph{smooth section} of the vector bundle is a smooth map $\gamma : \N \to E$ such that $\gamma(x) \in X_x$ for all $x \in \N$. The \emph{vector space of smooth sections} of $\mathfrak{X}$ is denoted by \gls{gammax}, the one with compactly supported sections is as usual denoted by \gls{gammaxzero}.
\end{definition}
In this language, a vector field $v$ is just a smooth section of the tangent bundle of a manifold, $v \in \Gamma(T\N)$. One may therefore identify the physical notion of fields with smooth sections of vector bundles. This point of view will be used to define the notion of differential forms in Section \ref{sec:differential_forms}.\par
In this thesis, we usually are interested in complex valued functions (or sections in general). Therefore, we view all occurring vector bundles as complex, in the sense that we take two distinct copies of the vector bundle, one representing the real, one the imaginary part of the bundle. A section of that complex vector bundle is just a pair of two sections of the real vector bundle under consideration. From now, if not specified explicitly, we will view all vector bundles, including the tangent bundle $T\N$, as complex vector bundles. Accordingly, smooth sections of those bundles will in general be complex valued.
%
%
%
%
%
%
%
%
%
%
%
\subsection{Partial differential operators and global hyperbolicity}\label{sec:global_hyperbolicity}
When dealing with field theories, whether classical or quantum, one is, of course, interested in the dynamics of the fields. These are usually described by some partial differential equation, often of second order. In the following, we give a short introduction to the theory of certain partial differential operators acting on smooth sections of a vector bundle over the spacetime $(\M,g)$.\par
As we have seen, these smooth sections are generalizations of the notion of a field.  In the following, let $\mathfrak{X}$ denote a vector bundle over the manifold $\M$ and let $P: \Gamma(\mathfrak{X}) \to \Gamma(\mathfrak{X})$ be a partial differential operator acting on smooth sections of the bundle. As in the case of flat spacetime, we are interested in basic questions regarding the differential equation $Pf = j$, for example: Can we formulate a (globally) well posed initial value problem? Does the differential equation possess (unique) solutions? To answer these questions, we will now restrict to the case where $P$ is linear and of second order, as it is often the case in physical applications. One can show that for a certain class of such operators, namely normally hyperbolic partial differential operators of second order, we can rigorously treat these questions.\par
Choosing local coordinates $x=(x_\mu)$ on $\M$ and a local trivialization of $\mathfrak{X}$, a linear partial differential operator of second order is called \emph{normally hyperbolic} if it takes the form
\begin{align}
	P = - \sum_{\mu,\nu} g^{\mu \nu} \partial_\mu \partial_\nu + \sum_{\alpha} A_\alpha (x) \partial_\alpha + B(x) \formspace,
\end{align}
where $A_\alpha$ and $B$ are matrix-valued coefficients depending smoothly on the coordinate $x$ (see. \cite[Chapter 1.5]{baer_ginoux_pfaeffle}). One can also formulate a coordinate independent definition in terms of the principal symbol, which we will not present here (see for example \cite[Section 1.5]{baer_ginoux_pfaeffle} ). \par
Normally hyperbolic operators possess unique fundamental solutions (see for example the fundamental solutions to the wave operator as noted in Lemma \ref{lem:fundamental_solution_wave_operator}). These fundamental solutions fulfill certain physically important properties, such as a finite propagation speed smaller than the speed of light. Furthermore, specifying the initial data on some space-like hypersurface $X \in  \M$ specifies a unique solution on the domain of dependence $D(X)$ of $X$. Due to these properties, one often calls normally hyperbolic operators just \emph{wave operators}. But to state a \emph{globally} well posed initial value problem for a wave equation, we need to restrict the class of spacetimes $\M$ under consideration to those that possess space-like hypersurfaces $X$ whose domain of dependence is all of the spacetime, $D(X) = \M$. This leads to the notion of \emph{globally hyperbolic} spacetimes:
\begin{definition}[Global Hyperbolicity]
	A spacetime $\M$ is called \emph{globally hyperbolic} if there exists a Cauchy surface $\gls{sigma}$ in $\M$.
\end{definition}
\noindent Here, a Cauchy surface is a space-like hypersurface $\Sigma \subset \M$ such that every inextendible causal curve $\gamma$ intersects $\Sigma$ exactly once. One can show that Cauchy surfaces fulfill the desired property mentioned above, that is,  $D(\Sigma) = \M$. Furthermore, one can show that any globally hyperbolic spacetime $\M$ is foliated by a one-parameter family $\left\{ \Sigma_t \right\}_t$ of Cauchy surfaces (see for example \cite[Theorem 8.3.14]{wald_GR}). \par
In physical applications, one often finds the dynamics of a theory to be described by wave operators. Most prominently, the Klein-Gordon operator $(\square + m^2)$ acting on scalar fields, or its generalization, the wave operator acting on differential forms introduced in Section \ref{sec:differential_forms}, is normally hyperbolic. But there are also important physical field theories that are not described by wave operators, such as the Proca field treated in this thesis. It turns out that the Proca operator (see Definition \ref{def:proca_operator}) is a so called \emph{Green-hyperbolic} operator. These are again partial differential operators $P$ of second order acting on smooth sections of some vector bundle, such that $P$ (and its dual $P'$) posses fundamental solutions. Obviously, normally hyperbolic operators are Green-hyperbolic, but the opposite is not true. One can generalize some results obtained by studying normally hyperbolic operators to Green-hyperbolic operators. An introduction to this topic is given in \cite{baer_green-hyperbolic}, where it is also shown that the Proca operator is Green-hyperbolic but not normally hyperbolic.\par
For our application, the notion of Green-hyperbolicity is not of vast importance, but it is worth mentioning that there exists a more detailed mathematical background on the treatment of such operators.
A very detailed description of normally hyperbolic operators on Lorentzian manifolds, including proofs of the above statements regarding the initial value problem and the existence of fundamental solutions, is given in \cite{baer_ginoux_pfaeffle}, also with an overview of quantization. A shorter introduction to the topic is for example treated in \cite{baer-ginoux_classical-and-quantum-fields}, also with a description of quantization.
%
%
%
%
%
%
%
%
%
%
%
%
\subsection{Differential forms}\label{sec:differential_forms}
Differential forms provide an elegant, coordinate independent description of calculus on smooth manifolds. In particular, they generalize the notion of line- and volume-integrals that are known from analysis. Differential forms play a remarkable role in physics, as one can argue that they indeed describe fundamental physical entities. As an example, instead of viewing a classical force as a vector, one can think of it, more closely related to experiments, as a differential one-form that assigns a scalar to a tangent vector of a curve. This scalar is the (infinitesimal) work associated with the force along the curve. Also, differential forms allow for an elegant geometric description of field theories, for example the Maxwell and Proca field theories that we encounter in this thesis. In Maxwell's classical theory of electromagnetism, instead of viewing the electric and magnetic field (which are conceptually just forces) as the fundamental physical entities, one introduces the \emph{vector potential}, a one-form, consisting of the scalar electric potential and the vector potential associated with the magnet field. Experiments like the Aharonov-Bohm experiment allow for an interpretation of the vector potential as the fundamental physical object, rather than the associated electromagnetic field. \\
Even more fundamentally, the two main theories of physics, General Relativity and the Standard Model of particle physics, are field theories. They are deeply connected to a geometric interpretation and can be elegantly described using differential forms. \par
Despite of all this, differential forms are usually not part of the standard curriculum of physicists. We shall therefore introduce the basic aspects and definitions regarding differential forms that are used in this thesis. For a more detailed introduction we refer to the literature: For example \cite[Chapter 2 and 4]{rudolph_schmidt} or \cite[Appendix B]{wald_GR} provide introductions to the topic.\par
In the following, let $\N$ denote a smooth $N$-dimensional manifold, assumed to be Hausdorff, connected, oriented and para-compact, with either Lorentzian or Riemannian metric $k$ and Levi-Civita connection $\nabla$. For a Lorentzian manifold we use the sign convention $(-,+,\dots,+)$ of the metric $k$. The number of negative eigenvalues of $k$ is denoted by $s$, so $s=0$ for a Riemannian manifold and, in our convention, $s=1$ for a Lorentzian manifold.
Later, we will specify to a four dimensional (globally hyperbolic) spacetime consisting of a four dimensional manifold $\M$ with Lorentzian metric $g$ and Cauchy surface $\Sigma$ with induced Riemannian metric $h$.
We define:
\begin{definition}[Differential form]
	Let $p\in \{0,1,\dots,N\}$. A \emph{differential form} $\omega$ of degree $p$, or $p$-form for short, on the manifold $\N$ is an anti-symmetric tensor field of rank $(0,p)$. That is, at every point $x \in \N$, $\omega_x$ is an anti-symmetric multi-linear map
	\begin{align}
	\omega_x : \underbrace{T_x \N \times T_x \N \times \cdots \times T_x \N}_{p\text{-times}} \to \IR \formspace.
	\end{align}
	We denote the vector space\footnote{Naturally, addition and scalar multiplication are defined point-wise.} of $p$-forms on $\N$ by $\gls{omegap}$, the space with compactly supported ones by \gls{omegapz}.
\end{definition}
As an example, a zero-form $f \in \Omega^0(\N)$ is just a $C^\infty$-function from $\N$ to $\IR$, hence we can identify $\Omega^0(\N) = C^\infty (\N, \IR)$. A one-form $A \in \Omega^1(\N)$ is nothing more than a co-vector field and in a physical context usually denoted in local coordinates by $A_\mu$. Note, that alternatively one can directly define a $p$-form as a smooth section of the $p$-th exterior product of the co-tangent bundle and hence identify $\Omega^p(\N) = \Gamma \big( \largewedge^k T^*\N\big)$. As mentioned in Section \ref{sec:spacetime_geometry}, we view the tangent bundle as a complex bundle. Therefore, the sections of that bundle will be complex valued functionals. In that fashion, we will usually view the spaces $\Omega^p(\N)$ as complex valued differential forms.\par
Next we define the basic operations, besides addition and scalar multiplication, that one can perform on differential forms.
\begin{definition}[Exterior product]
	Let $A \in \Omega^p(\N)$ be a $p$-form and  $B\in \Omega^q(\N)$ a $q$-form on $\N$. \\
	The \emph{exterior product} $\gls{wedge}:\Omega^p(\N) \times \Omega^q(\N) \to \Omega^{p+q} (\N)$ is defined by
	\begin{align}
	(A \wedge B)_{\mu_1\dots\mu_p \nu_1\dots\nu_q} = \frac{(p+q)!}{p!q!}\, A_{[\mu_1 \dots \mu_p} B_{\nu_1\dots\nu_q]} \formspace,
	\end{align}
	where the anti-symmetrization of a tensor $T$ is given through
	\begin{align}
	T_{[\mu_1\dots\mu_p]} = \frac{1}{p!} \sum\limits_{\sigma\in S_N }\textrm{sgn}(\sigma) T_{\sigma(\mu_1)\dots\sigma(\mu_p)} \formspace.
	\end{align}
\end{definition}
Here, $S_N$ denotes the symmetric group\footnote{Usually the symmetric group is defined as the set of permutations of $\{1,2,\dots,N\}$ but we chose the index to run over $\{0,1,\dots,N-1\}$, identifying the time component with zero rather then one.} of degree $N$, consisting of permutations of the set $\{0,1,\dots,N-1\}$.
With this notion of multiplication, point-wise addition and scalar multiplication, the space $\gls{omega} \coloneqq \bigoplus_{p = 0}^\infty \Omega^p(\N) = \bigoplus_{p = 0}^N \Omega^p(\N)$ becomes an algebra, usually called the Grassmann- or \emph{exterior algebra} of differential forms on $\N$. We have used that obviously $\Omega^k(\N) =0$ for $k >N$ due to the anti-symmetrization.\par
Furthermore, we find a notion of how to \emph{pullback} differential forms on manifolds to another manifold, for example the pullback of a differential form on the spacetime $\M$ to differential forms on its Cauchy surface $\Sigma$. Given a $C^\infty$-map $\psi: \widetilde{\N} \to \N$, where $\N, \widetilde{\N}$ are manifolds, we can naturally define the pullback of a function $f \in \Omega^0(\N)$ to a function $(\psi^* f) \in \Omega^0(\widetilde{\N})$ by composing $f$ with $\psi$:
\begin{align}
\psi^* f \coloneqq f \comp \psi \formspace.
\end{align}
\newpage
With the pullback of functions defined, we can define how to \emph{push forward}, or carry along, vector fields on $\widetilde{\N}$ to vector fields on $\N$: Let $f\in \Omega^0(\N)$ and $\tilde{v} \in \Gamma(T\widetilde{\N})$ and $\tilde{x} \in \widetilde{\N}$. Then
\begin{align}
(\psi_* \tilde{v})_{\psi(\tilde{x})} (f) \coloneqq \tilde{v}_{\tilde{x}}(\psi^* f)
\end{align}
defines the vector field $(\psi_* v) \in \Gamma(T\N)$. With these basic operations at hand, we can generalize to define the pullback of differential forms:
\begin{definition}[Pullback]\label{def:pullback}
	Let $\N, \widetilde{\N}$ be manifolds of dimension $N,\widetilde{N}$ respectively, and let $\psi: \widetilde{\N} \to \N$ be a smooth map. Then, $\psi$ defines an algebra homomorphism $\psi^* : \Omega(\N) \to  \Omega(\widetilde{\N})$,
	called the \emph{pullback} of differential forms. For $\omega \in \Omega^p(\N)$, $\tilde{x} \in \widetilde{\N}$ and $\tilde{v}_i \in T_x \widetilde{\N}$, $i=1,2,\dots,p$, it is defined by
	\begin{align}
	\left( \psi^* \omega \right)_{\tilde{x}}  (\tilde{v}_1,\tilde{v}_2,\dots,\tilde{v}_p) \coloneqq \omega_{\psi(\tilde{x})} (\psi_* \tilde{v}_1, \dots , \psi_* \tilde{v}_p) \formspace.
	\end{align}
\end{definition}
On the exterior algebra we find a duality, provided by the Hodge operator:
\begin{definition}[Hodge dual]
	The hodge star operator $\gls{hodge}: \Omega^p(\N) \to \Omega^{N-p}(\N)$ is defined through
	\begin{align}
	B \wedge *A = \frac{1}{p!} B^{\mu_1\dots\mu_p}A_{\mu_1\dots\mu_p} \dvolk \formspace,
	\end{align}
	which yields the coordinate representation
	\begin{align}
	(*A)_{\mu_{p+1}\dots\mu_N} = \frac{\detk}{p!} \, \epsilon_{\mu_1\dots\mu_N} A^{\mu_1\dots\mu_p} \formspace.
	\end{align}
\end{definition}
Here, \gls{levicivita} denotes the fully antisymmetric tensor of rank $N$ (Levi-Civita symbol) satisfying $\epsilon_{12,\dots,N} =1$ and the \emph{volume element} \gls{dvolk} is defined by
\begin{align}
\left( \gls{dvolk} \right)_{\alpha_1\dots\alpha_N} = \detk \, \epsilon_{\alpha_1\dots\alpha_N} \formspace.
\end{align}
In a sense, the volume element describes how the curvature of the manifold deforms a unit volume.
The duality follows from the important property of the Hodge operator as stated in the following lemma:
\begin{lemma}
	Let $*$ denote the Hodge star operator on the exterior algebra $\Omega(\N) $. It holds that
	\begin{align}
	** = (-1)^{s+p(N-p)} \, \mathbbm{1} \formspace,
	\end{align}
	which is trivially equivalent to $*^{-1} = (-1)^{s+p(N-p)} \, *$.
\end{lemma}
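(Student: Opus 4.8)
The plan is to verify the identity directly in the index-based coordinate representation, applying the defining formula for $*$ twice to an arbitrary $p$-form $A$ and carefully tracking the resulting factorials, determinant factors and signs. A convenient first move is to absorb the explicit $\detk$ into the Levi-Civita symbol, so that $\detk\,\epsilon_{\mu_1\dots\mu_N}$ is recognised as the components of the totally antisymmetric Levi-Civita \emph{tensor}, which I denote $\varepsilon_{\mu_1\dots\mu_N}$. The coordinate formula then simplifies to $(*A)_{b_1\dots b_{N-p}} = \tfrac{1}{p!}\,\varepsilon_{a_1\dots a_p b_1\dots b_{N-p}}A^{a_1\dots a_p}$, the factor $\detk$ having cancelled against the $1/\detk$ hidden in $\epsilon = \varepsilon/\detk$; from here on all index raising is automatic and introduces no further determinant factors.

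Applying $*$ a second time to the $(N-p)$-form $*A$ expresses $(**A)_{c_1\dots c_p}$ as a double contraction of two Levi-Civita tensors against $A$, with prefactor $1/\big((N-p)!\,p!\big)$. The key tool is the contraction identity over $N-p$ shared indices,
\[
\varepsilon_{b_1\dots b_{N-p}c_1\dots c_p}\,\varepsilon^{b_1\dots b_{N-p}a_1\dots a_p}
= (-1)^s\,(N-p)!\,p!\;\delta^{[a_1}_{c_1}\cdots\delta^{a_p]}_{c_p}\formspace,
\]
in which the sign $(-1)^s$ originates from the signature-dependent value $\varepsilon^{\mu_1\dots\mu_N}\varepsilon_{\mu_1\dots\mu_N} = (-1)^s\,N!$ of the fully contracted product. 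The factorials $(N-p)!\,p!$ cancel the prefactor exactly, and since $A$ is already totally antisymmetric the antisymmetrised Kronecker deltas merely reproduce $A_{c_1\dots c_p}$.

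The remaining and most delicate point is the bookkeeping of index order: in the two Levi-Civita tensors the block of $p$ original indices and the block of $N-p$ intermediate (contracted) indices occur in the opposite relative order from that required by the contraction identity above, and bringing one block past the other costs a factor $(-1)^{p(N-p)}$. I expect this disentangling of the two independent sign sources — the signature factor $(-1)^s$ from the contraction identity and the combinatorial factor $(-1)^{p(N-p)}$ from permuting the index blocks — to be the main obstacle, alongside confirming that every determinant factor cancels and every factorial is accounted for. Collecting the two signs yields $**A = (-1)^{s+p(N-p)}A$, and the equivalent form $*^{-1} = (-1)^{s+p(N-p)}*$ then follows immediately by composing with one more $*$ and noting that $(-1)^{s+p(N-p)}$ squares to one.
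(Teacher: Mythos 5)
Your proposal is correct and follows essentially the same route as the paper's proof: apply the coordinate formula for $*$ twice, permute the index blocks at the cost of $(-1)^{p(N-p)}$, and invoke the $\epsilon$-contraction identity (the paper's Lemma \ref{lem:epsilon_contraction}, which carries the $\detk^2$ and the $(-1)^s$) so that the factorials cancel and the antisymmetrised Kronecker deltas are absorbed by the antisymmetry of $A$. The only cosmetic difference is that you package $\detk$ into a Levi-Civita tensor up front, whereas the paper keeps the determinant factors explicit until the contraction step.
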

\begin{proof}
	Let $A \in \Omega^p(\N)$ be a $p$-form on $\N$. Then:
	\begin{align}
	(*{*A})_{\mu_1 \dots \mu_p}
	&= \frac{\detk \, \detk}{p! \, (N-p)!} \; \epsilon_{\alpha_{p+1}\dots\alpha_N \mu_1 \dots \mu_p}\;\epsilon^{\alpha_{1}\dots\alpha_N}\;A_{\alpha_1\dots\alpha_p} \notag\\
	&= (-1)^{p(N-p)} \frac{\detk \, \detk}{p! \, (N-p)!} \; \epsilon_{\alpha_{p+1}\dots\alpha_N \mu_1 \dots \mu_p}\;\epsilon^{\alpha_{p+1}\dots\alpha_{N}\alpha_1\dots\alpha_p}\;A_{\alpha_1\dots\alpha_p}  \notag\\
	&= (-1)^{s+p(N-p)} \delta\indices{^{[\alpha_{1}}_{\mu_{1}}}\, \dots \, \delta\indices{^{\alpha_p ] }_{\mu_p}} \;A_{\alpha_1\dots\alpha_p} \notag\\
	&=  (-1)^{s+p(N-p)}\;A_{\mu_1\dots\mu_p} \formspace
	\end{align}
	We have used Lemma \ref{lem:epsilon_contraction} and, in the last step, that the anti-symmetrization is absorbed by contraction because $A$ is antisymmetric.
\end{proof}
Furthermore, we can equip the exterior algebra with a differentiable structure, introducing the notion of the exterior derivative.
\begin{definition}[Exterior derivative]
	The \emph{exterior derivative} $\gls{d}:\Omega^p(\N) \to \Omega^{p+1} (\N)$ is defined by the following properties:
	\begin{enumerate}
		\item $d$ is linear
		\item $d$ obeys a graded Leibniz rule: Let $A \in \Omega^p(\N)$ and  $B\in \Omega^q(\N)$, then
		\begin{align}
		d(A \wedge B) = dA \wedge B + (-1)^p \, A \wedge dB
		\end{align}
		\item $d$ is nilpotent, that is,  $d^2 = 0$.
	\end{enumerate}
	In local coordinates, this is equivalent to the representation
	\begin{align}
	(dA)_{\mu \alpha_1\dots\alpha_p} = (p+1)\, \nabla_{[\mu}A_{\alpha_1\dots\alpha_p]} \formspace.
	\end{align}
\end{definition}
An important property of the exterior derivative is that it commutes (or rather intertwines its action) with pullbacks (see \cite[Proposition 4.1.7]{rudolph_schmidt}).
A $p$-form $\omega \in \Omega^p(\N)$ is called \emph{exact} if there is a $(p-1)$-form $\alpha \in \Omega^{p-1}(\N)$ such that $\omega = d\alpha$. We call $\omega$ \emph{closed} if $d \omega =0$. Accordingly, the space of closed $p$-forms is denoted by \gls{omegapd}, the space of exact ones by \gls{domegap}. As usual, the ones with compact support are denoted by a subscript zero. Note, that every exact form is closed, using that $d$ is by definition nilpotent, but the reverse is in general not true. It does hold, however, on certain manifolds with trivial topology, such as Minkowski spacetime. This is expressed in the so called Poincar\'e-Lemma (see for example \cite[Chapter 4]{bott_tu}) based on the study of de Rham cohomology.\par
Moreover, $N$-forms can naturally be integrated. Using local coordinates and a partition of unity, we define the integral of $N$-forms via the well known integration on $\IR^N$:
\begin{definition}[Integration on manifolds]
	Let $\left\{U_\alpha, \psi_\alpha\right\}_\alpha$ be an atlas of the manifold $\N$ and $\left\{\chi_\alpha\right\}_\alpha$ a partition of unity subordinate to the locally finite open cover $\left\{U_\alpha\right\}_\alpha$. Let $x^\mu_{(\alpha)}$ be a coordinate basis of $\psi$ on $U_\alpha$. For any $N$-form $\omega \in \Omega^N_0(\M)$ we define the integral
	\begin{align}
	\int\limits_{\N} \omega &\coloneqq \sum_{\alpha} \int\limits_{\psi_\alpha (U_\alpha)} w(x_{(\alpha)}^0,\dots,x_{(\alpha)}^1)\; dx_{(\alpha)}^0 \cdots dx_{(\alpha)}^{N-1} \formspace,
	\end{align}
	where $w$ are the components of $\omega$ in the coordinates $x_{(\alpha)}^\mu$, that is $\omega = w dx_{(\alpha)}^0 \wedge \cdots \wedge dx_{(\alpha)}^{N-1}$.
	This definition is independent of the choice of the atlas and the partition of unity (see \cite[Proposition 3.3]{bott_tu}).
\end{definition}
With integration at our disposal, we present an important theorem regarding the integration of exact differential forms:
\begin{theorem}[Stoke's Theorem]\label{thm:stokes}
	Let $\N$ be an oriented manifold of dimension $N$ and let its boundary $\partial \N$ be endowed with the induced orientation. Let $\gls{inclusionmap} : \partial \N \hookrightarrow \N$ be the inclusion operator.
	Let $\omega \in \Omega^{N-1}_0(\N)$ be a compactly supported $(N-1)$-form on $\N$. Then it holds
	\begin{align}
	\int\limits_\N d\omega = \int\limits_{\partial \N} i^*\omega \formspace.
	\end{align}
\end{theorem}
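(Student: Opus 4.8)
The plan is to reduce the global identity to a local computation on a half-space via a partition of unity and then invoke the fundamental theorem of calculus. Note first that both integrals are well defined: $d\omega$ is an $N$-form on $\N$ and $i^*\omega$ is an $(N-1)$-form on the $(N-1)$-dimensional boundary $\partial\N$, so each side falls under the definition of integration on manifolds given above. I would begin by choosing an atlas $\{U_\alpha,\psi_\alpha\}_\alpha$ adapted to the boundary, so that each $\psi_\alpha$ maps $U_\alpha$ either into the closed upper half-space $\mathbb{H}^N = \{x\in\IR^N : x^{N-1}\geq 0\}$, with $\psi_\alpha(U_\alpha\cap\partial\N)\subset\{x^{N-1}=0\}$ (boundary charts), or into an open subset of $\IR^N$ disjoint from the boundary (interior charts). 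Since $\N$ is para-compact there is a partition of unity $\{\chi_\alpha\}_\alpha$ subordinate to $\{U_\alpha\}_\alpha$, and because $\supp{\omega}$ is compact only finitely many $\chi_\alpha$ are nonzero there, so $\omega=\sum_\alpha \chi_\alpha\omega$ is a finite sum. Both $\int_\N d(\,\cdot\,)$ and $\int_{\partial\N} i^*(\,\cdot\,)$ are linear, so it suffices to prove the identity for each compactly supported summand $\chi_\alpha\omega$, i.e.\ for a form supported inside a single chart.

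Next I would dispose of the interior charts. If $\supp{\chi_\alpha\omega}$ does not meet $\partial\N$, the right-hand side vanishes since $i^*(\chi_\alpha\omega)$ is supported away from $\partial\N$. For the left-hand side I transport the form to $\IR^N$ along $\psi_\alpha$ and write it as $\sum_j f_j\, dx^0\wedge\cdots\wedge\widehat{dx^j}\wedge\cdots\wedge dx^{N-1}$, where the hat denotes omission. Computing the exterior derivative gives $d\omega=\sum_j (-1)^j\,\partial_j f_j\; dx^0\wedge\cdots\wedge dx^{N-1}$, and integrating each term by Fubini reduces it to a one-dimensional integral $\int_\IR \partial_j f_j\, dx^j$, which vanishes because $f_j$ has compact support. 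Hence both sides are zero and the identity holds trivially on interior charts.

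The heart of the argument is the boundary charts. For a form supported in such a chart I again expand $\omega=\sum_j f_j\, dx^0\wedge\cdots\wedge\widehat{dx^j}\wedge\cdots\wedge dx^{N-1}$ on $\mathbb{H}^N$ and compute $d\omega$ as above. Every term with $j\neq N-1$ integrates to zero exactly as in the interior case, since integration runs over all of $\IR$ in those unbounded directions and $f_j$ is compactly supported. Only the $j=N-1$ term survives: integrating $\partial_{N-1} f_{N-1}$ over $x^{N-1}\in[0,\infty)$ via the fundamental theorem of calculus produces, up to sign, the boundary value of $f_{N-1}$ at $x^{N-1}=0$. Restricting to $\{x^{N-1}=0\}$ is precisely the pullback $i^*\omega$, and comparing with the induced orientation on $\partial\N$ is what matches this residue to $\int_{\partial\N} i^*\omega$.

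The step I expect to be the \emph{main obstacle} is the sign and orientation bookkeeping in this last paragraph. I must fix a precise convention for the induced orientation on $\partial\N$ so that the factor $(-1)^{N-1}$ from the graded exterior derivative, the minus sign from the fundamental theorem of calculus at the lower endpoint $x^{N-1}=0$ (the outward normal pointing in the $-x^{N-1}$ direction), and the orientation of the boundary chart all conspire to leave the identity with a $+$ sign and no residual factor. By contrast, independence of the chosen atlas and partition of unity requires no separate verification, since the coordinate-independence of the integral was already recorded after the definition of integration on manifolds; this guarantees that the chartwise computations patch together consistently over overlaps.
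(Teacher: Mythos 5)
Your outline is correct and is the standard proof of Stokes' theorem: reduce to a single chart by a partition of unity subordinate to a boundary-adapted atlas, kill the interior charts by compact support and Fubini, and extract the boundary term from the $j=N-1$ component via the fundamental theorem of calculus on the half-line. The paper itself does not carry out this argument --- its ``proof'' consists of a reference to the literature (Lang, and the remark about the compact-manifold variant used later) --- so you have in effect supplied the argument that the cited sources give. Nothing in your sketch would fail; the one point you correctly single out as delicate, the interaction between the induced (outward-normal-first) orientation on $\partial\N$, the sign $(-1)^{N-1}$ from the position of $dx^{N-1}$ in the wedge product, and the minus sign from evaluating at the lower endpoint $x^{N-1}=0$, is exactly where a careless writeup goes wrong, and your plan to fix the orientation convention so that these cancel is the right way to finish. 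One small remark for consistency with the rest of the paper: the subsequent applications (the Green's identity of Lemma \ref{lem:greens_identity} and the partial integrations in Theorem \ref{thm:solution_proca_unconstrained}) apply the theorem to $\mathcal{O}=\Sigma^{\pm}$ with forms that are not compactly supported but whose supports meet $\Sigma^{\pm}$ in a compact set; your proof covers this case verbatim, since only compactness of $\supp{\omega}\cap\overline{\mathcal{O}}$ is used in the chartwise reduction.
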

\begin{proof}
	A proof is given in most of the introductory literature on differential geometry (see for example \cite[Chapter 17, Theorem 2.1]{lang}).
	Note that one can equivalently formulate Stoke's theorem on a \emph{compact} manifold but for {arbitrary} (that is, in general not compactly supported) $(N-1)$-forms on the manifold (see for example \cite[Theorem 4.2.14]{rudolph_schmidt}). This will be of importance in later calculations.
\end{proof}
Furthermore, we can define a bilinear map on $\Omega^p(\N)$ using the integration of $N$-forms:
\begin{definition}
	Let $A,B \in \Omega^p(\N)$ such that their supports have a compact intersection. Define the bilinear map $\gls{innerprod} : \Omega^p(\N) \times \Omega^p(\N) \to \IC$ by
	\begin{align}
	\langle A, B \rangle_\N \coloneqq  \int_{\N } A \wedge * B = \int_{\N } A_{\mu_1 \dots \mu_p}B^{\mu_1 \dots \mu_p}\,\dvolk \formspace.
	\end{align}
\end{definition}
Since by definition $A \wedge * B$ is a compactly supported $N$-form, this is well defined. We may sometimes refer to $\langle \cdot , \cdot \rangle_\N$ as an inner product for simplicity, even though it is not positive definite.
Using the exterior derivative, we define the interior or co-derivative:
\begin{definition}[Interior derivative]
	The \emph{interior derivative} $\gls{delta} : \Omega^p(\N) \to \Omega^{p-1}(\N)$ is defined by
	\begin{align}
	\delta \coloneqq (-1)^{s+1+N(p-1)}\, {*{d*}} \formspace.
	\end{align}
	From the defining properties of $d$ and $*$ it follows $\delta^2 =0$.
\end{definition}
Here, $s$ again denotes the number of negative eigenvalues of the metric $k$ of $\N$. In accordance with our nomenclature, we call a $p$-form $\omega$ co-exact if there exists a $\alpha \in \Omega^{p+1}(\N)$ such that $\omega = \delta \alpha$ and co-closed if $\delta \omega = 0$. Accordingly, the spaces of co-closed and co-exact $p$-forms are denoted by \gls{omegapdelta} and \gls{deltaomegap} respectively.\par
Using the exterior and interior derivative we define the partial differential operator:
\begin{definition}[D'Alembert Operator]
	The d'Alembert (or Laplace - de Rham) operator $\gls{dalembert}: \Omega^p(\N) \to \Omega^{p}(\N)$ is defined by
	\begin{align}
	\square \coloneqq \delta d +d \delta \formspace.
	\end{align}
\end{definition}
By definition of the exterior and interior derivative, it is easy to show that $\square$ commutes with both $d$ and $\delta$:
\begin{align}
\square d &= (\delta d + d \delta )d \notag \\
&= d \delta d \notag \\
&= d (\delta d + d \delta) \formspace,
\end{align}
and analogously for $\delta$.
The d'Alembert operator, and its generalization to $(\square + m^2)$ for some constant $m > 0$, are important examples for a normally hyperbolic differential operators (see Section \ref{sec:global_hyperbolicity}) and we may therefore sometimes just refer to them as \emph{wave operators}.\par
The sign convention in the definition of the exterior derivative is chosen such that on any Lorentzian or Riemannian manifold the interior derivative is formally adjoint to the exterior derivative, that is,  for $A \in \Omega^{p}(\N)$ and $B \in \Omega^{p+1}(\N)$ it holds that
\begin{align}
\langle dA , B \rangle_{\N} = \langle A , \delta B \rangle_\N \formspace,
\end{align}
which leads to a representation in local coordinates of the Manifold given by:
\begin{align}
(\delta A)_{\mu_2\dots\mu_p} = - \nabla^{\mu_1}A_{\mu_1\dots\mu_p} \formspace.
\end{align}
To see that this is consistent, let $A \in \Omega^{p-1}(\N)$ and $B \in \Omega^{p}(\N)$ such that their supports have compact intersection.
We obtain, using Stoke's Theorem \ref{thm:stokes}:
\begin{align}
0 &= \int \limits_{\partial \N} i^* (A \wedge *B) \notag\\
&= \int \limits_{\N} d(A \wedge *B)  \notag\\
&= \int \limits_{\N} dA \wedge *B + (-1)^{p-1} A \wedge d{*B} \notag\\
&= \int \limits_{\N} dA \wedge *B + (-1)^{p-1} A \wedge *{*^{-1}}\underbrace{d{*B}}_{\textrm{is a } (N-p+1) \textrm{ form.}} \notag\\
&= \int \limits_{\N} dA \wedge *B + (-1)^{p-1}(-1)^{s+(N-p+1)(N-N+p-1)} A \wedge *{*d{*B}} \notag\\
&= \int \limits_{\N} dA \wedge *B + (-1)^{p+(1-p)(p-1)} A \wedge *\delta B \formspace.
\end{align}
It can easily be proven by induction that $\big(p+(1-p)(p-1)\big)$ is odd for any $p \in \IN$, which yields the result
\begin{align}
\langle dA , B \rangle_{\N} = \langle A , \delta B \rangle_\N \formspace.
\end{align}
The definitions stated above thus fulfill the requirement of formal adjointness of the exterior and interior derivate on an arbitrary Lorentzian or Riemannian manifold $\N$.
In local coordinates we use a partial integration to obtain
\begin{align}
\langle dA , B \rangle_\N &= \int \limits_{\N} dA \wedge * B \notag\\
&= \int \limits_{\N}  \frac{p}{p!} \nabla^{[\alpha_1}A^{\alpha_2\dots\alpha_p]}\,B_{\alpha_1 \dots \alpha_p} \, \dvolk \notag\\
&= \int \limits_{\N}  \frac{1}{(p-1)!} \nabla^{\alpha_1}A^{\alpha_2\dots\alpha_p}\,B_{\alpha_1 \dots \alpha_p} \, \dvolk \notag\\
&= - \int \limits_{\N}  \frac{1}{(p-1)!} A^{\alpha_2\dots\alpha_p}\, \nabla^{\alpha_1}B_{\alpha_1 \dots \alpha_p} \, \dvolk \notag\\
&= \langle A, \delta B \rangle_\N \formspace,
\end{align}
which yields
\begin{align}
-\nabla^{\alpha_1}B_{\alpha_1 \dots \alpha p} = (\delta B)_{\alpha_2 \dots \alpha_p}\formspace.
\end{align}
On the four dimensional spacetime $(\M,g)$ the definitions of the Hodge star operator and the interior derivative simplify, such that
\begin{align}
*_{(\M)}*_{(\M)} &= (-1)^{p+1} \mathbbm{1} \\
\delta_{(\M)} &= *_{(\M)}{d_{(\M)}*_{(\M)}} \formspace ,
\end{align}
holds on the spacetime $(\M,g)$ and
\begin{align}
*_{(\Sigma)}*_{(\Sigma)} &= \mathbbm{1} \\
\delta_{(\Sigma)} &= (-1)^p *_{(\Sigma)}{d_{(\Sigma)}*_{(\Sigma)}}
\end{align}
holds on  $(\Sigma,h)$. In the following we will drop the subscript ${(\M)}$, since we will perform all the calculations on a four dimensional spacetime, except when explicitly noted (for example with a subscript $(\Sigma)$).
%
%
%
%
%
%
%
%
\subsection{Category theory}\label{sec:cat-theory}
The description of Quantum Field Theory on Curved Spacetimes (QFTCS) in the framework of \name{Brunetti}, \name{Fredenhagen} and \name{Verch} \cite{Brunetti_Fredenhagen_Verch} is based on category theory. In this thesis, we will not go into detail on those categorical aspects, however we will need some basic definitions to formulate the theory rigorously, that is namely the notion of a category and that of covariant functors, since, in the used framework, the generally covariant QFTCS is a functor.\par
Here, we present definitions given in \cite[Appendix A.1]{baer_ginoux_pfaeffle} and refer to the appropriate literature for details. We define:
\begin{definition}[Category]
	A \emph{category} $\mathsf{Cat}$ consists of the following:
	\begin{enumerate}
		\item a class $\mathsf{Obj}_\mathsf{Cat}$ whose members are called \emph{objects},
		\item a set $\mathsf{Mor}_\mathsf{Cat}(A,B)$, for any two objects $A,B \in \mathsf{Obj}_\mathsf{Cat}$, whose elements are called \emph{morphisms},
		\item for any three objects $A,B,C \in \mathsf{Obj}_\mathsf{Cat}$ there is a map
		\begin{align}
\mathsf{Mor}_\mathsf{Cat}(B,C) \times \mathsf{Mor}_\mathsf{Cat}(A,B) &\to \mathsf{Mor}_\mathsf{Cat}(A,C) \notag\\
(\psi,\phi) &\mapsto \psi \comp \phi
		\end{align}
		called the composition of morphisms subject to the relations:\vspace{4mm}
		\begin{enumerate}[label=(\arabic*)]
			\item for non equal pairs $(A,B)$, $(A',B')$ of objects, the sets $\mathsf{Mor}_\mathsf{Cat}(A,B)$ and $\mathsf{Mor}_\mathsf{Cat}(A',B')$ are disjoint,
			\item for every object $A$ there exists a morphism $\text{id}_A \in \mathsf{Mor}_\mathsf{Cat}(A,A)$ such that it holds for all objects $B$, morphisms $\psi \in \mathsf{Mor}_\mathsf{Cat}(B,A)$ and $\phi \in \mathsf{Mor}_\mathsf{Cat}(A,B)$
			\begin{align}
				\text{id}_A \comp \psi &= \psi \quad \text{and}\\
				\phi \comp \text{id}_A &= \phi \quad,
			\end{align}
			\item the composition law is associative, that is for an objects $A,B,C,D$ and any morphisms $\psi \in \mathsf{Mor}_\mathsf{Cat}(A,B)$, $\phi \in \mathsf{Mor}_\mathsf{Cat}(B,C)$ and $\chi \in \mathsf{Mor}_\mathsf{Cat}(C,D)$ it holds
			\begin{align}
				(\chi \comp \phi) \comp \psi = \chi \comp (\phi \comp \psi) \formspace.
			\end{align}
		\end{enumerate}
	\end{enumerate}
\end{definition}
\begin{definition}[Functor]
	Let $\mathsf{Cat1}$ and $\mathsf{Cat2}$ be categories. A \emph{covariant functor} $\mathscr{A}: \mathsf{Cat1} \to \mathsf{Cat2}$ consists of the map $\mathscr{A} : \mathsf{Obj}_\mathsf{Cat1} \to \mathsf{Obj}_\mathsf{Cat2}$ and maps $\mathscr{A}: \mathsf{Mor}_\mathsf{Cat1}(A,B) \to \mathsf{Mor}_\mathsf{Cat2}\big(\mathscr{A}(A),\mathscr{A}(B)\big)$ for any two objects $A,B \in \mathsf{Obj}_\mathsf{Cat1}$ such that
	\begin{enumerate}
		\item {the composition is preserved, that is for all objects $A,B,C \in \mathsf{Obj}_\mathsf{Cat1}$ and for any morphisms $\psi \in \mathsf{Mor}_\mathsf{Cat1}(A,B)$ and $\phi \in \mathsf{Mor}_\mathsf{Cat1}(B,C)$ it holds
		\begin{align}
			\mathscr{A}(\phi \comp \psi) = \mathscr{A}(\phi) \comp \mathscr{A}(\psi) \formspace,
		\end{align}}
		\item{
			$\mathscr{A}$ maps identities to identities, that is for any object $A \in \mathsf{Obj}_\mathsf{Cat1}$ it holds
			\begin{align}
				\mathscr{A}(\text{id}_\mathsf{A}) = \text{id}_{\mathscr{A}(A)} \formspace.
			\end{align}
			}
	\end{enumerate}
\end{definition}
%
%
%
%
%
%
%
%
%
%
%
%
%
%
\subsection{Sign conventions}\label{sec:sign_conventions}
At certain points throughout this chapter we have had a freedom of choice regarding the signs of some entities, in particular the sign of the signature of the Lorentzian metric $g$ and that of the interior derivative $\delta$. Though at this stage the choice can be made arbitrarily, we want to make it in a way that in the end allows us to make certain physical interpretations on some parameters. More precisely, we want to interpret the parameter $m$ of the Klein-Gordon equation\footnote{or its generalization on $p$-forms} $(\square + m^2) f = 0$ for a zero-form $f \in \Omega^0(\M)$ as a mass in the physical sense. With the chosen sign convention for $\delta$ we find, using ${\delta}f = 0$:
\begin{align}
	\square f
	&= (\delta d + d \delta) f \notag\\
	&= \delta d f \notag\\
	&= - \nabla^\mu \nabla_\mu f \formspace.
\end{align}
In the following heuristic (local) argument we see
\begin{align}
	\square + m^2
	&= -\nabla^\mu \nabla_\mu + m^2 \notag\\
	&\sim \partial_t^2 + \sum_i \partial_i^2 + m^2\notag\\
	&\sim -E^2 + \abs{\vector{p}}^2 + m^2
\end{align}
which yields the correct relativistic relation of energy, momentum and mass according to $E^2 = \abs{\vector{p}}^2 + m^2$.
A similar calculation holds for the Klein-Gordon operator generalized to act on one-forms. If we had found a ``wrong'' relation between energy, momentum and mass, we would have had to adapt the chosen signs. Usually one chooses the sign of the metric and the interior derivative such that they are in some sense mathematically convenient (although one might disagree with another one's choice). We have made the choice of the metric, such that the Cauchy surfaces become Riemannian rather that ``anti-Riemannian'' (with an all minus signature), which seems more natural to some. Also, a lot of the used references on spacetime geometry (in particular the book by \name{Wald} \cite{wald_GR}) use this sign convention, which makes the application of certain formulas easier. As mentioned, the sign of the interior derivative was chosen such that it is formally adjoint to the exterior derivative (with respect the specified inner product) on all Lorentzian and Riemannian manifolds. It seemed convenient for the actual calculations to fix the sign regardless of the signature of the metric of the underlying manifold. One could equivalently have fixed the opposite sign, yielding the two derivatives to be skew-adjoint, which is also done in the literature. However, in the end, one has one freedom left to make the energy-momentum-mass relation work: that is the sign in front of the mass in the Klein-Gordon equation and all other wave equations accordingly. Hence, one regularly also finds the Klein-Gordon equation to be defined with a flipped sign of the mass term. But for our case, we want the mass $m$ in any wave equation to appear with a positive sign.

\section{The Classical Problem}\label{chpt:classical}
In this chapter we will examine Proca's equation at a classical level in an arbitrary globally hyperbolic spacetime. Using differential forms, the formulation will be mostly coordinate independent. The goal is to find a solution to Proca's equation, including external classical sources and without restrictive assumptions on the topology of the spacetime, in terms of fundamental solutions of the Proca operator. Already at this classical stage we will emphasize on similarities and, partially crucial, differences of Proca's equation and Maxwell's equations.\par
We will start by finding the equations of motion from the Lagrangian\footnote{We could equivalently start by imposing the equations of motion directly, but the Lagrangian yields a more familiar comparison to the Maxwell field.}. In order to solve the equations of motion, it is then crucial to decompose the equations in a set of a hyperbolic differential equation and a constraint. After discussing the initial value problem in detail, we will determine the solution of the equations of motion of the Proca field in terms of fundamental solutions of the Proca operator.
Having found solutions to the classical Proca equation, we will investigate the classical zero mass limit as it will be the basis of understanding the according limit in the quantum case.\par
In the following, let $(\M,g)$ denote a globally hyperbolic four dimensional spacetime, consisting of a smooth manifold $\M$, assumed to be Hausdorff, connected, oriented, time-oriented and para-compact,  and a Lorentzian metric $g$, whose signature is chosen to be $(-,+,+,+)$. The Cauchy surface of the spacetime is denoted by $\Sigma$, with an induced Riemannian metric $h$. The Levi-Civita connection on $(\M,g)$ will as usual be denoted by $\nabla$, the one on $\Sigma$ by $\nabla_{(\Sigma)}$.
\subsection{Deriving the equations of motion of the Proca field from the Lagrangian}
Let $A, j \in \Omega ^1 (\M)$ be smooth one-forms on $\M$, $j \in \Omega^1(\M)$ a external source and $m >0$ a positive constant. We will call $A$ the \emph{vector potential}, \gls{mass} the \emph{mass} and \gls{current} denotes an \emph{external current}. The Lagrangian of the Proca field reads:
\begin{align}
\Ldens = -\frac{1}{2} \, dA  \wedge  *dA + A \wedge *j - \frac{1}{2} \, m^2 A \wedge *A \formspace.
\end{align}
In local coordinates this can equivalently be expressed, defining the field-strength tensor $F = dA$, as:
\begin{align}
\Ldens = \left( -\frac{1}{4}\, F\indices{_{\alpha\beta}} F\indices{^{\alpha\beta}} + A\indices{_\mu} j^\mu - \frac{1}{2} \, m^2 A\indices{_\nu} A\indices{^\nu} \right) \dvolg \formspace. \label{eqn:Proca_Lagrangian_coordinates}
\end{align}
At this stage, the similarity to the Lagrangian of the Maxwell field is obvious. Setting $m=0$ in the Proca Lagrangian\footnote{Even though we defined the Proca Lagrangian for non-zero masses only, at this stage setting $m=0$ is not a problem. The restriction to strictly positive masses becomes important later.} yields the Maxwell Lagrangian, that is, the Maxwell field is a massless Proca field.
The Euler-Lagrange-equations for a Lagrangian depending only on the field and the field's first derivative, $\Ldens = \Ldens(A\indices{_\mu}, \partial\indices{_\nu}A\indices{_\mu})$, are:
\begin{align}
0 = \frac{\partial \Ldens}{\partial A\indices{_\mu}} - \partial\indices{_\nu} \frac{\partial \Ldens}{\partial(\partial\indices{_\nu} A\indices{_\mu})} \formspace.
\end{align}
In local coordinates the first summand of the Euler-Lagrange-equations is easily obtained from equation (\ref{eqn:Proca_Lagrangian_coordinates}):
\begin{align}
\frac{\partial \Ldens}{\partial A\indices{_\mu}} = j^\mu - m^2 A^\mu \formspace.
\end{align}
The second term is most easily calculated using the coordinate representation of $F\indices{_{\alpha\beta}} = 2 \nabla\indices{_{[\alpha}} A\indices{_{\beta ]}} = \partial_\alpha A_\beta - \partial_\beta A_\alpha$ (all curvature dependent terms drop out due to the symmetry of the Christoffel symbol in it's lower two indices):
\begin{align}
\frac{\partial \Ldens}{\partial(\partial\indices{_\nu} A\indices{_\mu})}
=& -\frac{\partial }{\partial(\partial\indices{_\nu} A\indices{_\mu})} \left( \frac{1}{4} \left(  \partial_\alpha A_\beta - \partial_\beta A_\alpha \right) \left(  \partial^\alpha A^\beta - \partial^\beta A^\alpha \right)   \right) \notag \\
=&  -\frac{1}{4} \Big( \left( \delta\indices{^\nu_\alpha}  \delta\indices{^\mu_\beta} -  \delta\indices{^\nu_\beta}  \delta\indices{^\mu_\alpha} \right ) F\indices{^{\alpha\beta}} + F\indices{_{\alpha\beta}} \left( g\indices{^{\nu \alpha}} g\indices{^{\mu \beta}} - g\indices{^{\nu \beta}} g\indices{^{\mu \alpha}}  \right ) \Big) \notag  \\
 =& -\frac{1}{4} \,4 \, F\indices{^{\nu\mu}} =- F\indices{^{\nu\mu}}  \formspace
\end{align}
In that last step we have used the anti-symmetry of the two-form $F$, that is, in local coordinates $F\indices{_{\alpha\beta}}  = - F\indices{_{\beta\alpha}}$.
Combining the two results we obtain the equations of motion of the Proca field:
\begin{align}
0 &=  j^\mu - m^2 A^\mu  + \partial_{\nu} F\indices{^{\nu\mu}}\notag \\
\iff 0 &=  j_\mu - m^2 A_\mu  + \partial^{\nu} F\indices{_{\nu\mu}} \formspace.
\end{align}
Going back to an coordinate independent description, identifying $\partial^{\nu} F\indices{_{\nu\mu}} = -(\delta F)_\mu = -(\delta d A)_\mu$, we have found Proca's equation for a smooth one-form $A$ on a curved spacetime:
\begin{definition}[Proca's equation]\label{def:proca_operator}
For smooth one-forms $A,j \in \Omega^1(\M)$ and constant $m>0$, the Proca equation reads:
\begin{align}
\left( \delta d + m^2 \right) A = j \formspace.\label{eqn:Proca}
\end{align}
\end{definition}
Accordingly, the Proca operator is defined by $(\delta d + m^2)$. Again, setting $m=0$ in Proca's equation  yields the (homogeneous) Maxwell equations. It is worth noting that, unlike Maxwell's equation, the Proca equation does not posses a gauge symmetry. In the Maxwell case, two one-forms $A, A'$ that differ by an exact form, that is, $A' = A + d \chi$ for some zero-form $\chi$, yield the same equation of motion; if $A$ solves the Maxwell equation, so does $A'$:
\begin{align}
	\delta d A' = \delta d(A + d \chi) = \delta d A =j \formspace,
\end{align}
using the defining property $d^2 = 0$ of the exterior derivative. In the Proca case no such symmetry has to be accommodated when finding a solution to the equations of motion.
\subsection{Solving Proca's equation}\label{sec:solving_procas_equation}
The goal of this section is to find a solution to Proca's equation (\ref{eqn:Proca}) on an arbitrary globally hyperbolic curved spacetime. The procedure presented in this section is a generalization of \cite{FURLANI}, where \name{Furlani} tackles the same problem but with some topological restrictions on the manifold (he supposes the Cauchy surface to be compact) and excluding external sources. Most of the notation used in this chapter is adopted from \name{Furlani} based on previous work by Dimock \cite{dimock1992quantizedEM}.\par
The problem that one encounters when trying to solve Proca's equation in curved spacetimes is that the Proca operator is not normally hyperbolic\footnote{For the definition of normally hyperbolic operators see Section \ref{sec:global_hyperbolicity}.}. For non normally hyperbolic operators, we do not know a priori of the existence fundamental solutions or if the initial value problem is even well posed.  Fortunately, we are able to decompose Proca's equation into a normally hyperbolic second order partial differential equation and a constraint. For the normally hyperbolic equation on a globally hyperbolic spacetime, we have a well understood theory at our disposal that ensures in particular the well-posedness of the initial value problem, the existence and uniqueness of a solution to the differential equation and the existence and uniqueness of fundamental solutions to the differential operator. Finally, it turns out that the Proca operator is Green-hyperbolic, that is, it possesses unique advanced and retarded fundamental solutions.\par
We will start with decomposing Proca's equation, which is stated in the following lemma:
\begin{lemma}
Let $A, j \in \Omega^1(\M)$ be one-forms on the manifold $\M$ and let $m>0$ be a positive constant. Then Proca's equation is equivalent to a set of equations consisting of a wave equation and a constraint. That is:
\begin{subnumcases}
{(\delta d  +m^2)A = j \iff}
\left(\square +m^2 \right) A = j + \frac{1}{m^2} \, d \delta j \label{eqn:classical_wave_eqation} \\
\delta A = \frac{1}{m^2} \delta j \label{eqn:classical_constraint}
\end{subnumcases}
\label{lem:Proca_wave_equiv}
\end{lemma}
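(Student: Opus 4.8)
The plan is to prove both implications directly, relying only on two structural facts established earlier: the nilpotency $\delta^2 = 0$ of the interior derivative, and the defining decomposition $\square = \delta d + d\delta$ of the d'Alembert operator. The positivity $m > 0$ will enter crucially, since the constraint is produced by dividing through by $m^2$, and the hypotheses $A, j \in \Omega^1(\M)$ guarantee all operators act on the correct degree.

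For the forward implication I would assume $(\delta d + m^2)A = j$ and first apply $\delta$ to both sides. Because $\delta\delta d A = 0$, only the mass term survives on the left, giving $m^2\,\delta A = \delta j$, which upon dividing by $m^2$ is precisely the constraint \eqref{eqn:classical_constraint}. To obtain the wave equation I would then rewrite $\delta d A = \square A - d\delta A$ using the d'Alembert decomposition, so that Proca's equation becomes $\square A - d\delta A + m^2 A = j$. Substituting the just-derived constraint $\delta A = \frac{1}{m^2}\delta j$ into the term $d\delta A$ yields $(\square + m^2)A = j + \frac{1}{m^2}\,d\delta j$, which is \eqref{eqn:classical_wave_eqation}.

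For the converse I would assume both \eqref{eqn:classical_wave_eqation} and \eqref{eqn:classical_constraint} hold and again use $\delta d A = \square A - d\delta A$. The wave equation supplies $\square A = j + \frac{1}{m^2}\,d\delta j - m^2 A$, while applying $d$ to the constraint gives $d\delta A = \frac{1}{m^2}\,d\delta j$. Subtracting the second from the first, the two $\frac{1}{m^2}\,d\delta j$ terms cancel exactly, leaving $\delta d A = j - m^2 A$, i.e.\ $(\delta d + m^2)A = j$, as desired.

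I do not expect a serious obstacle: the entire argument is a short algebraic manipulation once $\square = \delta d + d\delta$ and $\delta^2 = 0$ are in hand. The only point that genuinely requires care is the essential use of $m \neq 0$ — both the division that produces the constraint and the cancellation in the converse depend on it. This is worth flagging explicitly, since it foreshadows exactly why the massless (Maxwell) case, where this clean decomposition fails, must be handled separately and behaves so differently in the later analysis.
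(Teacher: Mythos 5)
Your proof is correct and follows essentially the same route as the paper: both directions derive the constraint by applying $\delta$ and using $\delta^2=0$, and then pass between the Proca and wave equations via $\square = \delta d + d\delta$ together with substitution of the constraint (the paper phrases the forward step as adding the zero term $d(\delta A - \tfrac{1}{m^2}\delta j)$, which is the same manipulation as your substitution into $d\delta A$). Your explicit flagging of where $m>0$ enters matches the paper's subsequent discussion of why the massless case must be treated separately.
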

\begin{proof}
1.) $\Rightarrow$ - direction: Let $A, j \in \Omega^1(\M)$ and let $A$ satisfy Proca's equation with $m >0$. We obtain the constraint (\ref{eqn:classical_constraint}) of the lemma by applying $\delta$ to Proca's equation and using $\delta^2 = 0$:
\begin{align}
(\delta d +m^2) A 		&= j \notag \\
\implies 			\delta(m^2 A) 				&= \delta j  \notag \\
\iff 		\delta A 							&= \frac{1}{m^2} \, \delta j \formspace.
\end{align}
By adding $0=d(0) = d(\delta A - \frac{1}{m^2}\delta j) = d\delta A - \frac{1}{m^2}d\delta j$ to Proca's equation one obtains the wave equation (\ref{eqn:classical_wave_eqation}) of the lemma:
\begin{align}
j =& \, (\delta d +m^2) A \notag \\
=& \, (\delta d + d \delta +m^2) A -\frac{1}{m^2} d \delta j \notag \\
\iff (\square +m^2)A =& \, j+ \frac{1}{m^2}d\delta j \formspace,
\end{align}
which completes the first direction of the proof.  \par
2.) $\Leftarrow$ - direction:
Let $A, j \in \Omega^1(\M)$ and let $A$ satisfy the wave equation (\ref{eqn:classical_wave_eqation}) of the lemma:
\begin{align}
(\square +m^2)A =& \, j+ \frac{1}{m^2}d\delta j \formspace.
\end{align}
Inserting the constraint (\ref{eqn:classical_constraint}) into the wave equation we obtain
\begin{align}
 (\delta d + d \delta +m^2) A &= \, j + \frac{1}{m^2} d \delta j \notag  \\
 \implies (\delta d +m^2) A + \frac{1}{m^2} d \delta j  &= j + \frac{1}{m^2} d \delta j \notag \\
 \iff (\delta d +m^2) A  &= j  \formspace,
\end{align}
which completes the proof.
\end{proof}
\noindent As argued in Section \ref{sec:sign_conventions}, the positive sign of the mass term in the wave equation (\ref{eqn:classical_wave_eqation}) is consistent with our conventions. Before we proceed with solving Proca's equation, the above lemma allows for a another brief comparison of Maxwell's and Proca's theory. It is well known that the Maxwell field possesses two independent degrees of freedom, known as the two independent polarization modes of the electromagnetic field. Starting with four independent components of the vector potential $A$ one reduces the degrees of freedom to three by implementing the Lorenz constraint $\delta A = 0$. One can show that this does not completely fix the gauge of the theory: As presented in the previous section, Maxwell's theory is independent of a gauge by exact forms\footnote{We will briefly discuss the different possibilities of choices of gauge of Maxwell's theory in manifolds with non-trivial topology at the end of Section \ref{sec:zero-mass-limit-existence-classical-vanishing-source}.}, that is, it is independent of the transformation $A \to A + d\chi$ for some zero-form $\chi$. Therefore, the exterior derivative transforms as $\delta A \to \delta A + \delta d \chi = \delta A + \square \chi$ as for every zero form it holds $\delta \chi = 0$. So the Lorenz constraint does not fix the gauge completely, as the theory is gauge invariant under addition of $\chi$ fulfilling the wave equation $\square \chi=0$. One can therefore further reduce the degrees of freedom to two. In an appropriate reference frame, the remaining polarization modes will be \emph{transversal}. As shown in the above lemma, in the Proca case one also has a Lorenz constraint (\ref{eqn:classical_constraint}) to solve, quite similar to the Lorenz constraint in Maxwell's theory, but no gauge equivalence relation to accommodate. Therefore, the Proca field has three independent polarization modes, which, in an appropriate reference frame, consist of two transversal and one longitudinal mode.\par
The procedure to solve Proca's equation is as follows: we have already successfully decomposed the equation into a wave equation with external sources and a Lorenz constraint. Next we will solve the wave equation in terms of fundamental solutions and then restrict the initial data in such a way that the constraint (\ref{eqn:classical_constraint}) of Lemma \ref{lem:Proca_wave_equiv} is also fulfilled. This will work as follows:
Assume $A\in \Omega^1(\M)$ solves the wave equation, $(\square +m^2)A =  j + \frac{1}{m^2} d \delta j$. Now, we observe
\begin{align}
(\square +m^2) \delta A
 &= \delta (\square + m^2) A \notag\\
&= \delta \left( j + \frac{1}{m^2} d \delta j \right) \notag\\
&= \delta  j + \frac{1}{m^2} \delta d \delta j + \frac{1}{m^2} d \delta \delta j \notag\\
&= (\square + m^2 ) \left (\frac{1}{m^2} \delta j \right)  \notag \\
\iff 0 &= (\square + m^2 )\left (\delta A - \frac{1}{m^2} \delta j \right)   \label{eqn:waveeqn_constraint}
\formspace,
\end{align}
where again we have used $\delta ^2 = 0$. The solution to the wave equation for $A$ therefore yields a Klein-Gordon equation for $(\delta A - \frac{1}{m^2} \delta j)$. Imposing initial data of $(\delta A - \frac{1}{m^2} \delta j)$ with respect to the Klein-Gordon equation that \emph{vanish} on the Cauchy surface $\Sigma$ implies a globally vanishing solution to the Klein-Gordon equation\footnote{Note that for any normally hyperbolic operator acting on sections of a vector bundle over a globally hyperbolic Lorentzian manifold, specifying vanishing initial data on a Cauchy surface yields a globally vanishing solution to the corresponding homogeneous differential equation (see \cite[Corollary 3.2.4]{baer_ginoux_pfaeffle}).} (\ref{eqn:waveeqn_constraint}) which is equivalent to $A$ globally fulfilling the Lorenz constraint (\ref{eqn:classical_constraint}).
To obtain a more self contained result, we then will re-express the vanishing initial data of $(\delta A -\frac{1}{m^2} \delta j)$ with respect to the Klein-Gordon equation in terms of initial data of $A$ with respect to the wave equation.\par
In conclusion: instead of \emph{globally} constraining $\delta A = \frac{1}{m^2} \delta j$ for a solution $A$ of the wave equation (\ref{eqn:classical_wave_eqation}) one can \emph{equivalently} state vanishing initial data of $(\delta A -\frac{1}{m^2} \delta j)$ on the Cauchy surface $\Sigma$ (with respect to a Klein-Gordan equation (\ref{eqn:waveeqn_constraint})) and restrict the initial data of $A$ (with respect to the wave equation (\ref{eqn:classical_wave_eqation})), such that they are compatible with the vanishing initial data of $\left(\delta A - \frac{1}{m^2} \delta j\right)$.
\newpage
\subsubsection{Solving the wave equation}
Before solving the wave equation it is useful to define some operators, which will map a one-form $A \in \Omega^1(\M)$ to its initial data on $\Sigma$ with respect to the wave equation. In this form, these operators were first introduced by Dimock \cite{dimock1992quantizedEM} but are also used by Furlani \cite{FURLANI} and Pfenning \cite{pfenning}, allowing for a direct comparison of the obtained results, in particular when taking the zero mass limit.
\begin{definition}\label{def:cauchy_mapping_operators}
 Let $\gls{inclusionmap} : \Sigma \hookrightarrow \M$ the inclusion map of the Cauchy surface $\Sigma$ with pullback $i^*$.\\
 The operators $\rhoz, \rhod : \Omega^1(\M) \to \Omega^1(\Sigma)$ and $\rhon, \rhodelta : \Omega^1(\M) \to \Omega^0(\Sigma)$ are defined as:
 \begin{subequations}
  \begin{align}
  \gls{rhoz} 		&= i^* 										&\textrm{pullback,} \\
  \gls{rhod} 		&= -*_{(\Sigma)} i^* * d 		&\textrm{forward normal derivative,} \\
  \gls{rhodelta} 	&= i^*\delta 						&\textrm{pullback of the divergence,} \\
  \gls{rhon} 		&= -*_{(\Sigma)} i^* *  		&\textrm{forward normal.}
 \end{align}
 \end{subequations}
\end{definition}
These operators can be extended to act on arbitrary $p$-forms, as will be important when we will deal with the constraint (\ref{eqn:classical_constraint}). As we will see, the operators $\rho_{(\cdot)}$ not only map a one-form to its initial data with respect to the wave equation, but also map a zero-form to its initial data with respect to the Klein-Gordon equation. This will allow us to elegantly re-express the vanishing initial data of the constraint as mentioned in the previous section.
But first, we define a set of differential forms on $\Sigma$, which will turn out to be equivalent to the initial data of a solution $A$ to the wave equation.
\begin{definition}\label{def:cauchy_data_wave_eq}
 Let $A \in \Omega^1(\M)$ be a one-form on $\M$. The differential forms $\Az, \Ad \in \Omega^1(\Sigma)$ and $\An, \Adelta \in \Omega^0(\Sigma)$ are defined as:
 \begin{subequations}
  \begin{align}
  \Az &= \rhoz A \formspace,\\
  \Ad &= \rhod A \formspace, \\
  \An &= \rhon A \formspace, \\
  \Adelta &= \rhodelta A \formspace.
  \end{align}
 \end{subequations}
\end{definition}
Specifying these $p$-forms is equivalent to imposing the initial data $A_\mu$ and $n^\alpha \nabla_\alpha A_\mu$ on the Cauchy surface $\Sigma$ with future pointing unit normal vector field $n \in \Gamma(T\M)$ (see \cite[Chapter III]{FURLANI}). Therefore, in the following we will view $\Az,\Ad, \An, \Adelta$ as the initial data of $A$ with respect to the wave equation (\ref{eqn:classical_wave_eqation}).\par
The main key to finding a solution to the wave equation is the following lemma.
\begin{lemma}
 Let $A \in \Omega^1(\M)$ be a one-form, $F \in \Omega^1_0(\M)$ a test one-form and $m \geq 0$.
 Let $\Sigma$ denote a Cauchy surface of $\M$ with causal future/past  $\gls{sigmapm} \coloneqq J^\pm (\Sigma)$.
 Then it holds
 \begin{align}
  \int\limits_{\Sigma^\pm} &\Big[ A \wedge * \left( \square + m^2 \right) F - F \wedge *  \left( \square + m^2 \right) A \Big]  \label{eqn:greens_identity} \\
  &= \pm \Big\{ \langle \Az , \rhod F\rangle_\Sigma
  +\langle \Adelta , \rhon F\rangle_\Sigma
  -\langle \An , \rhodelta F\rangle_\Sigma
  -\langle \Ad , \rhoz F\rangle_\Sigma \Big\}   \formspace. \notag
 \end{align}
\label{lem:greens_identity}
\end{lemma}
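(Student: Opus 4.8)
The plan is to reduce the claim to repeated integration by parts via Stokes' Theorem~\ref{thm:stokes}, exploiting the formal adjointness of $d$ and $\delta$ together with the symmetry of the bilinear form $\langle\cdot,\cdot\rangle$. First I observe that the mass terms drop out of the integrand pointwise: since $\alpha \wedge *\beta = \beta \wedge *\alpha$ for any two forms of equal degree (the form being bilinear and symmetric, so no conjugation intervenes), one has $A \wedge *(m^2 F) - F \wedge *(m^2 A) = 0$, and it suffices to treat $\square = \delta d + d\delta$. Because $F \in \Omega^1_0(\M)$ is compactly supported while $\Sigma^\pm = J^\pm(\Sigma)$ is closed, the relevant integrands meet $\Sigma^\pm$ in a compact set, so Stokes' theorem applies to the manifold-with-boundary $\Sigma^\pm$ whose only contributing boundary piece is $\Sigma$. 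The induced orientation of $\Sigma$ as boundary of $\Sigma^+$ is opposite to that as boundary of $\Sigma^-$, which is the source of the global sign $\pm$.

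Next I split the integrand into its $\delta d$- and $d\delta$-parts and integrate each by parts using the boundary version of the adjointness identity, $\int_R dA \wedge *B - \int_R A \wedge *\delta B = \int_{\partial R} i^*(A \wedge *B)$, which follows by expanding $d(A \wedge *B)$ and applying Stokes' theorem exactly as in the derivation of formal adjointness in Section~\ref{sec:differential_forms}. For the $\delta d$-terms this produces bulk contributions $\int_{\Sigma^\pm} dA \wedge *dF$ and $\int_{\Sigma^\pm} dF \wedge *dA$, which cancel in the antisymmetric combination $A \wedge *\square F - F \wedge *\square A$ by symmetry of the inner product on two-forms; likewise the $d\delta$-terms leave bulk contributions $\int \delta A \wedge *\delta F = \int \delta F \wedge *\delta A$ on zero-forms that cancel as well. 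What survives is a sum of boundary integrals over $\Sigma$ of $i^*(A \wedge *dF)$, $i^*(F \wedge *dA)$, $i^*(\delta F \wedge *A)$ and $i^*(\delta A \wedge *F)$.

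Finally I rewrite each boundary integrand in terms of the Cauchy data operators. Since the pullback is an algebra homomorphism, $i^*(A \wedge *dF) = i^*A \wedge i^*(*dF)$, and the definitions $\rhod = -*_{(\Sigma)} i^* *d$, $\rhon = -*_{(\Sigma)} i^* *$, $\rhoz = i^*$, $\rhodelta = i^*\delta$ combined with $*_{(\Sigma)}*_{(\Sigma)} = \mathbbm{1}$ on $(\Sigma,h)$ let me invert the Hodge star on $\Sigma$ and convert each term into a $\Sigma$-inner product, for instance $\int_\Sigma i^*(A \wedge *dF) = -\langle \Az, \rhod F\rangle_\Sigma$ and $\int_\Sigma i^*(\delta F \wedge *A) = -\langle \An, \rhodelta F\rangle_\Sigma$, with the two remaining terms giving $-\langle \Ad, \rhoz F\rangle_\Sigma$ and $-\langle \Adelta, \rhon F\rangle_\Sigma$. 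Collecting the four contributions with their signs and the overall orientation factor reproduces precisely the right-hand side of~(\ref{eqn:greens_identity}).

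I expect the main obstacle to be the careful bookkeeping of signs: the degree-dependent factors appearing in the integration-by-parts identity, the interplay between $*$ on $\M$ and $*_{(\Sigma)}$ on $\Sigma$ encoded in the $\rho$ operators, and above all the induced-orientation sign distinguishing $\Sigma^+$ from $\Sigma^-$. No individual step is deep, but marshalling all signs into the single clean factor $\pm$ on the right-hand side demands consistent conventions throughout.
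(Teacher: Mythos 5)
Your proposal is correct and follows essentially the same route as the paper: the paper packages your two integrations by parts into a single application of Stokes' theorem to the explicit boundary three-form $H = (\delta F \wedge *A - \delta A \wedge *F) - (A \wedge *dF - F \wedge *dA)$, whose exterior derivative is shown to equal $A \wedge *\square F - F \wedge *\square A$ after the same bulk cancellations you identify, and then handles the orientation sign and the conversion of the four boundary terms into $\Sigma$-inner products via the $\rho_{(\cdot)}$ operators exactly as you describe. The two presentations differ only in bookkeeping, not in substance.
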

\begin{proof}
 The proof is based on a proof outline from \cite[Appendix A]{FURLANI}. We start with Stoke's theorem for a $(p+1)$-dimensional sub-manifold $\mathcal{O} \subset \M$ with boundary $\partial \mathcal O$. Let $i : \partial\mathcal{O} \hookrightarrow \mathcal{O}$ be the inclusion operator. Then for any compactly supported $p$-form $\omega \in \Omega^p_0 (\M)$ it holds that (see Theorem \ref{thm:stokes})
 \begin{align}
  \int \limits_\mathcal O d\omega = \int \limits_{\partial \mathcal O} i^* \omega \formspace.
  \end{align}
Let $A \in \Omega^1(\M)$ and $F \in \Omega_0^1 (\M)$. We  define the compactly supported three-forms
\begin{subequations}
\begin{align}
 H' &= \delta F \wedge *A - \delta A \wedge *F \\
 H'' &= A \wedge * dF - F \wedge *dA \\
 H &=  H' -  H'' \formspace.
\end{align}
\end{subequations}
We will apply Stoke's theorem to the three-form $H$. For that we first need to calculate $dH$:
\begin{align}
 dH &=\; d\big( \delta F \wedge *A - \delta A \wedge *F \big) - d \big(A \wedge * dF - F \wedge *dA \big)  \notag \\
  &= d\delta F \wedge *A + (-1)^0 \delta F \wedge d{*A} - d\delta A \wedge *F - (-1)^0 \delta A \wedge d{*F} \notag \notag \\
   &\phantom{M}- dA \wedge *dF - (-1)^1 A \wedge (d{*{dF}}) + dF \wedge *{dA} + (-1)^1 F \wedge d{*{d A}} \notag \\
  &= A \wedge *d\delta F + \delta F \wedge d {* A} - F \wedge *{d\delta A} - \delta A \wedge d {* F }\notag \notag \\
   &\phantom{M}- {dA \wedge *dF} + A \wedge d{*{d F}}  + {dF \wedge * {dA}} -F \wedge d{*{dA }}
\end{align}
where we have used the linearity and graded Leibniz rule for the exterior derivative $d$, and the property $\alpha \wedge * \beta = \beta \wedge * \alpha$ of the wedge product and the Hodge star operator. With this, the terms $-{dA \wedge *dF}$ and ${dF \wedge * {dA}}$ cancel. Next, we use that for the test one-form $F$ it holds that $d{*dF} = (-1)^4 {*{*{d*}}} dF =  *\delta d F$ and analogously for $A$. We therefore get
\begin{align}
 dH &= A \wedge * ( d \delta F + \delta d F) - F \wedge * (d\delta A + \delta d A ) \notag \\
  &\phantom{M}+ \delta F \wedge d{*A} - \delta A \wedge d{* F} \label{eqn:greens_identity_zwischenergebnis}  \\
  &= A \wedge * \square F - F \wedge * \square A \formspace.
\end{align}
We have used in the last step that
\begin{align}
 \delta F \wedge d{* A}\; &=  \delta F \wedge {* *^{-1}}d{*A} \notag\\
 &=  - \delta F \wedge * \delta A \notag\\
 &=  - \delta A \wedge * \delta F \notag\\
 &=  \delta A \wedge {**^{-1}}{d*}F \notag\\
 &= \delta A \wedge d{*F} \formspace,
\end{align}
therefore the last two terms in equation (\ref{eqn:greens_identity_zwischenergebnis}) cancel. Furthermore we can add a vanishing term $0 = A \wedge * m^2 F - F \wedge * m^2 A$ to $dH$ to find the wanted relation by Stoke's theorem:
\begin{align}
 \int\limits_\mathcal O A \wedge * (\square + m^2) F - F \wedge * (\square + m^2) A
 =& \int\limits_{\partial O} i^*\left( \delta F \wedge *A + F \wedge *dA\right) \notag \\
 &- \int\limits_{\partial O} i^*\left( A \wedge *dF + \delta A \wedge *F\right)
\end{align}
Now, we specify $\mathcal O = \Sigma^\pm - \Sigma \Rightarrow \partial \mathcal O = \Sigma$ and note that with respect to integration on $\M$, the Cauchy surfaces denotes a set of measure zero, $\int_{\Sigma^\pm - \Sigma} = \int_{\Sigma^\pm}$. Furthermore, for Stoke's theorem to hold, we need to choose a consistent orientation of the boundary $\Sigma$ with respect to $\Sigma^\pm$. Following \cite[Appendix B.2]{wald_GR}, the orientation of $\Sigma^\pm$ induces a natural orientation on the boundary $\Sigma$. In the case of a Cauchy surface, coordinates in the neighborhood can be parametrized by one parameter $t$ such that $x_\mu=(t, x_1,x_2,x_3)$ where $x_i$ are (right handed) coordinates on $\Sigma$. For $\Sigma^+$ we get a right handed coordinate system in the natural way. For $\Sigma^-$ we get a right handed system if we flip the orientation of $\Sigma$, therefore we will get a relative sign for Stoke's theorem applied to $\mathcal O = \Sigma^-$, where we choose the natural standard orientation (induced by $\M$) on $\Sigma$. Therefore,  Stoke's Theorem reads $\int_{\Sigma^{\pm}} d\omega = \pm \int_{\Sigma} i^* \omega$. Specifying $\omega = H$ we find:
\begin{align}
 \int\limits_{\Sigma^\pm} A \wedge * (\square + m^2) F &- F \wedge * (\square + m^2) A   \\
 = \pm  \Bigg\{ &\int\limits_{\Sigma} i^* \delta F \wedge i^* *A + \int\limits_{\Sigma}  i^* F \wedge i^* *dA \notag \\
 &- \int\limits_{\Sigma} i^* A \wedge i^* *dF - \int\limits_{\Sigma}  i^* \delta A \wedge i^* *F  \Bigg\} \notag \\
 = \pm  \Bigg\{ &\int\limits_{\Sigma} \rhodelta F \wedge *_{(\Sigma)} *_{(\Sigma)} i^* *A + \rhoz F \wedge *_{(\Sigma)} *_{(\Sigma)} i^* *dA \notag \\
 &- \int\limits_{\Sigma} \Az \wedge *_{(\Sigma)} *_{(\Sigma)} i^* *dF -\int\limits_{\Sigma} \Adelta \wedge *_{(\Sigma)} *_{(\Sigma)} i^* *F  \Bigg\} \notag \\
 = \pm  \Bigg\{- &\int\limits_{\Sigma} \rhodelta F \wedge *_{(\Sigma)} \An -  \int\limits_{\Sigma} \rhoz F \wedge *_{(\Sigma)} \Ad \notag \\
 &+ \int\limits_{\Sigma} \Az \wedge *_{(\Sigma)} \rhod F +\int\limits_{\Sigma}  \Adelta \wedge *_{(\Sigma)} \rhon F  \Bigg\} \notag \\
 = \pm  \Big\{ &\langle \Az , \rhod F \rangle_\Sigma +\langle  \Adelta , \rhon F \rangle_\Sigma
 - \langle \rhodelta F , \An \rangle_\Sigma -  \langle \rhoz F , \Ad \rangle_\Sigma \Big\}  \notag
\end{align}
In the last steps we have made use of the Definitions \ref{def:cauchy_mapping_operators} and  \ref{def:cauchy_data_wave_eq}.
\end{proof}
To write down a solution to the wave equation (\ref{eqn:classical_wave_eqation}) we need to introduce the notion of fundamental solution of partial differential operators, in particular of the wave operator and, later, the Proca operator.
\begin{lemma}[Fundamental solutions of the wave operator]\label{lem:fundamental_solution_wave_operator}
Let $m \geq 0$. The wave operator  $(\square + m^2) : \Omega^p(\M) \to \Omega^p(\M)$ has unique advanced ($-$) and retarded ($+$) fundamental solutions $\gls{Epm} : \Omega^p_0(\M) \to \Omega^p(\M)$, which fulfill
\begin{subequations}
\begin{align}
 E_m^\pm (\square + m^2) &= \mathbbm 1 = (\square + m^2) E_m^\pm \formspace \text{\upshape{\quad\quad and}} \label{def:fundamental_solution} \\
 \supp{E_m^\pm F } &\subset J^\pm \big( \supp{F } \big) \formspace, \label{def:fundamental_solution_support}
\end{align}
\end{subequations}
where $F \in \Omega^p_0(\M)$ is a test $p$-form.
Furthermore, the fundamental solutions commute (or intertwine their action) with the exterior and interior derivative:
\begin{align}
E^\pm_m \delta &= \delta E^\pm_m \\
E^\pm_m d &= d E^\pm_m \formspace.
\end{align}
The advanced minus retarded fundamental solution is denoted by
\begin{align}
 \gls{Em} = E^-_m - E^{+}_{m} \formspace.
\end{align}
\end{lemma}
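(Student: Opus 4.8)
The plan is to reduce the whole statement to the standard existence-and-uniqueness theory for normally hyperbolic operators recalled in Section \ref{sec:global_hyperbolicity}. The one genuine observation needed is that $(\square + m^2)$, acting on sections of the bundle $\Lambda^p \TsM$ of $p$-forms, is in fact normally hyperbolic. This follows from the Weitzenb\"ock identity, which expresses $\square = \delta d + d\delta$ as $-\nabla^\mu\nabla_\mu$ plus a zeroth-order curvature endomorphism (on scalars the curvature term is absent, recovering $\square f = -\nabla^\mu\nabla_\mu f$ as in Section \ref{sec:sign_conventions}). Hence the principal part of $\square$ is $-g^{\mu\nu}\partial_\mu\partial_\nu$, and adding the scalar $m^2$ does not change it, so that in any local trivialisation of $\Lambda^p\TsM$ the operator $(\square+m^2)$ has exactly the form displayed in Section \ref{sec:global_hyperbolicity} with matrix-valued lower-order coefficients; it is therefore normally hyperbolic.

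Once normal hyperbolicity is established, the existence and uniqueness of the advanced and retarded fundamental solutions $E_m^\pm : \Omega^p_0(\M) \to \Omega^p(\M)$, together with the inverse property \eqref{def:fundamental_solution} and the causal support property \eqref{def:fundamental_solution_support}, are precisely the conclusions of the general theorem for normally hyperbolic operators on a globally hyperbolic spacetime; I would simply invoke \cite{baer_ginoux_pfaeffle}. No separate argument is required here, since $\M$ is globally hyperbolic by assumption.

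It remains to prove that the $E_m^\pm$ intertwine $d$ and $\delta$, and for this I would use the uniqueness just obtained. First, $(\square+m^2)$ commutes with $d$ and with $\delta$, since $\square$ does (as computed in Section \ref{sec:differential_forms}) and the scalar $m^2$ commutes with everything. Fix $F \in \Omega^{p-1}_0(\M)$, so that $dF \in \Omega^p_0(\M)$, and read $E_m^\pm$ in the appropriate degree. Then both $d\,E_m^\pm F$ and $E_m^\pm(dF)$ solve $(\square+m^2)\,u = dF$: indeed $(\square+m^2)\,d\,E_m^\pm F = d\,(\square+m^2)\,E_m^\pm F = dF$ by the inverse property, while $(\square+m^2)\,E_m^\pm(dF)=dF$ by definition. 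Moreover both have support contained in $J^\pm(\supp{F})$, which is past/future compact because $\supp{F}$ is compact. By the uniqueness part of the theorem (equivalently, a solution of the homogeneous wave equation with past- or future-compact support vanishes, cf.\ \cite{baer_ginoux_pfaeffle}), the two coincide, giving $E_m^\pm d = d\,E_m^\pm$. The identity $E_m^\pm \delta = \delta\,E_m^\pm$ is obtained verbatim, starting from $F \in \Omega^{p+1}_0(\M)$ and using $\square\delta = \delta\square$.

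There is no deep obstacle: the lemma is essentially a repackaging of known results for later use. The only points requiring care are the verification of normal hyperbolicity via the Weitzenb\"ock identity, which is what allows the general machinery to apply to forms rather than to scalars, and the bookkeeping in the intertwining argument — one must keep in mind that $E_m^\pm$ is really a family of operators indexed by the form degree $p$, that $d$ and $\delta$ shift this degree, and that the support sets $J^\pm(\supp{F})$ are of the correct past/future-compact type for the uniqueness statement to apply.
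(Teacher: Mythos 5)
Your proposal is correct and follows essentially the same route as the paper: existence, the quasi-inverse property and the causal support property are delegated to the standard theory of normally hyperbolic operators in \cite{baer_ginoux_pfaeffle}, and the intertwining with $d$ and $\delta$ is deduced from the commutation of $(\square+m^2)$ with these derivatives together with the support properties and a uniqueness argument. The only cosmetic difference is how uniqueness is invoked — you use the vanishing of homogeneous solutions with past/future-compact support, while the paper picks a Cauchy surface disjoint from $J^\pm(\supp{F})$ and applies uniqueness of the Cauchy problem with vanishing data; these are equivalent formulations of the same fact.
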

\begin{proof}
The properties (\ref{def:fundamental_solution_support}) and (\ref{def:fundamental_solution}) for the fundamental solutions for any normally hyperbolic operator acting on smooth sections in a vector bundle over a Lorentzian manifold are proven in \cite[Corollary 3.4.3]{baer_ginoux_pfaeffle}.
We will here give a proof of the commutation with the exterior and interior derivative:
	\begin{figure}[h]
		\begin{center}
			\scalebox{0.9}{
				\begin{tikzpicture}
				\node at (0,0) {\includegraphics[scale=0.9]{./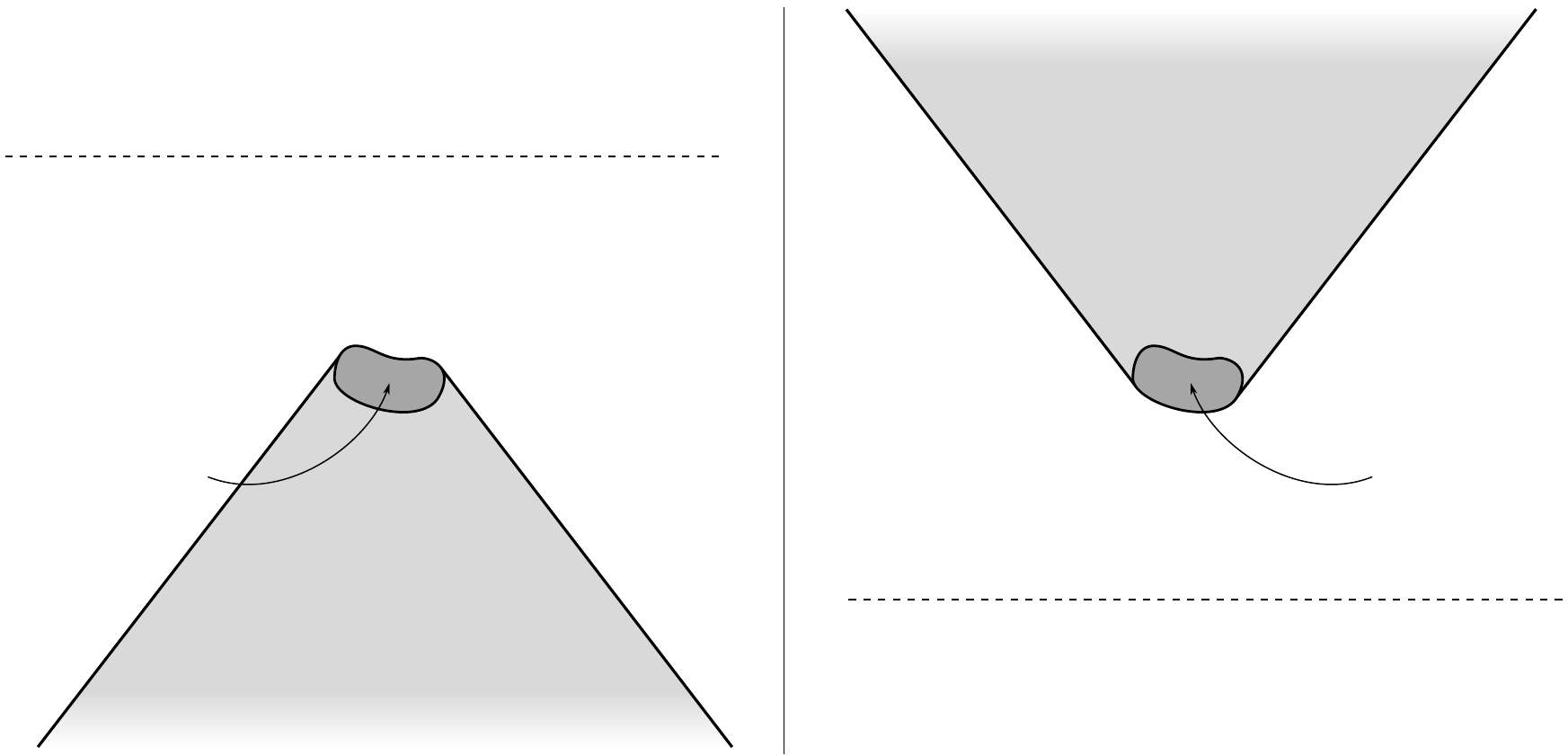}};
				\node at (4,3.5)  {$J^+\big( \supp{F} \big)$};
				\node at (-4,-3.5)  {$J^-\big( \supp{F} \big)$};
				\node at (-8.3,2.2)  {$\Sigma_+$};
				\node at (8.3,-2.3)  {$\Sigma_-$};
				\node at (7,-0.9)  {$\supp{F}$};
				\node at (-7,-0.9)  {$\supp{F}$};
				\end{tikzpicture}
			}
		\end{center}
		\caption{Illustrating the setup of the proof of Lemma \ref{lem:fundamental_solution_wave_operator}: The Cauchy surfaces $\Sigma_\mp$ are chosen such that they have vanishing intersection with $J^\pm\big( \supp{F}\big)$ and thus $(\delta E^\pm_m - E^\pm_m \delta) F $ specifies vanishing initial data on the corresponding Cauchy surface.}
		\label{fig:fund_solutions}
	\end{figure}
Let $F \in \Omega^p_0(\M)$ be a test $p$-form and $E^\pm_m$ the fundamental solutions to the wave operator $(\square + m^2)$.
Then, from $\delta (\square + m^2) =(\square + m^2) \delta $ it follows:
\begin{align}
(\square + m^2)  E^\pm_m \delta F
&= \delta  F \notag \\
&= \delta (\square +m^2) E^\pm_m F \notag\\
&= (\square + m^2) \delta E^\pm_m F \notag \\
\implies (\square + m^2)(\delta E^\pm_m - E^\pm_m \delta) F &= 0 \formspace.
\end{align}
Since derivatives do not enlarge the support of a function they are acting on, we know
\begin{align}
\supp{\delta E^\pm_m F } &\subset \supp{E^\pm_m F } \subset J^\pm(\supp{F }) \formspace \textrm{and} \\
\supp{E^\pm_m \delta F } & \subset J^\pm(\supp{\delta F }) \subset J^\pm(\supp{F })\formspace.
\end{align}
Now, on a Cauchy surface $\Sigma_\mp$ in the past/future of $\supp{F }$ we specify initial data of $(\delta E^\pm_m - E^\pm_m \delta) F $ with respect to the wave operator $(\square + m^2)$  for the plus and minus sign  respectively. Because of the support property mentioned above, we know by construction that these initial data vanish. This is also illustrated in Figure \ref{fig:fund_solutions}. With respect to the homogeneous differential equation, specifying vanishing initial data on a Cauchy surface yields a globally vanishing solution (c.f. \cite[Corollary 3.2.4]{baer_ginoux_pfaeffle}), therefore for all $F \in \Omega^p_0(\M)$ it holds
\begin{align}
E^\pm_m \delta F= \delta E^\pm_m F\formspace.
\end{align}
Since also the exterior derivative $d$ does not extend the support of a function and commutes with the wave operator $(\square + m^2)$, the proof for the commutativity follows in complete analogy.
\end{proof}
With the notion of the fundamental solutions we can state a solution to the wave equation (\ref{eqn:classical_wave_eqation}) in form of the following theorem:
\begin{theorem}[Solution to the wave equation]
 Let $\Az,\Ad \in \Omega^1(\Sigma)$ and $\An,\Adelta \in \Omega^0(\Sigma)$ specify initial data of the one-form $A\in \Omega^1(\M)$ on the Cauchy surface $\Sigma$. Let $F \in \Omega^1_0(\M)$ be a test one-form and $\kappa \in \Omega^1(\M)$ be an external source.\\
 Then
 \begin{align}
  \langle A, F \rangle_\M = \sum\limits_\pm \langle E_m^\mp F , \kappa \rangle_{\Sigma^\pm}
  &+ \langle \Az , \rhod E_m F \rangle_\Sigma
  + \langle \Adelta , \rhon E_m F \rangle_\Sigma \notag \\
  &- \langle \An , \rhodelta E_m F \rangle_\Sigma
  - \langle \Ad , \rhoz E_m F \rangle_\Sigma
 \end{align}
specifies the unique smooth solution of the wave equation $(\square + m^2)A= \kappa$, where $m \geq 0$, with the given initial data. Furthermore, the solution depends continuously on the initial data. \label{thm:solution_wave_equation}
\end{theorem}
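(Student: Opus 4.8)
The plan is to split the statement into its standard and its novel parts. That the Cauchy problem $(\square + m^2)A = \kappa$ with the prescribed data admits a \emph{unique smooth} solution depending continuously on that data is precisely the well-posedness of the initial value problem for the normally hyperbolic operator $(\square + m^2)$ on a globally hyperbolic spacetime, which I would simply invoke from the general theory \cite{baer_ginoux_pfaeffle}; as noted after Definition \ref{def:cauchy_data_wave_eq}, prescribing $\Az,\Ad,\An,\Adelta$ is equivalent to prescribing $A_\mu$ and $n^\alpha\nabla_\alpha A_\mu$ on $\Sigma$, so those results apply directly. The genuine content of the theorem is then the explicit representation of this solution through the fundamental solutions $E_m^\pm$, and this is the identity I would prove.

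First I would take the unique solution $A$ and an arbitrary test one-form $F\in\Omega^1_0(\M)$, and feed the fundamental solutions into Green's identity (Lemma \ref{lem:greens_identity}): apply that identity on $\Sigma^\pm$ with $F$ replaced by $E_m^\mp F$, i.e. the advanced solution $E_m^-F$ on $\Sigma^+$ and the retarded solution $E_m^+F$ on $\Sigma^-$. Using $(\square+m^2)E_m^\mp F = F$ from Lemma \ref{lem:fundamental_solution_wave_operator} together with $(\square+m^2)A=\kappa$, the two bulk integrands collapse and Green's identity reduces to
\begin{align}
\int_{\Sigma^\pm}\big[A\wedge *F - E_m^\mp F \wedge *\kappa\big] = \pm\Big\{ & \langle \Az,\rhod E_m^\mp F\rangle_\Sigma + \langle\Adelta,\rhon E_m^\mp F\rangle_\Sigma \notag\\
& - \langle\An,\rhodelta E_m^\mp F\rangle_\Sigma - \langle\Ad,\rhoz E_m^\mp F\rangle_\Sigma \Big\} \formspace.
\end{align}
Since $\Sigma$ is a set of measure zero in $\M$, summing the $+$ and $-$ equations makes the left-hand sides add up to $\langle A,F\rangle_\M$; on the right the two source integrals assemble into $\sum_\pm\langle E_m^\mp F,\kappa\rangle_{\Sigma^\pm}$, while the boundary pairings combine, by linearity of the $\rho_{(\cdot)}$ and the definition $E_m=E_m^- - E_m^+$, into exactly the four terms $\langle\Az,\rhod E_m F\rangle_\Sigma + \langle\Adelta,\rhon E_m F\rangle_\Sigma - \langle\An,\rhodelta E_m F\rangle_\Sigma - \langle\Ad,\rhoz E_m F\rangle_\Sigma$ claimed in the theorem.

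The step I expect to be the main obstacle is that Lemma \ref{lem:greens_identity} is proved only for compactly supported test forms, whereas $E_m^\mp F$ is not compactly supported. The remedy is the causal support property of Lemma \ref{lem:fundamental_solution_wave_operator}: on $\Sigma^+$ the relevant form $E_m^-F$ is supported in $J^-(\supp{F})$, so both the region $J^-(\supp{F})\cap\Sigma^+$ and the trace $J^-(\supp{F})\cap\Sigma$ are compact by global hyperbolicity (a causal diamond argument reducing to the compact shadow $J^-(\supp{F})\cap\Sigma$), and symmetrically for $E_m^+F$ on $\Sigma^-$ with $J^+(\supp{F})$. Hence every integral above converges and Green's identity extends to these forms, e.g. by multiplying by a cutoff that is identically one on the relevant compact sets and removing it in the limit. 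Continuous dependence on the data, finally, can be read off directly from the displayed representation, each boundary pairing being linear and continuous in $\Az,\Ad,\An,\Adelta$, in agreement with the cited well-posedness result.
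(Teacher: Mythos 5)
Your proposal follows essentially the same route as the paper's proof: substituting $F' = E_m^\mp F$ into Green's identity on $\Sigma^\pm$, using the support property to justify the non-compactly-supported substitution, summing the two signs to assemble $E_m = E_m^- - E_m^+$, and citing the general well-posedness theory for normally hyperbolic operators for uniqueness and continuous dependence. Your more explicit treatment of why the integrals converge is a welcome elaboration of a point the paper handles only in passing, but it is not a different argument.
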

\begin{proof}
The proof follows directly from Lemma \ref{lem:greens_identity} by adding the equations for the plus and minus sign and specifying $F ' = E_m^\mp F $ (even though, for a compactly supported one form $F $,  $F '$ does not have compact support, we will see below that the integrals are still well defined and Lemma \ref{lem:greens_identity} is applicable): \\
1.) Let $A\in \Omega^1(\M)$ be a smooth solution to the wave equation $(\square + m^2)A = \kappa$ and $F ' \in \Omega^1(\M)$ a test one-form. Then the LHS of equation (\ref{eqn:greens_identity}) reads
\begin{align}
 \int\limits_{\Sigma^\pm} \Big[ A \wedge * \left( \square + m^2 \right) F ' - F ' \wedge * \left( \square + m^2 \right) A \Big]
&=  \int\limits_{\Sigma^\pm} \Big[ A \wedge * \left( \square + m^2 \right) F ' - F ' \wedge * \kappa \Big] \notag \\
&= \int\limits_{\Sigma^\pm} \Big[ A \wedge * \left( \square + m^2 \right) E_m^\mp F - E_m^\mp F \wedge * \kappa \Big]  \notag \\
&= \int\limits_{\Sigma^\pm} \Big[ A \wedge *F - E_m^\mp F \wedge * \kappa \Big]  \formspace,
\end{align}
where we have  substituted $F ' = E_m^\mp F $ for the regions $\Sigma^\pm$ respectively, such that, because $\supp{E_m^\mp F } \subset J^\mp\big(  \supp{F }\big)$, $\supp{E_m^\mp F } \cap \Sigma^\pm$ is compact  and therefore the integrals are well defined and Lemma \ref{lem:greens_identity} is indeed applicable\footnote{ We have also commented on this in the proof of Stoke's Theorem \ref{thm:stokes}.}. If we now add the left hand side equations for the plus and the minus sign respectively, we obtain
\begin{align}
\textrm{LHS}^+ + \textrm{LHS}^- &= \int\limits_{\Sigma^+}  A \wedge *F
- \int\limits_{\Sigma^+}  E_m^- F \wedge * \kappa
+\int\limits_{\Sigma^-}  A \wedge *F
- \int\limits_{\Sigma^-}  E_m^+ F \wedge * \kappa \notag \\
&= \int\limits_{\M}  A \wedge *F - \sum\limits_\pm \int\limits_{\Sigma^\pm}  E_m^\mp F \wedge * \kappa \formspace.
\end{align}
We have used that, with respect to an integration over the manifold $\M$, the Cauchy surface $\Sigma$ denotes a set of zero measure and therefore it holds for any compactly supported $4$-form $\alpha$
\begin{align}
\int \limits_{\M} \alpha = \int \limits_{\Sigma^+} \alpha  + \int\limits_{\Sigma^-} \alpha \formspace.
\end{align}
2.) For the right hand side we identify $E_m = E_m^- - E_m^+$ and directly obtain, after summing the equations for the two signs:
\begin{align}
\textrm{RHS}^+ + \textrm{RHS}^-
=   &\,\langle \Az , \rhod E_m F \rangle_\Sigma
  + \langle \Adelta , \rhon E_m F \rangle_\Sigma \notag \\
  & -\langle \An , \rhodelta E_m F \rangle_\Sigma
  - \langle \Ad , \rhoz E_m F \rangle_\Sigma \formspace.
\end{align}
Bringing the factor containing $\kappa$ from the left to the right hand side one finds the wanted relation. Uniqueness of the solution, in a distributional sense, and continuous dependence on the initial data follows from \cite[Theorem, 3.2.12]{baer_ginoux_pfaeffle} which is generalized for non-compactly supported initial data in \cite[Theorem 2.3]{Sanders}.
\end{proof}
\subsubsection{Implementing the Lorenz constraint}\vspace{-5mm}
Now that we have found a solution to the wave equation (\ref{eqn:classical_wave_eqation}) we would like to restrict the initial data, such that the Lorenz constraint (\ref{eqn:classical_constraint}) is fulfilled and we therefore have a solution to Proca's equation. Again, we do this by restricting the initial data of $A$ such that $(\delta A - \frac{1}{m^2} \delta j)$ has vanishing initial data on a Cauchy surface $\Sigma$. It is useful to express the initial data of the zero-form $(\delta A - \frac{1}{m^2} \delta j)$ in terms of the operators $\rho_{(\cdot)}$. For that we need to extend these operators to act on zero-forms. By using the same Definition \ref{def:cauchy_mapping_operators} but letting the operators act on zero-forms we find that:
\begin{lemma}\label{lem:klein_gordon_cauchy_data}
Let $\Sigma$ be a Cauchy surface with unit normal vector field $n$. For any smooth zero-form $f\in \Omega^0(\M)$ it holds that
\begin{subequations}
\begin{align}
\rhon f &= 0 \formspace, \\
\rhodelta f &= 0 \formspace, \\
\rhoz f &= \restr{f}{\Sigma} \formspace, \\
\rhod f &= \restr{(df)(n)}{\Sigma} = \restr{\left(n^\alpha \nabla_\alpha f\right)}{\Sigma} \formspace.
\end{align}
\end{subequations}
Therefore, with respect to the Klein Gordon equation, $\rhoz f$ and $\rhod f$ specify initial data on $\Sigma$.
\end{lemma}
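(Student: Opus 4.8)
The plan is to verify the four identities one at a time directly from Definition~\ref{def:cauchy_mapping_operators}, using only that $f$ has form degree zero and that $\dim\Sigma = 3$. Two of them are essentially definitional. Since the pullback of a function is its composition with the inclusion, $\rhoz f = i^* f = \restr{f}{\Sigma}$. For $\rhodelta f = i^*\delta f$ I use that the interior derivative lowers the form degree by one, so $\delta$ applied to a zero-form would be a $(-1)$-form and hence vanishes identically; thus $\rhodelta f = 0$.

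The identity $\rhon f = 0$ is a degree count. On the four-dimensional $\M$ the Hodge dual $* f$ of the zero-form $f$ is a four-form, so its pullback $i^*(*f)$ lives in $\Omega^4(\Sigma)$, which is the trivial space because $\dim\Sigma = 3$. Hence $i^* * f = 0$ and $\rhon f = -*_{(\Sigma)}\, i^* * f = 0$. This is the same degree obstruction that makes $\rhon$ return a normal, rather than tangential, datum when it is applied to one-forms.

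The only computation with content is $\rhod f = -*_{(\Sigma)}\, i^* * df$, where all degrees match: $df$ is a one-form, $* df$ a three-form on $\M$, $i^*(* df)$ a top-form on $\Sigma$, and $*_{(\Sigma)}$ then returns a function. I would organize it around the observation that $\rhod = \rhon \comp d$ together with the key fact that $-*_{(\Sigma)}\, i^* *$ applied to \emph{any} one-form $\omega$ recovers the restriction of its normal component $\restr{\omega(n)}{\Sigma}$. To establish this I would pass to Gaussian normal coordinates adapted to $\Sigma$, in which $g = -dt^2 + h_{ij}\,dx^i\,dx^j$, the future unit normal is $n = \partial_t$ with $n_\mu n^\mu = -1$, and $\detg = \deth$ along $\Sigma$. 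Writing $\omega = \omega_0\,dt + \omega_i\,dx^i$, the pullback $i^*$ annihilates $dt$, so only the purely spatial components of $* \omega$ survive; the coordinate formula for the Hodge star, with $\epsilon_{0ijk}$ reducing to the three-dimensional symbol, shows after a short contraction of Levi-Civita symbols that $*_{(\Sigma)}\, i^* * \omega = -\omega_0 = -\omega(n)$. The overall minus sign in the definition then gives $\rhon\omega = \restr{\omega(n)}{\Sigma}$, and specializing to $\omega = df$ with $(df)(n) = n^\alpha\nabla_\alpha f$ for the scalar $f$ produces the stated formula for $\rhod f$.

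The main obstacle is exactly this last step: tracking the interplay of the ambient star $*$, the pullback $i^*$, and the induced star $*_{(\Sigma)}$, together with the signs forced by the Lorentzian signature and the future orientation of $n$. Gaussian normal coordinates trivialize the metric in the normal direction and let the factors of $\deth$ cancel cleanly, but I must fix the orientation of $\Sigma$ to be the one induced from $\M$ --- the same convention used in Lemma~\ref{lem:greens_identity} --- so that the Levi-Civita symbols on $\M$ and on $\Sigma$ are compatibly normalized; the opposite choice would flip the sign of $\rhod f$. Once the one-form identity is secured, the factorization $\rhod = \rhon \comp d$ makes the final claim immediate and also clarifies why $\rhoz f$ and $\rhod f$, but not $\rhon f$ or $\rhodelta f$, furnish the Klein--Gordon Cauchy data of $f$ on $\Sigma$.
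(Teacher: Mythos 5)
Your proposal is correct and follows essentially the same route as the paper: $\rhoz$ and $\rhodelta$ are handled definitionally, $\rhon f=0$ is the same degree count (the paper just writes it out in adapted coordinates), and $\rhod f$ is reduced via $\rhod=\rhon\comp d$ to the identity $\rhon\omega=\restr{\omega(n)}{\Sigma}$ for one-forms. The only difference is that the paper cites Furlani for that last identity, whereas you supply the Gaussian-normal-coordinate computation yourself — which checks out, including the signs.
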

\begin{proof}
1.) We begin with the first equation and chose local coordinates such that  the natural inclusion $i: \Sigma \hookrightarrow \M$ maps $i: (\vector x) \mapsto (0, \vector x)$. In those coordinates the pullback of a $p$-form $T$ is given by $i^*\left( T\indices{_{\alpha_1 \dots \alpha_p}}\right) = T\indices{_{a_1 \dots a_p}}$, where as usual Greek letters take values in $\{0,1,2,3\}$ and Latin letters in $\{1,2,3\}$. From this we find
\begin{align}
\rhon f
&= - *_{(\Sigma)} i^* * f \notag \\
&= - *_{(\Sigma)} i^* \left( \sqrt{\abs{g\indices{_{\mu \nu}}}}\, f\, \epsilon_{\alpha \beta \gamma \delta} \, dx^\alpha \otimes dx^\beta \otimes dx^\gamma \otimes dx^\delta \right) \notag \\
&=  *_{(\Sigma)}  \sqrt{\abs{g\indices{_{\mu \nu}}}}\, f\, \underbrace{\epsilon_{ijkl} \, dx^i \otimes dx^j \otimes dx^k \otimes dx^l}_{=0} \notag \\
&= 0 \formspace .
\end{align}
2.) The second equation follows from, using that for any zero-form $f$ it holds ${\delta f} = 0$,
\begin{align}
\rhodelta f
&= i^* \delta f \notag \\
&= 0 \formspace.
\end{align}
3.) It follows directly from the definition of $\rhoz$ and the inclusion operator $i$ that
\begin{align}
\rhoz f = i^* f = \restr{f}{\Sigma} \formspace.
\end{align}
4.) The action of $\rhod$ on a zero form $f$ is equivalent to the action of $\rhon$ on the one form $df$. For any one-form $A$, the action of $\rhon$ is shown by Furlani \cite[Appendix A]{FURLANI} to be $\rhon A = n^\alpha A_\alpha$. Therefore, we find:
\begin{align}
\rhod f
&= - *_{(\Sigma)} {i^* * }(d f) \notag  \\
&= \rhon df \notag \\
&= n^\alpha (df)_\alpha \notag \\
&= n^\alpha \nabla_\alpha f \formspace.
\end{align}
\end{proof}
In this section we need to introduce some additional properties regarding the properties of the normal vector field $n$ of the Cauchy surface $\Sigma$.
\begin{lemma}[Gaussian Coordinates] \label{lem:normal_vectors}
Let $\Sigma$ be a Cauchy surface of $\M$ with future pointing unit normal vector field $n$. We can extend $n$ to a neighborhood of $\Sigma$ such that the following holds:
\begin{align}
n^\alpha \nabla_\alpha n^\beta = 0 \formspace, \label{eqn:gauss_coordinates_geodesic}\\
dn = 2 \nabla_{[\mu} n_{\nu]} = 0 \formspace. \label{eqn:gauss_coordinates_symmetric}
\end{align}
\end{lemma}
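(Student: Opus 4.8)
The plan is to construct the extension of $n$ via the normal exponential map (so-called Gaussian or synchronous normal coordinates) and then verify the two identities directly. First I would extend $n$ off $\Sigma$ as follows: through each point $p \in \Sigma$ run the unique geodesic $\gamma_p$ with $\gamma_p(0)=p$ and $\dot\gamma_p(0)=n_p$, and set $n$ at $\gamma_p(t)$ equal to $\dot\gamma_p(t)$. Since the normal exponential map is a diffeomorphism on a sufficiently small tubular neighborhood of $\Sigma$, this yields a smooth timelike vector field $n$ defined on a neighborhood of $\Sigma$. Property \eqref{eqn:gauss_coordinates_geodesic}, namely $n^\alpha \nabla_\alpha n^\beta = 0$, then holds by construction, because the integral curves of $n$ are precisely the geodesics $\gamma_p$.

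Next I would record two immediate consequences of this construction. Differentiating the norm along the flow gives $n^\alpha\nabla_\alpha(g_{\beta\gamma}n^\beta n^\gamma) = 0$ by the geodesic equation and metric compatibility, so $g(n,n) \equiv -1$ on the whole neighborhood since it equals $-1$ on $\Sigma$; consequently $n^\mu\nabla_\nu n_\mu = \tfrac12\nabla_\nu(n^\mu n_\mu) = 0$. Lowering the index in the geodesic equation moreover gives $n^\mu\nabla_\mu n_\nu = 0$. Writing $\omega_{\mu\nu} \coloneqq (dn^\flat)_{\mu\nu} = 2\nabla_{[\mu}n_{\nu]}$ for the (closed) two-form $dn^\flat$ built from the metric dual $n^\flat$ of $n$, these two facts combine to give $n^\mu\omega_{\mu\nu} = n^\mu\nabla_\mu n_\nu - n^\mu\nabla_\nu n_\mu = 0$, that is, the contraction of $dn^\flat$ with $n$ vanishes identically.

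The remaining content, and the step I expect to be the main obstacle, is to upgrade $n^\mu\omega_{\mu\nu}=0$ to the full statement $\omega = 0$, i.e.\ to \eqref{eqn:gauss_coordinates_symmetric}. I would do this in two moves. On $\Sigma$ itself the purely tangential part already vanishes: for vector fields $X,Y$ tangent to $\Sigma$ the invariant formula for the exterior derivative gives $(dn^\flat)(X,Y) = X\big(g(n,Y)\big) - Y\big(g(n,X)\big) - g(n,[X,Y]) = 0$, since $g(n,\cdot)$ annihilates $T\Sigma$ (because $n\perp\Sigma$) and $[X,Y]$ is again tangent to $\Sigma$. Together with $n^\mu\omega_{\mu\nu}=0$ and the decomposition of an arbitrary tangent vector into its $n$-part and its $\Sigma$-tangential part, this shows $\omega|_\Sigma = 0$. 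To propagate this off $\Sigma$ I would show $\omega$ is invariant under the flow of $n$: by Cartan's formula $\mathcal{L}_n(n^\flat) = d\big(g(n,n)\big) + \iota_n\, dn^\flat = 0$, the first term vanishing because $g(n,n)$ is constant and the second by $n^\mu\omega_{\mu\nu}=0$, whence $\mathcal{L}_n\omega = \mathcal{L}_n\, d(n^\flat) = d\,\mathcal{L}_n(n^\flat) = 0$. Flow-invariance of $\omega$ together with $\omega|_\Sigma=0$ then forces $\omega \equiv 0$ on the whole neighborhood, because the flow $\phi_t$ diffeomorphically maps $\Sigma$ onto the slices $\Sigma_t$ and $\phi_t^*\omega = \omega$, so a vanishing value on $\Sigma$ is transported to a vanishing value on each $\Sigma_t$. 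This establishes \eqref{eqn:gauss_coordinates_symmetric} and completes the proof.
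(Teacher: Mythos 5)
Your proof is correct, and it is considerably more self-contained than what the paper does: the paper's ``proof'' of this lemma consists entirely of references, citing Wald/Carroll for the construction of Gaussian normal coordinates (which gives \eqref{eqn:gauss_coordinates_geodesic} by construction, exactly as you argue) and deferring \eqref{eqn:gauss_coordinates_symmetric} to an application of Frobenius' theorem as carried out in Sanders' paper. Your replacement of the Frobenius argument is the genuinely different part and it works: you first establish $\iota_n\,dn^\flat=0$ from the geodesic equation together with $\nabla_\nu(n^\mu n_\mu)=0$, then kill the tangential components of $dn^\flat$ on $\Sigma$ via the invariant formula for the exterior derivative, and finally propagate the vanishing off $\Sigma$ by showing $\mathcal{L}_n(dn^\flat)=d\,\mathcal{L}_n n^\flat=0$ via Cartan's formula. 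This is a clean, elementary substitute for invoking integrability of the distribution $n^\perp$; what the Frobenius route buys is brevity (hypersurface-orthogonality of $n$ is immediate since the level sets $\Sigma_t$ are the integral manifolds), while your route makes the mechanism explicit. One small point to make airtight in the step $\left(dn^\flat\right)(X,Y)\big|_\Sigma=0$: the terms $X\big(g(n,Y)\big)$ and $g(n,[X,Y])$ require $X,Y$ to be extended to vector fields near $\Sigma$ that remain tangent to $\Sigma$ along all of $\Sigma$, so that $g(n,Y)$ vanishes identically on $\Sigma$ and its derivative in the $\Sigma$-tangential direction $X_p$ vanishes, and so that $[X,Y]$ is again tangent to $\Sigma$ at points of $\Sigma$; with that standard caveat spelled out, the argument is complete.
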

\begin{proof}
	An introduction to Gaussian (normal) coordinates is for example given in \cite[pp. 42,43]{wald_GR} or \cite[pp. 445,446]{carroll_spacetime-and-gr} where equation (\ref{eqn:gauss_coordinates_geodesic}) is shown to hold by construction. Equation (\ref{eqn:gauss_coordinates_symmetric}) can be derived by using Frobenius' theorem (see for example \cite[Theorem B.3.1 and B.3.2]{wald_GR}) as argued in \cite[Section 2.3.2 Equation 5]{Sanders}.
\end{proof}
With this normal vector field we can write the metric $g$ of the spacetime $\M$ in a neighborhood of the Cauchy surface as (see \cite[Equation 10.2.10]{wald_GR})
\begin{align}
g\indices{_{\mu\nu}} = - n_\mu n_\nu + h\indices{_{\mu\nu}} \formspace,
\end{align}
where $h$ denotes the induced metric on $\Sigma$.
Having introduced this notion, we can state the final result of this chapter with the following theorem:
\begin{theorem}[Solution of Proca's equation - constrained version] \label{thm:solution_proca_constrained}
 Let $\Az,\Ad \in \Omega^1(\Sigma)$ and $\An, \Adelta \in \Omega^0(\Sigma)$ specify initial data on the Cauchy surface $\Sigma$. Let $F \in \Omega^1_0(\M)$ be a test one-form, $j \in \Omega^1(\M)$ an external source and $m > 0$ a positive constant.\\
 Restrict the initial data by specifying
 \begin{subequations}\label{eqn:proca_constraints}
  \begin{align}
 \Adelta &= \frac{1}{m^2} \rhodelta j \\
 m^2 \, \An - \delta_{(\Sigma)}\Ad &= \rhon j  \formspace.
 \end{align}
 \end{subequations}
 Then
 \begin{align}
  \langle A, F \rangle_\M = \sum\limits_\pm \langle j + \frac{1}{m^2}d\delta j, E_m^\mp F   \rangle_{\Sigma^\pm}
  &+ \langle \Az , \rhod E_m F \rangle_\Sigma
  + \langle \Adelta , \rhon E_m F \rangle_\Sigma \notag \\
  &- \langle \An , \rhodelta E_m F \rangle_\Sigma
  - \langle \Ad , \rhoz E_m F \rangle_\Sigma \label{eqn:solution_proca_constrained}
 \end{align}
specifies the unique smooth solution of Proca's equation $\left( \delta d + m^2 \right) A = j$ with the given initial data. Furthermore,  the solution depends continuously on the initial data.
\end{theorem}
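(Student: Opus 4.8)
The plan is to recover Proca's equation as the combination of the wave equation (\ref{eqn:classical_wave_eqation}) and the Lorenz constraint (\ref{eqn:classical_constraint}) supplied by Lemma \ref{lem:Proca_wave_equiv}, and to read the solution off from the already-established solution of the wave equation. Concretely, I would set $\kappa = j + \frac{1}{m^2}d\delta j$ and invoke Theorem \ref{thm:solution_wave_equation}: its formula is \emph{verbatim} the right-hand side of (\ref{eqn:solution_proca_constrained}), because the bilinear map is symmetric, so $\langle \kappa, E_m^\mp F\rangle_{\Sigma^\pm} = \langle E_m^\mp F, \kappa\rangle_{\Sigma^\pm}$. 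Hence (\ref{eqn:solution_proca_constrained}) already specifies the unique smooth solution of $(\square + m^2)A = j + \frac{1}{m^2}d\delta j$ with the prescribed Cauchy data, and the entire remaining task is to verify that the restrictions (\ref{eqn:proca_constraints}) are exactly what forces this solution to satisfy $\delta A = \frac{1}{m^2}\delta j$, so that by Lemma \ref{lem:Proca_wave_equiv} it solves Proca's equation.

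For the constraint I would follow the strategy outlined around (\ref{eqn:waveeqn_constraint}): a solution $A$ of the wave equation makes $\psi := \delta A - \frac{1}{m^2}\delta j$ a solution of the homogeneous equation $(\square + m^2)\psi = 0$, so $\psi$ vanishes globally if and only if its Klein-Gordon initial data vanish on $\Sigma$. By Lemma \ref{lem:klein_gordon_cauchy_data} these data are $\rhoz\psi$ and $\rhod\psi$. The first is immediate: since $\rhoz\delta A = i^*\delta A = \rhodelta A = \Adelta$, we get $\rhoz\psi = \Adelta - \frac{1}{m^2}\rhodelta j$, whose vanishing is precisely the first constraint in (\ref{eqn:proca_constraints}).

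The second condition, $\rhod\psi = 0$, is the heart of the argument and should yield the second constraint. Here I would use the identity $\rhod f = \rhon(df)$ for zero-forms established in the proof of Lemma \ref{lem:klein_gordon_cauchy_data}, giving $\rhod\delta A = \rhon(d\delta A)$. Writing $d\delta A = \square A - \delta dA$ and inserting the wave equation $\square A = \kappa - m^2 A$ yields
\begin{align}
\rhod\delta A = \rhon\kappa - m^2\An - \rhon(\delta dA).
\end{align}
Substituting $\kappa = j + \frac{1}{m^2}d\delta j$ and using $\rhon(d\delta j) = \rhod\delta j$ (Lemma \ref{lem:klein_gordon_cauchy_data} applied to the zero-form $\delta j$) turns the source terms into $\rhon\kappa = \rhon j + \frac{1}{m^2}\rhod\delta j$, which cancels the $-\frac{1}{m^2}\rhod\delta j$ coming from $\psi$. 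What then remains is to identify $\rhon(\delta dA)$ with $-\delta_{(\Sigma)}\Ad$, after which $\rhod\psi = \rhon j - m^2\An + \delta_{(\Sigma)}\Ad$, and its vanishing gives exactly $m^2\An - \delta_{(\Sigma)}\Ad = \rhon j$.

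I expect the identity $\rhon(\delta dA) = -\delta_{(\Sigma)}\Ad$ to be the main obstacle: it is a purely geometric statement asserting that the ambient co-derivative, transported to the Cauchy surface by $\rhon$, equals the intrinsic co-derivative $\delta_{(\Sigma)}$ of the normal-derivative data $\Ad = \rhod A$. Proving it requires unwinding the definitions of $\rhon$ and $\rhod$ from Definition \ref{def:cauchy_mapping_operators} in Gaussian normal coordinates, where the properties $n^\alpha\nabla_\alpha n^\beta = 0$ and $dn = 0$ of Lemma \ref{lem:normal_vectors} together with the splitting $g_{\mu\nu} = -n_\mu n_\nu + h_{\mu\nu}$ allow one to commute the Hodge stars and pullbacks past $\delta$ and to check that the potential extrinsic-curvature contributions reassemble into precisely the intrinsic $\delta_{(\Sigma)}$ with the correct sign. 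Finally, uniqueness and continuous dependence on the initial data are inherited directly from Theorem \ref{thm:solution_wave_equation}, since the restrictions (\ref{eqn:proca_constraints}) merely single out an admissible subspace of wave-equation Cauchy data.
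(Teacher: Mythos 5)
Your proposal reproduces the paper's proof essentially step for step: the same reduction via Lemma \ref{lem:Proca_wave_equiv} and Theorem \ref{thm:solution_wave_equation} to showing that the restrictions (\ref{eqn:proca_constraints}) are equivalent to vanishing Klein--Gordon data of $\psi = \delta A - \frac{1}{m^2}\delta j$, the same treatment of the $\rhoz\psi$ condition, and the same cancellation of the $\frac{1}{m^2}\rhod\delta j$ terms. The identity $\rhon(\delta dA) = -\delta_{(\Sigma)}\Ad$ that you flag as the main obstacle is exactly the computation the paper carries out, using precisely the ingredients you name: the symmetry of $\nabla^\nu n^\beta$ and the geodesic property from Lemma \ref{lem:normal_vectors}, the splitting $g_{\mu\nu} = -n_\mu n_\nu + h_{\mu\nu}$, the identification $h^{\sigma\nu}\nabla_\sigma = \nabla_{(\Sigma)}^\nu$, and $n^\beta\nabla_{[\nu}A_{\beta]} = -A_{(d)\nu}$.
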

\begin{proof}
From Theorem \ref{thm:solution_wave_equation} we know that equation (\ref{eqn:solution_proca_constrained}) specifies the unique smooth solution to the wave equation (\ref{eqn:classical_wave_eqation}). We are left with showing that the specified constraints on the initial data are equivalent to the vanishing initial data of $(\delta A - \frac{1}{m^2} \delta j)$ on the Cauchy surface which, as we have seen, is equivalent to the constraint (\ref{eqn:classical_constraint}).
We have to look at the initial data with respect to the Klein-Gordon equation as stated in Lemma \ref{lem:klein_gordon_cauchy_data}. In the following, $i : \Sigma \hookrightarrow \M$ denotes again the inclusion operator.\\
1.) The vanishing of initial value yields
\begin{align}
0 &= \rhoz \left( \delta A - \frac{1}{m^2}\, \delta j\right) \notag\\
&= i^* \delta A - \frac{1}{m^2}\, i^* \delta j \notag \\
&= \rhodelta A - \frac{1}{m^2} \rhodelta j \notag \\
\Leftrightarrow \Adelta &= \frac{1}{m^2} \rhodelta j \formspace.
\end{align}
We have used the linearity of the pullback and Definition \ref{def:cauchy_mapping_operators} of the initial data mapping operator $\rhodelta$. \par
2.) We will calculate the vanishing of the normal derivative in Gaussian normal coordinates and in the end turn back to a coordinate independent notation:
\begin{align}
0 &= \rhod \left( \delta A - \frac{1}{m^2}\, \delta j\right) \notag \\
 &= \rhod \delta A  - \frac{1}{m^2} \rhod  \delta j \notag \\
 &= \restr{\Big[n^\alpha \nabla_\alpha  \delta A \Big]}{\Sigma} - \frac{1}{m^2} \rhod  \delta j  \formspace. \label{eqn:cauchy_normal_derivative_tmp}
\end{align}
We will take a separate look at the first summand:
\begin{align}
n^\alpha \nabla_\alpha  \delta A
&= n^\alpha \left( d\delta A \right)_{\alpha} \notag\\
&= n^\alpha \square A_\alpha - n^\beta \left( \delta  d A \right)_\beta \notag\\
&= n^\alpha \kappa_\alpha -{m^2} \, n^\mu A_\mu + n^\beta \nabla^\nu \nabla_{[\nu} A_{\beta]} \notag\\
&= n^\alpha \kappa_\alpha - {m^2} \, n^\mu A_\mu +  \nabla^\nu \left( n^\beta  \nabla_{[\nu} A_{\beta]} \right) \formspace ,
\end{align}
where we have used that $\left( \nabla^\nu n^\beta \right) \left(  \nabla_{[\nu} A_{\beta]} \right) = 0 $, since from Lemma  \ref{lem:normal_vectors} we know that $\nabla^\nu n^\beta $ is symmetric, and every contraction of a fully symmetric with a fully antisymmetric tensor vanishes. Next we express the metric $g$ as $g\indices{_{\mu\nu}} = -n_\mu n_\nu + h\indices{_{\mu\nu}}$ and use the expansion of the normal vectors as geodesics as stated in Lemma \ref{lem:normal_vectors}. Therefore we further find:
\begin{align}
n^\alpha \nabla_\alpha  \delta A
&= n^\alpha \kappa_\alpha - {m^2} \, n^\mu A_\mu +  g\indices{^{\sigma\nu}} \nabla_\sigma \left( n^\beta  \nabla_{[\nu} A_{\beta]} \right) \notag\\
&= n^\alpha \kappa_\alpha - {m^2} \, n^\mu A_\mu +  \left(- n^\sigma n^\nu +  h\indices{^{\sigma\nu}} \right) \nabla_\sigma \left( n^\beta  \nabla_{[\nu} A_{\beta]} \right) \notag \\
&= n^\alpha \kappa_\alpha - {m^2} \, n^\mu A_\mu  + \nabla_{(\Sigma)}^{\nu} \left( n^\beta  \nabla_{[\nu} A_{\beta]} \right) \formspace.
\end{align}
Here we have made use of the identification $ h\indices{^{\sigma\nu}} \nabla_\sigma B_\mu= \nabla_{(\Sigma)}^\nu  B_\mu$ for any one-form $B$ (see \cite[Lemma 10.2.1]{wald_GR}).
Now, we can identify $n^\beta  \nabla_{[\nu} A_{\beta]} = - n^\beta  \nabla_{[\beta} A_{\nu]} =- A_{(d)\nu}$ and use the local coordinate representation of the exterior derivative of a one-form $B$ on the Cauchy surface $\Sigma$, $\delta_{(\Sigma)} B =  -\nabla_{(\Sigma)} ^\alpha B_\alpha$, and finally obtain
\begin{align}
n^\alpha \nabla_\alpha  \delta A
&=n^\alpha \kappa_\alpha - m^2 \, n^\mu A_\mu  + \nabla_{(\Sigma)}^\nu \left( n^\beta  \nabla_{[\nu} A_{\beta]} \right) \notag \\
&=\rhon \kappa - m^2 \An  + \delta_{(\Sigma)} \Ad \formspace.
\end{align}
In the last step we have taken the restriction to the Cauchy surface, as we are interested in the initial data on the Cauchy surface.
Now, looking back at the normal derivative (\ref{eqn:cauchy_normal_derivative_tmp}) and inserting the definition of the source term $\kappa = j + \frac{1}{m^2}d \delta j$, we find
\begin{align}
 m^2 \An  - \delta_{(\Sigma)} \Ad
 &= \rhon \kappa - \frac{1}{m^2} \rhod \delta j \notag\\
&= \rhon j + \frac{1}{m^2} \rhon d \delta j - \frac{1}{m^2} \rhod \delta j \notag\\
&=  \rhon j \formspace,
\end{align}
since for any $p$-form $B$ it holds by definition $\rhon dB = \rhod B$.\\
By Lemma \ref{lem:Proca_wave_equiv} we have thus found the solution of Proca's equation.
\end{proof}
For further calculations it is useful to develop an unconstrained solution to Proca's equation from the previous theorem.
Basically, instead of considering the full space of initial data for the solution and constraining them, we just take initial data living in a subspace of all initial data and change the dependency of the solution on the data in such a way that this subspace automatically solves the constraints.
Before we state that theorem, we need to introduce fundamental solutions for the Proca operator $\delta d + m^2$ and relate them to the fundamental solutions of the wave operator that we have encountered so far.
\begin{lemma}[Fundamental solutions of the Proca operator]\label{lem:fundamental_solution_proca_operator}
The Proca operator  $(\delta d + m^2) : \Omega^p(\M) \to \Omega^p(\M)$, for $m > 0$,  has unique advanced ($-$) and retarded ($+$) fundamental solutions $\gls{Gpm} : \Omega^p_0(\M) \to \Omega^p(\M)$ that are given by
\begin{align}
G_m^\pm = \left( \frac{d \delta}{m^2} + 1\right) E_m^\pm \formspace,
\end{align}
where $E^\pm_m$ are the advanced and retarded fundamental solutions to the wave operator $(\square + m^2)$. \\
They fulfill the properties
\begin{subequations}
\begin{align}
 G_m^\pm (\delta d + m^2) &= \mathbbm 1 = (\delta d + m^2) G_m^\pm \formspace \text{\upshape{\quad\quad and}} \label{def:fundamental_solution_proca_operator} \\
 \supp{G_m^\pm F } &\subset J^\pm \supp{F } \label{def:fundamental_solution_proca_support} \formspace.
\end{align}
\end{subequations}
The advanced minus retarded fundamental solution is denoted by
\begin{align}
\gls{Gm} = G_m^- - G_m^+ \formspace.
\end{align}
\end{lemma}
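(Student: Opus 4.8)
The plan is to verify the explicit formula directly, reducing everything to the properties of the wave fundamental solutions $E_m^\pm$ already established in Lemma \ref{lem:fundamental_solution_wave_operator}, and to treat existence and uniqueness separately. The whole argument hinges on a single algebraic identity of differential operators, which I would establish first. Using $d^2 = \delta^2 = 0$ together with $\square = \delta d + d\delta$, one computes
\begin{align}
\left( \delta d + m^2 \right)\left( \frac{d \delta}{m^2} + 1\right)
= \frac{\delta d\, d\delta}{m^2} + \delta d + d\delta + m^2
= \square + m^2 \formspace,
\end{align}
since $\delta d\, d\delta = \delta (dd)\delta = 0$; the same computation with the factors reversed gives $\left(\frac{d\delta}{m^2}+1\right)(\delta d + m^2) = \square + m^2$, this time because $d\delta\,\delta d = d(\delta\delta) d = 0$. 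This is precisely where the hypothesis $m > 0$ is needed, as we must divide by $m^2$.

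With this identity in hand, the defining property (\ref{def:fundamental_solution_proca_operator}) is immediate. Since $E_m^\pm$ commutes with both $d$ and $\delta$ by Lemma \ref{lem:fundamental_solution_wave_operator}, it also commutes with $\left(\frac{d\delta}{m^2}+1\right)$, so that $G_m^\pm = \left(\frac{d\delta}{m^2}+1\right)E_m^\pm = E_m^\pm\left(\frac{d\delta}{m^2}+1\right)$. Hence
\begin{align}
G_m^\pm(\delta d + m^2)
= E_m^\pm \left( \frac{d\delta}{m^2} + 1 \right)(\delta d + m^2)
= E_m^\pm (\square + m^2)
= \mathbbm 1 \formspace,
\end{align}
and analogously $(\delta d + m^2)G_m^\pm = (\square + m^2)E_m^\pm = \mathbbm 1$. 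The support property (\ref{def:fundamental_solution_proca_support}) then follows with no extra work: since neither $d$ nor $\delta$ enlarges the support of a form, applying $\left(\frac{d\delta}{m^2}+1\right)$ to $E_m^\pm F$ cannot leave $\supp{E_m^\pm F}$, which is contained in $J^\pm(\supp{F})$ by Lemma \ref{lem:fundamental_solution_wave_operator}.

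For uniqueness I would argue that any two retarded (resp.\ advanced) fundamental solutions $G, G'$ must coincide. Given a test $p$-form $F$, set $u = GF - G'F$. Then $(\delta d + m^2)u = 0$ and $\supp{u} \subset J^+(\supp{F})$, so $u$ vanishes to the past of the compact set $\supp{F}$. Applying $\left(\frac{d\delta}{m^2}+1\right)$ and using the operator identity once more yields $(\square + m^2)u = 0$. Choosing a Cauchy surface in the past of $\supp{F}$, the form $u$ has vanishing Cauchy data there, so by the uniqueness of solutions to the homogeneous wave equation (\cite[Corollary 3.2.4]{baer_ginoux_pfaeffle}, as already invoked in Lemma \ref{lem:fundamental_solution_wave_operator}) we conclude $u \equiv 0$.

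I expect this last step to be the only genuinely non-mechanical part of the argument, the rest being a routine operator manipulation; alternatively, one may bypass it entirely by invoking the general uniqueness of Green's operators for Green-hyperbolic operators established in \cite{baer_green-hyperbolic}, of which the Proca operator is an instance.
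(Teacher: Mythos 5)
Your proposal is correct and follows essentially the same route as the paper's proof: verify the quasi-inverse property via the operator identity $\left(\tfrac{d\delta}{m^2}+1\right)(\delta d + m^2) = \square + m^2$ together with the commutation of $E_m^\pm$ with $d$ and $\delta$, deduce the support property from the fact that derivatives do not enlarge supports, and prove uniqueness by showing the difference of two fundamental solutions satisfies the homogeneous wave equation with vanishing Cauchy data (the paper reaches $(\square+m^2)u=0$ by first noting $\delta u = 0$, which is the same step you perform in one stroke by applying $\tfrac{d\delta}{m^2}+1$). No gaps.
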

\begin{proof}
	First note that existence of $G_m^\pm$ follows from existence of $E_m^\pm$ as stated in Lemma \ref{lem:fundamental_solution_wave_operator}. The quasi inverse property is easily proven by using $\delta ^2 = 0 = d^2$ and the properties for the fundamental solutions $E_m^\pm$ as stated in Lemma \ref{lem:fundamental_solution_wave_operator}. Let $F \in \Omega^p_0(\M)$, then
\begin{align}
(\delta d +m^2) G^\pm_m F
&= (\delta d +m^2)\left( \frac{d \delta}{m^2} + 1\right) E_m^\pm F \notag\\
&= (\delta d +d \delta +m^2) E_m^\pm F \notag\\
&= F
\end{align}
and
\begin{align}
G^\pm_m (\delta d +m^2)  F
&=\left( \frac{d \delta}{m^2} + 1\right) E_m^\pm (\delta d +m^2) F \notag \\
&=\left( \frac{d \delta}{m^2} + 1\right) (\delta d +m^2)E_m^\pm F\notag \\
&= (\delta d +d \delta +m^2) E_m^\pm F \notag\\
&= F \formspace.
\end{align}
The support property follows directly from the support property of the fundamental solutions to $(\square + m^2)$ as stated in Lemma \ref{lem:fundamental_solution_wave_operator}. Again, we use that derivatives do not increase the support of a function. Therefore:
\begin{align}
\supp{G^\pm_m  F }
&= \supp{\left( \frac{d \delta}{m^2} + 1\right) E_m^\pm  F } \notag\\
&\subset \supp{E_m^\pm F } \notag\\
&\subset J^\pm(\supp{F }) \formspace.
\end{align}
For the proof of uniqueness, let $G^\pm_m$ and $G'^\pm_m$ denote two fundamental solutions to the Proca operator. Let $F \in \Omega^p_0(\M)$ and define $A^\pm = (G^\pm_m - G'^\pm_m)F $.
Then:
\begin{align}
(\delta d + m^2) A^\pm = 0 \notag\\
\implies \delta A^\pm = 0 \notag\\
\implies (\square + m^2)A^\pm = 0.
\end{align}
Since $\supp{A^\pm} \subset J^\pm(\supp{F })$, we know that initial data of $A^\pm$ vanishes on every Cauchy surface in the past/future of $\supp{F }$ for the $+/-$ sign respectively. By the same argument as used in the proof of Lemma \ref{lem:fundamental_solution_wave_operator}, we find a globally vanishing solution (again, see \cite[Corollary 3.2.4]{baer_ginoux_pfaeffle}), that is,
\begin{align}
A^\pm &= 0 \notag\\
\implies G^\pm_m &= G'^\pm_m \formspace,
\end{align}
since $F $ is arbitrary. This completes the proof.
\end{proof}
Now we are ready to state the final, unconstrained version of the solution of Proca's equation.This following theorem will by the cornerstone of the calculations in the next chapter:
\begin{theorem}[Solution of Proca's equation - unconstrained version]\label{thm:solution_proca_unconstrained}
 Let $\Az,\Ad \in \Omega^1(\Sigma)$ specify a subset of initial data on the Cauchy surface $\Sigma$. Let $F \in \Omega^1_0(\M)$ be a test one-form, $j \in \Omega^1(\M)$ an external source and $m >0$ a positive constant.
 Then
 \begin{align}
  \langle A, F \rangle_\M = \sum\limits_\pm \langle j , G_m^\mp F   \rangle_{\Sigma^\pm}
  + \langle \Az , \rhod G_m F \rangle_\Sigma
 - \langle \Ad , \rhoz G_m F \rangle_\Sigma \label{eqn:solution_proca_unconstrained}
 \end{align}
specifies the unique smooth solution of Proca's equation $\left( \delta d + m^2 \right) A = j$ with the given subset of initial data. Furthermore,  the solution depends continuously on the initial data.
\end{theorem}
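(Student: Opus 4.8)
The plan is to obtain (\ref{eqn:solution_proca_unconstrained}) directly from the constrained solution of Theorem~\ref{thm:solution_proca_constrained}, by eliminating the dependent data $\An,\Adelta$ via the constraints and repackaging the wave fundamental solutions into the Proca ones through $G_m^\pm=\big(\tfrac{d\delta}{m^2}+1\big)E_m^\pm$ (Lemma~\ref{lem:fundamental_solution_proca_operator}). The reason only $\Az,\Ad$ may be prescribed freely is structural: any solution of Proca's equation satisfies $\delta A=\tfrac{1}{m^2}\delta j$ globally (Lemma~\ref{lem:Proca_wave_equiv}), so its remaining Cauchy data are fixed in terms of $\Az,\Ad$ and $j$ by exactly the two constraints of Theorem~\ref{thm:solution_proca_constrained}, namely $\Adelta=\tfrac{1}{m^2}\rhodelta j$ and $\An=\tfrac{1}{m^2}\big(\rhon j+\delta_{(\Sigma)}\Ad\big)$. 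Consequently existence, uniqueness, smoothness and continuous dependence are inherited from Theorem~\ref{thm:solution_proca_constrained}; what remains is the algebraic verification that, after these substitutions, the constrained formula collapses term by term onto (\ref{eqn:solution_proca_unconstrained}).

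I would first dispose of the two data terms, writing $G_m F=E_m F+\tfrac{1}{m^2}\,d\delta E_m F$. For the $\Az$-term, $\rhod G_m F=\rhod E_m F$ holds because $\rhod=-*_{(\Sigma)}i^* *d$ and hence $\rhod\,d\delta E_m F=-*_{(\Sigma)}i^* *\,d(d\delta E_m F)=0$ by nilpotency of $d$. For the $\Ad$-term I would use that the pullback intertwines the exterior derivatives, $i^* d=d_{(\Sigma)}i^*$, to get $\rhoz\,d\delta E_m F=d_{(\Sigma)}\rhodelta E_m F$, followed by the formal adjointness of $d_{(\Sigma)}$ and $\delta_{(\Sigma)}$ on $\Sigma$:
\begin{align}
\tfrac{1}{m^2}\langle \Ad,\rhoz\,d\delta E_m F\rangle_\Sigma=\tfrac{1}{m^2}\langle \delta_{(\Sigma)}\Ad,\rhodelta E_m F\rangle_\Sigma \formspace.
\end{align}
This is precisely the contribution generated by the $\delta_{(\Sigma)}\Ad$ part of the substituted $\An$ inside $-\langle\An,\rhodelta E_m F\rangle_\Sigma$, so the two assemble into $-\langle\Ad,\rhoz G_m F\rangle_\Sigma$.

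The hard part will be the source term, which must reorganize according to the Green-type identity
\begin{align}
\sum_\pm\Big[\langle d\delta j,E_m^\mp F\rangle_{\Sigma^\pm}-\langle j,d\delta E_m^\mp F\rangle_{\Sigma^\pm}\Big]=-\langle\rhodelta j,\rhon E_m F\rangle_\Sigma+\langle\rhon j,\rhodelta E_m F\rangle_\Sigma \formspace.
\end{align}
Granting this, $\sum_\pm\langle j+\tfrac{1}{m^2}d\delta j,E_m^\mp F\rangle_{\Sigma^\pm}$ together with the two $j$-boundary terms $\tfrac{1}{m^2}\langle\rhodelta j,\rhon E_m F\rangle_\Sigma$ and $-\tfrac{1}{m^2}\langle\rhon j,\rhodelta E_m F\rangle_\Sigma$ produced by the substituted constraints collapses to $\sum_\pm\langle j,G_m^\mp F\rangle_{\Sigma^\pm}$, completing the proof. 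I would establish the identity exactly as in Lemma~\ref{lem:greens_identity}, by applying the first-order integration-by-parts formula
\begin{align}
\int_{\Sigma^\pm} d\alpha\wedge * \beta-\int_{\Sigma^\pm}\alpha\wedge * \delta\beta=\epsilon_\pm\int_\Sigma i^*(\alpha\wedge * \beta) \formspace,
\end{align}
with the induced-orientation signs $\epsilon_+=+1$, $\epsilon_-=-1$ fixed in Lemma~\ref{lem:greens_identity}, once to $(\alpha,\beta)=(\delta j,E_m^\mp F)$ and once to $(\alpha,\beta)=(\delta E_m^\mp F,j)$. In both cases the bulk term equals $\langle\delta j,\delta E_m^\mp F\rangle_{\Sigma^\pm}$ and cancels upon subtraction by symmetry of $\langle\cdot,\cdot\rangle$, leaving only boundary integrals; using $i^* *=-*_{(\Sigma)}\rhon$ on one-forms (immediate from $\rhon=-*_{(\Sigma)}i^* *$ and $*_{(\Sigma)}*_{(\Sigma)}=\mathbbm{1}$) and $i^*\delta=\rhodelta$ converts them into the stated combination, and summing over $\pm$ with $E_m=E_m^- - E_m^+$ yields the right-hand side. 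Finally I would record, as in the proof of Theorem~\ref{thm:solution_wave_equation}, that although $E_m^\mp F$ is not compactly supported its support meets $\Sigma^\pm$ in a compact set, so all integrals are well defined and Stokes' theorem is applicable.
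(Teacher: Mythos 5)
Your proposal is correct and follows essentially the same route as the paper's own proof: substitute the constraints of Theorem \ref{thm:solution_proca_constrained}, convert the $\delta_{(\Sigma)}\Ad$ contribution via $i^*d = d_{(\Sigma)}i^*$ and formal adjointness so that it combines with the $\rhoz E_m F$ term into $\rhoz G_m F$, handle the source term by Stokes integration by parts on $\Sigma^\pm$ so that the boundary terms cancel the constraint-induced $\rhodelta j$ and $\rhon j$ terms, and finish with $G_m^\pm=\big(\tfrac{d\delta}{m^2}+1\big)E_m^\pm$ and $\rhod G_m=\rhod E_m$. The only difference is presentational — you package the paper's chain of Stokes manipulations as a single Green-type identity obtained from two applications of the first-order integration-by-parts formula — and all signs, orientations and the support caveat for $E_m^\mp F$ check out.
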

\begin{proof}
The theorem follows from Theorem \ref{thm:solution_proca_constrained} by inserting the constraints (\ref{eqn:proca_constraints}) into the expression (\ref{eqn:solution_proca_constrained}).
We find
 \begin{align} \label{eqn:tmp_proca_solution_constraint_inserted}
  \langle A, F \rangle_\M =
  & \sum\limits_\pm \langle j + \frac{1}{m^2}d\delta j, E_m^\mp F   \rangle_{\Sigma^\pm}
  + \langle \Az , \rhod E_m F \rangle_\Sigma
  + \frac{1}{m^2} \langle  \rhodelta j, \rhon E_m F \rangle_\Sigma \notag \\
  & - \langle \frac{1}{m^2} \delta_{(\Sigma)} \Ad , \rhodelta E_m F \rangle_\Sigma
  - \frac{1}{m^2} \langle \rhon j, \rhodelta E_m F \rangle_\Sigma
  - \langle \Ad, \rhoz E_m F \rangle_\Sigma \formspace.
 \end{align}
Now, for clarity's sake, we take a look at the appearing terms separately:\\
1.) To get rid of the appearing divergence of $\Ad$, we use some basic identities, that is, in particular the formal adjointness of $\delta$ and $d$ and the commutativity of $d$ with the pullback $i^*$:
\begin{align}
\langle \delta_{(\Sigma)} \Ad, \rhodelta E_m F \rangle_\Sigma
=& \langle \Ad, d_{(\Sigma)} \rhodelta E_m F  \rangle_\Sigma \notag\\
=& \langle \Ad, d_{(\Sigma)} i^* \delta  E_m F \rangle_\Sigma \notag\\
=& \langle \Ad,  i^* d \delta  E_m F \rangle_\Sigma \notag \\
=& \langle \Ad, \rhoz d \delta  E_m F \rangle_\Sigma \formspace,
\end{align}
which, together with $\langle \Ad, \rhoz E_m F \rangle_\Sigma$,  combines to $\langle \Ad , \rhoz \left( \frac{d \delta}{m^2} +1 \right) E_m F \rangle_\Sigma$. \par
2.) Next, we have a look at a part of the sum term and use Stoke's theorem (again, we get a sign due to the orientation of $\Sigma$ with respect to $\Sigma^\pm$) for a formal partial integration, at the cost of some boundary terms:
\begin{align}
\sum\limits_\pm \langle d \delta j ,  E_m^\mp F   \rangle_{\Sigma^\pm}
&= \sum\limits_\pm  \int\limits_{\Sigma^\pm}  \left( d \delta j \wedge {* E_m^\mp} F \right) \notag \\
&= \sum\limits_\pm  \int\limits_{\Sigma^\pm} \left\{ d \left(  \delta j \wedge {* E_m^\mp} F \right) - \delta j \wedge d {* E_m^\mp} F \right\} \notag\\
&= \sum\limits_\pm \left\{ \phantom{\pm} \int\limits_{\Sigma^\pm}  d \left(  \delta j \wedge * E_m^\mp F \right) + \int\limits_{\Sigma^\pm} \delta j \wedge {**} {d {* E_m^\mp}} F \right\} \notag\\
&= \sum\limits_\pm \left\{  \pm \int\limits_{\Sigma}  i^*{\left(  \delta j \wedge * E_m^\mp F \right) }+ \int\limits_{\Sigma^\pm} \delta j \wedge * \delta E_m^\mp F \right\} \notag\\
&= \sum\limits_\pm \left\{  \pm \int\limits_{\Sigma}  i^*{\left(  \delta j \wedge * E_m^\mp F \right) }+ \int\limits_{\Sigma^\pm} \delta E_m^\mp F \wedge {* *}{d*}j \right\} \notag\\
&= \sum\limits_\pm \left\{  \pm \int\limits_{\Sigma}  i^*{\left(  \delta j \wedge * E_m^\mp F \right)} - \int\limits_{\Sigma^\pm} \delta E_m^\mp F \wedge {d*}j \right\} \notag \\
&= \sum\limits_\pm \left\{  \pm \int\limits_{\Sigma}  i^*{\left(  \delta j \wedge * E_m^\mp F \right) }- \int\limits_{\Sigma^\pm} \Big( d \left( \delta E_m^\mp F \wedge *j \right) - d \delta E_m^\mp F \wedge *j \Big) \right\} \notag \\
&= \sum\limits_\pm \left\{  \pm \int\limits_{\Sigma}  i^*{\left(  \delta j \wedge * E_m^\mp F \right)} \mp \int\limits_{\Sigma} i^*{\left( \delta E_m^\mp F \wedge *j \right) }+ \int\limits_{\Sigma^\pm} j \wedge *d \delta E_m^\mp F  \right\} \notag \\
&= \sum\limits_\pm \langle  j , d \delta E_m^\mp F \rangle_{\Sigma^\pm}  + \int\limits_{\Sigma}  i^{*}{\left(  \delta j \wedge * E_m F \right)} - \int\limits_{\Sigma} i^*{\left(j \wedge * \delta E_m F \right)} \label{eqn:tmp_proca_boundary_terms}
\end{align}
Lastly, we will see that the remaining source dependent terms of (\ref{eqn:tmp_proca_solution_constraint_inserted}) will cancel with the boundary terms obtained from partial integration above:
\begin{align}
 \langle  \rhodelta j, \rhon E_m F \rangle_\Sigma \notag  -  \langle \rhon &j, \rhodelta E_m F \rangle_\Sigma \notag\\
 &= - \langle  i^* \delta  j, *_{(\Sigma)}{ i^* *}E_m F \rangle_\Sigma
 +  \langle *_{(\Sigma)} {i^* *}  j, i^* \delta  E_m F \rangle_\Sigma  \notag\\
 &= -\int\limits_{\Sigma}  i^* \delta  j \wedge *_{(\Sigma)} {*_{(\Sigma)}{ i^* {*E_m F }}}
 +  \int\limits_{\Sigma}  i^* \delta  E_m F \wedge  {*_{(\Sigma)}  {*_{(\Sigma)}{ i^* {*  j} }}}    \notag\\
  &=-\int\limits_{\Sigma}  i^* {\left( \delta  j \wedge {*E_m F } \right) }  +  \int\limits_{\Sigma}  i^* {\left( \delta  E_m F \wedge  {*  j}   \right) }\formspace.
\end{align}
These terms cancel out the boundary terms in (\ref{eqn:tmp_proca_boundary_terms}) (note that all of them have a prefactor $\frac{1}{m^2}$ that was not carried along the calculation for simplicity). Therefore, we obtain the result
 \begin{align}
  \langle A, F \rangle_\M = \sum\limits_\pm \langle j , \left( \frac{d \delta}{m^2} +1 \right) E_m^\mp F   \rangle_{\Sigma^\pm}
  &+ \langle \Az , \rhod E_m F \rangle_\Sigma \notag \\
  &- \langle \Ad , \rhoz \left( \frac{d \delta}{m^2} +1 \right) E_m F \rangle_\Sigma \formspace.
 \end{align}
 Now, to complete the proof, we make use of the identity $G_m = \left( \frac{d \delta}{m^2} +1 \right) E_m$, see Lemma \ref{lem:fundamental_solution_proca_operator}, and a simple calculation additionally gives, using $d^2 =0$:
 \begin{align}
 \rhod G_m = - {*_{(\Sigma)} {i^* *}} d\left( \frac{d \delta}{m^2} +1 \right) E_m = - {*_{(\Sigma)}{i^* *}} E_m = \rhod E_m \formspace,
 \end{align}
 which completes the proof.
\end{proof}
At this stage, one might wonder how this result compares to the discussed fact that the Proca field only possesses three independent degrees of freedom as discussed at the beginning in Section \ref{sec:solving_procas_equation}. In the formalism that we work with, the counting of degrees of freedom is a bit subtle. We have started with a solution to a wave equation, clearly possessing four independent degrees of freedom expressed in the initial data formulation by $\Az, \Ad, \Adelta$ and $\An$. To obtain a solution to Proca's equation, we have implemented a Lorenz constraint by restricting the initial data. In the last step, concluding in Theorem \ref{thm:solution_proca_unconstrained}, we have effectively eliminated the initial zero-forms $\Adelta$ and $\An$ on the Cauchy surface $\Sigma$. But those two zero forms can be viewed as initial data to a scalar Klein-Gordon field! In that sense, we have eliminated one scalar degree of freedom, and are left with the ``correct'' three independent degrees of freedom in Proca's theory.
\subsection{The zero mass limit}\label{sec:zero-mass-limit-classical}
As a basis for understanding the zero mass limit in the quantum case, we will now investigate the corresponding classical limit. The question is for which test one-forms the zero mass limit of distributional solutions to the Proca equation exists, or more precisely:
\begin{center}\textit{
		Let $A_m$ be a solution to the Proca equation with mass $m$,\\ and $\Az,\Ad \in \Omega^1(\Sigma)$ its initial data.\\ For which $F \in \Omega^1_0(\M)$, if any, does the limit $\lim\limits_{m\to 0^+} \langle A_m , F \rangle_\M = \lim\limits_{m \to 0^+}\Big( \sum\limits_\pm \langle j , G_m^\mp F   \rangle_{\Sigma^\pm} +\langle \Az , \rhod G_m F \rangle_\Sigma
		- \langle \Ad , \rhoz G_m F \rangle_\Sigma \Big)$ exist?}
\end{center}
We have used the explicit form of distributional solutions to the Proca equation as presented in Theorem \ref{thm:solution_proca_unconstrained}.
Before we can answer this question, we need to make an assumption regarding the continuity of the propagators of the Proca operator with respect to the mass. As we have specified the propagator for the Proca operator in terms of the propagator of the Klein-Gordon operator, we will state it in the following way:
\begin{assumption}\label{ass:propagator_continuity}
	Let $m\geq 0$ and $E_m^\pm$ the fundamental solutions to the Klein-Gordon operator $(\delta d + d \delta + m^2)$. Then, for a fixed $F\in \Omega^1_0(\M)$,
	\begin{align}
		m \mapsto E^\pm_m F
	\end{align}
	is continuous. Therefore
		\begin{align}
		m \mapsto E_m F
		\end{align}
	is continuous and
	\begin{align}
	\lim\limits_{m \to 0^+} E^\pm_m F &= E^\pm_0 F \quad \text{and} \\
		\lim\limits_{m \to 0^+} E_m F &= E_0 F \formspace.
	\end{align}
\end{assumption}
This assumption remains unproven in the context of this thesis. With this, continuity of $G_m F$ for a fixed test one-form $F$ follows directly for $m>0$. Using the assumption, we can now investigate the zero mass limit.
We will split this up into the case of vanishing external sources, $j=0$, and the general case with sources for clarity.
\subsubsection{Existence of the limit in the current-free case}\label{sec:zero-mass-limit-existence-classical-vanishing-source}
Let $A_m$ specify a solution to Proca's equation with mass $m$ and vanishing external sources $j=0$. Recall that by Theorem \ref{thm:solution_proca_unconstrained} a solution to Proca's equation with mass $m$ is uniquely determined by initial data $\Az, \Ad \in \Omega^1(\Sigma)$ by
 \begin{align}
\langle A_m, F \rangle_\M = \langle \Az , \rhod G_m F \rangle_\Sigma
 - \langle \Ad , \rhoz G_m F \rangle_\Sigma \formspace,
 \end{align}
where $G_m = \frac{1}{m^2} \left( d \delta + m^2 \right) E_m$. From this expression, we want to find necessary and sufficient conditions for the limit $\lim\limits_{m\to 0^+} \langle A_m , F \rangle_\M$ to exist. This is a rather tricky task, because it is not clear how to link the continuity in this distributional sense to test one-forms. We therefore have to tighten the request to the existence of the corresponding limits of initial data of $G_m F$, that is:
\begin{center}\textit{
		For which $F \in \Omega^1_0(\M)$, if any, do the limits\\ $\lim\limits_{m \to 0^+} \rhoz G_m F$ and  $\lim\limits_{m \to 0^+} \rhod G_m F$ exist?}
\end{center}
Clearly, the existence of these limits is sufficient for the existence of the limit in the distributional sense as stated above. To answer this tightened question, we make use of the following lemma:
\begin{lemma}\label{lem:limit_existence_classical_equivalence}
	Let $j=0$, $F \in \Omega^1_0(\M)$ fixed and assume Assumption \ref{ass:propagator_continuity} holds. The following statements are equivalent:
	\begin{enumerate}
		\item {The limits $\lim\limits_{m \to 0^+} \rhoz G_m F$ and  $\lim\limits_{m \to 0^+} \rhod G_m F$ exist. \\}
		\item {The limit $\lim\limits_{m \to 0^+} G_m F$ exists. \\}
		\item {The limit $\lim\limits_{m \to 0^+} \frac{1}{m^2}E_m d \delta  F $ exists. }
	\end{enumerate}
\end{lemma}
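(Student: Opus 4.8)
The plan is to prove the two cheap implications first and then isolate the single hard point. Throughout I write $\Phi_m \coloneqq \frac{1}{m^2}E_m d\delta F$, so that by Lemma \ref{lem:fundamental_solution_proca_operator} together with the fact that $E_m$ commutes with $d$ and $\delta$ (Lemma \ref{lem:fundamental_solution_wave_operator}) one has the decomposition $G_m F = E_m F + \Phi_m$. This identity is what drives everything.

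For the equivalence of (ii) and (iii) and for (ii)$\Rightarrow$(i): Assumption \ref{ass:propagator_continuity} gives $E_m F \to E_0 F$, so the summand $E_m F$ converges; hence $G_m F = E_m F + \Phi_m$ converges if and only if $\Phi_m = \frac{1}{m^2}E_m d\delta F$ does, which is precisely (ii)$\iff$(iii). For (ii)$\Rightarrow$(i) I would simply apply the operators $\rhoz$ and $\rhod$, which are compositions of a pullback with first-order differential and Hodge operators and hence continuous, to the convergent family $G_m F$.

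The substance is (i)$\Rightarrow$(ii), and by the decomposition it suffices to show that $\Phi_m$ converges. Since $\Phi_m$ solves the homogeneous wave equation $(\square + m^2)\Phi_m = 0$, the continuous dependence on initial data of Theorem \ref{thm:solution_wave_equation} reduces this to the convergence of the four Cauchy data $\rhoz\Phi_m$, $\rhod\Phi_m$, $\rhon\Phi_m$, $\rhodelta\Phi_m$ as $m\to 0^+$. Three of these I can dispatch at once. Because $d\Phi_m = \frac{1}{m^2}E_m\, d(d\delta F) = 0$, the form $\Phi_m$ is closed and $\rhod\Phi_m = 0$. Using $\delta d\delta = \square\delta$ on the zero-form $\delta F$ and the quasi-inverse property $E_m(\square + m^2)=0$ on the compactly supported $\delta F$, one finds $\delta\Phi_m = -E_m\delta F$, so $\rhodelta\Phi_m = -i^* E_m\delta F$ converges by the Assumption. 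Finally $\rhoz\Phi_m = i^*\Phi_m = \rhoz G_m F - \rhoz E_m F$ converges, the first term by hypothesis (i) and the second by the Assumption.

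This leaves the normal datum $\rhon\Phi_m$, which I expect to be the main obstacle of the whole lemma. The natural tool is that $G_m F$ is not merely a wave solution but a solution of the homogeneous Proca equation, hence co-closed, $\delta G_m F = 0$; equivalently, by Lemma \ref{lem:Proca_wave_equiv}, its data satisfy the Lorenz constraint of Theorem \ref{thm:solution_proca_constrained} with $j=0$, namely $m^2\rhon G_m F = \delta_{(\Sigma)}\rhod G_m F$. Writing $\rhon\Phi_m = \rhon G_m F - \rhon E_m F$ and inserting this constraint reduces the question to the convergence of $\frac{1}{m^2}\delta_{(\Sigma)}\rhod G_m F$. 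The difficulty is genuine: since $\rhod G_m F = \rhod E_m F$ converges automatically, the bare hypothesis on $\rhod G_m F$ carries no information about this $m^{-2}$-weighted boundary term, so the needed convergence must instead be extracted from hypothesis (i) on $\rhoz G_m F$, i.e. from the convergence of $\frac{1}{m^2}d_{(\Sigma)}\rhodelta E_m F = i^*\Phi_m$. Closing the argument therefore requires relating the $m^{-2}$-weighted normal expression $\frac{1}{m^2}\delta_{(\Sigma)}\rhod E_m F$ to the $m^{-2}$-weighted tangential expression supplied by (i), through the commutator identities linking $\rhod\delta$, $\delta_{(\Sigma)}\rhod$ and $\rhon$ on $E_m F$ (these are forced by $(\square+m^2)E_m F = 0$). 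Once $\rhon\Phi_m$ is shown to converge, Theorem \ref{thm:solution_wave_equation} yields convergence of $\Phi_m$, hence (ii), and the cycle is complete.
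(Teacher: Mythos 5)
Your handling of (ii)$\iff$(iii) and of (ii)$\Rightarrow$(i) is correct and coincides with the paper's argument, and your computations of three of the four Cauchy data of $\Phi_m=\frac{1}{m^2}E_m d\delta F$ (namely $\rhod\Phi_m=0$, $\rhodelta\Phi_m=-\rhodelta E_mF$, and $\rhoz\Phi_m=\rhoz G_mF-\rhoz E_mF$) are all right. The problem is the crucial direction (i)$\Rightarrow$(ii), which you reduce to the convergence of the normal datum $\rhon\Phi_m$ — equivalently, of the $m^{-2}$-weighted term $\frac{1}{m^2}\delta_{(\Sigma)}\rhod E_m F$ — and then leave open, saying only that it "requires relating" this to the tangential information coming from hypothesis (i). That relation \emph{is} the content of the implication. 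As you yourself observe, the $\rhod G_mF=\rhod E_mF$ half of (i) is vacuous, so all the input is the convergence of $\frac{1}{m^2}d_{(\Sigma)}\rhodelta E_m F$; this is the restriction to $\Sigma$ of the scalar Klein--Gordon solution $\frac{1}{m^2}E_m\delta F$, while the quantity you need is (up to the convergent term $\rhon E_mF$) its normal derivative on $\Sigma$. The only identity that $(\square+m^2)E_mF=0$ supplies here is $\left.\nabla_n\delta E_mF\right|_\Sigma=-m^2\rhon E_mF+\delta_{(\Sigma)}\rhod E_mF$, which rewrites the unknown term but does not tie it to the known one; for a general family of Klein--Gordon solutions, convergence of the restriction to one Cauchy surface says nothing about convergence of the normal derivative there. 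So the decisive step is genuinely missing, not merely deferred.

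The paper never isolates $\rhon\Phi_m$ at all: it treats $B_m=G_mF$ as a solution of the source-free \emph{Proca} equation, for which the pair $(\rhoz B_m,\rhod B_m)$ is already a complete set of initial data on which the solution depends continuously (Theorem \ref{thm:solution_proca_unconstrained}, \cite[Proposition 2.5]{pfenning}, with continuous dependence from \cite[Theorem 3.2.12]{baer_ginoux_pfaeffle} and \cite[Theorem 2.3]{Sanders}); the remaining two data are fixed by the constraints, and convergence of the free pair — which is exactly hypothesis (i) — then yields convergence of $G_mF$ directly. To finish your route you would either have to import that well-posedness statement for the reduced Proca data, at which point your four-data decomposition of $\Phi_m$ becomes redundant, or prove the convergence of $\frac{1}{m^2}\delta_{(\Sigma)}\rhod E_mF$ by hand, which is not easier than the lemma itself.
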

\begin{proof}
1.) We show (i) being equivalent to (ii): \\
a) (ii) $\implies$ (i) is trivial since, if $\lim\limits_{m \to 0^+} G_m F$ exists, clearly the limits $\lim\limits_{m \to 0^+} \rhoz G_m F$ and  $\lim\limits_{m \to 0^+} \rhod G_m F$ exist, as the operators $\rho_{(\cdot)}$ are continuous and do not depend on the mass $m$.  \par
b) (i) $\implies$ (ii):\\
Assume  $\lim\limits_{m \to 0^+} \rhoz G_m F$ and  $\lim\limits_{m \to 0^+} \rhod G_m F$ exist. We know that $\rho_{(\cdot)} G_m F$ specify initial data to the solution $G_m F \equiv B_m$ of the source free Proca equation: Specifying $\rhoz B_m$, $\rhod B_m$, $\rhon B_m$, $\rhodelta B_m$ is equivalent to specifying $B_0 = \restr{B_m}{\Sigma}$, $B_1 = \restr{\nabla_n B_m}{\Sigma}$, for some future pointing timelike unit normal field $n$ of the Cauchy surface $\Sigma$, as shown in \cite[pp. 2613]{FURLANI}. Furthermore, we know that the solution depends continuously on the initial data $B_0$ and $B_1$: Since $ B_m= G_m F$ has compact spacelike support, $B_0$ and $B_1$ will be compactly supported on $\Sigma$. For the case of compactly supported initial data, continuous dependence of the solution on the data is shown in \cite[Theorem 3.2.12]{baer_ginoux_pfaeffle}, which generalizes to arbitrarily supported initial data \cite[Theorem 2.3]{Sanders}.
We conclude that the solution $B_m$ depends continuously on $\rhoz B_m$, $\rhod B_m$, $\rhon B_m$, $\rhodelta B_m$ with respect to the topology of $\Omega^1(\M)$ and restricting $\rhon B_m$ and $\rhodelta B_m$ in terms of $\rhoz B_m$ and $\rhod B_m$ will not change the continuous dependence of the solution on $\rhoz B_m$ and $\rhod B_m$. A more direct approach to this statement is shown in \cite[Proposition 2.5]{pfenning}.
Therefore, $G_m F$ is continuous in $m$ and the corresponding limit exists.\par
2.) It remains to show the equivalence of (ii) and (iii):\\
	a.) (iii) $\implies$ (ii):\\ Assume that $\lim\limits_{m \to 0^+} \frac{1}{m^2}d \delta E_m F $ exists. Then
	\begin{align}
		\lim\limits_{m \to 0^+} G_m F
		&= \lim\limits_{m \to 0^+} \left( \frac{d \delta}{m^2} + 1\right) E_m F \notag\\
		&= \lim\limits_{m \to 0^+} \frac{1}{m^2}E_m d \delta  F  + \lim\limits_{m \to 0^+} E_m F
	\end{align}
	exists, using Assumption $\ref{ass:propagator_continuity}$ and that $d$, respectively $\delta$, commutes with $E_m$.\\
	b.) (ii) $\implies$ (iii):\\ Assume that $\lim\limits_{m \to 0^+} G_m F  $ exists. Then
	\begin{align}
		\lim\limits_{m \to 0^+} \frac{1}{m^2}E_m d \delta  F
		&= \lim\limits_{m \to 0^+} \frac{1}{m^2}d \delta E_m   F \notag \\
		&= \lim\limits_{m \to 0^+} \left( G_m F - E_m F \right) \notag \\
		&= \lim\limits_{m \to 0^+}  G_m F - \lim\limits_{m \to 0^+}  E_m F
	\end{align}
	exists, again using Assumption $\ref{ass:propagator_continuity}$.\par This completes the proof.
\end{proof}
With this result, the existence of the desired limit is purely determined by the existence of the zero mass limit of the propagator of the Proca operator. This can be quite easily formulated in terms of conditions on the test one-forms that the propagator acts on:
\begin{lemma}\label{lem:mass-zero-limit-existence-classical_weak}
	Let $F\in \Omega^1_0(\M)$ and $A_m \in \Omega^1(\M)$ be a solution to Proca's equation with vanishing external sources. Then,
	\begin{center}
		the limits $\lim\limits_{m \to 0^+} \rhoz G_m F$ and  $\lim\limits_{m \to 0^+} \rhod G_m F$ exist\\ if and only if $F = F' + F''$,\\
	\end{center}
	where $F', F'' \in \Omega^1_0(\M)$ such that $dF' = 0 = \delta F''$.\\
	Then also the limit $\lim\limits_{m \to 0^+} \langle A_m , F \rangle_\M$ exists.
\end{lemma}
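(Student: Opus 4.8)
The plan is to funnel everything through the third, cleanest formulation in Lemma~\ref{lem:limit_existence_classical_equivalence}, namely the existence of $\lim_{m\to 0^+}\tfrac{1}{m^2}E_m\,d\delta F$, and to reduce the whole statement to a single algebraic obstruction. The identity doing all the work is $E_m(\square+m^2)=0$, which is immediate from $E_m=E_m^--E_m^+$ together with $E_m^\pm(\square+m^2)=\mathbbm 1$ in Lemma~\ref{lem:fundamental_solution_wave_operator}; rewritten it reads $E_m\square=-m^2E_m$ on $\Omega^1_0(\M)$. The claim I would actually prove is that the limit exists if and only if $E_0\,d\delta F=0$, and that this in turn is equivalent to the decomposition $F=F'+F''$ with $dF'=0=\delta F''$.

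For sufficiency, suppose $F=F'+F''$ with $dF'=0$ and $\delta F''=0$. Then $d\delta F=d\delta F'$, and since $\delta d F'=\delta(dF')=0$ we have $d\delta F'=(\delta d+d\delta)F'=\square F'$. Hence $\tfrac{1}{m^2}E_m\,d\delta F=\tfrac{1}{m^2}E_m\square F'=-E_m F'$, which converges to $-E_0F'$ by Assumption~\ref{ass:propagator_continuity}. The limit exists, so by Lemma~\ref{lem:limit_existence_classical_equivalence} the limits $\lim_{m\to 0^+}\rhoz G_m F$ and $\lim_{m\to 0^+}\rhod G_m F$ exist as well.

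For necessity I would show that the mere existence of the limit already forces the splitting. If $\lim_{m\to 0^+}\tfrac{1}{m^2}E_m\,d\delta F$ exists, multiplying by $m^2\to 0$ gives $E_m\,d\delta F\to 0$, so $E_0\,d\delta F=0$ by Assumption~\ref{ass:propagator_continuity}. Exactness of the fundamental-solution sequence of the normally hyperbolic operator $\square$ identifies $\mathrm{ker}\big(E_0|_{\Omega^1_0(\M)}\big)=\square\,\Omega^1_0(\M)$ (see \cite{baer_ginoux_pfaeffle}), so $d\delta F=\square\psi$ for some $\psi\in\Omega^1_0(\M)$. Applying $d$ and using that $\square$ commutes with $d$ yields $\square\,d\psi=d\,d\delta F=0$, with $d\psi\in\Omega^2_0(\M)$ compactly supported; a compactly supported solution of the homogeneous wave equation vanishes by \cite[Corollary 3.2.4]{baer_ginoux_pfaeffle}, so $d\psi=0$ and $\psi$ is closed. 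Then $\square\psi=d\delta\psi$, whence $d\delta(F-\psi)=0$; since $\delta(F-\psi)$ is a compactly supported scalar annihilated by $d$ it is locally constant, hence zero on the connected non-compact $\M$, so $F-\psi$ is co-closed. Setting $F'=\psi$ and $F''=F-\psi$ produces the required decomposition.

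Finally, once both $\rhoz G_m F$ and $\rhod G_m F$ converge, the source-free solution formula of Theorem~\ref{thm:solution_proca_unconstrained}, $\langle A_m,F\rangle_\M=\langle\Az,\rhod G_m F\rangle_\Sigma-\langle\Ad,\rhoz G_m F\rangle_\Sigma$, converges because with $\Az,\Ad$ fixed the pairings are continuous in their second argument. The main obstacle is the necessity direction: everything rests on the exactness statement $\mathrm{ker}\,E_0=\square\,\Omega^1_0(\M)$ and on legitimately passing from convergence of $\tfrac{1}{m^2}E_m\,d\delta F$ to $E_0\,d\delta F=0$, both of which require Assumption~\ref{ass:propagator_continuity} to fix the topology in which these manipulations are continuous; the subsequent closed-plus-co-closed splitting is then forced purely by compact support on a connected, non-compact spacetime.
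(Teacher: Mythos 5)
Your proposal is correct, and its skeleton coincides with the paper's: reduce via Lemma \ref{lem:limit_existence_classical_equivalence} to the existence of $\lim_{m\to 0^+}\frac{1}{m^2}E_m d\delta F$, handle sufficiency by the identity $\frac{1}{m^2}E_m d\delta F' = -E_m F'$ for closed $F'$, and for necessity extract $E_0 d\delta F=0$ by multiplying through by $m^2$. Where you diverge is in how the splitting is then produced. The paper defines \emph{both} pieces explicitly, $F'=E_0^\pm d\delta F$ and $F''=E_0^\pm\delta d F$ (each compactly supported because the advanced and retarded expressions agree and have past- resp.\ future-compact support), reads off $dF'=0$ and $\delta F''=0$ directly from $d^2=0=\delta^2$ and the intertwining of $E_0^\pm$ with $d,\delta$, and gets $F'+F''=E_0^+\square F=F$ for free. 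You instead invoke exactness only once, to obtain $\psi$ with $\square\psi=d\delta F$, prove $d\psi=0$ by the compactly-supported-homogeneous-solution argument, and then deduce $\delta(F-\psi)=0$ from the topological fact that a compactly supported locally constant function on the connected, non-compact $\M$ vanishes. Both routes are valid; the paper's is slightly more economical since the closedness and co-closedness of the two pieces are immediate from their definitions, while yours trades the second application of the exact sequence for a short de Rham-type argument. Your closing remark on the convergence of $\langle A_m,F\rangle_\M$ matches the paper's.
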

\begin{proof}
	Let $F \in \Omega^1_0(\M)$. Using Lemma \ref{lem:limit_existence_classical_equivalence}, the existence of the desired limit is equivalent to the existence of the limit $\lim\limits_{m \to 0^+} \frac{1}{m^2}d \delta E_m F $.\par
	1.) We begin by finding the necessary conditions that this limit exists:\\
	 Assume $\lim\limits_{m \to 0^+} \frac{1}{m^2}d \delta E_m F $ exists. Rewriting $E_m d \delta F = m^2 \left( \frac{1}{m^2} d \delta E_m F\right)$ and using the assumption of the existence of the limit of the terms in brackets, we directly find
	\begin{align}
		E_0 d \delta F
		&= \lim\limits_{m \to 0^+} m^2 \left( \frac{1}{m^2}d \delta E_m F \right) \notag\\
		&= 0 \formspace.
	\end{align}
	For a compactly supported $F$ this yields the existence of a compactly supported $F' = E^+_0 d \delta F = E^-_0 d \delta F \in \Omega^1_0(\M)$, such that $d \delta F = (\delta d + d \delta)F'$ (see e.g. \cite[Proposition 2.6]{Sanders}). From the definition of $F'$ we immediately find $dF' = 0$.\\
	%
	%
	Moreover, we obtain an additional condition:
	\begin{align}
		0
		&= E_0 d \delta F\notag  \\
		&= E_0 (d \delta + \delta d)F - E_0 \delta d F \notag \\
		&= - E_0 \delta d F \formspace.
	\end{align}
	By the same argument as before, this yields the existence of a one-form $F'' \in \Omega^1_0(\M)$, $F'' = E^+_0 \delta d F = E^-_0 \delta dF$ which yields $\delta F'' = 0$. \\
	Finally, by definition we find
	\begin{align}
		F' + F '' = E_0^+ (d \delta + \delta d) F = F \formspace.
	\end{align}
	Therefore, as a necessary condition for the limit to exist, $F$ has to be the sum of a closed and a co-closed compactly supported one-form: $F = F' + F''$, where $F',F'' \in \Omega^1_0(\M)$  such that $dF' = 0 = \delta F''$. \par
	2.) In the next step, we show that the condition is also sufficient:\\
	Let $F = F' + F''$, where $F', F'' \in \Omega^1_0(\M)$ and $dF' = 0 = \delta F''$. Then
	\begin{align}
		\lim\limits_{m \to 0^+} \frac{1}{m^2} E_m d \delta F
		&=	\lim\limits_{m \to 0^+} \frac{1}{m^2} E_m d \delta (F' + F '') \notag\\
		&=  \lim\limits_{m \to 0^+} \frac{1}{m^2} E_m d \delta F'  \notag\\
		&= \lim\limits_{m \to 0^+} \Big( \frac{1}{m^2} E_m (d \delta + \delta d + m^2) F' - E_m F'\Big)\notag\\
		& = - \lim\limits_{m \to 0^+}  E_m F' \formspace,
	\end{align}
	which exists by assumption \ref{ass:propagator_continuity}. \\
	This completes the proof.
\end{proof}
From a formal point of view, we have completely classified those test one-forms, for which the zero mass limit exists. But it turns out that we can tighten the result even more, by observing that closed one-forms $F \in \Omega^1_0(\M)$, such that $dF=0$, do not contribute to the observable $\langle A_m , F \rangle_\M$ in the source free case. That is, for those $F$ it holds:
\begin{align}
	G_m F
	&= E_m  \left( \frac{d \delta}{m^2} + 1\right) F \notag \\
	&= \frac{1}{m^2}E_m (d \delta + \delta d + m^2) F \notag\\
	&= 0 \formspace,
\end{align}
which yields $\langle \Az, \rhod G_m F \rangle_\M = 0 = \langle \Ad, \rhoz G_m F \rangle_\M$ and hence $\langle A_m , F \rangle_\M= 0$. Due to the linearity of the fields, two test one-forms that differ by a closed compactly supported one-form give rise to the same physical observable. We may therefore restrict the class of test one-forms that we smear the fields $A_m$ with to the test one-forms modulo closed test one-forms. This yields the final result of this section:
\begin{theorem}[Existence of the zero mass limit in the source free case]\label{thm:limit_existence_sourcefree_classical}
	Let $F, F' \in \Omega^1_0(\M)$ such that $[F] = [F']$, that is, there is a $\chi \in \Omega^1_{0,d}(\M)$ such that $F = F' + \chi$. Let $A_m$ be a solution to Proca's equation with vanishing external source $j=0$. Then,
	\begin{align}
		G_m F &= G_m F' \eqqcolon G_m [F] \quad \text{and} \\
		\langle A_m, F \rangle_\M &=\langle A_m, F' \rangle_\M \eqqcolon \langle A_m, [F] \rangle_\M \formspace,
	\end{align}
	and
	\begin{center}
		the limits $\lim\limits_{m \to 0^+} \rhoz G_m [F]$ and  $\lim\limits_{m \to 0^+} \rhod G_m [F]$ exist\\[2mm] 	 if and only if there exists a representative $\tilde{F}$ of $[F]$ with ${\delta \tilde{F}} = 0 $.
	\end{center}
	Then, also the limit $\lim\limits_{m \to 0^+} \langle A_m, [F] \rangle_\M$ exists.
\end{theorem}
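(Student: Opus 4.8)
The plan is to prove the three assertions in turn, leaning on Lemma \ref{lem:mass-zero-limit-existence-classical_weak} and on the closed/co-closed decomposition of a test one-form that already surfaces in its proof.

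First I would establish well-definedness on the quotient, for which it suffices to show $G_m\chi = 0$ for every $\chi \in \Omega^1_{0,d}(\M)$. Since $d\chi = 0$ gives $\delta d\chi = 0$, I can use $G_m = \left(\frac{d\delta}{m^2}+1\right)E_m$ from Lemma \ref{lem:fundamental_solution_proca_operator} together with the commutation of $E_m$ with $d$ and $\delta$ (Lemma \ref{lem:fundamental_solution_wave_operator}) to write
\begin{align}
  G_m\chi = E_m\left(\frac{d\delta}{m^2}+1\right)\chi = \frac{1}{m^2}\,E_m(\square + m^2)\chi = 0\formspace,
\end{align}
the last equality because $E_m = E_m^- - E_m^+$ and each $E_m^\pm$ is a right inverse of $(\square + m^2)$, so $E_m(\square+m^2)=\mathbbm 1 - \mathbbm 1 = 0$. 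Linearity of $G_m$ and of the pairing in the solution formula of Theorem \ref{thm:solution_proca_unconstrained} (with $j=0$) then give $G_m F = G_m F'$ and $\langle A_m, F\rangle_\M = \langle A_m, F'\rangle_\M$ whenever $F - F' = \chi \in \Omega^1_{0,d}(\M)$. In particular $\rhoz G_m[F]$ and $\rhod G_m[F]$ are representative-independent, so Lemma \ref{lem:mass-zero-limit-existence-classical_weak} may be applied to any chosen representative.

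Next I would prove the equivalence. For the direction assuming a co-closed representative $\tilde F$ with $\delta\tilde F = 0$, writing $\tilde F = 0 + \tilde F$ exhibits it as a sum of a trivially closed form and a co-closed form, so the sufficiency part of Lemma \ref{lem:mass-zero-limit-existence-classical_weak} yields existence of $\lim_{m\to 0^+}\rhoz G_m\tilde F$ and $\lim_{m\to 0^+}\rhod G_m\tilde F$, which coincide with the corresponding limits for $[F]$ by the first part. Conversely, fix any representative $F$ and assume the two limits exist. Lemma \ref{lem:mass-zero-limit-existence-classical_weak} supplies a decomposition $F = F' + F''$ with $F',F''\in\Omega^1_0(\M)$ and $dF' = 0 = \delta F''$; as in that lemma's proof one has explicitly $F' = E_0^\pm d\delta F$ and $F'' = E_0^\pm\delta d F$. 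I would then set $\tilde F := F''$, which is co-closed by construction and differs from $F$ by $-F' \in \Omega^1_{0,d}(\M)$, so that $[\tilde F] = [F]$ and the sought co-closed representative is produced.

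Finally, the existence of $\lim_{m\to 0^+}\langle A_m,[F]\rangle_\M$ follows immediately: once the initial-data limits exist, Lemma \ref{lem:limit_existence_classical_equivalence} gives existence of $\lim_{m\to0^+}G_m[F]$, and continuity of the pairing applied to $\langle A_m,[F]\rangle_\M = \langle \Az,\rhod G_m[F]\rangle_\Sigma - \langle \Ad,\rhoz G_m[F]\rangle_\Sigma$ closes the argument, as already recorded in the concluding line of Lemma \ref{lem:mass-zero-limit-existence-classical_weak}. I expect the only genuinely delicate step to be the bookkeeping in the converse direction: one must verify that the piece $F'$ extracted from $F$ is simultaneously closed \emph{and} compactly supported, so that it is a legitimate element of $\Omega^1_{0,d}(\M)$ by which to quotient. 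This is precisely where the coincidence $E_0^+ d\delta F = E_0^- d\delta F$ and the compactness of $J^+(K)\cap J^-(K)$ for compact $K$ in a globally hyperbolic spacetime enter, and it is the one point where the argument is not purely formal.
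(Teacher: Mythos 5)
Your proof is correct and follows essentially the same route as the paper: well-definedness on the quotient via $G_m\chi=0$ for closed $\chi$, and then both directions of the equivalence read off from Lemma \ref{lem:mass-zero-limit-existence-classical_weak}, with the compact support of $F'=E_0^\pm d\delta F$ already secured in that lemma's proof. The only difference is that the paper additionally proves \emph{uniqueness} of the co-closed representative and packages the result as an explicit isomorphism $\bigl(\Omega^1_{0,d}(\M)+\Omega^1_{0,\delta}(\M)\bigr)/\Omega^1_{0,d}(\M)\cong\Omega^1_{0,\delta}(\M)$, which is not needed for the statement as phrased and which you correctly omit.
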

\begin{proof}
	Let $F, F' \in \Omega^1_0(\M)$ such that $[F] = [F']$. Let $A_m$ be a solution to the source free Proca equation.
	We have already seen that for a closed test one-form $\chi \in \Omega^1_{0,d}(\M)$ it holds $G_m \chi =0$. It follows directly that $G_m F = G_m F'$ and hence $G_m [F]$ is well defined using a representative of the equivalence class $[F]$. Using Theorem \ref{thm:solution_proca_unconstrained} and the linearity of $\langle \cdot, \cdot \rangle_\M$, it directly follows $\langle A_m, F \rangle_\M =\langle A_m, F' \rangle_\M$ and hence $\langle A_m, [F] \rangle_\M$ is well defined. Therefore, we can, without losing any observables, divide out the test one-forms that are closed.
	By Lemma \ref{lem:mass-zero-limit-existence-classical_weak} we know that the limit exists if and only if $F$ is a sum of a closed and a co-closed test one-form. Hence, the limits $\lim\limits_{m \to 0^+} \rhoz G_m [F]$ and  $\lim\limits_{m \to 0^+} \rhod G_m [F]$ exists if and only if
	\begin{align}
		[F] \in \frac{\Omega^1_{0,d}(\M) + \Omega^1_{0,\delta}(\M)}{\Omega^1_{0,d}(\M)}\formspace.
	\end{align}
	Here, $\Omega^1_{0,d}(\M)$, $\Omega^1_{0,\delta}(\M)$ denotes the set of closed and co-closed test one-forms respectively.\\
	We will now show that in fact it holds
	\begin{align}
		\frac{\Omega^1_{0,d}(\M) + \Omega^1_{0,\delta}(\M)}{\Omega^1_{0,d}(\M)} \cong \Omega^1_{0,\delta}(\M) \formspace.
	\end{align}
	Let $F \in \Omega^1_{0,d}(\M) + \Omega^1_{0,\delta}(\M)$, that is, $F = F' + F''$ such that $dF' = 0 = \delta F''$. It directly follows $[F] = [F' + F''] = [F'']$. Indeed, $F'' \in \Omega^1_{0,\delta}(\M)$ is the \emph{unique} co-closed representative of the equivalence class $[F]$: Assume there exists a $\tilde{F} \in \Omega^1_{0,\delta}(\M)$ such that $[\tilde{F}] = [F] = [F'']$. From this it follows $[\tilde{F} - F''] = 0$, that is, $\tilde{F}$ and $F''$ differ by a closed test one-form, therefore we conclude $d(\tilde{F} - F'') = 0$.	By construction, it additionally holds $\delta(\tilde{F} - F'') =0$. Therefore $(\tilde{F} - F'')$ solves a source free massless wave equation:
	\begin{align}
		(\delta d + d \delta)(\tilde{F} - F'') = 0 \formspace.
	\end{align}
	Since $(\tilde{F} - F'')$ is compactly supported, it follows from \cite[Corollary 3.2.4]{baer_ginoux_pfaeffle} that $(\tilde{F} - F'') =0$. Hence $F''$ is the unique co-closed representative of $[F]$. \\
	The map
	\begin{align}
		\gamma : \frac{\Omega^1_{0,d}(\M) + \Omega^1_{0,\delta}(\M)}{\Omega^1_{0,d}(\M)} &\to \Omega^1_{0,\delta}(\M)\\
		[F] = [F' + F''] &\mapsto F'' \notag
	\end{align}
	is therefore well defined. Clearly, $\gamma$ is linearly bijective.
\end{proof}
Note, that this ``gauge'' by closed test one-forms is only present in the source free theory and not a real gauge freedom of the theory. We will therefore drop the explicit notation of the equivalence classes in the source free case and keep in mind that closed test one-forms do not contribute to the observables in the source free theory.\\
We find that it is sufficient as well as necessary for the mass zero limit to exist in the source free case to restrict to co-closed test one-forms. What is the interpretation of this?
In fact, this can be quite easily understood under the duality $\langle \cdot , \cdot \rangle_\M$. One finds that $\Quotientscale{\mathcal{D}^1(\M)}{d\mathcal{D}^{0}(\M)}$ is dual to $\Omega^1_{0,\delta}(\M)$ (see \cite[Section 3.1]{Sanders}). Here, $\mathcal{D}^1(\M)$ denotes the set of distributional one-forms (in a physical sense, these are classical vector fields) and $\Omega^1_{0,\delta}(\M)$ denotes the set of all co-closed test one-forms. Therefore, restricting to co-closed test one-forms is equivalent to implementing the gauge equivalence $A \to A + d\chi$, for $A \in \mathcal{D}^1(\M)$ and $ \chi \in \mathcal{D}^0(\M)$, in the theory! This dual relation is easily checked for $A' = A + d\chi$ dual to $F \in \Omega^1_{0,\delta}(\M)$
		\begin{align}
			\langle A', F \rangle
			&= \langle A, F \rangle + \langle d\chi, F \rangle \notag\\
			&= \langle A, F \rangle + \langle \chi, \delta F \rangle \notag\\
			&= \langle A, F \rangle \formspace.
		\end{align}
This is a nice result, since the gauge equivalence is naturally present in the Maxwell theory. And due to the non trivial topology on a general spacetime, it is a priori not clear how to implement the gauge equivalence in Maxwell's theory on curved spacetime: Maxwell's equation $\delta d A = 0$ yields that two solutions that differ by a closed one-form give rise to the same observable. For Minkowski spacetime this yields the familiar gauge equivalence $A \to A + d\chi$ since all closed one-forms are exact due to the trivial topology\footnote{For Minkowski spacetime, this follows from the Lemma of Poincar\'e, see e.g. \cite[Corollary 4.3.11]{rudolph_schmidt}.}. This does not hold for arbitrary spacetimes $\M$. One can argue that the gauge equivalence class given by the gauge equivalence of closed rather then exact one-forms is too large as it does not capture all physical phenomena of the theory: As presented in \cite[p. 626]{Sanders}, the Aharonov-Bohm effect \emph{does} distinguish between forms that differ by a form that is closed but not exact, so the gauge equivalence by closed one-forms cannot be the true physical gauge equivalence class. Hence, arguing with physical properties is needed to find the ``right'' gauge equivalence class for the Maxwell theory in curved spacetimes. With our result, this gauge equivalence by exact forms comes naturally in the limit process! \\
 Hence, we have already captured one important feature of the Maxwell theory also in the massless limit of the Proca theory! It remains to check whether also the dynamics are ``well behaved'' in the massless limit. But first, we investigate the zero mass limit for the general theory with sources.
\subsubsection{Existence of the limit in the general case with current}\label{sec:zero-mass-limit-existence-classical-general-source}
The question of interest is analogous to the one presented in the previous section but now including external sources $j \neq 0$. Again, a solution to Proca's equation with initial data $\Az, \Ad \in \Omega^1(\Sigma)$ is uniquely determined  by
\begin{align}
	\langle A_m , F \rangle_\M = \Big( \sum\limits_\pm \langle j , G_m^\mp F   \rangle_{\Sigma^\pm} +\langle \Az , \rhod G_m F \rangle_\Sigma
	- \langle \Ad , \rhoz G_m F \rangle_\Sigma \Big) \formspace.
\end{align}
In order to classify those test one-forms $F$ for which the zero mass limit exists, we have to tighten the main question as posed in the beginning of Section \ref{sec:zero-mass-limit-classical} which was formulated in a distributional sense. Just as in the source free case we again demand that $\lim\limits_{m \to 0^+} \rhoz G_m F$ and  $\lim\limits_{m \to 0^+} \rhod G_m F$ exist. Furthermore, we need a tightened  condition for the limit $\lim\limits_{m \to 0^+} \sum_\pm \langle j , G_m^\mp F   \rangle_{\Sigma^\pm} $ to exist. First, we note that there are mainly three situations that can occur regarding this sum of integrals. Either, the support of $F$ lies in the future of the Cauchy surface $\Sigma$ in which case $\supp{G_m^+ F} \cap \Sigma^- = \emptyset$ and $\langle j , G_m^+ F   \rangle_{\Sigma^-} = 0$. Similarly, if the support of $F$ lies in the past of $\Sigma$, then $\langle j , G_m^- F   \rangle_{\Sigma^+} = 0$. Or, the intersection of the support of $F$ and the Cauchy surface $\Sigma$ is non-empty in which case both terms appear. But since we want the existence of the limit to be independent of the choice of the Cauchy surface $\Sigma$ we conclude that the limit $\lim\limits_{m \to 0^+} \sum_\pm \langle j , G_m^\mp F   \rangle_{\Sigma^\pm} $ exists if and only if $\lim\limits_{m \to 0^+} \langle j , G_m^+ F   \rangle_{\Sigma^-}$ and $\lim\limits_{m \to 0^+} \langle j , G_m^- F   \rangle_{\Sigma^+}$ exist separately. In the same fashion as for the initial data terms, we therefore want the limits $\lim\limits_{m \to 0^+} G_m^\pm F$ to exist. In this sense of the existence of the limit, the question of interest is now a slightly generalized version of what was stated in the previous section:
\begin{center}\textit{
		For which $F \in \Omega^1_0(\M)$, if any, do the limits \\ $\lim\limits_{m \to 0^+} G_m^\pm F$, $\lim\limits_{m \to 0^+} \rhoz G_m F$ and  $\lim\limits_{m \to 0^+} \rhod G_m F$ exist?}
\end{center}
We have already classified the existence of the latter two limits. And with similar calculations, also the first term is quite easy to handle. We find the following result:
\begin{theorem}[Existence of the zero mass limit in the general case]\label{thm:limit_existence_general_classical}
	Let $F \in \Omega^1_0(\M)$, $A_m$ a solution to Proca's equation with external source $j\neq0$. Then,
	\begin{center}
		the limits $\lim\limits_{m \to 0^+} G_m^\pm F$, $\lim\limits_{m \to 0^+} \rhoz G_m F$ and  $\lim\limits_{m \to 0^+} \rhod G_m F$ exist\\[2.5mm]
		if and only if ${\delta F} = 0 $.
	\end{center}
	Then, also the limit $\lim\limits_{m \to 0^+} \langle A_m, F \rangle_\M$ exists.
\end{theorem}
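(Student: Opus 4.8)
The plan is to work directly from the unconstrained solution of Theorem \ref{thm:solution_proca_unconstrained} and to isolate the one feature that distinguishes this case from the source-free Theorem \ref{thm:limit_existence_sourcefree_classical}: the source term $\sum_\pm \langle j , G_m^\mp F\rangle_{\Sigma^\pm}$ involves the \emph{individual} advanced and retarded propagators $G_m^\pm F$, not merely the difference $G_m F$ that controlled the initial-data terms. Using $G_m^\pm = \left(\tfrac{d\delta}{m^2}+1\right)E_m^\pm$ from Lemma \ref{lem:fundamental_solution_proca_operator} and the fact that $d,\delta$ commute with $E_m^\pm$ (Lemma \ref{lem:fundamental_solution_wave_operator}), I would first record
\begin{align}
G_m^\pm F = E_m^\pm F + \frac{1}{m^2}\, E_m^\pm\, d\delta F \formspace.
\end{align}
Since $\lim_{m\to 0^+} E_m^\pm F = E_0^\pm F$ exists by Assumption \ref{ass:propagator_continuity}, the existence of $\lim_{m\to 0^+} G_m^\pm F$ is equivalent to the existence of $\lim_{m\to 0^+}\tfrac{1}{m^2} E_m^\pm\, d\delta F$. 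This reduction is the structural heart of the argument.

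For the ``only if'' direction (existence of the limits forces $\delta F = 0$), it suffices to use the $G_m^\pm F$ limit alone. In parallel with the first step of Lemma \ref{lem:mass-zero-limit-existence-classical_weak}, but applied to each sign separately, I would write $E_m^\pm\, d\delta F = m^2\cdot\big(\tfrac{1}{m^2}E_m^\pm\, d\delta F\big)$; the left side tends to $E_0^\pm\, d\delta F$ by Assumption \ref{ass:propagator_continuity}, while the right side tends to $0$ once the bracketed quantity converges, whence
\begin{align}
E_0^\pm\, d\delta F = 0 \formspace.
\end{align}
Applying the quasi-inverse property $\square E_0^\pm = \mathbbm 1$ at $m=0$ (Lemma \ref{lem:fundamental_solution_wave_operator}) then gives $d\delta F = \square E_0^\pm\, d\delta F = 0$. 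The final, slightly delicate step is to upgrade $d\delta F = 0$ to $\delta F = 0$: as $\delta F$ is a zero-form, $d\delta F = 0$ means it is locally constant, hence constant on the connected $\M$; since $F$ is compactly supported so is $\delta F$, and a compactly supported constant function on the non-compact spacetime $\M \cong \IR\times\Sigma$ must vanish identically. This is precisely where the global topology enters, and I expect it to be the main point to state carefully — it is what tightens the condition from the source-free ``$F=F'+F''$'' to the strict requirement $\delta F = 0$.

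For the ``if'' direction ($\delta F = 0 \Rightarrow$ all limits exist) the argument collapses: if $\delta F = 0$ then $d\delta F = 0$, so the correction terms vanish and $G_m^\pm F = E_m^\pm F$ together with $G_m F = E_m F$. Hence $\lim_{m\to 0^+} G_m^\pm F$, $\lim_{m\to 0^+}\rhoz G_m F$ and $\lim_{m\to 0^+}\rhod G_m F$ all exist by Assumption \ref{ass:propagator_continuity} and the $m$-independent continuity of the operators $\rho_{(\cdot)}$ (consistent with Lemma \ref{lem:mass-zero-limit-existence-classical_weak}, since a co-closed $F$ trivially has the form $F'+F''$ with $dF'=0=\delta F''$). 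To conclude that $\lim_{m\to 0^+}\langle A_m, F\rangle_\M$ exists, I would simply note that under $\delta F = 0$ each term appearing in the formula of Theorem \ref{thm:solution_proca_unconstrained} — namely $\langle j, G_m^\mp F\rangle_{\Sigma^\pm}$, $\langle \Az, \rhod G_m F\rangle_\Sigma$ and $\langle \Ad, \rhoz G_m F\rangle_\Sigma$ — converges as $m\to 0^+$, so the finite sum does as well.
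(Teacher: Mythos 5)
Your proof is correct and follows the same overall strategy as the paper's: reduce the source term to the existence of $\lim_{m\to 0^+}\frac{1}{m^2}E_m^\pm d\delta F$ via $G_m^\pm F = E_m^\pm F + \frac{1}{m^2}E_m^\pm d\delta F$, extract $E_0^\pm d\delta F = 0$ from the factorization $E_m^\pm d\delta F = m^2\cdot\big(\frac{1}{m^2}E_m^\pm d\delta F\big)$, and combine with Lemma \ref{lem:mass-zero-limit-existence-classical_weak} for the initial-data terms. The one place you genuinely diverge is the final extraction of $\delta F = 0$ from $E_0^\pm d\delta F = 0$. The paper writes $0 = E_0^\pm d\delta F = F - E_0^\pm\delta d F$, so that $F = E_0^\pm\delta d F$, and then obtains $\delta F = E_0^\pm\delta\delta d F = 0$ from the commutation of $\delta$ with the propagator together with $\delta^2=0$. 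You instead apply the quasi-inverse identity $\square E_0^\pm = \mathbbm{1}$ on compactly supported forms to conclude $d\delta F = 0$, and then argue that the zero-form $\delta F$ is locally constant, hence constant on the connected $\M$, hence identically zero because it is compactly supported and a globally hyperbolic $\M\cong\IR\times\Sigma$ is non-compact. Both routes are valid; the paper's is purely algebraic and marginally shorter, while yours trades the commutation property for an explicit (and correctly identified) appeal to the connectedness and non-compactness of the spacetime. Everything else --- the ``if'' direction via $G_m^\pm F = E_m^\pm F$ for co-closed $F$, and the concluding observation that each term in the formula of Theorem \ref{thm:solution_proca_unconstrained} then converges --- matches the paper's argument.
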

\begin{proof}
Let $F \in \Omega^1_0(\M)$, $A_m$ a solution to the source free Proca equation. \\
1.)  For the limits $\lim\limits_{m \to 0^+} \rhoz G_m F$ and  $\lim\limits_{m \to 0^+} \rhod G_m F$ to exist we have found in Lemma \ref{lem:mass-zero-limit-existence-classical_weak} that it is sufficient and necessary for $F$ to be the sum of a closed and a co-closed test one-form, $F = F' + F''$, $F',F'' \in \Omega^1_0(\M)$ such that $dF' = 0 = \delta F''$. \par
2.) For the limits $\lim\limits_{m \to 0^+} G_m^\pm F$ we find existence if and only if $\lim\limits_{m \to 0^+} \frac{1}{m^2}E_m^\pm d \delta F0$ exists by a calculation analogous to the one presented in the proof of Lemma \ref{lem:limit_existence_classical_equivalence}.\par
a) Assume $\lim\limits_{m \to 0^+} \frac{1}{m^2}E_m^\pm d \delta F$ exists. We conclude $E_0^\pm d \delta F = 0$, following the calculations in the proof of Lemma \ref{lem:mass-zero-limit-existence-classical_weak} again replacing $G_m$, respectively $E_m$, with $G_m^\pm$, respectively $E_m^\pm$. Using $E_m^\pm (\delta d + d \delta )F = F$ we find
\begin{align}
	0
	&=  E_0^\pm d \delta F  \notag\\
	&= F - E_0^\pm \delta d F
\end{align}
and hence $F = E_0^\pm \delta d F$ which is compactly supported. It clearly follows that $F$ is co-closed using that $\delta$ commutes with $E_m^\pm$:
\begin{align}
	\delta F = \delta E_0^\pm \delta d F = E_0 \delta \delta d F = 0 \formspace.
\end{align}
b) Assuming $F \in \Omega^1_0(\M)$ being co-closed, $\delta F = 0$, we easily conclude that $G^\pm_m F = E^\pm_m F$ and hence the limits $\lim\limits_{m \to 0^+} G_m^\pm F$ exist. \par
We have therefore shown that the limits $\lim\limits_{m \to 0^+} G_m^\pm F$ exist if and only if $F$ is co-closed. Combining this with the existence of the remaining limits $\lim\limits_{m \to 0^+} \rhoz G_m F$ and  $\lim\limits_{m \to 0^+} \rhod G_m F$, we find the desired result, as it is necessary and sufficient for the limits $\lim\limits_{m \to 0^+} G_m^\pm F$, $\lim\limits_{m \to 0^+} \rhoz G_m F$ and  $\lim\limits_{m \to 0^+} \rhod G_m F$ to exist that $F$ is co-closed. This completes the proof.
\end{proof}
Therefore, also in the general case with currents we find existence of the zero mass limit of the Proca field if and only if we implement the gauge equivalence, as a restriction on the dual space of test one-forms, before taking the limit. This was already discussed in the previous Section \ref{sec:zero-mass-limit-existence-classical-vanishing-source}. We can now discuss the dynamics of the fields in the zero mass limit.
%
%
%
%
%
%
%
%
%
%
%
%
%
\subsubsection{Dynamics and the zero mass limit}\label{sec:limit_dynamics_classical}
We have found that both in the source free and the general case, the zero mass limit of the classical Proca theory exists if we restrict the test one-forms that we smear the classical fields with to the ones that are co-closed. So the question regarding the dynamics of the theory in the limit, that is, the behavior of $\langle A_0 , \delta d F \rangle_\M= \lim\limits_{m \to 0^+} \langle A_m, \delta d F \rangle_\M$, is well posed for any $F \in \Omega^1_0(\M)$ since naturally $\delta d F$ is co-closed using $\delta$ being nilpotent. For the Maxwell theory, we expect $\langle A_0 , \delta d F \rangle_\M= \langle j, F \rangle_\M$ as the field $A_0$ should solve Maxwell's equation in that distributional sense. But defining the field as a zero mass limit of the Proca theory, we find
\begin{align}
\langle A_0 , \delta d F \rangle_\M
&\coloneqq  \lim\limits_{m \to 0^+} \langle A_m, \delta d F \rangle_\M \\
&=  \lim\limits_{m \to 0^+}\Big( \sum\limits_\pm \langle j , G_m^\mp \delta d F   \rangle_{\Sigma^\pm} +\langle \Az , \rhod G_m \delta d F \rangle_\Sigma
- \langle \Ad , \rhoz G_m \delta d F \rangle_\Sigma \Big) \formspace.\notag
\end{align}
Recalling $G^\pm_m = \frac{1}{m^2}(d \delta + m^2) E^\pm_m$, we find $G^\pm_m \delta d F = E^\pm_m \delta d F$, and using $E_m^\pm (\delta d F) = F - E_m^\pm (d\delta + m^2) F$ we obtain
\begin{align}
	\langle &A_0 , \delta d F \rangle_\M  \\
&=  \lim\limits_{m \to 0^+}\Big( \sum\limits_\pm \big(  \langle j , F\rangle_{\Sigma^\pm} - \langle j ,  E_m^\mp d\delta F   \rangle_{\Sigma^\pm} \big)
		 -\langle \Az , \rhod E_m d\delta  F \rangle_\Sigma
		+ \langle \Ad , \rhoz E_m d\delta  F \rangle_\Sigma  \notag \\
		 &\phantom{=I} - m^2 \big(
		 \sum\limits_\pm \langle j , E_m^\mp F\rangle_{\Sigma^\pm}
				 + \langle \Az , \rhod E_m  F \rangle_\Sigma
				  -\langle \Ad , \rhoz E_m  F \rangle_\Sigma
		 \big)
		\Big)	\notag \\
&=	\langle j , F \rangle_\M - \lim\limits_{m \to 0^+}\Big(\sum\limits_\pm \langle j ,  E_m^\mp d\delta F   \rangle_{\Sigma^\pm}
+\langle \Az , \rhod E_m d\delta  F \rangle_\Sigma
- \langle \Ad , \rhoz E_m d\delta  F \rangle_\Sigma \Big) \formspace.		\notag
\end{align}
We have used $\sum\limits_\pm \langle j , F\rangle_{\Sigma^\pm} = \langle j , F\rangle_\M$ and that the terms proportional to $m^2$ are continuous by Assumption \ref{ass:propagator_continuity} and bounded and hence vanish in the limit. Furthermore, we find by definition that $\rhod E_m d\delta  F = - *_{(\Sigma)}i^* * d E_m d \delta F = 0$ since $d$ and $E_m$ commute. Concluding, we have calculated
\begin{align} \label{eqn:dynamics_limit_classical_unconstraint}
\langle A_0 , &\delta d F \rangle_\M =  \langle j , F \rangle_\M - \lim\limits_{m \to 0^+}\Big(\sum\limits_\pm \langle j ,  E_m^\mp d\delta F   \rangle_{\Sigma^\pm}
- \langle \Ad , \rhoz E_m d\delta  F \rangle_\Sigma \Big) \formspace.
\end{align}
The second term though will not vanish in general. Ergo, the fields $A_0$ defined as the zero mass limit of the Proca field $A_m$ will not fulfill Maxwell's equation in a distributional sense. While this might seem surprising at first, it is quite easy to understand when we recall how we have found solutions to Proca's equation: instead of finding solutions directly, we have specified solutions to the massive wave equation (\ref{eqn:classical_wave_eqation}) and then restricted the initial data such that the Lorenz constraint (\ref{eqn:classical_constraint}) is fulfilled. Only then we also have found a solution to Proca's equation. And similarly, one solves Maxwell's equation by specifying a solution to the massless wave equation $(\delta d + d \delta )A_0 = j$ and restricting the initial data such that the Lorenz constraint $\delta A_0 = 0$ is fulfilled. Only then, the solution also solves Maxwell's equation. And it is with the constraint where the problem in the limit lies. Recall from Theorem \ref{thm:solution_proca_constrained} that, in order to implement the Lorenz constraint, we have restricted the initial data by
\begin{align}
	\Adelta &= \frac{1}{m^2}\rhodelta j \; , \quad \text{and} \\
	\An &= \frac{1}{m^2}\left( \rhon j  + \delta_{(\Sigma)} \Ad \right) \formspace.
\end{align}
It is obvious that, in general, this is not well defined in the zero mass limit. So in order to keep the dynamics in the zero mass limit, we need to make sure that the constraints are well behaved in the limit. Since we do not want the external source or the initial data to be dependent of the mass, we have to specify
\begin{align}
	\delta j &= 0 \quad \; , \quad \text{and} \\
	\delta_{(\Sigma)} \Ad &= -\rhon j \quad \implies \An =0 \formspace.
\end{align}
This corresponds exactly to the constraints on the initial data in the Maxwell case to implement the Lorenz constraint (see \cite[Theorem 2.11]{pfenning})! With these constraints, we can now look at the remaining term of $\langle A_0 , \delta d F\rangle_\M$ in equation (\ref{eqn:dynamics_limit_classical_unconstraint}). We do this separately for the two summands. Using that $d$ commutes with pullbacks and inserting the constraints on the initial data, we find
\begin{align}
\langle \Ad , \rhoz E_m d\delta  F \rangle_\Sigma
	&= \langle \Ad , d_{(\Sigma)} \rhoz E_m \delta  F \rangle_\Sigma \notag\\
    &=	\langle \delta_{(\Sigma)}\Ad ,  \rhoz E_m \delta  F \rangle_\Sigma \notag\\
    &= -\langle \rhon j ,  \rhoz E_m \delta  F \rangle_\Sigma \notag\\
    &= -\int\limits_{\Sigma} i^* E_m \delta F \wedge *_{(\Sigma)} (- *_{(\Sigma)}  i^* *)j \notag\\
    &=  \int\limits_{\Sigma} i^* E_m \delta F \wedge i^* * j \notag\\
    &=  \int\limits_{\Sigma} i^*\left(  E_m \delta F \wedge * j \right) \formspace.
\end{align}
For the first summand $\sum_\pm \langle j ,  E_m^\mp d\delta F   \rangle_{\Sigma^\pm}$ we use the partial integration that we have already calculated in the proof of Theorem \ref{thm:solution_proca_unconstrained} and find, using the constraint $\delta j = 0$ found above,
\begin{align}
\sum_\pm \langle j ,  E_m^\mp d\delta F   \rangle_{\Sigma^\pm}
&= \sum_\pm \langle d \delta j ,  E_m^\mp  F   \rangle_{\Sigma^\pm} + \int\limits_{\Sigma} i^*\left(  j \wedge * E_m \delta F \right) -  \int\limits_\Sigma i^*(\delta j \wedge *EF ) \notag\\
&= \int\limits_{\Sigma} i^*\left(  j \wedge * E_m \delta F \right)   \formspace.
\end{align}
Using $j \wedge * E_m \delta F = E_m \delta F \wedge * j$ we find that the remaining terms of equation (\ref{eqn:dynamics_limit_classical_unconstraint}) vanish when restricting the initial data such that they are well defined in the zero mass limit. We therefore obtain the correct dynamics
\begin{align}
\langle A_0 , \delta d F \rangle_\M
&=  \langle j , F \rangle_\M - \lim\limits_{m \to 0^+}\Big(\sum\limits_\pm \langle j ,  E_m^\mp d\delta F   \rangle_{\Sigma^\pm}
		- \langle \Ad , \rhoz E_m d\delta  F \rangle_\Sigma \Big) \notag\\
&= \langle j , F \rangle_\M \formspace.
\end{align}
Concluding, when keeping the constraints that implement the Lorenz constraint in the limit well behaved, we indeed end up with the correct dynamics of the Maxwell theory. Furthermore, we also obtain conservation of the external current $\delta j = 0$ as a necessity to get the correct dynamics. This is not surprising as, opposed to Proca's theory, the current in Maxwell's theory is always conserved by the equations of motion! We conclude this in the final theorem of this chapter:
\begin{theorem}[The zero mass limit of the Proca field]
	Let $F\in \Omega^1_0(\M)$ be a test one-form and $j \in \Omega^1(\M)$ an external current. \\
	Let $A_m$ be a solution to Proca's equation specified by initial data $\Az, \Ad \in \Omega^1_0(\Sigma)$ via Theorem \ref{thm:solution_proca_unconstrained}.	\\
	Defining the zero mass limit $\langle A_0 , F \rangle_\M = \lim\limits_{m \to 0^+} \langle A_m, F \rangle_\M$ of the Proca field, the following holds:
	\begin{enumerate}
		\item The limit exists if and only if $\delta F = 0$, effectively implementing the gauge equivalence of the Maxwell theory.
		\item The field $A_0$ is a Maxwell field, that is, it solves Maxwell's equation in a distributional sense if and only if $\delta j = 0$, implementing the conservation of current, and $\rhon j = - \delta_{(\Sigma)} \Ad$ , implementing the Lorenz condition.
	\end{enumerate}
\end{theorem}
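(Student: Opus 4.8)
The plan is to dispatch the two claims separately, in each case reducing to results already obtained in this chapter. Claim (i) is essentially a restatement of Theorem \ref{thm:limit_existence_general_classical}: there it was shown that for a solution $A_m$ with source $j \neq 0$ the limits $\lim_{m\to 0^+} G_m^\pm F$, $\lim_{m\to 0^+}\rhoz G_m F$ and $\lim_{m\to 0^+}\rhod G_m F$ — and hence $\lim_{m\to 0^+}\langle A_m, F\rangle_\M$ — exist if and only if $\delta F = 0$. The identification of the restriction to co-closed test one-forms with the gauge equivalence $A \to A + d\chi$ under the duality $\langle\cdot,\cdot\rangle_\M$ was carried out in Section \ref{sec:zero-mass-limit-existence-classical-vanishing-source}, so for claim (i) I would simply invoke these two facts.

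For claim (ii) I would first note that $\delta d F$ is co-closed for every $F \in \Omega^1_0(\M)$ (since $\delta^2 = 0$), so by claim (i) the pairing $\langle A_0, \delta d F\rangle_\M = \lim_{m\to 0^+}\langle A_m, \delta d F\rangle_\M$ is well-defined for all $F$. By definition $A_0$ solves Maxwell's equation distributionally precisely when $\langle A_0, \delta d F\rangle_\M = \langle j, F\rangle_\M$ for every $F$, and equation \ref{eqn:dynamics_limit_classical_unconstraint} rewrites this as the vanishing, for all $F$, of the remaining term
\[
R(F) \coloneqq \lim_{m\to 0^+}\Big(\sum_\pm \langle j, E_m^\mp d\delta F\rangle_{\Sigma^\pm} - \langle \Ad, \rhoz E_m d\delta F\rangle_\Sigma\Big).
\]
The sufficiency (``if'') direction is then exactly the computation of Section \ref{sec:limit_dynamics_classical}: imposing $\delta j = 0$ kills both the bulk term $\sum_\pm\langle d\delta j, E_m^\mp F\rangle_{\Sigma^\pm}$ obtained from partial integration of the first summand and its attached boundary term, while the Lorenz constraint $\delta_{(\Sigma)}\Ad = -\rhon j$ turns the two surviving surface integrals into a canceling pair via $j \wedge *E_m\delta F = E_m\delta F \wedge *j$, giving $R(F) = 0$ and hence $\langle A_0,\delta d F\rangle_\M = \langle j, F\rangle_\M$.

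The necessity (``only if'') direction is where the real work lies. The vanishing of $\delta j$ I would obtain cleanly, with no continuity input: taking $F = d\chi$ with $\chi \in \Omega^0_0(\M)$ gives $\delta d F = \delta d d\chi = 0$, so the left-hand side of the Maxwell identity is $\langle A_0, 0\rangle_\M = 0$ while the right-hand side is $\langle j, d\chi\rangle_\M = \langle \delta j, \chi\rangle_\M$; since this must vanish for all $\chi$, formal adjointness of $d$ and $\delta$ forces $\delta j = 0$ (morally, $\delta j = \delta\delta d A_0 = 0$). With $\delta j = 0$ in hand I would pass the limit inside the pairing using Assumption \ref{ass:propagator_continuity}; the source-dependent bulk and boundary contributions in $R(F)$ are then proportional to $\delta j$ or $d\delta j$ and drop out, leaving $R(F) = -\langle \rhon j + \delta_{(\Sigma)}\Ad,\, \rhoz E_0\delta F\rangle_\Sigma$. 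Demanding $R(F) = 0$ for every $F$ then yields the Lorenz constraint $\rhon j = -\delta_{(\Sigma)}\Ad$, provided the data $\rhoz E_0\delta F = i^*E_0\delta F$ sweep out a dense enough subset of the compactly supported functions on $\Sigma$ as $F$ ranges over $\Omega^1_0(\M)$.

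I expect this last point to be the main obstacle: establishing that the boundary values $i^*E_0\delta F$ are rich enough on $\Sigma$ to separate $\rhon j + \delta_{(\Sigma)}\Ad$ from zero. This is in essence a surjectivity/density statement for the initial-data map of the massless wave operator restricted to co-exact sources, and it is the honest content behind the heuristic remark preceding the theorem that the mass-dependent constraints \ref{eqn:proca_constraints} can stay finite in the limit only if their numerators $\rhodelta j$ and $\rhon j + \delta_{(\Sigma)}\Ad$ vanish. A secondary technical point is justifying the interchange of $\lim_{m\to 0^+}$ with the bilinear pairings, for which I would rely on Assumption \ref{ass:propagator_continuity} together with the compactness of $\Sigma \cap J(\supp{F})$ that makes the surface integrals over $\Sigma$ well-defined.
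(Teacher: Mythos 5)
Your proposal is correct where the paper is, and for claim (i) and the sufficiency half of claim (ii) it coincides with the paper's proof, which consists precisely of citing Theorem \ref{thm:limit_existence_general_classical} and the computation of Section \ref{sec:limit_dynamics_classical} showing that $\delta j = 0$ and $\rhon j = -\delta_{(\Sigma)}\Ad$ make the remainder $R(F)$ vanish. Where you genuinely diverge is the necessity half of claim (ii). The paper does not prove necessity distributionally at all: its ``only if'' rests on the informal observation that the constraints (\ref{eqn:proca_constraints}), namely $\Adelta = \tfrac{1}{m^2}\rhodelta j$ and $m^2\An - \delta_{(\Sigma)}\Ad = \rhon j$, can only stay finite as $m \to 0^+$ with mass-independent $j$ and data if $\rhodelta j = 0$ and $\rhon j + \delta_{(\Sigma)}\Ad = 0$. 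You instead attack necessity directly from the requirement $\langle A_0, \delta d F\rangle_\M = \langle j, F\rangle_\M$ for all $F$: your derivation of $\delta j = 0$ by testing against $F = d\chi$ is clean, self-contained, and strictly better than anything in the paper, and your reduction of the Lorenz condition to the vanishing of $\langle \rhon j + \delta_{(\Sigma)}\Ad,\, \rhoz E_0 \delta F\rangle_\Sigma$ for all $F$ is the honest formulation of what the theorem's ``only if'' would require. The price is exactly the density statement you flag --- that $i^* E_0 \delta F$ sweeps out enough of $\Omega^0_0(\Sigma)$ as $F$ ranges over $\Omega^1_0(\M)$ --- which neither you nor the paper establishes (note that $\delta F$ is constrained to have vanishing integral over $\M$, so this surjectivity onto boundary data is not immediate from the standard initial-data results). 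In short: your route buys a genuine necessity proof modulo one identifiable lemma, whereas the paper's route buys brevity at the cost of leaving ``only if'' at the level of a plausibility argument about mass-independent data.
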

\begin{proof}
	The proof follows directly from Theorem \ref{thm:limit_existence_general_classical} and the calculations presented in the above Section \ref{sec:limit_dynamics_classical}
\end{proof}

\section{The Quantum Problem}\label{chpt:quantum}
Having established a good understanding of the classical theory, we will now investigate the quantum Proca field in curved spacetimes. In particular, we are going to construct a generally covariant quantum field theory of the Proca field in the framework of Brunetti Fredenhagen Verch \cite{Brunetti_Fredenhagen_Verch} in Section \ref{sec:generally_covariant_QFTCS} and show that the theory is local. In Section \ref{sec:BU-algebra} we study the Borchers-Uhlmann algebra as the field algebra and rigorously construct an initial data formulation of the quantum Proca field theory. This will allow us to define a notion of continuity of the Proca field with respect to the mass and, finally, to study the mass dependence and the zero mass limit of the theory in Section \ref{sec:mass_depenence_and_limit}.
\subsection{Construction of the generally covariant quantum Proca field theory in curved spacetimes}\label{sec:generally_covariant_QFTCS}
The quantization of the Proca field in a generally covariant way will follow the framework of Brunetti, Fredenhagen and Verch \cite{Brunetti_Fredenhagen_Verch} as well as some natural modifications needed for background source dependent theories which are made analogous to \cite{Sanders}. In the framework of \cite{Brunetti_Fredenhagen_Verch}, a \emph{generally covariant quantum field theory} is mathematically described as a functor between the category $\Spac$, consisting of globally hyperbolic spacetimes as objects and orientation preserving isometric hyperbolic embeddings as morphisms, and the category $\Alg$, consisting of unital $^*$-algebras as objects and unit preserving $^*$-algebra-homomorphisms as morphisms. If these $^*$-algebra-homomorphisms are injective, the theory is said to be \emph{local} (rigorous definitions are given below). To accommodate the given background source $j$ in the theory, we will generalize $\Spac$ to a category whose objects also contain the given background source. \par
In this section, we want to give the necessary definitions and construct this functor explicitly, that is, we give a detailed definition on how to map globally hyperbolic spacetimes with a given background current to an algebra of observables, and how to map the morphisms onto each other. The main work is then to show that these maps are well defined. It is then rather trivial to show that we have obtained a functor. Throughout this section, the mass $m$ as well as the external current $j$ are assumed to be fixed.
\newpage
We begin by defining the necessary objects and morphisms of the two categories.
\begin{definition}[Orientation preserving isometric hyperbolic embedding]
Let $(\M,g)$ and $(\N,g_\N)$ be two globally hyperbolic spacetimes. \\
A  map $\psi : (\M,g) \to (\N,g_N)$ is called a \emph{orientation preserving isometric hyperbolic embedding} if
\begin{enumerate}
\item $\psi$ is a diffeomorphism, that is it is smoothly bijective,
\item $\psi$ preserves orientation and time orientation,
	\item $\psi$ is an isometry, that is $\psi^* g_\N = g$, and
	\item $\psi(\M)$ is causally convex, that is for $p \in \M$ it holds
\begin{align*}
J_\M^\pm(p) = \psi^{-1} \Big( J_\N^\pm \big( \psi(p)\big) \Big) \formspace .
\end{align*}
\end{enumerate}
\end{definition}
\begin{definition}[The categories $\SpacCurr$, $\Alg$ and $\Alg'$]\label{def:categories_alg_spaccurr}
The category \gls{spaccurr} consists of triples $(\M,g,j_\M)$ as objects, where $(\M,g)$ is a globally hyperbolic spacetime  and $j_\M \in \Omega^1(\M)$ corresponds to the background current of the theory,
and morphisms $\psi$, where $\psi : (\M,g) \to (\N,g_\N)$ is an orientation preserving isometric hyperbolic embedding such that $\psi^* j_\N = j_\M$. \par
The category \gls{alg} consists of unital $^*$-algebras as objects and unit preserving  $^*$-algebra-homomorphisms as morphisms. \par
The category \gls{algprime} is a subcategory of $\Alg$ consisting of the same objects but only injective morphisms.
\end{definition}
With the notion of these two categories we are able to define:
\begin{definition}[Generally (locally) covariant quantum field theory with background source]\label{def:generally-coveriant-qftcs}
A \emph{generally covariant quantum field theory with background source} is a covariant functor between the categories $\SpacCurr$ and $\Alg$. \\
The theory is called \emph{local} or \emph{locally covariant} if and only if the range of the functor is contained in $\Alg'$.
\end{definition}
To construct this functor for the Proca field, we will first define how to map a globally hyperbolic spacetime to a unital $^*$-algebra and how to map the morphisms onto each other. Most of the work is to show that those maps are well defined and injective. Then, it is not hard to show that we have obtained a functor and thus the desired generally locally covariant quantum field theory.
\begin{definition}\label{def:algebra-A(M)}
Let $M=(\M,g,j_\M) \in \mathsf{Obj}_\SpacCurr$ be an object of $\SpacCurr$, $F,F' \in \Omega^1_0(\M)$ be test one-forms and $c_1, c_2 \in \IC$ be constants. \\
Let \gls{Gmcurly} be the propagator of the Proca operator with integral kernel $G_m$, that is, $\Green{F}{F'} = \langle F, G_mF' \rangle_\M$.\\
Define  $\AA : \mathsf{Obj}_\SpacCurr \to \mathsf{Obj}_\Alg$,  where $\AA(M)$ is the unital $^*$-algebra obtained from the free algebra, generated by $\mathbbm{1}$ and the objects $\A(F)$, factoring by the relations
 \begin{subequations}  \label{def:ideal_generators}
  \begin{align}
\text{(i)}\; &\A(c_1 F + c_2 F') = c_1 \A(F) + c_2 \A(F') 														&\textrm{linearity,} \\
\text{(ii)}\; &\A(F)^* = \A(\skoverline{F}\,) 																															&\textrm{real field,} \\
\text{(iii)}\; &\A\big( (\delta d + m^2) F \big) = \langle j_\M , F \rangle_\M \cdot \mathbbm{1} 	&\textrm{equation of motion,} \\
\text{(iv)}\; &[\A(F) , \A(F') ] = \i \Green{F}{F'} \cdot \mathbbm{1}															&\textrm{commutation relations}.
 \end{align}
 \end{subequations}
\end{definition}
To be mathematically more precise, the algebra is obtained as the quotient algebra from the free algebra $\PPM$ dividing out the (two-sided) ideal $\mathcal{J}_\M$ that is generated by the relations  (\ref{def:ideal_generators}). As an example, a sub-ideal of $\mathcal{J}_\M$ implementing (\ref{def:ideal_generators}b) is defined as $\widetilde{\mathcal{J}}_\M = \big\{ a\big(\A(F)^* - \A(\skoverline{F}\,) \big)b \;\;\vert\;\; a,b \in \PPM, F \in \Omega^1_0(\M) \big\}$. One obtains an algebra of equivalence classes $\AA(M) = {\Quotientscale{\PPM}{ \mathcal{J}_\M}}$.
For this to be well defined, it suffices to show that the obtained algebra $\AA(M)$ is not trivial, that is, not the zero algebra. Therefore, we need to show that the ideal $\mathcal{J}_\M$ is not the full free algebra $\PPM$.
Clearly, for a suitable test one-form $F$, that is, in particular a one-form that is not of the form $F = (\delta d + m^2)H$ for some test one-form $H$,  $\A(F)$ will not be an element of $\mathcal{J}_\M$, and therefore $\AA(M)$ is not trivial.\par
Next, we define the action of the map $\AA$ on morphisms of $\SpacCurr$.
\begin{definition}\label{def:morphism_alpha_psi}
Let $M,N\in \mathsf{Obj}_\SpacCurr$, where $M=(\M,g,j_\M)$ and $N=(\N,g_\N,j_\N)$, be objects and $\psi \in \textsf{Mor}_\SpacCurr(M,N), \psi: (\M,g,j_\M) \to (\N,g_\N,j_\N)$ be a morphism of the category $\SpacCurr$.
Define $\AA(\psi) \equiv \alpha_\psi : \AA(M) \to \AA(N)$ as a unit preserving $^*$-algebra-homomorphism whose action on elements of $\AA(M)$ is then fully determined by the action on the generators $\A_\M(F)$ :
\begin{align}
\alpha_\psi \big(\A_\M(F)\big) = \A_\N(\psi_*(F)) \formspace.
\end{align}
\end{definition}
We need to show that this is well defined\footnote{Note that in Definition \ref{def:pullback} we have only defined the \emph{pullback} $\psi^* F$ of a one-form $F$. Since here $\psi$ is assumed to be a diffeomorphism, the pushforward of one-forms on $\M$ to one-forms on $\N$ can be defined as the pullback with respect to $\psi^{-1}$.}, in particular that it is compatible with the algebra relations in $\AA(\N)$. To be more precise, the proceeding is as follows:\\
Let $\PPM,\PPN$ be the free unital *-algebras as defined above. We define a morphism $\pi : \PPM \to \PPN$ as a unit preserving $^*$-algebra homomorphism such that $\A_\M(F) \mapsto \A_\N(\psi_* F)$. We need to show that $\pi(\mathcal{J}_\M) \subset \mathcal{J}_\N$, so that if we divide out the ideal $\mathcal{J}_\M$, $\pi$ descends to the wanted unit preserving $^*$-algebra homomorphism $\AA(\psi) \equiv \alpha_\psi : \AA(M) \to \AA(N)$. We do this step by step, showing that each generator of $\mathcal{J}_\M$ maps to a corresponding generator of $\mathcal{J_\N}$. In the following let  $F,F' \in \Omega^1_0(\M)$ and $c_1,c_2 \in \IC$ be constants.\par
\textit{1.) Linearity:}\\
The generator of the corresponding ideal\footnote{Note, that actually we are interested in $^*$-ideals, so we would need to add (or subtract) the hermitian adjoint to that expression, but since $\pi$ is defined as a $^*$-algebra homomorphism this would not change anything in the calculations and is therefore neglected for clarity.} is $\big( \A_\M (c_1\,F+ c_2\,F') -   c_1\, \A_\M (F) - c_2\, \A_\M (F')  \big)$. We calculate:
\begin{align}
\pi \big( \A_\M (c_1\,F+ c_2\,F') &-   c_1\, \A_\M (F) - c_2\, \A_\M (F')  \big) \notag\\
&=  \pi \big( \A_\M (c_1\,F+ c_2\,F') \big)  -  c_1\, \pi \big(  \A_\M (F) \big) - c_2\, \pi \big( \A_\M (F')  \big) \notag\\
&= \A_\N (c_1\, \psi_* F+ c_2\,\psi_*F') -  c_1\,  \A_\N (\psi_*F) - c_2\,  \A_\N (\psi_*F') \formspace.
\end{align}
We have used  the homomorphism property of $\pi$ and that $\psi_*$ is linear and naturally commutes with scalars.
The result clearly is an element of the corresponding ideal in $\PPN$ specifying $\tilde{F}, \tilde{F}' \in \Omega^1_0(\N) $ by $\tilde{F} = \psi_* F$ and $\tilde{F}' = \psi_* F'$. \par
\textit{2.) Real field:} \\
The corresponding generator is $\big( \A_\M( F)  - \A_\M( \skoverline{F}\,)^* \big)$. We obtain:
\begin{align}
\pi \big( \A_\M( \skoverline{F}\,)  - \A_\M( F)^* \big)
&= \pi \big( \A_\M( \skoverline{F}\,) \big)   - \pi \big( \A_\M( F)^* \big) \notag\\
&= \pi \big( \A_\M( \skoverline{F}\,) \big)   - \pi \big( \A_\M( F) \big)^* \notag\\
&=  \A_\N( \psi_* \skoverline{F}\,)    -  \A_\N(\psi_*  F)^* \formspace.
\end{align}
Again, the result clearly is an element of the corresponding ideal in $\PPN$.\newpage
\textit{3.) Equations of motion:}\\
The generator of interest is $\big( \A_\M\big( (\delta d + m^2) F\big )  -  \langle j_\M , F \rangle_\M \, \mathbbm{1}_{\PPM} \big)$.
First, note that
\begin{align}
(\delta d + m^2)\psi_* = \psi_* (\delta d + m^2)
\end{align}
since $\psi_*$ is linear and commutes with $d$. Also, because $\psi$ is an orientation preserving isometry and therefore preserves the volume form, $\psi_*$ commutes with the Hodge star  and thus it also commutes  with the interior derivative $\delta$.
It then follows that
\begin{align}
\pi  \big( \A_\M\big( (\delta d + m^2) F\big )  & -  \langle j_\M , F \rangle_\M \, \mathbbm{1}_{\PPM} \big)  \notag\\
&= \A_\N\big(\psi_*  (\delta d + m^2) F\big )  -  \langle j_\M , F \rangle_\M \, \mathbbm{1}_{\PPN} \notag\\
&= \A_\N\big(  (\delta d + m^2) \psi_* F\big )  -  \langle j_\N , \psi_* F \rangle_\N \, \mathbbm{1}_{\PPN} \formspace.
\end{align}
In the last step it was used that, since $\psi$ is an isometry and $\psi^* j_\N = j_\M$:
\begin{align}
\langle j_\M , F \rangle_\M
&= \langle \psi^* j_\N , F \rangle_\M \notag\\
&= \langle  j_\N , \psi_* F \rangle_\N \formspace,
\end{align}
which yields the wanted generator in $\PPN$.\par
\textit{4.) Commutation relation:}\\
The generator is $\big ( \big[ \A_\M(F) , \A_\M( F') \big] - \i \GreenM{F}{F'}\,\mathbbm{1}_{\PPM} \big)$. We calculate:
\begin{align}
\pi   \big ( \big[ \A_\M(F) , \A_\M( F') \big] & - \i \GreenM{F}{F'}\,\mathbbm{1}_{\PPM} \big) \notag\\
&= \big[ \A_\N(\psi_* F) , \A_\N(\psi_* F') \big] - \i \GreenM{F}{F'}\,\mathbbm{1}_{\PPN}  \notag\\
&= \big[ \A_\N(\psi_* F) , \A_\N(\psi_* F') \big] - \i \GreenN{\psi_* F}{\psi_* F'}\,\mathbbm{1}_{\PPN}
\end{align}
In the last step we have used the uniqueness of the fundamental solutions and the properties that $\psi$ is an isometry and that $\psi(\M)$ is causally convex.
Together, this implies $G_{m,\M} F = \psi^* G_{m,\N} \psi_* F $ (see \cite[Chapter 4.3]{Sanders}) and therefore
\begin{align}
\GreenM{F}{F'}
&= \langle F, G_{m,\M} F' \rangle_\M \notag\\
&= \langle F, \psi^* G_{m,\N} \psi_* F' \rangle_\M\notag \\
&= \langle \psi_* F ,  G_{m,\N} \psi_* F' \rangle_\N \notag\\
&= \GreenN{\psi_*F}{\psi_* F'} \formspace.
\end{align}
Altogether, we have shown that $\pi(\mathcal{J}_\M)  \subset \mathcal{J}_\N$ and therefore $\pi$ descends to the wanted unit preserving $^*$-algebra homomorphism $\alpha_\psi$, having divided out the ideal $\mathcal{J}_\M$.\par
\newpage
Now, to obtain a \emph{locally covariant} QFT from these definitions, we need the defined homomorphism to be injective. We do this the following way:
To show that $\AA(\psi) \equiv \alpha_\psi$ is injective, we can equivalently show that the algebra $\AA(M)$ is simple, that is, there is no non-trivial two-sided ideal\footnote{For a general algebra to be simple one also needs that the multiplication operation is not uniformly zero. Since we deal with unital algebras this is trivially fulfilled.} in $\AA(M)$. It turns out that for $\AA(M)$ to be simple, it is sufficient that $\Green{\cdot}{\cdot}$ is non-degenerate. The basic algebraic work necessary for these arguments is put in Appendix \ref{app:lemmata}.
\begin{lemma}\label{lem:propagator-non-degenerate}
Let $F,F' \in \Omega^1_0(\M)$ be two test one-forms and let $\gls{DSz} = \Omega^1_0(\Sigma) \oplus  \Omega^1_0(\Sigma)$ be the space of initial data on some Cauchy surface $\Sigma$ with respect to Proca's equation.
Then,
\begin{center}
$\Green{F}{F'}$, viewed as a map $\mathcal{G}_m : \Dzs \oplus \Dzs \to \IC$\\ on the space of initial data, is a symplectic form,
\end{center}
 that is it is bilinear, anti-symmetric and non-degenerate.
\end{lemma}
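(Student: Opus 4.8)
The plan is to transport the propagator down to the space of initial data on $\Sigma$ by means of the explicit solution representation of Theorem~\ref{thm:solution_proca_unconstrained}, and then to read off the three required properties from the resulting formula. Bilinearity will be immediate, anti-symmetry will follow from the symmetry of $\langle\cdot,\cdot\rangle_\Sigma$, and non-degeneracy will ultimately rest on the fact that the induced metric $h$ on the Cauchy surface $\Sigma$ is Riemannian, hence positive definite. The genuine work is the descent step: identifying $\Green{F}{F'}$ with a simple pairing of the Cauchy data of $G_m F$ and $G_m F'$.

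First I would note that for any $F' \in \Omega^1_0(\M)$ the section $G_m F'$ solves the source-free Proca equation, $(\delta d + m^2)G_m F' = 0$, and (being spacelike compact) carries compactly supported Cauchy data $(\rhoz G_m F', \rhod G_m F') \in \Dzs$. Substituting $A = G_m F'$ into the source-free version of the solution formula of Theorem~\ref{thm:solution_proca_unconstrained} and using the symmetry of $\langle\cdot,\cdot\rangle_\M$ gives
\begin{align}
\Green{F}{F'} = \langle F, G_m F'\rangle_\M = \langle \rhoz G_m F', \rhod G_m F\rangle_\Sigma - \langle \rhod G_m F', \rhoz G_m F\rangle_\Sigma \formspace.
\end{align}
Since the right-hand side is expressed entirely through $(\rhoz G_m F, \rhod G_m F)$ and $(\rhoz G_m F', \rhod G_m F')$, the propagator descends to a well-defined bilinear form on $\Dzs$; writing $(a_0,a_1)$ and $(a_0',a_1')$ for the two Cauchy data this reads
\begin{align}
\mathcal{G}_m\big((a_0,a_1),(a_0',a_1')\big) = \langle a_0', a_1\rangle_\Sigma - \langle a_1', a_0\rangle_\Sigma \formspace.
\end{align}
Conceptually the same identity also reflects the formal skew-adjointness $\langle F, G_m F'\rangle_\M = -\langle G_m F, F'\rangle_\M$ inherited from $G_m^\mp$ being formally adjoint to $G_m^\pm$.

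With this formula the remaining steps are short. Bilinearity holds because $\langle\cdot,\cdot\rangle_\Sigma$ is bilinear and $F \mapsto (\rhoz G_m F, \rhod G_m F)$ is linear. Anti-symmetry follows from the symmetry of $\langle\cdot,\cdot\rangle_\Sigma$: interchanging the arguments turns $\langle a_0', a_1\rangle_\Sigma - \langle a_1', a_0\rangle_\Sigma$ into $\langle a_0, a_1'\rangle_\Sigma - \langle a_1, a_0'\rangle_\Sigma = \langle a_1', a_0\rangle_\Sigma - \langle a_0', a_1\rangle_\Sigma$, i.e.\ minus the original. For non-degeneracy I would assume $\mathcal{G}_m\big((a_0,a_1),\cdot\big) = 0$: choosing data of the form $(a_0',0)$ forces $\langle a_0', a_1\rangle_\Sigma = 0$ for all $a_0' \in \Omega^1_0(\Sigma)$, and choosing $(0,a_1')$ forces $\langle a_1', a_0\rangle_\Sigma = 0$ for all $a_1' \in \Omega^1_0(\Sigma)$.

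The main obstacle is therefore the descent step rather than non-degeneracy, and the crucial geometric input enters only at the very end. Because $\Sigma$ is a Cauchy surface it carries the Riemannian metric $h$, so $\langle A, B\rangle_\Sigma = \int_\Sigma A_\mu B^\mu\,\dvolh$ is positive definite and thus non-degenerate on $\Omega^1_0(\Sigma)$; the vanishing of $\langle a_0', a_1\rangle_\Sigma$ against all compactly supported $a_0'$ then yields $a_1 = 0$ by the fundamental lemma of the calculus of variations, and likewise $a_0 = 0$, giving $(a_0,a_1)=0$. It is precisely this positive definiteness, which is absent on the Lorentzian bulk $\M$, that makes the form automatically non-degenerate once it has been transported to $\Sigma$. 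The one point I would treat with care is the spacelike-compact (rather than compact) support of $G_m F'$ when invoking Theorem~\ref{thm:solution_proca_unconstrained}, noting that all Cauchy-surface pairings nevertheless involve only compactly supported forms so that every integral above is well defined.
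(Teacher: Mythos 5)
Your proof is correct and follows essentially the same route as the paper: descend $\Green{F}{F'}$ to the Cauchy data of $G_m F$ and $G_m F'$ via the source-free case of Theorem~\ref{thm:solution_proca_unconstrained}, then read off bilinearity, anti-symmetry and non-degeneracy from the resulting pairing on $\Dzs$. If anything, your treatment of non-degeneracy (testing against $(a_0',0)$ and $(0,a_1')$ and invoking positive definiteness of the Riemannian pairing on $\Sigma$) and your remark about the spacelike-compact support of $G_m F'$ are slightly more explicit than the paper's own argument.
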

\begin{proof}
First, we want to see how to view $\mathcal{G}_m$ as a map on initial data. Let $F,F' \in \Omega^1_0(\M)$, then by definition
\begin{align}
\Green{F}{F'}
&= \langle F , G_m F' \rangle_\M \notag\\
&= \langle G_m F' , F \rangle_\M  \formspace.
\end{align}
Note that $G_m F'$ is a solution to the source free Proca equation, that is,
\begin{align}
(\delta d + m^2)G_m F' = 0 \formspace.
\end{align}
By Theorem \ref{thm:solution_proca_unconstrained}, setting $j=0$, we find for a test one-form $F$,
\begin{align}
 \langle G_m F' , F \rangle_\M =  \langle (G_m F')_{0} , \rhod GF \rangle_\Sigma - \langle (G_m F')_{d} , \rhoz G_m F \rangle_\Sigma  \formspace.
\end{align}
In the same way, $G_m F$ is a solution to Proca's equation and therefore, since $\rhoz$ and $\rhod$ map a solution to its initial data, we can view $\rhoz G_m F$ and  $\rhod G_m F$ as initial data of $G_m F$.
For short hand we will write for the initial data $\rhoz G_m F = (G_m F)_0 = \varphi$ and  $\rhod G_m F = (G_m F)_0 = \pi$ and analogously for $G_m F'$.
Therefore we obtain
\begin{align}
\Green{F}{F'} &= \langle \varphi' , \pi \rangle_\Sigma  - \langle \pi' , \varphi \rangle_\Sigma \notag \\
&= \langle \pi , \varphi' \rangle_\Sigma  - \langle \varphi , \pi' \rangle_\Sigma   \formspace.
\end{align}
So instead of viewing $\mathcal{G}_m$ as a map on one forms, we view it as map on the space of initial data
\begin{align}
\mathcal{G}_m : \Dzs \oplus \Dzs &\to \IC\formspace \\
\big( (\varphi , \pi) , (\varphi' , \pi')  \big) &\mapsto \langle \pi' , \varphi \rangle_\Sigma - \langle \varphi' , \pi \rangle_\Sigma \formspace. \notag
\end{align}
Now, it is straightforward to show that $\mathcal{G}_m$ is a symplectic form: \par
\textit{1.) Bilinearity: \\}
Bilinearity follows trivially from the bilinearity of $\langle \cdot , \cdot \rangle$. \par
\textit{2.) Alternating: \\}
Let $\psi = (\varphi , \pi) \in \Dzs$. Then
\begin{align}
\Gm{\psi}{\psi}
&=  \langle \pi , \varphi \rangle_\Sigma - \langle \varphi , \pi \rangle_\Sigma \notag\\
&= 0 \formspace.
\end{align}
Therefore, $\mathcal{G}_m$ is anti-symmetric: Specifying $\psi = \psi' + \tilde{\psi}$ and using $0=\Gm{\psi}{\psi}$ together with bilinearity  yields anti-symmetry.\par
\textit{3.) Non-degeneracy: \\}
Let $\psi'= (\varphi' , \pi') \in \Dzs$ specify initial data. Assume, for all $\psi = (\varphi, \pi) \in \Dzs$ it holds that
\begin{align}
0
&= \Gm{\psi}{\psi'} \notag\\
&= \langle \pi , \varphi' \rangle_\Sigma - \langle \varphi , \pi' \rangle_\Sigma \formspace \notag\\
\implies \varphi' &= 0 = \pi' \notag\\
\iff (\varphi' , \pi') &= 0 \in \Dzs\formspace.
\end{align}
Hence, $\Gm{\cdot}{\cdot}$ is non-degenerate.
\end{proof}
\newpage
We are now ready to show that the defined morphism $\alpha_\psi$ is injective:
\begin{lemma}\label{lem:alpha_is_injective}
Let $\AA(M)$ be the unital $^*$-algebra as defined in Definition \ref{def:algebra-A(M)} and $\alpha_\psi$ the unit preserving $^*$-algebra homomorphism as defined in Definition \ref{def:morphism_alpha_psi}. It then holds that
\begin{center}
$\alpha_\psi$ is injective.
\end{center}
\end{lemma}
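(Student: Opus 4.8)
The plan is to deduce injectivity from the \emph{simplicity} of the source algebra $\AA(M)$, exactly as flagged in the paragraph preceding Lemma \ref{lem:propagator-non-degenerate}. Since $\alpha_\psi$ is a $^*$-algebra homomorphism, its kernel $\ker(\alpha_\psi)$ is a two-sided $^*$-ideal of $\AA(M)$. If I can show that $\AA(M)$ admits no two-sided ideals other than $\{0\}$ and $\AA(M)$ itself, then $\ker(\alpha_\psi)$ must be one of these two. It cannot be all of $\AA(M)$: because $\alpha_\psi$ is unit preserving, $\alpha_\psi(\mathbbm{1}) = \mathbbm{1} \neq 0$ (the target algebra being non-trivial, as established after Definition \ref{def:algebra-A(M)}), so $\mathbbm{1} \notin \ker(\alpha_\psi)$. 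Hence $\ker(\alpha_\psi) = \{0\}$ and $\alpha_\psi$ is injective. The whole statement therefore reduces to proving that $\AA(M)$ is simple.

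To establish simplicity I would exploit the commutation relation (iv) together with the non-degeneracy of $\Green{\cdot}{\cdot}$ just proved in Lemma \ref{lem:propagator-non-degenerate}. The structural fact driving the argument is that $[\A(F),\A(F')] = \i\,\Green{F}{F'}\,\mathbbm{1}$ is \emph{central}, so the inner derivation $a \mapsto [a,\A(G)]$ sends a degree-$n$ monomial $\A(F_1)\cdots\A(F_n)$ to the degree-$(n-1)$ element $\sum_i \i\,\Green{F_i}{G}\,\A(F_1)\cdots\widehat{\A(F_i)}\cdots\A(F_n)$. Concretely, let $\mathcal{I} \subset \AA(M)$ be a non-zero two-sided ideal and pick $0 \neq a \in \mathcal{I}$, represented as a finite polynomial in the generators. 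By Lemma \ref{lem:propagator-non-degenerate} the relevant symplectic form is non-degenerate on the initial-data space $\Dzs$, so for any representative of a top-degree factor I can choose a test one-form $G$ detecting it, i.e.\ with non-vanishing pairing. Applying a suitable sequence of such inner derivations strips off the generators one by one and terminates in a non-zero scalar multiple of $\mathbbm{1}$, forcing $\mathbbm{1} \in \mathcal{I}$ and hence $\mathcal{I} = \AA(M)$. This is the classical simplicity argument for the polynomial CCR algebra over a symplectic space; the routine combinatorial bookkeeping is precisely what the algebraic lemmata of Appendix \ref{app:lemmata} are meant to supply, and I would invoke them rather than reproduce it.

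The main obstacle is this simplicity step, and within it the interplay between the commutation relation and the equation-of-motion relation (iii). One must be careful that the generators are \emph{not} freely independent: $\A\big((\delta d + m^2)H\big) = \langle j_\M, H\rangle_\M\,\mathbbm{1}$, so the notion of polynomial degree is only well defined after passing to representatives transverse to $(\delta d + m^2)\Omega^1_0(\M)$. The clean way to handle this is to use that, by Lemma \ref{lem:propagator-non-degenerate}, $\Green{\cdot}{\cdot}$ descends to a genuine non-degenerate symplectic form on the quotient $\Omega^1_0(\M)/(\delta d + m^2)\Omega^1_0(\M) \cong \Dzs$; the degree-reduction then takes place in the CCR algebra over this honest symplectic space, where the simplicity theorem applies directly and non-degeneracy guarantees that no top-degree term can be annihilated by all admissible $G$. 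I would therefore organize the proof so that the only analytic input is Lemma \ref{lem:propagator-non-degenerate} and the only algebraic input is the standard simplicity of the polynomial CCR algebra, both isolated in Appendix \ref{app:lemmata}.
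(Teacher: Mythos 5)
Your proposal is correct and follows essentially the same route as the paper: the paper likewise reduces injectivity to simplicity of $\AA(M)$, obtains simplicity from the non-degeneracy of $\mathcal{G}_m$ established in Lemma \ref{lem:propagator-non-degenerate} (citing Baez--Segal--Zhou, Scholium 7.1, for exactly the standard CCR degree-reduction argument you sketch), and concludes via the elementary fact that a unit-preserving $^*$-homomorphism out of a simple unital algebra is injective (Lemma \ref{lem:injective_mor_simple_algebra}). The only minor discrepancy is that Appendix \ref{app:lemmata} supplies just the two elementary kernel/simplicity lemmata, not the combinatorial bookkeeping of the derivation-stripping argument, which the paper outsources to the external reference rather than to its own appendix.
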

\begin{proof}
Since by Lemma \ref{lem:propagator-non-degenerate} the propagator $\Gm{\cdot}{\cdot}$ is non-degenerate when viewed as a map on initial data\footnote{We will show in Section \ref{sec:BU-algebra} that the dynamical field algebra is homeomorphic to the field algebra of initial data. We can therefore safely view the propagator as a map on initial data.}, the algebra $\AA(M)$ is simple (c.f. \cite[Scholium 7.1]{baez_segal_zhou}). Using Lemma \ref{lem:injective_mor_simple_algebra} this implies that all unit preserving $^*$-algebra homomorphisms are injective.
\end{proof}
With this, we are ready to state the first major result of this chapter, that is, we have constructed a generally locally covariant quantum theory of the Proca field in curved spacetimes including external sources,  in form of the following theorem:
\begin{theorem}
Let $\SpacCurr$ and $\Alg'$ be the categories as defined in Definition \ref{def:categories_alg_spaccurr}.\\
Let $\AA : \SpacCurr \to \Alg'$ be as defined in Definition \ref{def:algebra-A(M)} (action on objects) and Definition \ref{def:morphism_alpha_psi} (action on morphisms). Then
\begin{center}
$\AA$ is a covariant functor,
\end{center}
that is it describes the generally covariant quantum theory of the Proca field in (globally hyperbolic) spacetimes.
Furthermore, the image of the functor is contained in $\Alg'$. Therefore, the theory is local.
\end{theorem}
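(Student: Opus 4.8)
The plan is to observe that nearly all of the substantive work has already been carried out, so that the theorem reduces to assembling the preceding results and verifying the two functorial axioms. First I would record that $\AA$ is well defined on objects: by Definition \ref{def:algebra-A(M)} each $\AA(M)$ is a non-trivial unital $^*$-algebra, since the ideal $\mathcal{J}_\M$ is proper (a generic $F\in\Omega^1_0(\M)$ not of the form $(\delta d + m^2)H$ yields $\A(F)\notin\mathcal{J}_\M$). Likewise $\AA$ is well defined on morphisms: the explicit computation preceding Lemma \ref{lem:propagator-non-degenerate}, checking each generator of $\mathcal{J}_\M$ in turn, shows $\pi(\mathcal{J}_\M)\subset\mathcal{J}_\N$, so $\pi$ descends to the unit-preserving $^*$-algebra homomorphism $\alpha_\psi=\AA(\psi):\AA(M)\to\AA(N)$.

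Next I would verify the two axioms in the definition of a covariant functor. Since $\alpha_\psi$ is by construction a unit-preserving $^*$-algebra homomorphism, and the generators $\A_\M(F)$ together with $\mathbbm{1}$ generate $\AA(M)$, it suffices to check both axioms on the generators. For preservation of composition, let $\psi\in\mathsf{Mor}_\SpacCurr(M,N)$ and $\phi\in\mathsf{Mor}_\SpacCurr(N,K)$ with underlying manifolds $\M,\N,\K$; then
\begin{align}
\alpha_{\phi\comp\psi}\big(\A_\M(F)\big)
&= \A_\K\big((\phi\comp\psi)_* F\big) = \A_\K\big(\phi_*\psi_* F\big) \notag \\
&= \alpha_\phi\big(\A_\N(\psi_* F)\big) = \big(\alpha_\phi\comp\alpha_\psi\big)\big(\A_\M(F)\big),
\end{align}
using the functoriality $(\phi\comp\psi)_*=\phi_*\comp\psi_*$ of the pushforward (itself a consequence of $(\phi\comp\psi)^*=\psi^*\comp\phi^*$). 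For preservation of identities, $(\mathrm{id}_M)_*=\mathrm{id}$ gives $\alpha_{\mathrm{id}_M}(\A_\M(F))=\A_\M(F)$, whence $\alpha_{\mathrm{id}_M}=\mathrm{id}_{\AA(M)}$. Thus $\AA$ is a covariant functor $\SpacCurr\to\Alg$.

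Finally, locality follows immediately: Lemma \ref{lem:alpha_is_injective} shows that every $\alpha_\psi$ is injective, so each morphism in the image of $\AA$ is an injective unit-preserving $^*$-algebra homomorphism, i.e.\ a morphism of $\Alg'$. Hence the range of $\AA$ is contained in $\Alg'$ and the theory is local in the sense of Definition \ref{def:generally-coveriant-qftcs}.

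As for the main obstacle, there is essentially none left at the level of the theorem, precisely because the delicate points---well-definedness of $\alpha_\psi$ on the quotient via the ideal inclusion $\pi(\mathcal{J}_\M)\subset\mathcal{J}_\N$, the non-degeneracy of the propagator $\Gm{\cdot}{\cdot}$ on initial data (Lemma \ref{lem:propagator-non-degenerate}), and the resulting simplicity/injectivity argument (Lemma \ref{lem:alpha_is_injective})---have all been dispatched beforehand. The only genuine verification remaining is the functoriality of the pushforward, and the one point I would be careful about is that $\SpacCurr$ is genuinely closed under composition, i.e.\ that the current-compatibility condition is preserved: if $\psi^* j_\N = j_\M$ and $\phi^* j_\K = j_\N$, then $(\phi\comp\psi)^* j_\K = \psi^*\phi^* j_\K = \psi^* j_\N = j_\M$, so $\phi\comp\psi$ is again a $\SpacCurr$-morphism.
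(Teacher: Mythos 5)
Your proposal is correct and follows essentially the same route as the paper's own proof: it assembles the previously established well-definedness and injectivity results, checks the composition axiom on the generators $\A_\M(F)$ via functoriality of the pushforward, notes that identities are preserved trivially, and deduces locality from Lemma \ref{lem:alpha_is_injective}. Your additional check that $\SpacCurr$ is closed under composition (preservation of the current-compatibility condition) is a sensible verification the paper leaves implicit, but it does not change the argument.
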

\begin{proof}
We have already proven most of the statement by showing that $\AA$ is well defined and that $\alpha_\psi \equiv \AA(\psi)$ is injective (see Lemma \ref{lem:alpha_is_injective}).\\
For $\AA$ to be a (covariant) functor, we need to show that it behaves well under composition of morphisms, that is for any objects $M,N,K \in \mathsf{Obj}_\SpacCurr$, where $M=(\M,g,j_\M)$, $N=(\N,g_\N,j_\N)$ and $K=(\K,g_\K,j_\K)$, and morphisms $\psi \in \textsf{Mor}_\SpacCurr(M,N)$ and $\phi \in \textsf{Mor}_\SpacCurr(N,K)$, it holds that
\begin{align}
\alpha_{\phi \,\comp\, \psi} = \alpha_\phi \comp \alpha_\psi \formspace,
\end{align}
and that it maps the identity $\text{id}_M \in \mathsf{Hom}_\SpacCurr(M,M)$ to the identity $\text{id}_{\AA(M)} \in  \mathsf{Hom}_\Alg\big(\mathscr{A}(M),\mathscr{A}(M)\big)$, which it trivially does by definition.
The behavior under composition follows directly from the definition of $\alpha_\psi$. Let $\psi : (\M,g,jJ_\M) \to (\N, g_\N ,j_\N)$ and $\varphi : (\N, g_\N ,j_\N) \to (\mathcal{K}, g_\K, j_\mathcal{K})$ two orientation preserving hyperbolic isometric embeddings. Then
\begin{align}
\alpha_{\phi\, \comp \, \psi} \big( \A_\M(F) \big)
&= \A_\mathcal{K} \big( (\phi \comp \psi )_* F \big) \notag \\
&= \A_\mathcal{K} \big(\phi_*(\psi_* F) \big) \notag\\
&= \alpha_\phi \big(\A_\N(\psi_* F) \big) \notag\\
&= \alpha_\phi \Big( \alpha_\psi \big(\A_\M (F) \big) \Big) \notag\\
&= (\alpha_\phi \comp \alpha_\psi)\big( \A_\M (F) \big)  \formspace.
\end{align}
Therefore, $\AA$ is a covariant functor.
\end{proof}
\subsection{The Borchers-Uhlmann algebra as the field algebra}\label{sec:BU-algebra}
One algebra that can be used to describe quantum fields in curved spacetimes is the \emph{Borchers-Uhlmann algebra} (BU-algebra), which was studied by \name{Borchers} \cite{borchers} and
\name{Uhlmann} \cite{uhlmann} in 1962. First applied to the case of quantum field theory on Minkowski spacetime, in particular in connection to the Wightman $n$-point functions, the BU-algebra is well suited to generalize to the curved spacetime case. The BU-algebra can be constructed over any vector space which in our case will be the space $\Omega^1_0(\M)$ of compactly supported test one-forms. The construction presented here follows \cite[Chapter 4.1]{verch_sahlman} and \cite[Chapter 8.3.2]{haag}.\par
The BU-algebra $\BUOmega$ is defined as the tensor algebra of the vector space $\Omega^1_0(\M)$. That means elements $f \in \BUOmega$ are tuples $f= (f^{(0)}, f^{(1)}, f^{(2)}, \dots)$, where $f^{(0)} \in \IC$ and for $i>0, f^{(i)} \in \big(\Omega^1_0(\M)\big)^{\otimes i}$, such that only finitely many $f^{(i)}$'s are non-vanishing\footnote{As noted in \cite[Chapter 3.3]{verch_sahlman}, we can alternatively view the $f^{(i)}$'s as smooth sections of the $i$-fold outer product bundle $T^*\M \boxtimes T^*\M \boxtimes \cdots \boxtimes T^*\M$ over $\M ^i$.}. Here, $\otimes$ denotes the algebraic tensor product, without taking any topological completion. We will call the component $f^{(i)}$ the \emph{degree-i-part} of $f$.\newpage
Formally, we summarize in the following definition:
\begin{definition}[Borchers-Uhlmann algebra]
Let $V$ be a vector space. The Borchers-Uhlmann algebra over $V$ is defined as
\begin{align}
	\gls{bualgebra} = \IC \oplus \bigoplus\limits_{n= 1}^\infty V^{\otimes n} \formspace.
\end{align}
\end{definition}
\noindent
Again, $V \otimes V \otimes \cdots \otimes V$ denotes the algebraic tensor product without any topological completion.
This definition makes $\BUOmega$ a vector space (addition and scalar multiplication is defined component wise), which can be endowed with a $*$-algebraic structure: \\
For two elements $f, g$ we define the product $(f\cdot g)$ by defining the degree-$n$-part to be
\begin{align}
	(f\cdot g)^{(n)} = \sum_{i+j = n} f^{(i)} \otimes g^{(j)} \formspace,
\end{align}
which is equivalent to specifying
\begin{align}
(f\cdot g)^{(n)} (p_1,p_2, \dots , p_n) = \sum_{i+j = n} f^{(i)} (p_1, p_2, \dots , p_i) g^{(j)} (p_{i+1} , \dots , p_n) \formspace,
\end{align}
where $p_i \in \M$. The involution is defined by complex conjugation and reversing the order of the arguments in every degree:
\begin{align}
	(f^*)^{(n)} (p_1, \dots , p_n) = \bar{f}^{(n)}(p_n, p_{n-1}, \dots , p_1) \formspace.
\end{align}
With these two additional operations, $\BUOmega$ is a *-algebra. Furthermore, defining a unit element $\mathbbm{1}_{\BU(\Omega^1_0(\M))} = (1, 0, 0, \dots)$, $\BUOmega$ becomes a unital *-algebra. The reason why the BU-algebra is well suited for our problem is that it can be endowed with a topology which will later allow us to define a notion of continuously varying the mass $m$ and compare the corresponding quantum fields with each other.
\subsubsection{The topology of the Borchers-Uhlmann algebra}
In the following we present the basic ideas to defining a (locally convex) topology on the BU-algebra over the space of compactly supported one-forms without going into too much of the necessary details since this procedure is well understood. At every step we make explicit references to the missing details or proofs.
As a first step, we need to find a topology on the space $\Omega^1_0(\M)$ of compactly supported one-forms which will be used for the construction of the topology on the BU-algebra.  The construction follows \cite[Chapter 17.1 to 17.3]{dieudonne_3} where all presented statements are made rigorous.
First we define a locally convex topology on the space of smooth functions $C^\infty(U)$ for an open $U \subset \IR^n$. A locally convex topology is induced by a family of semi-norms\footnote{See \cite[Theorem 12.14.3]{dieudonne_2} or for a general introduction to locally convex topological vector spaces \cite[Chapter 7, in particular page 63 ]{treves} and \cite[Chapter 12.14]{dieudonne_2}.}. Let $\left\{ K_l \right\}_l$ be an increasing family of compact sets of $U$, such that $\left\{ K_l \right\}_l$ is a covering of $U$. We will call such a family a \emph{fundamental sequence}. We define the semi-norm for a class $C^\infty$ function $f: U \to \IC$
\begin{align}
	p_{s,l}(f) = \sup\Big\{ \abs{D^\nu f(x)} : x \in K_l, \abs{\nu} \leq s\Big\} \formspace,
\end{align}
where $\nu \in \IN^n$ is a multi-index, $\abs{\nu} = \sum_{i=1}^n \nu_i$ and the derivative operator is defined as
\begin{align}
D^\nu f (x_1, \dots , x_n) = \frac{\partial^{\nu_1} \, \partial^{\nu_2} \dots\, \partial^{\nu_n}  }{\partial_{x_1}^{\nu_1} \,\partial_{x_2}^{\nu_2} \dots \, \partial_{x_n}^{\nu_n}} \formspace.
\end{align}
Actually, the above definition also endows the space of $n$-times differentiable functions $C^n(U)$ with a locally convex topology, which we will use for the generalization in a few steps.
This procedure generalizes to compactly supported differential forms on a smooth mani\-fold. Even more general, let $\mathfrak{X}=(E, \M,\pi)$ be a vector bundle over $\M$ of rank $n$. We will define a topology on the space $\Gamma(E,U)$ of smooth sections of $E$ over $U$. The basic idea is to use local charts of the manifold and a local trivialization of the vector bundle to define a family of semi-norms on $\Gamma(E,U)$ using the semi-norms we have specified above for functions $f\in C^\infty(U)$. Let $\left\{U_\alpha\right\}_\alpha$ be a locally finite covering of $U \subset \M$ such that there are local charts $(\psi_\alpha, U_\alpha)$ of the manifold $\M$. For each $\alpha$, the map $z \mapsto \Big( \psi_\alpha\big(\pi(z)\big) , v_{1\alpha} , v_{2 \alpha} , \dots , v_{n \alpha} \Big)$ from the fibers $\pi^{-1} (U_\alpha)$ to $\psi_\alpha (U_\alpha) \times \IC^n$ is a diffeomorphism. Such linear diffeomorphisms $v_{i\alpha}$ exist for any open neighborhood of $\M$ as they can be defined as the components of a local trivialization of the bundle $E$ over the neighborhood $U_\alpha$.  Let $u_\alpha$ be the restriction of a section $u \in \Gamma(E,U)$ to $U_\alpha$. Finally, define for every $\alpha$ the fundamental sequence  $\left\{ K_{l\alpha} \right\}_l$ of compact sets in $\psi_\alpha(U_\alpha)$ and denote the semi-norms on $C^n(\psi_\alpha(U_\alpha))$ as specified above by $\tilde{p}_{s, l, \alpha}$. Then
\begin{align}
	p_{s,l,\alpha}(u) = \sum_{j=1}^{n} \tilde{p}_{s, l, \alpha} \left( v_{j\alpha} \comp u_\alpha \comp \psi^{-1}_\alpha \right)
\end{align}
defines a family of semi-norms on $\Gamma(E,U)$ (see \cite[Equation 17.2.1]{dieudonne_2}). Now we specify to the vector bundles $\wedge^p T^* \M$ whose smooth sections are smooth $p$-forms. The space $\Gamma(\wedge^p T^* \M , \M)$, together with the family of semi-norms specified above, is a locally convex topological vector space. For every compact $K \subset \M$, the space $\Gamma(\wedge^p T^* \M,K)$ denotes the set of $p$-forms that are compactly supported in $K$. This space is a closed subspace of $\Gamma(\wedge^p T^* \M,\M)$. The union of $\Gamma(\wedge^p T^* \M,K)$ over all compact subspaces $K \subset \M$ is a locally convex topological vector space, the space of compactly supported $p$-forms\footnote{To be precise, for compact $K \subset \M$, the spaces $\Gamma(\wedge^p T^* \M, K)$ are Fr\'echet spaces \cite[Theorem 17.2.2]{dieudonne_3} and the space of compactly supported $p$-forms is defined as the inductive limit of the spaces $\Gamma(\wedge^p T^* \M,K)$, making it a LF-space which in particular is locally convex.} which we have already denoted by $\Omega^p_0 (\M)$. This in particular yields a locally convex topology on the space $\Omega^1_0(\M)$ that we are interested in.\par
Having found a topology on $\Omega^1_0(\M)$ it is a straightforward procedure to endow $\big(\Omega^1_0(\M)\big)^{\otimes n}$ with a locally convex topology for every $n$: We equip $\big(\Omega^1_0(\M)\big)^{\otimes n}$ with the \emph{projective topology} which is induced by formal tensor products of the semi-norms on $\Omega^1_0(\M)$ (see \cite[Definition 43.2 and Chapter 43]{treves} for details).  For every $n \in \IN$ we define
\begin{align}
	\BU_n = \IC \oplus \bigoplus_{i=1}^n \big(\Omega^1_0(\M)\big)^{\otimes i} \formspace,
\end{align}
which yields a family $\left\{\BU_n\right\}_n$ of locally convex topological vector spaces, where the topology on each $\BU_n$ is given as the direct sum topology\footnote{For a definition, see for example \cite[515]{treves}.}.
The BU-algebra is then endowed with the so called inductive limit topology\footnote{Again, for a definition see \cite[514]{treves}.} of the family $\left\{\BU_n\right\}_n$ (see \cite[Appendix B]{verch_sahlman}). With this construction, the BU-algebra is a locally convex topological *-algebra with unit (c.f. \cite[Lemma 4.1]{verch_sahlman}).
\subsubsection{Dynamics, commutation relations and the field algebra}
So far, the constructed BU-algebra, which we would like to use as a field algebra, does neither incorporate any dynamics, in our case are given by the Proca equation, nor the desired quantum commutation relations. We want to identify quantum fields $\phi$ as elements $\phi(F) = (0, F, 0, 0, \dots)$ of an appropriate field algebra. To endow the algebra with dynamics,  the fields $\phi$ have to solve the Proca equation in a distributional sense. Furthermore, we incorporate the canonical commutation relations (CCR) in the field algebra. We will do this, for reasons that will become clear in the next section, in a two step procedure. Throughout this section, the mass dependence is again made explicit in the notation, but the mass $m$, as well as the external source $j$, are assumed to be fixed.\par
First we will divide out the two-sided ideal $\IMJDYN$ in $\BUOmega$ that is generated by elements
\begin{align}
\big(-\langle j, F \rangle_\M, (\delta d + m^2)F,0,0,\dots\big) \in \BUOmega\formspace,
\end{align}
for $F \in \Omega^1_0(\M)$, to implement the dynamics. That means, by definition, an element $f \in \IMJDYN$ can be written as a finite sum
\begin{align}
f = \sum_i g_i \cdot \left(-\langle j, F_i \rangle_\M, (\delta d + m^2)F_i,0,0,\dots\right) \cdot h_i \formspace,
\end{align}
for some $F_i \in \Omega^1_0(\M)$ and $g_i, h_i \in \BUOmega$. We define
\begin{align}
	\BUmjdyn \coloneqq {\Quotientscale{\BUOmega}{\IMJDYN}} \formspace.
\end{align}
Elements $f \in \BUmjdyn$ are then equivalence classes $ f = \left[ g \right]_m^\text{dyn}$, $g \in \BUOmega$, where the equivalence relation is given for any $g,h \in \BUOmega$ by
\begin{align}
g \sim_m h :\iff g-h \in \IMJDYN \formspace.
\end{align}
Now, in the second step, we incorporate the CCR by dividing out the two-sided ideal $\IMCCR$ that is generated by elements
\begin{align}
\Big[ \big(-\i \Gm{F}{F'}, 0 , F \otimes F' - F' \otimes F, 0 , 0 , \dots\big) \Big]_m^\text{dyn} \in \BUmjdyn
\end{align}
to obtain the final field algebra $\BUmj$ as specified by the following definition:
\begin{definition}[Field algebra and quantum Proca fields]
The \emph{Borchers-Uhlmann field algebra} $\BUmj$, for some fixed $m>0$, is defined by
	\begin{align}
     \gls{bumj} \coloneqq {\Quotientscale{\BUmjdyn}{\IMCCR}} \formspace.
	\end{align}
We will sometimes equivalently write $\BUmj = {\Quotientscale{\BUOmega}{\IMJ}}$, where $\IMJ$ is generated by both of the wanted relations for short hand.
A \emph{quantum Proca field} is then an element
\begin{align}
	\gls{phimf} \coloneqq [(0,F,0,0,\dots)]_{m,j} \in \BUmj \formspace,
\end{align}
where the equivalence class $[\cdot]_{m,j}$ is taken w.r.t. $\IMJ$.
\end{definition}
By construction, the quantum Proca fields fulfill the wanted dynamical and commutation relations
\begin{align}
	\phi_{m,j}( (\delta d +m^2) F ) 				&= \langle j, F \rangle_\M \cdot \mathbbm{1}_{\BUmj} \formspace,\\
	\big[ \phi_{m,j}(F) , \phi_{m,j}(F') \big] 	   &= \i \Gm{F}{F'} \cdot  \mathbbm{1}_{\BUmj} \formspace.
\end{align}
We still have to endow the field algebra $\BUmj$ with a topology.
\subsubsection{Topology, initial data and the field algebra}\label{sec:field-algebra-topology}
In this chapter, as a preparation for the investigation of the mass dependence and the zero mass limit of the theory, we investigate the topology of the field algebra and its connection to a field algebra of initial data. The mass $m$ as well as the external source $j$ again remain fixed.\par
The straightforward way to obtain a topology on $\BUmj$ is from the topology on $\BUOmega$ as the quotient topology, assuming the ideal $\IMJ$ is closed\footnote{We will argue shortly that, at least in the case of $j=0$, it is.}:
If $\IMJ$ is a closed subspace of $\BUOmega$, and $\BUOmega$ is endowed with a locally convex topology, also $\BUmj$ is locally convex \cite[Theorem 12.14.8]{dieudonne_2}. Indeed, the quotient topology on $\BUmj$ coincides with the topology induced by the semi-norms
\begin{align}
	q_{m, \alpha}( [f]_m ) = \inf\big\{ p_\alpha(g) : g \in [f]_m \big\}
\end{align}
where $\left\{ p_\alpha\right\}_\alpha $ is a family of semi-norms on $\BUOmega$ that induces its topology (see \cite[Lemma 12.14.8]{dieudonne_2}).
While this topology allows us to define continuous families $\left\{ \phi_{m,j,n}(F_n)\right\}_n$ of fields at a fixed mass, it is not suitable, as we will discuss in the next Section \ref{sec:mass_depenence_and_limit}, to define a notion of continuity with respect to the mass $m$.
In principal, the problem is that at different masses the fields live in different algebras that we are a priori unable to compare with each other. The idea to solve this is simple: we will for every fixed mass find a topological algebra that is \emph{homeomorphic to the field algebra} $\BUmj$ which does not depend on the mass (the homeomorphism of course does). We are then able to map a family $\{ \phi_{m,j} (F_m)\}_m$ of fields into only \emph{one} topological algebra where we have a natural sense of continuity given by the topology. This mass independent algebra will be the Borchers-Uhlmann algebra of \emph{initial data}.\par
To start the construction, we first look at the  topological vector space $\Omega^1_0(\M)$. Let $\Sigma$ be a Cauchy surface. As done at the classical level, we introduce the short notation
$\Dzs = \Omega^1_0(\Sigma) \otimes \Omega^1_0(\Sigma)$ for the space of initial data with respect to a fixed Cauchy surface $\Sigma$. We define the map
\begin{align}
	\kappa_m: \Omega^1_0(\M) &\to \Dzs  \\
					F &\mapsto (\rhoz G_m F , \rhod G_m F) \formspace, \notag
\end{align}
where the operators $\rho_{(\cdot)}$ were introduced in Definition \ref{def:cauchy_mapping_operators}. The map $\kappa_m$ maps a test one-form $F$ to the initial data of $G_m F$ on the Cauchy surface $\Sigma$. In the notation we omit the dependence of the map on the Cauchy surface. By construction, the map $\kappa_m$ is continuous for a fixed value of $m$ with respect to the topology on $\Dzs$ that is induced by the topology of $\Omega^1_0(\Sigma)$.
Since $\kappa_m$ is continuous, we know that $\KERN{\kappa_m}$ is closed (see \cite[34-36 ]{treves}). By construction the kernel of the map $\kappa_m$ is the set
\begin{align}
	\KERN{\kappa_m} = \JMDYN \coloneqq \big\{ (\delta d + m^2) F , F \in \Omega^1_0(\M) \big\} \formspace.
\end{align}
This is useful since at the field algebra level we want to divide out fields where the degree-one-part is of the above form to incorporate the dynamics of the theory.
We would like to find a homeomorphism between the quotient space ${\Quotientscale{\Omega^1_0(\M)}{\JMDYN}}$ and $\Dzs$.
From a standard construction (see \cite[ibid.]{treves}) we find the map $\xi_m : {\Quotientscale{\Omega^1_0(\M)}{\JMDYN}} \to \IMG{\kappa_m}$ which is the unique bijective map such that $\xi_m([F]_m) = \kappa_m(F)$ (see \cite[16]{treves}). The construction is illustrated in Diagram \ref{dia:homeomorphism_one_particle_level}.
\begin{table}
\begin{displaymath}
\xymatrix @R=20mm @C=30mm
{
	\Omega^1_0 (\M)  \ar[r]^{\kappa_m}   \ar[dr]_{[\cdot]_m}  			&     \IMG{\kappa_m} 													\ar@<-.5ex>[d]_{\xi_m^{-1}} \ar@{^{(}->}[r]^i &   \Dzs \\
	&      {{\Quotientscale{\Omega^1_0(\M)}{\JMDYN}}}          \ar@<-.5ex>[u]_{\xi_m}
}
\end{displaymath}
\caption{Illustrating the construction of the homeomorphism $\xi_m$ of the space of dynamical test one-forms and the space of initial data.}
\label{dia:homeomorphism_one_particle_level}
\end{table}
Now we would like to show that $\IMG{\kappa_m} = \Dzs$ and that $\Dzs$ and ${\Quotientscale{\Omega^1_0(\M)}{\IMDYN}}$ are homeomorphic, that is, we need to show that $\xi_m$ and $\xi_m^{-1}$ are continuous. We will state this in form of the following lemma:
\begin{lemma}\label{lem:one-particle-homeomorphism}
	Let $\Sigma$ be a Cauchy surface and $\Dzs$ be the space of initial data on $\Sigma$.
	\begin{center}
	The spaces ${\Quotientscale{\Omega^1_0(\M)}{\JMDYN}}$ and $\Dzs$ are homeomorphic.
	\end{center}
\end{lemma}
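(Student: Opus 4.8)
The plan is to identify the asserted homeomorphism with the canonical bijection $\xi_m$ already produced in Diagram~\ref{dia:homeomorphism_one_particle_level}, and to discharge the two points left open there: that $\kappa_m$ is onto, so that $\IMG{\kappa_m} = \Dzs$, and that $\xi_m$ together with its inverse is continuous. Since $\xi_m$ is by construction the unique linear bijection with $\kappa_m = \xi_m \comp [\cdot]_m$, no well-definedness remains to be checked, and the whole matter reduces to surjectivity and bicontinuity.

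First I would settle surjectivity. Given initial data $(\varphi,\pi) \in \Dzs$, well-posedness of the Cauchy problem for the Proca operator (Theorem~\ref{thm:solution_proca_unconstrained}, with the uniqueness and continuous-dependence statements it inherits from the associated wave equation) yields a unique solution $A \in \Omega^1(\M)$ of $(\delta d + m^2)A = 0$ with $\rhoz A = \varphi$ and $\rhod A = \pi$; finite propagation speed forces $\supp{A}$ to be spacelike compact, as the data are compactly supported on $\Sigma$. Because the Proca operator is Green-hyperbolic (see \cite{baer_green-hyperbolic}), every such solution is of the form $A = G_m F$ for some $F \in \Omega^1_0(\M)$: choosing Cauchy surfaces $\Sigma_-$ to the past and $\Sigma_+$ to the future of $\Sigma$ and a smooth cutoff $\chi$ with $\chi \equiv 0$ on $J^-(\Sigma_-)$ and $\chi \equiv 1$ on $J^+(\Sigma_+)$, the one-form $F = -(\delta d + m^2)(\chi A)$ is compactly supported (the term $\chi (\delta d + m^2)A$ vanishes, so only the contributions of $d\chi$ survive, and these are confined to the spacelike-compact slab between $\Sigma_-$ and $\Sigma_+$) and satisfies $G_m F = A$ with the convention $G_m = G_m^- - G_m^+$. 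Hence $\kappa_m(F) = (\rhoz A, \rhod A) = (\varphi,\pi)$, so $\IMG{\kappa_m} = \Dzs$. This is consistent with the stated kernel $\KERN{\kappa_m} = \JMDYN$, since by uniqueness of the Cauchy problem $\kappa_m(F)=0$ forces $G_m F = 0$, i.e.\ $F \in (\delta d + m^2)\Omega^1_0(\M)$.

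Continuity of $\xi_m$ is then immediate: $\kappa_m = \xi_m \comp [\cdot]_m$ factors through the quotient map $[\cdot]_m$, which by definition carries the quotient topology, so the universal property of the quotient turns the (already established) continuity of $\kappa_m$ into continuity of $\xi_m$. The real obstacle is continuity of $\xi_m^{-1}$. The approach I would take is to exhibit a continuous linear section $s : \Dzs \to \Omega^1_0(\M)$ of $\kappa_m$, that is $\kappa_m \comp s = \mathbbm{1}$; then $\xi_m^{-1} = [\cdot]_m \comp s$ is continuous as a composition of continuous maps. The section is precisely the construction above, now read as a map $(\varphi,\pi) \mapsto A \mapsto -(\delta d + m^2)(\chi A) = F$. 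The first leg is continuous into $\Omega^1(\M)$ by the continuous dependence on initial data of Theorem~\ref{thm:solution_proca_unconstrained}; the second leg, multiplication by the fixed smooth function $\chi$ followed by application of the differential operator $\delta d + m^2$, is continuous as well. The delicate point, and where I expect the bookkeeping to be heaviest, is continuity into the LF-space $\Omega^1_0(\M)$: for data supported in a fixed compact $K \subset \Sigma$ the resulting $F$ is supported in the fixed compact set $K' = \big(J^+(\Sigma_-) \cap J^-(\Sigma_+)\big) \cap J(K)$, so $s$ sends each Fréchet step of the inductive system defining $\Dzs$ continuously into a fixed step $\Gamma(\wedge^1 T^*\M, K')$ of $\Omega^1_0(\M)$, and continuity of $s$ on all of $\Dzs$ then follows from the universal property of the inductive-limit topology.

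As a shortcut I would record that continuity of $\xi_m^{-1}$ also follows abstractly from an open mapping theorem: $\xi_m$ is a continuous linear bijection of $\Quotient{\Omega^1_0(\M)}{\JMDYN}$ onto $\Dzs$, the former being ultrabornological (a closed quotient of the LF-space $\Omega^1_0(\M)$) and the latter webbed (a finite direct sum of LF-spaces), so De~Wilde's open mapping theorem makes $\xi_m$ open and hence a homeomorphism. Either route closes the proof.
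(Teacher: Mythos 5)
Your proposal is correct and follows essentially the same route as the paper: surjectivity of $\kappa_m$ via the cutoff construction $F = -(\delta d + m^2)(\chi A)$, continuity of $\xi_m$ from the universal property of the quotient, and continuity of $\xi_m^{-1}$ by composing the quotient map with the continuous section $(\varphi,\pi)\mapsto F$ (the paper's $\vartheta$). The only substantive difference is that the paper verifies $G_m F = A$ by hand --- extending $G_m^\pm$ to past/future-compactly supported one-forms via a partition of unity and computing $G_m^+ F = -\chi A$ and $G_m^- F = (1-\chi)A$ --- where you invoke the general Green-hyperbolic theory, and your added care with the LF-space topology plus the De~Wilde open-mapping shortcut go beyond what the paper records but are sound.
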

\begin{proof}
1.) First we will show that $\kappa_m$ is surjective, that is, for every initial data $(\varphi, \pi)$ we find a corresponding $F \in \Omega_0^1(\M)$ such that $(\varphi , \pi) = (\rhoz G_m F , \rhod G_m F)$.
For this, we explicitly construct a map $\vartheta : \Dzs \to \Omega^1_0(\M)$ that maps any pair $(\varphi , \pi)$ to a corresponding $F$.\par
Let $(\varphi,\pi) \in \Dzs$ specify initial data on a Cauchy surface $\Sigma$. Then by Theorem \ref{thm:solution_proca_unconstrained} there exists a unique solution $A \in \Omega^1(\M)$ to the source free Proca equation $(\delta d + m^2) A = 0$ with the given data. Furthermore, $A$ depends continuously on $(\varphi,\pi)$.\\
We need to construct a \emph{compactly supported} one-form $F$ from $A$, such that $A$ and $G_mF$ determine the same initial data. First, we note that (see \cite[Theorem 3.2.11]{baer_ginoux_pfaeffle})
\begin{align}
	\supp{A} \subset J\big( \supp{\varphi} \cup \supp{\pi} \big) \formspace.
\end{align}
We choose a $\chi \in \Omega^1(\M)$ such that
\begin{align}
	\chi =
	\begin{cases}
	1 , &\text{in the future of some Cauchy surface } \Sigma_+ \\
	0 , &\text{in the past of some Cauchy surface } \Sigma_-
	\end{cases}
\end{align}
where $\Sigma_\pm$ are Cauchy surfaces in the future/past of the Cauchy surface $\Sigma$.\\
Then, by construction,
\begin{align}
	F \coloneqq -(\delta d + m^2) \chi A \eqqcolon \vartheta(\varphi,\pi)
\end{align}
is a compactly supported one-form, since $F = 0$ on $J^+(\Sigma_+)$ and $J^-(\Sigma_-)$, hence $\supp{F} \subset J\big( \supp{\varphi} \cup \supp{\pi}\big) \cap  J^-(\Sigma_+) \cap J^+(\Sigma_-)$ which is compact. Furthermore we observe that, since $A$ depends continuously on the initial data, $\vartheta$ is continuous. The setup is illustrated in Figure \ref{fig:init_data_chi}.
\begin{figure}[]
	\begin{center}
		\scalebox{0.9}{
		\begin{tikzpicture}
		\node at (0,0) {\includegraphics[scale=0.9]{./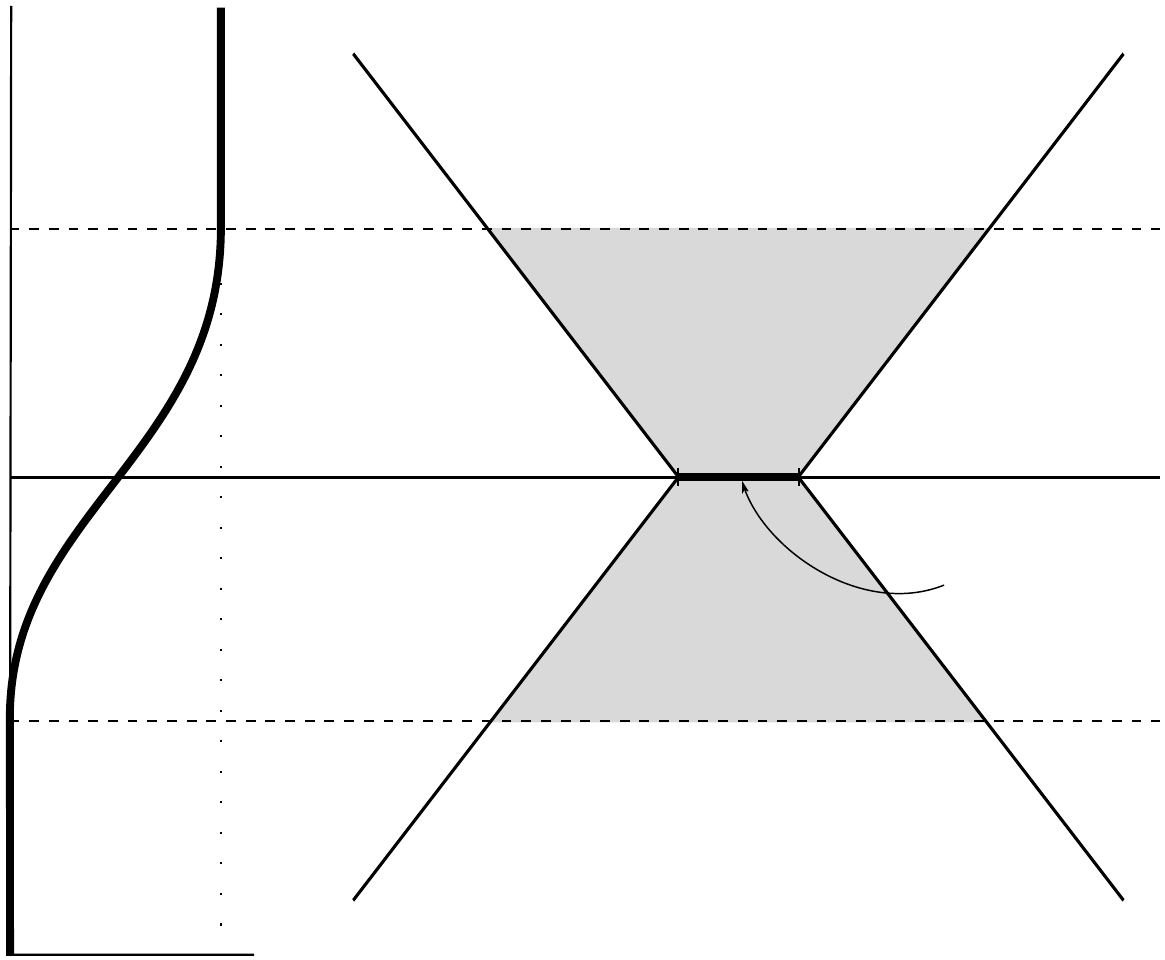}};
		\node at (1,3.5)  {$\supp{A}$};
		\node at (-3.7,3.5)  {$\chi$};
		\node at (-5.2,-4.7)  {$0$};
		\node at (-3.3,-4.7)  {$1$};
		\node at (5.65,2.25)  {$\Sigma_+$};
		\node at (5.65,0.01)  {$\Sigma$};
		\node at (5.65,-2.25)  {$\Sigma_-$};
		\node at (5.3,-0.9)  {$\supp{\varphi} \cup \supp{\pi}$};
		\end{tikzpicture}
		}
	\end{center}
	\caption{Illustrating the setup of the proof of Lemma \ref{lem:one-particle-homeomorphism}: On the Cauchy surface $\Sigma$, $\varphi$ and $\pi$ denote initial data of the solution $A$ to Proca's equation. The area marked grey contains the support of the one-form $F = -(\delta d + m^2) \chi A$.}
	\label{fig:init_data_chi}
\end{figure}
We finally want to show that $G_m F = A$, since then they in particular have the same initial data $(\varphi, \pi)$ which would complete the proof of surjectivity.
A priori, the domain of $G_m^\pm$ is $\Omega^1_0(\M)$ but we can extend its action to one-forms that are supported in the future (or past) of some Cauchy surface, as it is the case for $(\chi A)$ which is supported in the future of $\Sigma_-$. We extend the action by defining
\begin{align}
	G_m^+(\chi A) = \sum\limits_i  G_m^+ \alpha_i \chi A
\end{align}
where $\alpha_i$ is a partition of unity and at any point $x \in \M$ only for finitely many $i$'s $\left( G_m^+ \alpha_i \chi A\right)(x)$ is non-zero since $J^- (x) \cap J^+(\Sigma_-)$ is compact (and for the retarded propagator it holds $\supp{G_m^+ F} \subset J^+\big( \supp{F}\big)$). In the same fashion we extend the action of $G_m^-$ to act on one-forms that are supported to the past of some Cauchy surface as it is the case for $(1-\chi)A$:
\begin{align}
	G_m^-(1-\chi)A = \sum\limits_i  G_m^+ \alpha_i (1-\chi) A \formspace.
\end{align}
With this notion we find
\begin{align}
	G_m^+ F
	&= - G_m^+ (\delta d + m^2) \chi A \notag\\
	&= -  \sum_i G_m^+ \alpha_i (\delta d + m^2) \chi A\notag \\
	&= -  \sum_i G_m^+  (\delta d + m^2) \alpha_i \chi A \notag\\
	&= -  \sum_i \alpha_i \chi A\notag \\
	&= - \chi A \formspace.
\end{align}
Also, we observe that
\begin{align}
	G_m^- (\delta d + m^2) (1-\chi) A
	&=  G_m^- \underbrace{(\delta d + m^2)  A}_{= 0} - G_m^- (\delta d + m^2)\chi A \notag\\
	&= - G_m^- (\delta d + m^2)\chi A \notag\\
	&= G_m^- F \formspace,
\end{align}
and therefore we find in the same fashion as above
\begin{align}
G_m^- F
&=  G_m^- (\delta d + m^2) (1-\chi) A \notag\\
&=  \sum_i G_m^- \alpha_i (\delta d + m^2) (1-\chi) A \notag\\
&=  (1-\chi) A \formspace.
\end{align}
We therefore find the result
\begin{align}
G_m F
&= (G_m^- - G_m^+)F \notag\\
&= (1- \chi) A + \chi A \notag\\
&= A
\end{align}
which completes the proof of surjectivity. That is, we have found $\IMG{\kappa_m} = \Dzs$.\par
2.) Now we are left to show that both $\xi_m$ and $\xi_m^{-1}$ are continuous so that indeed $\xi_m$ is a homeomorphism.\par
\emph{i)} By construction, $\xi_m$ is continuous if and only if $\kappa_m$ is continuous \cite[Proposition 4.6]{treves}, which, as we have argued, is the case.\par
\emph{ii)} The inverse is by construction given by $\xi_m^{-1} = [\cdot]_m \comp \,\vartheta$. As we have argued, the map $\vartheta$ is continuous. Also by construction, the map $[\cdot]_m$ is continuous which yields that indeed $\xi^{-1}_m$ is continuous.
This completes the proof.
\end{proof}
\par We will now generalize these ideas, by explicit use of the constructed maps, to the field algebra $\BUmj$.
First, we set the external source to vanish, $j = 0$. We make this explicit in the notation by indexing the effected elements with $0$ instead of $j$. In a second step we will then generalize to non vanishing external sources.\par
On the level of $\BUOmega$, to implement the dynamics, we divide out the ideal generated by $\big(0,(\delta d + m^2)F',0,0,\dots\big)$  where $F'\in \Omega^1_0(\M)$. This is equivalent to the ideal generated by $= \big(0,F,0,0,\dots\big)$, $F \in \JMDYN$. As we did on the degree-one level, we would like to find a map $K_m : \BUOmega \to \BU(\Dzs)$ such that $\KERN{K_m} = \IMZDYN$ and then show that $\BUmzdyn$ is homeomorphic to $\BU(\Dzs)$. We do this by \emph{lifting} the map $\kappa_m$ to the BU-algebra:
We define the map $\gls{Km} : \BUOmega \to \BU(\Dzs) $ as a BU-algebra-homomorphism which is then completely determined by its degree wise action:
\begin{align}
	K_m : \BUOmega &\to \BU(\Dzs) \\
	(f^{(0)},0,0,\dots) &\mapsto 	(f^{(0)},0,0,\dots) \notag \\
	\big(0,F,0,0,\dots\big) &\mapsto 	\big(0, \kappa_m (F),0,0,\dots\big)\notag \\
		\left(0,0, \sum\nolimits_i F_i \otimes F_i',0,0,\dots\right) &\mapsto 	\left(0,0, \sum\nolimits_i \kappa_m (F_i) \otimes \kappa_m (F'_i) ,0,0,\dots\right) \notag\\
		&\dots\notag
\end{align}
We will call this definition the \emph{lifting} of a map from the space of test one-forms to the BU-algebra.
Before we can proceed to show homeomorphy we need to investigate some properties of the map $K_m$.
\begin{lemma}
	Let $\Sigma$ be a Cauchy surface, $\Dzs$ be the space of initial data on $\Sigma$ and the map $K_m : \BUOmega \to  \BU\big(\Dzs\big) $ be as defined above. Then:
	\begin{align}
	\IMG{K_m} &=  \BU\big(\Dzs\big)  \formspace \text{and}\\
		\KERN{K_m} &= \IMZDYN \formspace.
	\end{align}
\end{lemma}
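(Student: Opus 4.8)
The plan is to reduce both assertions to the corresponding properties of the degree-one map $\kappa_m$ established in Lemma \ref{lem:one-particle-homeomorphism} — namely that $\kappa_m$ is surjective with $\KERN{\kappa_m} = \JMDYN$ — by exploiting that $K_m$ is the degree-wise lifting of $\kappa_m$, so that on the degree-$n$ part it acts as the $n$-fold tensor power $\kappa_m^{\otimes n}$. A useful structural observation to record first is that, since $K_m$ is by construction a BU-algebra homomorphism, its kernel $\KERN{K_m}$ is automatically a two-sided ideal; this will streamline the kernel computation.

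For surjectivity I would first note that a surjective linear map of vector spaces over a field induces surjective tensor powers, so $\kappa_m^{\otimes n} : \big(\Omega^1_0(\M)\big)^{\otimes n} \to (\Dzs)^{\otimes n}$ is surjective for every $n$. Given an arbitrary $g = (g^{(0)}, g^{(1)}, g^{(2)}, \dots) \in \BU(\Dzs)$, only finitely many components are non-zero: the scalar $g^{(0)}$ is the image of $(g^{(0)}, 0, 0, \dots)$, and for each $n \geq 1$ surjectivity of $\kappa_m^{\otimes n}$ furnishes a preimage $f^{(n)}$ with $\kappa_m^{\otimes n}(f^{(n)}) = g^{(n)}$. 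Summing these finitely many pure-degree elements gives a preimage of $g$, whence $\IMG{K_m} = \BU(\Dzs)$.

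For the kernel, the inclusion $\IMZDYN \subseteq \KERN{K_m}$ is immediate: each generator $\big(0,(\delta d + m^2)F,0,\dots\big)$ is sent by $K_m$ to $\big(0,\kappa_m((\delta d + m^2)F),0,\dots\big) = 0$ because $(\delta d + m^2)F \in \JMDYN = \KERN{\kappa_m}$, and since $\KERN{K_m}$ is a two-sided ideal containing all the generators it contains $\IMZDYN$. For the reverse inclusion I would argue degree by degree on $f \in \KERN{K_m}$: vanishing of $K_m(f)$ forces $f^{(0)} = 0$, $f^{(1)} \in \KERN{\kappa_m} = \JMDYN$, and $f^{(n)} \in \KERN{\kappa_m^{\otimes n}}$ for $n \geq 2$. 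The key algebraic input is the standard description of the kernel of a tensor power of a surjection, which holds because $\IC$-vector spaces are flat (every short exact sequence splits):
\begin{align}
\KERN{\kappa_m^{\otimes n}} = \sum_{k=1}^{n} \big(\Omega^1_0(\M)\big)^{\otimes(k-1)} \otimes \JMDYN \otimes \big(\Omega^1_0(\M)\big)^{\otimes(n-k)} \formspace.
\end{align}
Thus each $f^{(n)}$ is a finite sum of simple tensors in which at least one factor lies in $\JMDYN$. Writing such a simple tensor as the BU-algebra product of its degree-one factors — one of which is then a generator of $\IMZDYN$ — shows, using that $\IMZDYN$ is a two-sided ideal, that the associated pure-degree element lies in $\IMZDYN$; summing over the finitely many non-zero degrees yields $f \in \IMZDYN$.

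The main obstacle I anticipate is the kernel computation, and specifically justifying the displayed description of $\KERN{\kappa_m^{\otimes n}}$: this relies essentially on exactness of tensoring over a field (equivalently flatness of $\IC$-vector spaces), applied inductively together with the product rule $\KERN(f\otimes g) = \KERN(f)\otimes B + A \otimes \KERN(g)$ for surjections, and it must be combined carefully with the observation that a simple tensor having a single factor in $\JMDYN$ is, as an element of $\BUOmega$, a product with a generator of the dynamical ideal. By contrast, the surjectivity statement and the easy inclusion are routine once the lifting structure of $K_m$ and the results of Lemma \ref{lem:one-particle-homeomorphism} are available.
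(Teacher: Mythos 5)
Your proof is correct, but it reaches the hard inclusion $\KERN{K_m} \subseteq \IMZDYN$ by a genuinely different route than the paper. The paper proves this inclusion by an explicit induction on the degree of homogeneous elements: it writes $f^{(n+1)} = \sum_i F_i \otimes \mathcal{F}_i^{(n)}$, chooses a basis of $\mathrm{span}\{F_i\} \cap \JMDYN$ extended to a basis of $\mathrm{span}\{F_i\}$, splits the sum into a part whose first factors lie in $\JMDYN$ and a part whose first factors have linearly independent images under $\kappa_m$, and uses that linear independence to push the kernel condition onto the degree-$n$ second factors so the induction hypothesis applies. You instead invoke the standard structural fact $\KERN{f\otimes g} = \KERN{f}\otimes B + A\otimes\KERN{g}$ for surjections of vector spaces over a field (exactness of tensoring, applied inductively to get the displayed description of $\KERN{\kappa_m^{\otimes n}}$), and then observe that each resulting simple tensor with a factor in $\JMDYN$ is a BU-product containing a generator of $\IMZDYN$. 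These are essentially the same linear algebra — the paper's basis-extension step is in effect a hands-on proof of the $n=2$ instance of your kernel formula — but your version is shorter and more conceptual at the cost of citing (or separately proving) the tensor-power kernel formula, while the paper's is self-contained. Both arguments correctly rely on the tensor products in $\BUOmega$ being purely algebraic, with no topological completion; your surjectivity argument and the easy inclusion coincide with the paper's.
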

\begin{proof}
1.) As we have already shown in the proof of Lemma \ref{lem:one-particle-homeomorphism}, $\kappa_m$ is surjective. Therefore, it directly follows $\IMG{K_m} = \BU\big(\Dzs\big) $.\par
2.) We will show the equivalence of the kernel and the ideal by a two way inclusion.\par
\emph{i)} It is obvious by construction that $\IMZDYN \subset \KERN{K_m} $:
Let $f \in \IMZDYN$ be arbitrary. By definition, we can write $f$ as a finite sum
\begin{align}
f = \sum\limits_i  g_{i} \cdot \left(0,F_i,0,0,\dots \right) \cdot h_{i} \formspace,
\end{align}
where $F_i \in \JMDYN$ and $g_i, h_i \in \BUOmega$. Since $K_m$ was constructed as a homomorphism and $\KERN{\kappa_m}  = \JMDYN$, we find
\begin{align}
K_m(f)
= \sum\limits_i  K_m\left(g_{i}\right) \cdot \underbrace{\left(0,\kappa_m(F_i),0,0,\dots \right)}_{=0} \cdot K_m\left(h_{i}\right)
= 0 \formspace.
\end{align}
Hence we find $f \in \KERN{K_m}$. \par
\emph{ii)} The non-trivial part is to show $\KERN{K_m} \subset \IMZDYN$:\\
Let $f = (f^{(0)}, f^{(1)}, f^{(2)}, \dots , f^{(N)} , 0 , 0 ,\dots) \in \KERN{K_m}$, $f^{(k)} \in \left(\Omega^1_0(\M)\right)^{\otimes k}$, be arbitrary.
With a slight abuse of notation\footnote{We abuse the notation by applying $\kappa_m$ to a tensor product of differential forms, that is, write for shorthand $\kappa_m(\sum_i F_i \otimes G_i) =\sum_i \kappa_m(F_i) \otimes \kappa_m(G_i)$.} we formulate the condition on $f$ being an element of $\KERN{K_m}$ by stating $\kappa_m\left(f^{(k)}\right)  = 0$ for all $k=1,\dots,N$.
We will show by induction in $n$ that an arbitrary homogeneous element $(0,0, \dots , f^{(n)} , 0 ,0, \dots)$ of degree $n$ with $\kappa_m\left(f^{(n)}\right)  = 0$ can be written in the form $\sum_i  g_{i} \cdot \left(0,F_i,0,0,\dots \right) \cdot h_{i}$ for some $g_{i}, h_{i} \in \BUOmega$, that is, it is an element of $\IMZDYN$. Since $f$ can be written as a sum of those elements, it then follows that $f \in \IMZDYN$.
We begin the induction with the base case:\par
\emph{a)} A homogeneous element $f'$ of degree zero in the kernel of $K_m$ can be trivially shown to be an element of $\IMZDYN.$ Since it holds $\kappa_m(f'^{(0)}) = 0$ for all $f'^{(0)} \in \IC$ it follows by definition $f'^{(0)} = 0$ and, trivially, $(0,0,\dots) \in \IMZDYN$.\par
\emph{b)} For a homogeneous element $f'$ of degree one in the kernel of $K_m$ we find the condition $\kappa_m(f'^{(1)}) = 0$, or equivalently $f'^{(1)}  \in \JMDYN$. Therefore, also $(0, f'^{(1)} , 0 ,0 ,\dots) \in \IMDYN$. \par
We can now make the induction step.
The assumption is that for some $n$ it holds for an arbitrary $f^{(n)} \in \left(\Omega^1_0(\M)\right)^{\otimes n}$ with $\kappa_m(f^{(n)}) = 0$ that
$(0,0,\dots,f^{(n)},0,0,\dots) \in \IMZDYN$.
Now we look at an element of the form $(0,0,\dots,0,f^{(n+1)},0,0\dots)$ where $f^{(n+1)} \in \left(\Omega^1_0(\M)\right)^{\otimes (n+1)}$ such that $\kappa_m(f^{(n+1)}) = 0$.
We can write this more explicitly for some $F_i \in \Omega^1_0(\M)$ and some $\F_i^{(n)}\in \left(\Omega^1_0(\M)\right)^{\otimes n}$ as
\begin{align}
(0,0,\dots,0,f^{(n+1)},0,0, \dots)  = (0,0,\dots,\sum\limits_{i=1}^{M} F_i \otimes \F_i^{(n)},0, 0 , \dots) \formspace.
\end{align}
Let $V \coloneqq \SPAN{F_1,F_2, \dots , F_M}$ and $V_{\J} \coloneqq V \cap \JMDYN$ define finite dimensional subspaces of $\Omega^1_0(\M)$. We find a basis $\{ \widetilde{F}_1, \dots ,\widetilde{F}_\mu \}$, $\mu \leq M$, of $V_{\J}$ which we can extend to a basis $\{ \widetilde{F}_1, \dots ,\widetilde{F}_M \}$ of $V$.\\
With the use of this basis we can re-write
\begin{align}
f^{(n+1)} = \sum\limits_{i=1}^{M} F_i \otimes \F_i^{(n)}
&= \sum\limits_{i=1}^{M} \widetilde{F}_i \otimes \widetilde{\F}_i^{(n)} \notag\\
&= \sum\limits_{i=1}^{\mu} \widetilde{F}_i \otimes \widetilde{\F}_i^{(n)}  +  \sum\limits_{i=\mu + 1}^{M} \widetilde{F}_i \otimes \widetilde{\F}_i^{(n)} \notag\\
& \eqqcolon X_1^{(n+1)} + X_2^{(n+1)} \formspace.
\end{align}
Here, each $\widetilde{\F}_i^{(n)}$ can be constructed as linear combinations of the ${\F}_i^{(n)}$'s.\\
Now we first have a look at $X_1^{(n+1)}$. We know by construction for $i=1,\dots,\mu$ that  $\kappa_m ( \widetilde{F_i} ) = 0$. Therefore, $\kappa_m(X_1^{(n+1)}) = 0$. Hence, $X_1^{(n+1)}$ is already of the wanted form, that is, we can choose $g_{i}, h_{i} \in \BUOmega$, such that for $\alpha = 1, \dots , \mu$
\begin{align}
	g^{(0)}_{\alpha} &= 1 \\
	F_\alpha &= \widetilde{F}_\alpha \\
	h^{(n)}_{\alpha} &= \widetilde{\F}_\alpha^{(n)}
\end{align}
and all remaining components of $g_{\alpha}, h_{\alpha}$ vanish. With this we have brought the first part into the wanted form
\begin{align}
(0,0,\dots,X_1^{(n+1)},0,0, \dots)
&= \sum\limits_{\alpha=1}^\mu  \left(0,0,\dots,\widetilde{F}_\alpha \otimes \widetilde{\F}_\alpha^{(n)},0,0,\dots \right) \notag\\
&= \sum\limits_{\alpha=1}^\mu  g_{\alpha} \cdot \left(0, F_\alpha, 0 ,0,\dots \right) \cdot h_{\alpha} \in \IMZDYN \formspace.
\end{align}
Now we have a closer look at the remaining part $X_2^{(n+1)}$.
First, by construction, we observe $\SPAN{ \widetilde{F}_{\mu+1}, \dots ,\widetilde{F}_M } \cap \JMDYN = \{ 0 \}$. This implies that the $\kappa_m(\widetilde{F_i})$'s are linearly independent\footnote{For them to be linearly dependent we would need to find some constants $z_i \in \IC$, for $i=\mu+1,\dots,M$, such that $\sum z_i \kappa_m(\widetilde{F_i}) = 0$. Using the linearity of $\kappa_m$, this yields $\kappa_m(\sum z_i \widetilde{F_i}) =0$. This contradicts the fact that $\SPAN{ \widetilde{F}_{\mu+1}, \dots ,\widetilde{F}_M } \cap \JMDYN = \{ 0 \}$.} for $i=\mu+1,\dots,M$.
Furthermore, since $\kappa_m(X_1^{(n+1)}) = 0$,  $\kappa_m(f^{(n+1)}) = 0$ and due to the linearity of $\kappa_m$, also $\kappa_m(X_2^{(n+1)}) = 0$. Using the linear independence we conclude that $\kappa_m( \widetilde{\F}_i^{(n)}) = 0$. Since $\widetilde{\F}_i^{(n)}$ is of degree $n$ and lies in the kernel of $\kappa_m$, we can apply the induction assumption and find
\begin{align}
\Big(0,0,\dots, \widetilde{\F}_i^{(n)} , 0 , 0 , \dots \Big) \in \IMZDYN \formspace.
\end{align}
Therefore, also
\begin{align}
\Big(0,0,\dots, \widetilde{F}_i &\otimes \widetilde{\F}_i^{(n)} , 0 , 0 , \dots \Big) \notag \\
&= \Big(0, \widetilde{F}_i , 0 , 0 , \dots \Big) \cdot \Big(0,0,\dots, \widetilde{\F}_i^{(n)} , 0 , 0 , \dots \Big) \in \IMDYN \formspace.
\end{align}
Now, bringing the two parts together we successfully have completed the induction:
\begin{align}
\Big(0,0,\dots, f^{(n+1)} , 0 , 0 , \dots \Big) = 	\Big(0,0,\dots, X_1^{(n+1)} + X_2^{(n+1)} , 0 , 0 , \dots \Big) \in \IMZDYN
\end{align}
and, as we have argued before, therefore $f \in \IMZDYN$, which completes the proof.
\end{proof}
With these results we can, analogously to the degree-one-part, construct a homeomorphism.
By the same argument as for the degree-one level, we find a unique map $\gls{Xim} : \BUmzdyn \to \BU\big( \Dzs \big)$ where $\Xi_m ([f]_m) = K_m(f)$, $f \in \BUOmega$. This is again best illustrated in form of a diagram, shown in Diagram \ref{dia:source-free-dynamical-homeo}.
\begin{table}
\begin{displaymath}
\xymatrix @R=20mm @C=30mm
{
	\BUOmega  \ar[r]^{K_m} \ar[dr]_{[\cdot]_m^{\text{dyn}}} 	& \BU\big(\Dzs\big) \ar@<-.5ex>[d]_{\Xi^{-1}_m}  		\\
	&  \BUmzdyn  	       \ar@<-.5ex>[u]_{\Xi_m}
}
\end{displaymath}
\caption{Illustrating the construction of the homeomorphism $\Xi_m$ between the source-free dynamical field algebra and the field algebra of initial data.}
\label{dia:source-free-dynamical-homeo}
\end{table}
We formulate one important result of this thesis in form of the following lemma:
\begin{lemma}\label{lem:field-algebra-homeomorphism}
	Let $m>0$ and $j=0$.
	\begin{center}
The spaces $\BUmzdyn$ and $\BU\big( \Dzs \big)$ are homeomorphic.
	\end{center}
\end{lemma}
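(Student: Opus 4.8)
The plan is to mirror the degree-one argument of Lemma~\ref{lem:one-particle-homeomorphism}, now lifted to the full tensor algebra, and to notice that the purely algebraic content has already been settled. Since the preceding lemma establishes $\KERN{K_m} = \IMZDYN$ and $\IMG{K_m} = \BU\big(\Dzs\big)$, the standard quotient construction (as used on the degree-one level) produces the unique bijective map $\Xi_m : \BUmzdyn \to \BU\big(\Dzs\big)$ satisfying $\Xi_m \comp [\cdot]_m^{\text{dyn}} = K_m$, depicted in Diagram~\ref{dia:source-free-dynamical-homeo}. Thus everything that remains is topological: I would show that $\Xi_m$ and $\Xi_m^{-1}$ are both continuous.

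For continuity of $\Xi_m$, I would invoke the universal property of the quotient topology: since $\BUmzdyn$ carries the quotient topology induced by $[\cdot]_m^{\text{dyn}}$ and $\Xi_m \comp [\cdot]_m^{\text{dyn}} = K_m$, the map $\Xi_m$ is continuous if and only if $K_m$ is. To see that $K_m$ is continuous, recall that it is a BU-algebra homomorphism acting degree-wise as $\kappa_m^{\otimes n}$ on $\big(\Omega^1_0(\M)\big)^{\otimes n}$. Since $\kappa_m$ is continuous and linear, the $n$-linear map $(F_1,\dots,F_n) \mapsto \kappa_m(F_1) \otimes \cdots \otimes \kappa_m(F_n)$ is continuous, so by the defining universal property of the projective tensor topology it factors through a continuous linear map on $\big(\Omega^1_0(\M)\big)^{\otimes n}$ with the projective topology. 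Each $\BU_n$ carries the direct sum topology of these components, so $K_m$ restricts to a continuous map on every $\BU_n$; because $\BUOmega$ is the inductive limit of the $\left\{\BU_n\right\}_n$, a map out of it is continuous exactly when all these restrictions are, whence $K_m$, and therefore $\Xi_m$, is continuous.

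For continuity of $\Xi_m^{-1}$ I would explicitly lift the continuous section $\vartheta : \Dzs \to \Omega^1_0(\M)$ constructed in the proof of Lemma~\ref{lem:one-particle-homeomorphism} to a BU-algebra homomorphism $\Theta : \BU\big(\Dzs\big) \to \BUOmega$ acting degree-wise as $\vartheta^{\otimes n}$. By the identical projective-tensor / direct-sum / inductive-limit argument, $\Theta$ is continuous. Since $\kappa_m \comp \vartheta = \mathrm{id}_{\Dzs}$ (as $\vartheta(\varphi,\pi)$ was built so that $G_m\vartheta(\varphi,\pi)$ has initial data $(\varphi,\pi)$), the lift satisfies $K_m \comp \Theta = \mathrm{id}_{\BU(\Dzs)}$ degree-wise, and hence
\begin{align}
\Xi_m^{-1} = [\cdot]_m^{\text{dyn}} \comp \Theta \formspace.
\end{align}
As the quotient projection $[\cdot]_m^{\text{dyn}}$ is continuous and $\Theta$ is continuous, $\Xi_m^{-1}$ is continuous, so $\Xi_m$ is a homeomorphism.

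The main obstacle is purely the topological bookkeeping in the second and third paragraphs: verifying that continuity of $\kappa_m$ (respectively $\vartheta$) propagates through the projective tensor products, the direct sums defining each $\BU_n$, and the inductive limit defining $\BUOmega$. Each individual step is standard and can be referenced (e.g.\ \cite[Chapter 43]{treves} for the projective topology and \cite[514--515]{treves} for the direct sum and inductive limit topologies); the work lies in applying the correct universal property at each stage and in confirming that the degree-wise lifts $\Theta$ and $K_m$ are genuinely well-defined homomorphisms, rather than in any new analytic estimate.
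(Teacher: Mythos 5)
Your proposal is correct and follows essentially the same route as the paper: continuity of $\Xi_m$ via continuity of the lifted map $K_m$ and the quotient-topology universal property, and continuity of $\Xi_m^{-1}$ by lifting the section $\vartheta$ to $\Theta$ and composing with the quotient projection; you merely spell out the projective-tensor/direct-sum/inductive-limit bookkeeping that the paper dismisses as ``by construction.'' As a minor point in your favour, your identity $K_m \comp \Theta = \mathrm{id}_{\BU(\Dzs)}$ (a right inverse, which for the bijection $\Xi_m$ suffices) is the correct one, whereas the paper's stated relation $\Theta_m(K_m(f)) = f$ cannot hold literally since $K_m$ has nontrivial kernel.
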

\begin{proof}
	1.) To show that $\Xi_m$ is continuous is trivial. As we have argued when we first introduced the map, $\kappa_m$ is continuous. Therefore $K_m$ is continuous and hence $\Xi_m$ is continuous (again, see \cite[Proposition 4.6]{treves}).\par
	2.) As we did on the test one-form level, we directly construct the inverse $\Xi_m^{-1}$ as follows:
	On the degree-one level we have constructed a continuous map $\vartheta_m : \Dzs \to \Omega^1_0 (M)$ to construct $\xi_m^{-1} = [\cdot]_m \comp \vartheta_m$.
	We lift the map $\vartheta_m$ to the map $\Theta_m : \BU\big( \Dzs \big) \to \BUOmega$ in the same fashion as we have lifted the map $\kappa_m$ to $K_m$. Then, by construction, $\Theta_m$ is continuous and $\Theta_m(K_m(f)) = f$.
	We can thus conclude $\Xi^{-1} = [\cdot]_m^\text{dyn} \comp \Theta_m$ which is continuous by construction.
	This completes the proof.
\end{proof}
With this lemma, we have successfully incorporated the dynamics of the Proca field into our field algebra. We are left to also include the quantum nature of the fields, that is we have to divide out the relation that implements the CCR. In $\BUmzdyn$, we need to divide out the two-sided ideal $\IMCCR$ that is generated by elements $\big(-\i \Gm{F}{F'}, 0 , F \otimes F' - F' \otimes F , 0 , 0 , \dots\big)$. As we have already calculated in the proof of Lemma \ref{lem:propagator-non-degenerate}, we can write the action of $\Gm{\cdot}{\cdot}$ in form of initial data $(\varphi, \pi), (\varphi', \pi') \in \Dzs$ as follows: \vspace{-.6cm}
\begin{align}
\Gm{F}{F'} = \langle \pi , \varphi' \rangle_\Sigma - \langle \varphi, \pi'\rangle_\Sigma \formspace,
\end{align}
where $(\varphi, \pi), (\varphi', \pi') \in \Dzs$ are the initial data of $G_m F$, $G_m F'$ respectively.
Therefore, the ideal $\IMCCR$ maps under $\Xi_m$ to the two-sided ideal $\ICCR \subset \BU\big( \Dzs \big)$ that is generated by elements \vspace{-.3cm}
\begin{align}
\big(-\i ( \langle \pi , \varphi' \rangle_\Sigma - \langle \varphi, \pi'\rangle_\Sigma), 0 , (\varphi, \pi) \otimes (\varphi', \pi') - (\varphi', \pi') \otimes (\varphi, \pi) , 0 , 0 , \dots\big)\formspace.
\end{align}
With this, the following theorem follows easily.
\begin{theorem}\label{thm:field_algebra_homeomorphy_source_free}
	Let $m>0$ and $j=0$.
	\begin{center}
The field algebra $\BUmz$ is homeomorphic to ${\Quotientscale{\BU\big( \Dzs \big)}{\ICCR}}$.
	\end{center}
\end{theorem}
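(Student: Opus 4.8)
The plan is to transport the homeomorphism $\Xi_m : \BUmzdyn \to \BU(\Dzs)$ furnished by Lemma \ref{lem:field-algebra-homeomorphism} to the quotients by the two CCR ideals. By construction $\BUmz = \BUmzdyn / \IMCCR$, so it suffices to exhibit a homeomorphic $^*$-isomorphism between $\BUmzdyn/\IMCCR$ and $\BU(\Dzs)/\ICCR$. The underlying principle I will use is standard: an isomorphism of topological $^*$-algebras that maps one closed two-sided ideal onto another descends to an isomorphism of the corresponding quotients, intertwining the quotient topologies.

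First I would note that $\Xi_m$ is not only a homeomorphism but a $^*$-algebra isomorphism, since it is the descent of the lifted homomorphism $K_m$, which preserves products, the unit and the involution; the latter holds because the real operators $\rho_{(\cdot)}$ and $G_m$ commute with complex conjugation, so that $\kappa_m(\skoverline{F}) = \skoverline{\kappa_m(F)}$.

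Next I would verify $\Xi_m(\IMCCR) = \ICCR$. This is exactly the computation performed just before the statement: writing the propagator in initial-data form $\Gm{F}{F'} = \langle \pi, \varphi'\rangle_\Sigma - \langle \varphi, \pi'\rangle_\Sigma$ as in the proof of Lemma \ref{lem:propagator-non-degenerate}, $\Xi_m$ sends each generator of $\IMCCR$ to the corresponding generator of $\ICCR$. Since $\Xi_m$ is a $^*$-algebra isomorphism it carries the two-sided ideal generated by a set onto the ideal generated by the image of that set, whence $\Xi_m(\IMCCR) = \ICCR$; being a homeomorphism, it also transports closedness of $\IMCCR$ to closedness of $\ICCR$, so both quotients remain locally convex.

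Finally I would assemble the pieces. Because $\Xi_m$ intertwines the two quotient projections and maps $\IMCCR$ onto $\ICCR$, the composite $\BUmzdyn \xrightarrow{\Xi_m} \BU(\Dzs) \to \BU(\Dzs)/\ICCR$ annihilates $\IMCCR$ and thus factors through a bijective $^*$-homomorphism $\tilde\Xi_m : \BUmz \to \BU(\Dzs)/\ICCR$. Continuity of $\tilde\Xi_m$ and of its inverse then follows from the universal property of the quotient topology together with continuity of $\Xi_m$ and $\Xi_m^{-1}$, giving the desired homeomorphism. I expect the genuine subtlety to lie not in the algebra, which is routine, but in this last topological bookkeeping — confirming that the quotient topologies on both sides are precisely the ones matched by $\Xi_m$, which hinges on the closedness of the CCR ideal that the source-free hypothesis $j=0$ makes available.
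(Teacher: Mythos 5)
Your proposal is correct and follows essentially the same route as the paper: the paper likewise descends $[\cdot]_\sim^{\text{CCR}} \comp \Xi_m$ through the quotient by $\IMCCR$, using that $\Xi_m$ carries the generators of $\IMCCR$ onto those of $\ICCR$ (via the initial-data form of $\Gm{\cdot}{\cdot}$), and builds the inverse analogously from $\Xi_m^{-1}$. The only difference is that you spell out the $^*$-preservation and the quotient-topology bookkeeping slightly more explicitly than the paper does.
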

\begin{proof}
	We have already argued most of what is necessary for the proof:\\
	$\BUmz$ is obtained from $\BUmzdyn$ by dividing out the ideal $\IMCCR$. This ideal maps under $\Xi_m$ one-to-one to the ideal $\ICCR$ in $\BU\big( \Dzs\big)$. 	Intuitively, it is quite clear that if we have two homeomorphic algebras, where in each algebra we divide out an ideal and the two ideals are homeomorphic, we end up with two algebras that are homeomorphic.
	The construction of the homeomorphism formalizes that idea. Due the the homeomorphy of the two ideals $\IMCCR$ and $\ICCR$, the map $[\cdot]_\sim^\text{CCR} \comp \Xi_m : \BUmzdyn \to {\Quotientscale{\BU\big( \Dzs \big)}{\ICCR}}$ has kernel $\IMCCR$. This yields the unique existence of the continuous map $\gls{Lambdam} : \BUmz \to {\Quotientscale{\BU\big( \Dzs \big)}{\ICCR}}$. The construction of the inverse $\Lambda^{-1}$ follows analogously using the map $[\cdot]_m^\text{CCR} \comp \Xi_m^{-1} : \BUmzdyn \to {\Quotientscale{\BU\big( \Dzs \big)}{\ICCR}}$.\\ This completes the proof.
\end{proof}
To conclude the construction made so far, the situation is illustrated in Diagram \ref{dia:source-free-homeomorphism-final}.
\begin{table}[]
\begin{displaymath}
\xymatrix @R=20mm @C=30mm
{
	\BUOmega  \ar[r]^{K_m} \ar[dr]_{[\cdot]_m^{\text{dyn}}} 	& \BU\big(\Dzs \big) \ar[r]^{[\cdot]^\text{CCR}_{\sim}}\ar@[linkred][dr]^(.35){  \color{linkred}{\;\;\;[\cdot]^\text{CCR}_m \comp \, \Xi_m^{-1}}  } \ar@<-.5ex>[d]_{\Xi^{-1}_m}  		&       {{\Quotientscale{\BU\big( \Dzs\big) }{\ICCR}}} \ar@<-.5ex>[d]_{\Lambda_m^{-1}} \\
	&  \BUmzdyn  		\ar[r]_{[\cdot]^\text{CCR}_{m}} \ar@[linkblue][ur]_(.35){  \color{linkblue}{\;\;\;[\cdot]^\text{CCR}_\sim \comp \, \Xi_m} } 	                                     \ar@<-.5ex>[u]_{\Xi_m}		& 		 \BUmz	\ar@<-.5ex>[u]_{\Lambda_m}\\
}
\end{displaymath}
\caption{Illustrating the construction of the homeomorphy of the source-free field algebra and the field algebra of initial data. Bi-directional arrows represent homeomorphisms.}
\label{dia:source-free-homeomorphism-final}
\end{table}
So, for the case with vanishing external sources, we have found a way to compare the field algebra at different masses with each other. This will be the starting point for the investigation of mass dependence for $j=0$ in section \ref{sec:mass-dependence-j-zero}. Before we turn to this, we first study the case where $m$ is still fixed but we allow for external currents $j \neq 0$.\par
Assume we have given a non vanishing external current $j$. It is clear that the previous construction will not generalize in a trivial fashion, since the ideal that implements the dynamics is generated by elements $\big(-\langle j, F \rangle_\M, (\delta d + m^2)F,0,0,\dots\big)$ and it is not obvious to find a map, similar to $K_m$, such that the dynamical ideal is the kernel of that map. Instead, we will show that actually the field algebra with source dependent dynamics is homeomorphic to the field algebra with vanishing source dynamics, $\BUmzdyn \cong \BUmjdyn$.
\begin{theorem}[Field algebra homeomorphy]\label{thm:field-algebra-homeomorphy}
	Let $m > 0$ and $j \in \Omega^1(\M)$.
	\begin{center}
	The field algebras $\BUmj$ and $\BUmz$ are homeomorphic.
	\end{center}
	In particular, this yields that also $\BUmj$ and ${\Quotientscale{\BU\big(\Dzs\big)}{\ICCR}}$ are homeomorphic.
\end{theorem}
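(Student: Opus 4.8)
The plan is to exhibit an explicit $*$-algebra isomorphism between the two field algebras that is at the same time a homeomorphism, realizing the intuition that the only difference between the source-dependent and the source-free dynamics is a shift of the field by a $c$-number. First I would invoke the classical solution theory already established: by Theorem~\ref{thm:solution_proca_unconstrained} there exists a real, smooth one-form $\varphi \in \Omega^1(\M)$ solving the inhomogeneous Proca equation $(\delta d + m^2)\varphi = j$ (take real initial data, so that the resulting solution is real). For a compactly supported test one-form $F$ the pairing $\langle \varphi, F\rangle_\M$ is then a well-defined number, and, using that $(\delta d + m^2)$ is formally self-adjoint because $\delta$ and $d$ are formal adjoints, one has $\langle \varphi, (\delta d + m^2)F\rangle_\M = \langle j, F\rangle_\M$.

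Next I would define the shift. The assignment $F \mapsto \big(-\langle \varphi, F\rangle_\M,\, F,\, 0, 0, \dots\big)$ is a continuous linear map $\Omega^1_0(\M) \to \BUOmega$, so by the universal property of the tensor algebra it extends uniquely to a unital algebra homomorphism $\Gamma_{m,j,\varphi} : \BUOmega \to \BUOmega$. Choosing $\varphi$ real makes $\Gamma_{m,j,\varphi}$ a $*$-homomorphism, and it is invertible with inverse obtained by replacing $\varphi$ by $-\varphi$; hence it is a $*$-automorphism. Continuity of the defining functional in the LF-topology on $\Omega^1_0(\M)$, together with the degreewise continuity of the tensor operations and of the inductive-limit topology, shows that both $\Gamma_{m,j,\varphi}$ and its inverse are continuous, so $\Gamma_{m,j,\varphi}$ is a homeomorphism of $\BUOmega$.

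The central computation is then to verify that $\Gamma_{m,j,\varphi}$ carries the source-free ideal onto the source-dependent one, $\Gamma_{m,j,\varphi}(\IMZ) = \IMJ$. On a dynamical generator this is immediate:
\begin{align}
\Gamma_{m,j,\varphi}\big(0, (\delta d + m^2)F, 0, \dots\big)
&= \big(-\langle \varphi, (\delta d + m^2)F\rangle_\M,\, (\delta d + m^2)F,\, 0, \dots\big) \notag\\
&= \big(-\langle j, F\rangle_\M,\, (\delta d + m^2)F,\, 0, \dots\big),
\end{align}
which is exactly a generator of the dynamical part of $\IMJ$. For the CCR generators the key point is that the antisymmetrized degree-two element is left invariant: a short calculation with the product rule of $\BUOmega$ gives $\Gamma_{m,j,\varphi}(F\otimes F' - F'\otimes F) = F\otimes F' - F'\otimes F$, since the degree-zero and degree-one shift contributions cancel under antisymmetrization, while the scalar $\Gm{F}{F'}$ does not involve $j$ at all. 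Thus each generator of $\IMZ$ is mapped to a generator of $\IMJ$, and because the image of a two-sided ideal under an automorphism is generated by the images of its generators, $\Gamma_{m,j,\varphi}(\IMZ) = \IMJ$ follows.

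Consequently $\Gamma_{m,j,\varphi}$ descends to the quotients and yields a homeomorphic $*$-isomorphism $\Psi_{m,j,\varphi} : \BUmz \to \BUmj$ (equivalently, one first obtains $\BUmzdyn \cong \BUmjdyn$ and then passes the untouched CCR ideal through). Composing with Theorem~\ref{thm:field_algebra_homeomorphy_source_free}, which identifies $\BUmz$ with $\Quotientscale{\BU(\Dzs)}{\ICCR}$, gives the stated homeomorphism between $\BUmj$ and $\Quotientscale{\BU(\Dzs)}{\ICCR}$. The main obstacle I anticipate is not a single hard estimate but the bookkeeping in the degree-two cancellation together with the topological bookkeeping (continuity in the inductive-limit topology and stability of homeomorphisms under passing to quotients by homeomorphic ideals); one must also be careful to choose $\varphi$ real so that $\Gamma_{m,j,\varphi}$ respects the involution, while noting that the isomorphism genuinely depends on $\varphi$ — only its existence, not its uniqueness, is needed here.
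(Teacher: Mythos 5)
Your proposal is correct and follows essentially the same route as the paper: construct the shift automorphism $\Gamma_{m,j,\varphi}$ of $\BUOmega$ from a fixed classical solution $\varphi$ of the inhomogeneous Proca equation, check that it carries the source-free dynamical and CCR generators onto the source-dependent ones (using formal self-adjointness of $\delta d + m^2$ and the cancellation of the $c$-number shifts in the antisymmetrized degree-two part), and then pass to the quotients and compose with Theorem~\ref{thm:field_algebra_homeomorphy_source_free}. Your explicit remark that $\varphi$ should be taken real so that $\Gamma_{m,j,\varphi}$ respects the involution is a small point the paper leaves implicit, but otherwise the arguments coincide.
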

\begin{proof}
	The proof works in two steps:\\
	1.) First we construct a non-trivial homeomorphism
	\begin{align}
	\gls{Gammamj} : \BUOmega &\to \BUOmega \formspace,
	\end{align}
	where $\varphi_{m,j}$ is a fixed solution to the classical source dependent Proca equation $(\delta d +m^2) \varphi_{m,j}= j$,
	by defining $\Gamma_{m,j,\varphi}$ as a *-algebra-homomorphism which is then uniquely determined by its action on homogeneous elements of degree zero and one in $\BUOmega$. That is, we define for any $c \in \IC$ and $F \in \Omega^1_0(\M)$
			\begin{align}
			\Gamma_{m,j,\varphi} :
			(c,0,0,\dots) &\mapsto (c,0,0,\dots) \\
			(0,F,0,0,\dots) &\mapsto (-\langle \varphi , F \rangle_\M , F , 0,0,\dots) \notag
			\end{align}
	which is extended to act on arbitrary elements of $\BUOmega$ by linearity and homomorphy with respect to multiplication.
	The inverse is obviously determined by
	\begin{align}
	\Gamma_{m,j,\varphi}^{-1} : \BUOmega &\to \BUOmega \\
	(c,0,0,\dots) &\mapsto (c,0,0,\dots) \notag\\
	(0,F,0,0,\dots) &\mapsto (+\langle \varphi , F \rangle_\M , F , 0,0,\dots) \formspace.\notag
	\end{align}
	Both $\Gamma_{m,j,\varphi}$ and $\Gamma_{m,j,\varphi}^{-1}$ are continuous:\\
	Trivially, the identity map $c \mapsto c$ and $(0,F,0,0,\dots) \mapsto (0,F,0,0,\dots)$ is continuous. Furthermore $(0,F,0,0,\dots) \mapsto (\pm \langle \varphi_{m,j}, F \rangle_\M , 0,0,\dots )$ is continuous since $\varphi_{m,j}$ is assumed fixed. Since the sum of continuous functions in continuous, $(0,F,0,0,\dots) \mapsto (\pm \langle \varphi_{m,j},F \rangle_\M, F, 0,0,\dots)$ is continuous. Hence, $\Gamma_{m,j,\varphi}$ defines a non-trivial homeomorphism of $\BUOmega$.\par
	2.) In this second step we will show that, with respect to $\Gamma_{m,j,\varphi}$, the ideals $\IMDYN$ and $\IMCCR$ are homeomorphic to $\IMJDYN$ and $\IMCCR$ respectively.
	It suffices to show that the generators of the source-free ideals map under $\Gamma_{m,j,\varphi}$ to the corresponding generators of the source dependent ideals and vice versa. Let $F\in \Omega^1_0(\M)$, then
	\begin{align}
	\Gamma_{m,j,\varphi}&\Big( \big( 0,(\delta d +m^2)F , 0 , 0 , \dots   \big)  \Big) \notag\\
	&= \big( 0,(\delta d +m^2)F , 0 , 0 , \dots   \big) - \langle \varphi_{m,j}, (\delta d + m^2)F \rangle_\M \cdot \mathbbm{1}_\BU \notag\\
	&= \big( 0,(\delta d +m^2)F , 0 , 0 , \dots   \big) - \langle (\delta d + m^2) \varphi_{m,j}, F \rangle_\M \cdot \mathbbm{1}_\BU \notag\\
	&= \big( - \langle j, F \rangle_\M ,(\delta d +m^2)F , 0 , 0 , \dots   \big)  \formspace.
	\end{align}
	So the generators for the dynamics transform in the desired way. For the commutation relations we first decompose:
	\begin{align}
	\big( -\i \Gm{F}{F'},0 , F \otimes F' - F' \otimes F   , 0 ,0, \dots   \big)  &=\big( -\i \Gm{F}{F'}, 0 ,0, \dots   \big) \\
	&\phantom{M}+ \big( 0,F,0,0,\dots  \big)\cdot \big( 0,F',0,0,\dots  \big)  \notag \\
	&\phantom{M}-\big( 0,F',0,0,\dots  \big)\cdot \big( 0,F,0,0,\dots  \big) \notag
	\end{align}
	and therefore obtain
	\begin{align}
	\Gamma_{m,j,\varphi}\Big( \big( -\i \Gm{F}{F'},&0 , F \otimes F' - F' \otimes F   , 0 ,0, \dots   \big)  \Big) \notag\\
	&=\big( -\i \Gm{F}{F'}, 0 ,0, \dots   \big)  \notag \\
	&\phantom{M}+ \big( - \langle \varphi_{m,j},F \rangle_\M,F,0,0,\dots  \big)\cdot \big( - \langle \varphi_{m,j},F' \rangle_\M,F',0,0,\dots  \big)  \notag \\
	&\phantom{M}-\big( - \langle \varphi_{m,j},F' \rangle_\M,F',0,0,\dots  \big)\cdot \big( - \langle \varphi_{m,j},F \rangle_\M,F,0,0,\dots  \big) \notag\\
	&= \big( -\i \Gm{F}{F'},0 , F \otimes F' - F' \otimes F   , 0 ,0, \dots   \big)   \formspace.
	\end{align}
	It is straightforward to check in a completely analogous fashion that the generators of the source-dependent ideal map under $\Gamma_{m,j,\varphi}^{-1}$ to the generators of the source-free ideal.
	In conclusion, we find that, with respect to $\Gamma_{m,j,\varphi}$, the ideals $\IMJ$ and $\IMZ$ are indeed homeomorphic. By the same argument as used in Theorem \ref{thm:field_algebra_homeomorphy_source_free}, that is if we divide out two ideals that are homeomorphic the resulting algebras are homeomorphic, we find that $\BUmj$ and $\BUmz$ are homeomorphic. The implication that then $\BUmj$ and ${\Quotientscale{\BU\big(\Dzs\big)}{\ICCR}}$ are homeomorphic follows trivially using Theorem \ref{thm:field_algebra_homeomorphy_source_free}.
\end{proof}
To conclude all of the construction made so far, the final algebraic structure is illustrated in Diagram \ref{dia:final_structure}. Here, $\gls{Psimj}$ is the homeomorphism constructed implicitly in the proof of Theorem \ref{thm:field-algebra-homeomorphy}.
\begin{table}[]
\begin{displaymath}
\xymatrix @R=20mm @C=30mm
{ 																																	& \BU{\left( \Dzs \right)} 		\ar[r]^{[\cdot]_{\sim}^\text{CCR}}		\ar@<.5ex>[d]^{\Xi^{-1}_{m}}				& {{\Quotientscale{\BU{\left( \Dzs \right)}}{\ICCR}}} \ar@<.5ex>[d]^{\Lambda^{-1}_m}\\
	\BUOmega \ar[r]^{[\cdot]_{m,0}^\text{dyn}}  	\ar@<.5ex>[d]^{\Gamma_{m,j,\varphi}} \ar[ur]^{K_m}	&  \BUmzdyn  \ar[r]^{[\cdot]_{m,0}^\text{CCR}} \ar@<.5ex>[u]^{\Xi_{m}}  \ar@<.5ex>[d] &   \BUmz \ar@<.5ex>[d]^{\Psi_{m,j,\varphi}} \ar@<.5ex>[u]^{\Lambda_m}  \\
	\BUOmega  \ar[r]_{[\cdot]_{m,j}^\text{dyn}}      \ar@<.5ex>[u]^{\Gamma^{-1}_{m,j,\varphi}}    & \BUmjdyn \ar[r]_{[\cdot]_{m,j}^\text{CCR}}	\ar@<.5ex>[u]& 	\BUmj 	\ar@<.5ex>[u]^{\Psi_{m,j,\varphi}^{-1}}     \\
}
\end{displaymath}
\caption{Overview of the final algebraic structure and connections between the field algebras. Bi-directional arrows represent homeomorphisms.}
\label{dia:final_structure}
\end{table}
Given an observable of the source free theory $\phi_{m,0}(F)$, we obtain
\begin{align}
\Psi_{m,j,\varphi}(\phi_{m,0}(F))
&= \big[ \Gamma_{m,j,\varphi}(0,F,0,0,\dots)\big]_{m,j} \notag\\
&= \big[ (-\langle \varphi_{m,j}, F \rangle_\M,F,0,0,\dots)\big]_{m,j} \notag\\
&=  -\langle \varphi_{m,j}, F \rangle_\M \cdot \mathbbm{1}_\BUmj + \phi_{m,j}(F) \formspace.
\end{align}
Hence we can easily check that, given an observable $\phi_{m,0}(F)$ that solves the source-free field equations $\phi_{m,0}\big((\delta d + m^2)F\big)=0$ , and fulfills the commutation relations, we obtain by
\begin{align}
\phi_{m,j}(F) = \langle \varphi_{m,j}, F \rangle_\M \cdot \mathbbm{1}_\BUmj + \Psi_{m,j,\varphi}\big(\phi_{m,0}(F)\big)
\end{align}
an observable that clearly solves the source-dependent field equations and fulfills the commutation relations:
\begin{align}
\phi_{m,j}((\delta d+m^2)F)
&=  \langle \varphi, (\delta d +m^2)F \rangle_\M \cdot \mathbbm{1}_\BUmj + \underbrace{\Psi_{m,j,\varphi}\big(\phi_{m,0}\big((\delta d+m^2)F\big)\big)}_{=0 }\notag\\
&= \langle j, F \rangle_\M \cdot \mathbbm{1}_\BUmj
\end{align}
and
\begin{align}
[\phi_{m,j}(F), \phi_{m,j}(F')]
&= \Psi_{m,j,\varphi} \big([\phi_{m,0}(F), \phi_{m,0}(F')] \big) \notag\\
&= \i \Gm{F}{F'} \cdot \mathbbm{1}_\BUmj \formspace.
\end{align}
This concludes our investigation of the field algebra for a fixed mass. With the results of this section, we are able to define a notion of continuity with respect to the mass of the theory and investigate the zero mass limit.
%
%
%
%
%
%
\subsection{Mass dependence and the zero mass limit}\label{sec:mass_depenence_and_limit}
In this section we will present one of the main results of this thesis, that is, we will formulate a notion of continuity of the field theory with respect to a change of the mass and study the zero mass limit.
We will study \emph{existence} of the limit, split into two parts, one for the case of vanishing external sources in Section \ref{sec:mass-dependence-j-zero} and one for the general case with sources in Section \ref{sec:mass-dependence-j-general}. Then, we study the algebraic relations and the dynamics of the fields in the zero mass limit in Section \ref{sec:zero-mass-limit-quantum-algebra-relations}. At given points, we compare our results with the theory of the quantum Maxwell field in curved spacetimes as studied in \cite{Sanders, pfenning}.\par
In order to investigate the zero mass limit we need Assumption \ref{ass:propagator_continuity} to hold. As in the classical theory, this ensures the continuity of the propagators with respect to the mass. This assumption is also essential for the quantum case.
We observe that for every $m$ we obtain a different field algebra $\BUmj$, since both the dynamics and the commutation relations that we implement depend on the mass. The problem is first to find a notion of comparing the Proca fields at different masses with each other.
As we have hinted in section \ref{sec:field-algebra-topology}, we could use the semi-norms
\begin{align}
q_{m, \alpha}( [f]_m ) = \inf\big\{ p_\alpha(g) : g \in [f]_m \big\}
\end{align}
to define a notion of continuity of the theory with respect to the mass $m$. We could define that a function $\eta: \IR_+ \to \bigcup_m \BUmj$, such that $\eta(m_0) \in \BU_{m_0,j}$, is called continuous if and only if the map $m \mapsto q_{\alpha,m}\big(\eta(m)\big)$ is continuous for all $\alpha$ with respect to the standard topology in $\IR$. While this definition seems appropriate, we could not, despite much time and effort spent, prove that for a fixed $F \in \BUOmega$ the map $m \mapsto \phi_m([F]_m)$ is continuous in the above sense, not even in the simpler case $j=0$. But this is certainly a map that we would like to be continuous: If we fix a test function $F$ that we smear the field $\phi_m$ with and vary the mass continuously, we certainly want the observable $\phi_m([F]_m)$ to vary continuously as well. Even at the more simple one-particle level of $\Omega^1_0(\M)$, where we can formulate an equivalent notion of continuity for compactly supported one-forms that incorporate the dynamics in the source free case, that is, one-forms that are of the form $F= (\delta d +m^2)F'$ for some $F' \in \Omega^1_0(\M)$, we were not able to prove that the equivalence classes $[F]_m$ vary continuously with respect to the continuity formulated using semi-norms on the quotient space. It might be a problem worth tackling with more sophisticated mathematical methods. Another approach of formulating a notion of continuity is to use the C*-Weyl algebra rather then the BU-algebra. Being a normed algebra, the Weyl algebra seems suited for the investigation of the zero mass limit at first but, as it turns out, this is not the case as discussed in Appendix \ref{app:weyl-algebra}.
We will now turn to a formulation of continuity that we were able to show to have the desired behavior for fixed test functions $F$.
\subsubsection{Existence of the limit in the current-free case}\label{sec:mass-dependence-j-zero}
Of course we would like to make use of the constructed homeomorphisms to formulate a notion of continuity. And since we have found for every mass $m$ that $\BUmz$ is homeomorphic to $\Quotientscale{\BU\big(\Dzs\big)}{\ICCR}$, we can map a family of elements $\left\{ f_m \right\}_m$, such that for every $m$ it holds $f_m \in \BUmz$, to a family in the algebra ${\Quotientscale{\BU\big(\Dzs\big)}{\ICCR}}$. Here, we have a natural sense of continuity given by the topology.
We state this important result in form of the following definition:
\begin{definition}[Continuity of a family of observables with respect to the mass]\label{def:field_continuity}
	Let $K_m: \BUOmega \to \BU{\left( \Dzs \right)}$ and $\Lambda_m : \BUmz \to {\Quotientscale{\BU{\left( \Dzs \right)}}{\ICCR}}$ be as defined in section \ref{sec:field-algebra-topology}. We call a function
	\begin{align}
	\eta : \IR_+ &\to \bigcup_m \BUmz\\
	m &\mapsto \eta_m  \in \BUmz \notag
	\end{align}
	continuous if the map
	\begin{align}
	\widetilde{\eta} : \IR_+ &\to {\Quotientscale{\BU\big( \Dzs \big) }{\ICCR}} \\
	m &\mapsto \Lambda_m (\eta_m)\notag
	\end{align}
	is continuous.\\
	Equivalently, one can identify $\eta_m = [f_m]_{m,0}$ for some family $\left\{ f_m \right\}_m \subset \BUOmega$ and define the map $\eta$ to be continuous if the map
	\begin{align}
	\hat{\eta} : \IR_+ &\to {\Quotientscale{\BU\big( \Dzs \big) }{\ICCR}} \\
	m &\mapsto [K_m(f_m)]_\sim^\text{CCR}\notag
	\end{align}
	 is continuous. The latter will actually be the more practical definition.
\end{definition}
As an example, and also to check that this notion of continuity has the desired properties, we check that the map $\eta: m \mapsto \phi_m(F)$ is continuous for a fixed $F \in \Omega^1_0(\M)$:\\
According to the above definition, $\eta$ is continuous if
\begin{align}
\widetilde{\eta}: m \mapsto [(0, \kappa_m (F), 0, 0, \dots)]^\text{CCR}_\sim
\end{align}
is continuous in ${\Quotientscale{\BU\big(\Dzs\big)}{\ICCR}}$. Recall that
\begin{align}
\kappa_m : \Omega^1_0(\M) &\to \Dzs \\
F&\mapsto (\rhoz G_m F , \rhod G_m F)  \notag
\end{align} and
\begin{align}
	G_m = \left( \frac{d \delta}{m^2} +1\right) E_m \formspace.
\end{align}
By Assumption \ref{ass:propagator_continuity} and using that the operators $\rho_{(\cdot)}$ are continuous, we find that $m \mapsto \kappa_m F$ is continuous in $m >0$ for any fixed test one-form $F$. By construction, the map $[\cdot]^\text{CCR}_\sim$ is continuous and does not depend on $m$. Therefore, $\widetilde{\eta}$ is continuous and thus $\eta$ is continuous in the above sense.
We indeed find the desired property of continuously varying quantum fields with respect to the mass! Additionally, we find for a fixed $F \in \Omega^1_0(\M)$ and the corresponding field $\phi_m(F)$ that the notion of continuity is independent of the choice of the Cauchy surface $\Sigma$, since $\kappa_m(F)$ is continuous in $m$ for any Cauchy surface.\par
The notion of continuity defined in Definition \ref{def:field_continuity} therefore seems appropriate, and, at least for a field $\phi_m(F)$, is independent of the choice of the Cauchy surface. We therefore will from now on identify a field $\phi_m$ with its initial data formulation $\phi_m(F) = \big[\big (0, \kappa_m(F),0,0,\dots\big)\big]_\sim^\text{CCR}$ and also view the propagator $\Gm{\cdot}{\cdot}$ as a map on initial data as explained in Lemma \ref{lem:propagator-non-degenerate}. But is the notion in general independent of the choice of the Cauchy surface $\Sigma$? To some extent, we can answer this positively:
\begin{lemma}\label{lem:continuity-independence-source-free}
	Let $\Sigma$, $\Sigma'$ be Cauchy surfaces, $a,b \in \IR^+$ arbitrary. Let $\left\{ f_m\right\}_m$ be a family in $\BUOmega$. Then
	\begin{align*}
		\eta^{(\Sigma)} : [ a , b] &\to {\Quotientscale{\BU(\Dzs  )}{\ICCRS}}\hspace{1mm}, \quad m \mapsto \big[ K_m^{(\Sigma)}(f_m) \big]_\sim^{\text{CCR},\Sigma}
	\end{align*}
		is continuous if and only if
		\begin{align*}
		\eta^{(\Sigma')} : [ a , b] &\to {\Quotientscale{\BU(\Dzsp)}{\ICCRSP}}, \quad m \mapsto \big[ K_m^{(\Sigma')}(f_m) \big]_\sim^{\text{CCR},\Sigma'}
	\end{align*}
	is continuous.
	That means, the notion of continuity defined in Definition \ref{def:field_continuity} is independent of the choice of the Cauchy surface $\Sigma$ for $m$ being an element of a compact set.
\end{lemma}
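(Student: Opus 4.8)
The plan is to realise the comparison of the two notions of continuity through a single family of homeomorphisms between the initial-data algebras attached to $\Sigma$ and to $\Sigma'$, and then to reduce the statement to the joint continuity of that family in the mass $m$. Since $\Sigma$ and $\Sigma'$ enter the claim symmetrically, it suffices to prove that continuity of $\eta^{(\Sigma)}$ implies continuity of $\eta^{(\Sigma')}$; the reverse implication then follows by interchanging the two surfaces.

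First I would construct the Cauchy-evolution map on the level of initial data. Given $(\varphi,\pi)\in\Dzs$, Theorem~\ref{thm:solution_proca_unconstrained} with $j=0$ yields a unique source-free Proca solution $A$ carrying those data on $\Sigma$ and depending continuously on $(\varphi,\pi)$; restricting $A$ to $\Sigma'$ by means of the initial-data operators $\rhoz,\rhod$ associated with $\Sigma'$ then defines a continuous map $T_m:\Dzs\to\Dzsp$. Swapping the roles of $\Sigma$ and $\Sigma'$ produces a continuous inverse, so $T_m$ is a homeomorphism for each fixed $m$. By construction $\kappa_m^{(\Sigma')}=T_m\circ\kappa_m^{(\Sigma)}$, because both sides return the initial data on $\Sigma'$ of one and the same solution $G_mF$.

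Next I would verify that $T_m$ preserves the symplectic form $\mathcal{G}_m$ of Lemma~\ref{lem:propagator-non-degenerate}. Applying the Green identity of Lemma~\ref{lem:greens_identity} to the region bounded by $\Sigma$ and $\Sigma'$ shows that $\mathcal{G}_m$ evaluated on data on $\Sigma$ agrees with $\mathcal{G}_m$ evaluated on the evolved data on $\Sigma'$, i.e. the symplectic form is conserved under the dynamics; hence $T_m$ sends the generators of $\ICCRS$ to those of $\ICCRSP$. Lifting $T_m$ degree-wise to a BU-algebra homeomorphism $\mathcal{T}_m:\BU(\Dzs)\to\BU(\Dzsp)$ exactly as $\kappa_m$ was lifted to $K_m$, one obtains $\mathcal{T}_m(\ICCRS)=\ICCRSP$, so $\mathcal{T}_m$ descends to a homeomorphism $\overline{\mathcal{T}}_m$ of the quotient algebras intertwining the two CCR-projections. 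Combined with $K_m^{(\Sigma')}=\mathcal{T}_m\circ K_m^{(\Sigma)}$ this gives the factorisation $\eta^{(\Sigma')}(m)=\overline{\mathcal{T}}_m\big(\eta^{(\Sigma)}(m)\big)$.

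The main obstacle, and the step for which compactness of $[a,b]$ together with $a>0$ is indispensable, is to upgrade the pointwise homeomorphisms $\{\overline{\mathcal{T}}_m\}$ to a jointly continuous map $\Phi:(m,x)\mapsto\overline{\mathcal{T}}_m(x)$ on $[a,b]\times\BU(\Dzs)/\ICCRS$. Granting joint continuity, $\eta^{(\Sigma')}=\Phi\circ(\mathrm{id},\eta^{(\Sigma)})$ is a composition of continuous maps and therefore continuous, and the converse follows by symmetry. To establish it I would write $T_m$ explicitly through the fundamental solutions entering Theorem~\ref{thm:solution_proca_unconstrained}, using $G_m=\left(\tfrac{d\delta}{m^2}+1\right)E_m$: Assumption~\ref{ass:propagator_continuity} supplies continuity of $m\mapsto E_m^\pm F$, while on the compact interval $[a,b]$ with $a>0$ the factor $1/m^2$ stays bounded, so the family $\{T_m\}$ is equicontinuous and $m\mapsto T_m(\psi)$ is continuous for fixed data $\psi$. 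Decomposing $T_m(x_m)-T_{m_0}(x_{m_0})$ as $\big[T_m(x_m)-T_m(x_{m_0})\big]+\big[T_m(x_{m_0})-T_{m_0}(x_{m_0})\big]$ and bounding the first bracket by equicontinuity and the second by continuity in $m$ then yields joint continuity. Throughout it suffices to argue on the degree-one level $\Omega^1_0(\M)$, since every map involved is a degree-wise lift and the relevant topologies are the inductive-limit topologies of the spaces $\BU_n$, so that continuity propagates automatically from the one-particle space to the full BU-algebra.
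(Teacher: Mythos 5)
Your overall strategy coincides with the paper's: factor $\eta^{(\Sigma')}$ through a mass-dependent family of homeomorphisms relating the two initial-data algebras, establish separate continuity in the mass and in the argument, and then upgrade to joint continuity using compactness of $[a,b]$. (Your detour through an explicit Cauchy-evolution map $T_m:\Dzs\to\Dzsp$ with a check that the symplectic form is conserved is a mild variant; the paper gets the same intertwining for free by composing $\Lambda_m^{(\Sigma')}\comp\big(\Lambda_m^{(\Sigma)}\big)^{-1}$, both of which are homeomorphisms out of the same algebra $\BUmz$, so no separate ideal-preservation argument is needed.)

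However, there is a genuine gap at the decisive step. You assert that because $m\mapsto E_m^\pm F$ is continuous (Assumption \ref{ass:propagator_continuity}) and $1/m^2$ is bounded on $[a,b]$ with $a>0$, ``the family $\{T_m\}$ is equicontinuous.'' Pointwise continuity in $m$ for each fixed argument gives only pointwise boundedness of the family on the compact interval; on the spaces at hand --- $\Omega^1_0(\M)$ is an LF-space and $\BU(\Dzs)$ carries an inductive-limit topology, neither of which is normed --- pointwise boundedness of a family of continuous linear maps does \emph{not} by itself yield equicontinuity. The implication requires the Banach--Steinhaus theorem, which is only available because the domain is barreled: one must show that $\BU(\Dzs)$ is barreled (this is Lemma \ref{lem:BU-algebra-barreled} of the paper, proved via LF-spaces being barreled and barreledness surviving direct sums and inductive limits) and that the quotient by the closed ideal $\ICCRS$ is again barreled. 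This is precisely the nontrivial analytic content of the paper's proof, and your argument elides it entirely; without it the passage from separate to joint continuity does not go through. The rest of your outline (the telescoping decomposition of $T_m(x_m)-T_{m_0}(x_{m_0})$ and the symmetry argument for the converse) is sound once equicontinuity is properly established.
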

\begin{proof}
	Let $\Sigma, \Sigma'$ be Cauchy surfaces and $\left\{ f_m \right\}_m$ be a family in $\BUOmega$.\\
In this proof we will make the dependence on the Cauchy surfaces $\Sigma, \Sigma'$ of the maps introduced so far explicit, that is, we will write $\kappa_m^{(\Sigma)}, \kappa_m^{(\Sigma')}$ for the map that maps to initial data on $\Sigma$ and $\Sigma'$ respectively.\par
1.) The $\Rightarrow$-direction: First, we want to show that if the map $m \mapsto \big[  K_m^{(\Sigma)}(f_m) \big]_\sim^{\text{CCR},\Sigma}$ is continuous, then also $m \mapsto \big[  K_m^{(\Sigma')}(f_m) \big]_\sim^{\text{CCR},\Sigma'}$ is continuous. Using the homeomorphisms defined in Section \ref{sec:field-algebra-topology}, recall for example the diagrammatic overview in Diagram \ref{dia:final_structure}, we identify
\begin{align}
	m \mapsto \big[  K_m^{(\Sigma')}(f_m) \big]_\sim^{\text{CCR},\Sigma'} = \Big( \Lambda^{(\Sigma')}_m \comp \left( \Lambda^{(\Sigma)}_m \right)^{-1} \Big) \Big(  \big[ K_m^{(\Sigma)}  (f_m)\big]_\sim^{\text{CCR}, \Sigma}\Big) \formspace.
\end{align}
We split this up into the map
\begin{align}
(m,m') \mapsto \Big( \Lambda^{(\Sigma')}_m \comp \left( \Lambda^{(\Sigma)}_m \right)^{-1} \Big) \Big(  \big[ K_{m'}^{(\Sigma)}  (f_{m'})\big]_\sim^{\text{CCR}, \Sigma}\Big)
\end{align}
and show that it is continuous both in $m$ and $m'$ using the assumption that $m \mapsto  \big[ K_m^{(\Sigma)}(f_m) \big]_\sim^{\text{CCR},\Sigma} $ is continuous. We will then use the Banach-Steinhaus theorem to show joint continuity.\par
a) Let $m$ be fixed.\\
By assumption, $m' \mapsto  \big[ K_{m'}^{(\Sigma)}(f_{m'}) \big]_\sim^{\text{CCR},\Sigma} $ is continuous. Furthermore, both $\Lambda^{(\Sigma')}_{m}$ and $\left( \Lambda^{(\Sigma)}_{m} \right)^{-1}$ are continuous for a fixed $m$ since they were shown to be homeomorphisms, see Lemma \ref{lem:field-algebra-homeomorphism}.\\
Consequently, the map
\begin{align}
	m' \mapsto  \Big( \Lambda^{(\Sigma')}_m \comp \left( \Lambda^{(\Sigma)}_m \right)^{-1} \Big) \Big(  \big[ K_{m'}^{(\Sigma)}  (f_{m'})\big]_\sim^{\text{CCR}, \Sigma}\Big)
\end{align}
is continuous for fixed $m$.\par
b) Now assume $m'$ to be fixed. \\
We identify $\big[ K_{m'}^{(\Sigma)}(f_{m'}) \big]_\sim^{\text{CCR},\Sigma}= \big[ \psi \big]_\sim^{\text{CCR},\Sigma}$ for some $\psi \in \BU{\left( \Dzs \right)}$. We will let that initial data propagate with mass $m$ to the Cauchy surface $\Sigma'$ by
\begin{align}
	[\psi]_\sim^{\text{CCR},\Sigma} \mapsto \Big( \Lambda^{(\Sigma')}_{m} \comp \big( \Lambda^{(\Sigma)}_{m} \big)^{-1} \Big) \big([\psi]_\sim^{\text{CCR},\Sigma}\big) \formspace.
	\end{align}
Recall that we have explicitly constructed the homeomorphisms and their inverses in Section \ref{sec:field-algebra-topology} using the map $\vartheta_m^{(\Sigma)} : \Dzs  \to \Omega^1_0(\M) $ which mapped initial data to a compactly supported test one-form $F$ such that $G_m F$ is a solution to Proca's equation with these data. This map was shown to be continuous. Furthermore, $\vartheta_m^{(\Sigma)}$ is continuous in $m$:\\
For fixed initial data $\psi= (\varphi, \pi) \in \Dzs$, a solution $A_{m} \in \Omega^1(\M)$ to $(\delta d + {m}^2)A_{m} =0$ with the given data was uniquely specified by, see Theorem \ref{thm:solution_proca_unconstrained}:
\begin{align}
	\langle A_{m}, F \rangle_\M =  \langle \varphi, \rhod G_{m} F \rangle_\Sigma - \langle \pi , \rhoz G_{m} F \rangle_\Sigma
\end{align}
which by Assumption \ref{ass:propagator_continuity} depends continuously on $m$. Furthermore, we defined $\vartheta_m^{(\Sigma)} (\varphi,\pi) = - (\delta d + {m}^2) \chi A_{m}$ which is hence continuous in $m$. Thus, the map $\vartheta_m^{(\Sigma)}$ is continuous in $m$. We lifted $\vartheta_m^{(\Sigma)}$ to the map $\Theta_m^{(\Sigma)} : \BU\big( \Dzs\big) \to \BUOmega$ which is then by construction also continuous in $m$ and we find
\begin{align}
\Big( \Lambda^{(\Sigma')}_{m} \comp \big( \Lambda^{(\Sigma)}_{m} \big)^{-1} \Big) \big( [\psi]_\sim^{\text{CCR},\Sigma} \big) = \big[ \big( K_{m}^{(\Sigma')} \comp  \Theta_{m}^{(\Sigma)}\big) \big( \psi \big) \big]_\sim^{\text{CCR},\Sigma'}
\end{align}
is continuous in $m$.\par
Therefore the map $	(m,m') \mapsto \Big( \Lambda^{(\Sigma')}_m \comp \left( \Lambda^{(\Sigma)}_m \right)^{-1} \Big) \Big(  \big[ K_{m'}^{(\Sigma)}  (f_{m'})\big]_\sim^{\text{CCR}, \Sigma}\Big)$ is \emph{separately} continuous. In order to obtain the wanted result, we need to show \emph{joint} continuity. \par
We have shown that $\left\{T_{m} \right\}_{m}$, where
\begin{align}
T_{m} : {\Quotientscale{\BU(\Dzs)}{\ICCRS}} & \to {\Quotientscale{\BU(\Dzsp)}{\ICCRSP}}\\
T_{m} &=  \Lambda^{(\Sigma')}_{m} \comp \big( \Lambda^{(\Sigma)}_{m} \big)^{-1} \notag
\end{align}
 is a family of continuous linear mappings, with continuous linear inverse, which is point-wise continuous in $m$.
 Therefore, for any $a,b \in \IR^+$ the family $\left\{T_{m} \right\}_{m \in [a,b]}$ is \emph{point-wise bounded}, that is for any $\psi \in \BU(\Dzs)$ the set $\left\{ T_m([\psi]_\sim^{\text{CCR},\Sigma}) : m \in [a,b] \right\}$ is bounded.\par
Since $\BU(\Dzs)$ is barreled (see Lemma \ref{lem:BU-algebra-barreled}) and the ideal $\ICCR$ was shown to be a closed subspace in Section \ref{sec:field-algebra-topology}, the quotient ${\Quotientscale{\BU(\Dzs)}{\ICCRS}}$ is barreled (c.f. \cite[Proposition 33.1]{treves}). Using the Banach-Steinhaus theorem (see for example \cite[Theorem 33.1]{treves}), we find that $\left\{T_{m} \right\}_{m \in [a,b]}$ is equicontinuous, that is, for all $m \in [a,b]$ and all open $W \subset {\Quotientscale{\BU(\Dzsp)}{\ICCRSP}}$, there is a open $U \subset {\Quotientscale{\BU(\Dzs)}{\ICCRS}}$ such that $T_m (U) \subset W$.\\ From this, joint continuity follows:
Since we have shown that $m \mapsto \left[ K_m^{(\Sigma)} (f_m)\right] \eqqcolon \tau(m)$ is continuous, we find a open $V \subset [a,b]$ such that $\tau (V) \subset U$. Ergo, we find that for all open $W \subset {\Quotientscale{\BU(\Dzs)}{\ICCRS}}$, there is a open $V \subset [a,b]$ such that $(T \comp \tau)(V) \subset W$. Therefore, the map $m \mapsto (T \comp \tau)(m) = \Big( \Lambda^{(\Sigma')}_m \comp \left( \Lambda^{(\Sigma)}_m \right)^{-1} \Big) \Big(  \big[ K_m^{(\Sigma)}  (f_m)\big]_\sim^{\text{CCR}, \Sigma}\Big)$ is continuous for $m \in [a,b]$. \par
2.) The $\Leftarrow$-direction: Using that $\Lambda^{(\Sigma)}_m$ and $\Lambda^{(\Sigma)}_m$ are homeomorphisms, the map $T_m$ introduced above possesses a continuous linear inverse which depends continuously on $m$, hence, the above proof works analogously in the other direction, interchanging $\Sigma$ and $\Sigma'$.
\end{proof}
With this notion of continuity at our disposal, we are ready to investigate one of the main questions of this thesis: Does the zero mass limit of the quantum Proca field theory in curved spacetimes exist for the case $j=0$? To phrase this more specifically:
\begin{center}\textit{
		Let $j=0$. For which $F \in \Omega^1_0(\M)$, if any, does the limit\\[2mm] $\lim\limits_{m\to 0^+} \big( \phi_{m,0}(F) \big)$\\[2mm] exist with the notion of continuity defined in Definition \ref{def:field_continuity}?}
\end{center}
Using the above notion of continuity, the question of interest is if the limit
\begin{align}
	\lim\limits_{m \to 0^+} \phi_{m,0}(F) = \lim\limits_{m \to 0^+} \big[\big( 0 , \kappa_m(F) , 0 , 0, \dots\big)\big]^\text{CCR}_\sim
	\end{align}
exists. We recall the definitions
\begin{align}
\kappa_m(F) = (\rhoz G_m F, \rhod G_m F)
\end{align} and
\begin{align}
	G_m F = \left( \frac{d \delta}{m^2} + 1\right) E_m F \formspace.
\end{align}
In order to precisely answer this question, we find that the existence of the desired limit is equivalent to the existence of the corresponding classical limit as stated in the following lemma:
\begin{lemma}\label{lem:limit_existence_quantum_equivalence}
	Let $j=0$, $F \in \Omega^1_0(\M)$ be fixed and assume Assumption \ref{ass:propagator_continuity} holds. The following statements are equivalent:
	\begin{enumerate}
		\item {The limit $\lim\limits_{m \to 0^+} \phi_{m,0}(F) = \lim\limits_{m \to 0^+} \big[\big( 0 , \kappa_m(F) , 0 , 0, \dots\big)\big]^\text{CCR}_\sim $ exists. \\}
        \item {The limit $\lim\limits_{m \to 0^+} \big( 0 , \kappa_m(F) , 0 , 0, \dots\big)$ exists. \\}
		\item {The limit $\lim\limits_{m \to 0^+} G_m F $ exists.  \\}
		\item {The limit $\lim\limits_{m \to 0^+} \frac{1}{m^2}d \delta E_m F $ exists. }
	\end{enumerate}
\end{lemma}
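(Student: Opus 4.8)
The plan is to regard the three ``classical'' statements (ii), (iii), (iv) as a direct transcription of Lemma \ref{lem:limit_existence_classical_equivalence}, so that the only genuinely new content lies in the single equivalence (i) $\iff$ (ii). First I would note that $(0,\kappa_m(F),0,0,\dots)$ is purely of degree one, so that its convergence in the inductive-limit topology of $\BU\big(\Dzs\big)$ is by definition equivalent to convergence of its degree-one part $\kappa_m(F)=(\rhoz G_m F,\rhod G_m F)$ in $\Dzs=\Omega^1_0(\Sigma)\oplus\Omega^1_0(\Sigma)$, i.e. to the simultaneous existence of $\lim_{m\to 0^+}\rhoz G_m F$ and $\lim_{m\to 0^+}\rhod G_m F$. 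This is precisely statement (i) of the classical Lemma \ref{lem:limit_existence_classical_equivalence}, whose equivalence with $\lim_{m\to 0^+}G_m F$ (statement (iii) here) and with $\lim_{m\to 0^+}\frac{1}{m^2}d\delta E_m F$ (statement (iv) here, using that $E_m$ commutes with $d$ and $\delta$) is already proven. Hence (ii) $\iff$ (iii) $\iff$ (iv) costs no additional work.

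It then remains to establish (i) $\iff$ (ii). The direction (ii) $\implies$ (i) is immediate: the quotient map $[\,\cdot\,]^\text{CCR}_\sim$ is continuous, so it carries the convergent family $(0,\kappa_m(F),0,0,\dots)$ to the convergent family $\Lambda_m(\phi_{m,0}(F))=[(0,\kappa_m(F),0,0,\dots)]^\text{CCR}_\sim$. The hard part is (i) $\implies$ (ii), where convergence in the fixed quotient ${\Quotientscale{\BU\big(\Dzs\big)}{\ICCR}}$ must be pushed back to convergence of a representative. My approach is to exploit the commutation relations encoded in $\ICCR$: for each fixed $w=(\varphi_w,\pi_w)\in\Dzs$ put $\hat w=[(0,w,0,0,\dots)]^\text{CCR}_\sim$, which is independent of $m$, and observe that by construction of the ideal one has $[\Lambda_m(\phi_{m,0}(F)),\hat w]=\i\,\mathcal{G}(\kappa_m(F),w)\,\mathbbm 1$ with $\mathcal{G}(\kappa_m(F),w)=\langle \rhod G_m F,\varphi_w\rangle_\Sigma-\langle \rhoz G_m F,\pi_w\rangle_\Sigma$, the mass-independent symplectic form of Lemma \ref{lem:propagator-non-degenerate}. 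Since taking the commutator with the fixed element $\hat w$ is continuous, and since the one-dimensional subspace $\IC\,\mathbbm 1$ is closed in the Hausdorff quotient (so each such commutator, being a scalar multiple of the unit, has a scalar limit), the assumed convergence in (i) forces $\mathcal{G}(\kappa_m(F),w)$ to converge for every $w$. Taking $\pi_w=0$, respectively $\varphi_w=0$, this says exactly that $\rhod G_m F$ and $\rhoz G_m F$ converge in the \emph{weak} topology of $\Omega^1_0(\Sigma)$ as $m\to 0^+$.

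The main obstacle is upgrading this weak convergence to the inductive-limit convergence required in (ii). Here I would use that $\Omega^1_0(\Sigma)$ is a Montel space, a standard property of the test-section spaces already implicit in the barrelledness inputs of Lemma \ref{lem:continuity-independence-source-free}. Weak convergence (near $m=0$, together with continuity in $m$ on compact subintervals via Assumption \ref{ass:propagator_continuity}) makes the families $\{\rhoz G_m F\}$ and $\{\rhod G_m F\}$ weakly bounded, hence bounded; in a Montel space bounded sets are relatively compact, and on a compact set the weak and the original topology coincide, so a weakly convergent net in such a set converges in the original topology. This yields convergence of $\kappa_m(F)$ in $\Dzs$, namely statement (ii), and closes the cycle. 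The delicate points to watch are the separate (not necessarily joint) continuity of multiplication in the quotient algebra, which suffices since one argument of the commutator is fixed, and the verification that the Montel/weak-to-strong step applies to the $m\to 0^+$ net rather than to a sequence; both are of the same technical flavour already handled elsewhere in the paper.
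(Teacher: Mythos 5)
Your reduction of (ii) $\iff$ (iii) $\iff$ (iv) to Lemma \ref{lem:limit_existence_classical_equivalence}, and the direction (ii) $\implies$ (i) via continuity of $[\,\cdot\,]^\text{CCR}_\sim$, match the paper. For the hard direction (i) $\implies$ (ii) you take a genuinely different route: the paper decomposes the convergent representative via the symmetrization Lemma \ref{lem:symmetrization-of-fields} and then shows that the only fully symmetric element of $\ICCR$ is zero, which forces $(0,\kappa_m(F),0,\dots)$ itself to be the continuous symmetric part; you instead extract the scalars $\mathcal{G}(\kappa_m(F),w)$ by commuting with fixed degree-one classes $\hat w$. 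That extraction step is sound (separate continuity of multiplication, and $\IC\,\mathbbm 1$ is a finite-dimensional, hence closed, subspace of the Hausdorff quotient), and it is an attractive idea because it uses only the defining relations of $\ICCR$.

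However, the upgrade from this to statement (ii) has a genuine gap. What the commutators give you is convergence of $\langle \rhod G_m F,\varphi_w\rangle_\Sigma$ and $\langle \rhoz G_m F,\pi_w\rangle_\Sigma$ for every $\varphi_w,\pi_w\in\Omega^1_0(\Sigma)$, i.e.\ convergence in the topology $\sigma\big(\Omega^1_0(\Sigma),\Omega^1_0(\Sigma)\big)$ induced by pairing against \emph{smooth compactly supported} forms. This is strictly weaker than the weak topology $\sigma\big(\Omega^1_0(\Sigma),\Omega^1_0(\Sigma)'\big)$ of the LF-space, where $\Omega^1_0(\Sigma)'$ is the full topological dual (the space of distributional one-forms). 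The step ``weakly bounded, hence bounded'' is Mackey's theorem and requires boundedness of the pairings against \emph{all} elements of the dual; pointwise boundedness of $\langle u_m,\varphi\rangle_\Sigma$ over $\varphi\in\Omega^1_0(\Sigma)$ only controls $u_m$ as a family of distributions and does not bound the family $\{u_m\}$ in the test-section topology — it gives no uniform control on higher derivatives (e.g.\ a family like $\chi(x)\sin(x/m)$ is bounded, even convergent, against every test form but unbounded in $\Omega^1_0(\Sigma)$). Without boundedness in $\Omega^1_0(\Sigma)$ you cannot invoke the Montel property to pass from distributional convergence to convergence in the inductive-limit topology, and statement (ii) — which is precisely convergence of $\kappa_m(F)$ in $\Dzs$ with that topology — does not follow. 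This is exactly the difficulty the paper's symmetrization argument is designed to circumvent: it produces a representative $f_{m,\text{sym}}$ that converges in the full topology of $\BU(\Dzs)$ and then identifies it with $(0,\kappa_m(F),0,\dots)$ algebraically, never passing through a weak topology.
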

\begin{proof}
	The nontrivial part is to show the equivalence of the first two statements:\par
	1.) a) (ii) $\implies$ (i) is obvious, since $[\cdot]_\sim^\text{CCR}$ is continuous and does not depend on the mass. So if $\big( 0 , \kappa_m(F) , 0 , 0, \dots\big)$ is continuous, so is $\big[\big( 0 , \kappa_m(F) , 0 , 0, \dots\big)\big]^\text{CCR}_\sim$. This yields the existence of the corresponding limit.\par
	b) (i) $\implies$ (ii) is highly non-trivial and most of the work needed to show this is put in the appendix in form of Lemma \ref{lem:symmetrization-of-fields}. Assume that $\big[\big( 0 , \kappa_m(F) , 0 , 0, \dots\big)\big]^\text{CCR}_\sim$ specifies a continuous family of algebra elements and that the corresponding zero mass limit exists. This implies that there is a family $\left\{\tilde{g}_m\right\}_m \subset \ICCR$ such that
	\begin{align}
     \big( 0 , \kappa_m(F) , 0 , 0, \dots\big) + \tilde{g}_m
	\end{align}
	specifies a continuous family. Note that this does not imply that neither $\tilde{g}_m$ nor $\big( 0 , \kappa_m(F) , 0 , 0, \dots\big) $ specify continuous families by themselves. In this context, continuity is as always assumed to be continuity in $m$. By Lemma \ref{lem:symmetrization-of-fields}, we can decompose this continuous family into a sum of a continuous family of symmetric elements and a continuous family of elements in the ideal $\ICCR$, that is,
	\begin{align}
     \big( 0 , \kappa_m(F) , 0 , 0, \dots\big) + \tilde{g}_m = f_{m,\text{sym}} + g_m
	\end{align}
	where $g_m \in \ICCR$ and both $g_m$ and  $f_{m,\text{sym}}$ specify continuous families of algebra elements. An element in the BU-algebra of initial data is called symmetric if it is symmetric in every degree. Therefore,
	\begin{align}
\big( 0 , \kappa_m(F) , 0 , 0, \dots\big) - f_{m,\text{sym}} =  ( g_m -\tilde{g}_m)
	\end{align}
	specifies a symmetric element, since the sum, respectively the difference, of two symmetric elements is again symmetric. We conclude that $(g_m - \tilde{g}_m) \in \ICCR$ is symmetric.\\
	By definition, we can write an element in the ideal $\ICCR$ as a finite sum
	\begin{align}
		g_m - \tilde{g_m} = \sum_{i=1}^{k} h_i \cdot \big( -\i \mathcal{G}_m(\psi_i, \psi'_i) , 0 , \psi_i \otimes \psi'_i - \psi'_i \otimes \psi_i , 0 , 0 ,\dots  \big) \cdot \tilde{h}_i \formspace
	\end{align}
	for some $k \in \IN$, $h_i, \tilde{h}_i \in \BU(\Dzs)$ and $\psi_i, \psi'_i \in \Dzs$. Writing $\psi_i = (\varphi_i , \pi_i)$ we have used for shorthand notation $\Gm{\psi_i}{ \psi'_i} = \langle \pi_i , \varphi'_i \rangle_\Sigma - \langle \varphi_i , \pi'_i \rangle_\Sigma$. Furthermore, we have dropped the mass dependence of the summands for clarity, as we are not using it in the following. Since $g_m - \tilde{g}_m$ is symmetric, in particular its highest degree $(g_m - \tilde{g}_m)^{(N)}$, for some $N \in \IN$, is symmetric. We can write the highest degree explicitly, using the above representation,
	\begin{align}
(g_m - \tilde{g}_m)^{(N)} = \sum_{i=1}^{k} h_i^{(N_i)}\, \phi_i^{(2)}\, \tilde{h}_i^{(\tilde{N}_i)} \formspace,
	\end{align}
	such that $N_i + \tilde{N}_i = N-2$ for every $i = 1,2,\dots,k$. Additionally, we have introduced $\phi_i^{(2)} =  \psi_i \otimes \psi'_i - \psi'_i \otimes \psi_i $ as yet another shorthand notation. As usual, $h_i^{(N_i)} \in (\Dzs)^{\otimes N_i}$ denotes the degree-$N_i$ part of $h_i \in \BU(\Dzs)$. By definition, the $\phi^{(2)}_i$'s are fully anti-symmetric, that is, $\phi^{(2)}_i(p,q) = - \phi^{(2)}_i(q,p)$ for any $p,q \in \Sigma$. Since $(g_m - \tilde{g}_m)^{(N)}$ is symmetric, it is invariant under arbitrary permutations of its arguments, that is, for any permutation $\sigma$ of $\{ 1,2, \dots , N\}$ it holds
	\begin{align}
		(g_m - \tilde{g}_m)^{(N)}(p_1,p_2,\dots,p_N) = (g_m - \tilde{g}_m)^{(N)}(p_{\sigma(1)},p_{\sigma(2)},\dots,p_{\sigma(N)}) \formspace.
	\end{align}
	Introducing the permutation operator $P_\sigma : (\Dzs)^{\otimes N} \to (\Dzs)^{\otimes N}$ by defining $(P_\sigma f^{(N)})(p_1,p_2,\dots,p_N) = f^{(N)}(p_{\sigma(1)},p_{\sigma(2)},\dots,p_{\sigma(N)}) $ we conclude in a short hand notation
	\begin{align}
(g_m - \tilde{g}_m)^{(N)}
&= \frac{1}{N!}\sum\limits_\sigma P_\sigma (g_m - \tilde{g}_m)^{(N)} \notag\\
&= \frac{1}{N!}\sum\limits_\sigma P_\sigma  \sum_{i=1}^{k} h_i^{(N_i)}\, \phi_i^{(2)}\, \tilde{h}_i^{(\tilde{N}_i)}\notag \\
&= \frac{1}{N!}  \sum_{i=1}^{k}  \sum\limits_\sigma P_\sigma \big(  h_i^{(N_i)}\, \phi_i^{(2)}\, \tilde{h}_i^{(\tilde{N}_i)} \big) \formspace.
	\end{align}
	In the last step we used that $P_\sigma$ is by construction linear, and that, since both sums are finite, we can rearrange the summands in the desired way. We find for any (fixed) $i=1,2,\dots,k$
	\begin{align}
\sum\limits_\sigma P_\sigma \big(  h_i^{(N_i)}\, \phi_i^{(2)}\, \tilde{h}_i^{(\tilde{N}_i)} \big) = 0
	\end{align}
	since to every permutation $\sigma$ there exist another permutation that differs from $\sigma$ by a swap of the $N_{i+1}$ and $N_{i+2}$ index. Because $\phi^{(2)}_i$ is fully anti-symmetric, the corresponding terms will cancel. Hence, all the summands in the sum over all permutations will cancel pair-wise, using the anti-symmetry of $\phi^{(2)}$. We can thus conclude
	\begin{align}
(g_m - \tilde{g}_m)^{(N)} = 0 \formspace,
	\end{align}
	which is a contradiction, because we have assumed the degree $N$ to be the highest degree of $(g_m - \tilde{g}_m)$. Therefore, it must hold $(g_m - \tilde{g}_m) = 0$. Concluding, we have shown that the only fully symmetric element in the ideal $\ICCR$ is the zero element. From this we can straightforwardly conclude
	\begin{align}
		\tilde{g}_m &= g_m &\text{specifies continuous family,} \\
		\implies  \big( 0, \kappa_m(F),0,0,\dots\big) &= f_{m,\text{sym}} &\text{specifies continuous family,}
	\end{align}
	which completes the proof of 1 b).\par
	2.) The equivalence of (ii) and (iii) was mostly shown in Lemma \ref{lem:limit_existence_classical_equivalence}.\\
	a) (iii) $\implies$ (ii):\\
	Assume $\lim\limits_{m \to 0^+} G_m F $ exists. Then, the limit $\lim\limits_{m \to 0^+} \kappa_m(F) = \lim\limits_{m \to 0^+} (\rhod G_m F, \rhoz G_m F)$ exists, since convergence in the direct sum topology on $\Omega^1_0(\Sigma) \oplus \Omega^1_0(\Sigma)$ means convergence in every component and the operators $\rho_{(\cdot)}$ are continuous. Then, since convergence in the BU-algebra means convergence in every degree, clearly, the limit $\lim\limits_{m \to 0^+} \big( 0 , \kappa_m(F),0,0,\dots \big)$ exists.\par
	b) (ii) $\implies$ (iii):\\ Assume that $\lim\limits_{m \to 0^+} \big( 0 , \kappa_m(F),0,0,\dots \big)$ exists.
	Since convergence in the BU-algebra means convergence in every degree, we conclude that $\lim\limits_{m \to 0^+} \kappa_m(F)$ exists. Recall, $\kappa_m(F) = (\rhoz G_m F , \rhod G_m F) \in \Dzs$, and also here convergence means convergence in every component, hence $\lim\limits_{m \to 0^+}  \rhoz G_m F $ and $\lim\limits_{m \to 0^+}  \rhod G_m F$ exist. According to Lemma \ref{lem:limit_existence_classical_equivalence} this yields that $\lim\limits_{m \to 0^+} G_m F $ exists.\par
	3.) The equivalence of (iii) and (iv) was already proven in the classical chapter, see Lemma \ref{lem:limit_existence_classical_equivalence}.
	\end{proof}
With this lemma, we have found quite a remarkable result: The existence of the zero mass limit of the quantum Proca field theory in curved spacetimes is purely determined by its classical properties!\\
With this result, we can now quite easily find an answer to our original question, that is, explicitly find test functions $F \in \Omega^1_0(\M)$ such that the limit $\lim\limits_{m \to 0^+} \phi_{m,0}(F)$, or equivalently, $\lim\limits_{m \to 0^+} \frac{1}{m^2}d \delta E_m F $ exists. We have already investigated this in the classical section, therefore, we directly find the following lemma which we will later tighten to the main result as stated in Theorem \ref{thm:limit_existence_sourcefree}.
\begin{lemma}\label{lem:mass-zero-limit-existence_weak}
Let $F\in \Omega^1_0(\M)$ and $j=0$. Then,
\begin{center}
the limit $\lim\limits_{m \to 0^+}  \phi_{m,0}(F)$ exists if and only if $F = F' + F''$,\\
\end{center}
where $F', F'' \in \Omega^1_0(\M)$ such that $dF' = 0 = \delta F''$.
\end{lemma}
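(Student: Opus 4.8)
The plan is to reduce this quantum statement entirely to its classical counterpart, which has already been established. By Lemma~\ref{lem:limit_existence_quantum_equivalence}, the limit $\lim_{m \to 0^+} \phi_{m,0}(F)$ exists if and only if the classical limit $\lim_{m \to 0^+} \frac{1}{m^2} d\delta E_m F$ exists. But this latter condition is \emph{exactly} the condition (iii) appearing in the classical Lemma~\ref{lem:limit_existence_classical_equivalence}, and in Lemma~\ref{lem:mass-zero-limit-existence-classical_weak} it was shown that this limit exists if and only if $F = F' + F''$ with $F', F'' \in \Omega^1_0(\M)$ and $dF' = 0 = \delta F''$. So the entire proof is essentially a one-line citation chain.

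Concretely, first I would invoke Lemma~\ref{lem:limit_existence_quantum_equivalence}, whose statement establishes the equivalence of the four conditions (i)--(iv) under the standing Assumption~\ref{ass:propagator_continuity}. In particular, statement (i), the existence of $\lim_{m \to 0^+} \phi_{m,0}(F)$, is equivalent to statement (iv), the existence of $\lim_{m \to 0^+} \frac{1}{m^2} d\delta E_m F$. Next I would observe that this is precisely the limit whose existence was classified in Lemma~\ref{lem:mass-zero-limit-existence-classical_weak}: that lemma asserts that $\lim_{m \to 0^+} \rhoz G_m F$ and $\lim_{m \to 0^+} \rhod G_m F$ exist if and only if $F$ decomposes as a sum of a closed and a co-closed compactly supported one-form, and via Lemma~\ref{lem:limit_existence_classical_equivalence} these limits exist if and only if $\lim_{m \to 0^+} \frac{1}{m^2} d\delta E_m F$ exists. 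Chaining these equivalences gives the claim.

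The main subtlety, and the reason this lemma is nontrivial despite the short citation chain, lies entirely in Lemma~\ref{lem:limit_existence_quantum_equivalence}, specifically the implication (i)$\Rightarrow$(ii) --- that continuity of the equivalence class $[(0,\kappa_m(F),0,0,\dots)]^\text{CCR}_\sim$ in the quotient algebra implies continuity of the representative $(0,\kappa_m(F),0,0,\dots)$ itself. This is where the symmetrization argument (Lemma~\ref{lem:symmetrization-of-fields}) and the observation that the only fully symmetric element of the commutator ideal $\ICCR$ is zero are used. Since this hard work is already packaged inside Lemma~\ref{lem:limit_existence_quantum_equivalence}, the present proof may treat it as a black box; the remaining task is purely bookkeeping, ensuring the classical decomposition condition $F = F' + F''$ with $dF' = 0 = \delta F''$ is transported correctly through the equivalences. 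I would therefore write the proof as: apply Lemma~\ref{lem:limit_existence_quantum_equivalence} to reduce to condition (iv), then apply Lemma~\ref{lem:mass-zero-limit-existence-classical_weak} (together with the equivalence (iii)$\Leftrightarrow$(iv) of Lemma~\ref{lem:limit_existence_classical_equivalence}) to obtain the stated decomposition, concluding the proof.
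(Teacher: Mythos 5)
Your proposal is correct and follows exactly the same route as the paper's own proof: reduce via Lemma~\ref{lem:limit_existence_quantum_equivalence} to the existence of $\lim_{m \to 0^+}\frac{1}{m^2}d\delta E_m F$, then cite Lemma~\ref{lem:limit_existence_classical_equivalence} together with Lemma~\ref{lem:mass-zero-limit-existence-classical_weak} to obtain the decomposition $F = F' + F''$ with $dF'=0=\delta F''$. Your identification of the hidden difficulty (the implication (i)$\Rightarrow$(ii) via the symmetrization argument) matches where the paper places the real work as well.
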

\begin{proof}
	According to Lemma \ref{lem:limit_existence_quantum_equivalence}, the desired limit $\lim\limits_{m \to 0^+}  \phi_{m,0}(F)$ exists if and only if $\lim\limits_{m \to 0^+} \frac{1}{m^2}d \delta E_m F $ exists. Using Lemma \ref{lem:limit_existence_classical_equivalence} together with Lemma \ref{lem:mass-zero-limit-existence-classical_weak} we find that this is equivalent to $F$ being a sum of a closed and a co-closed test one-form.
\end{proof}
From a formal point of view we have completely classified those test one-forms for which the zero mass limit exists. But it turns out that, just like in the classical case, we can tighten the result even more, by observing that closed test one-forms $F \in \Omega^1_0(\M)$, such that $dF=0$, do not contribute to the observables $\phi_{m,0}(F)$ for the source free case. We therefore may restrict the class of test-functions that we smear the fields $\phi_{m,0}$ with to the test one-forms modulo closed test one-forms, analogously to the classical case. This yields the final result of this section:
\begin{theorem}[Existence of the zero mass limit in the source free case]\label{thm:limit_existence_sourcefree}
Let $j=0$ and $F, F' \in \Omega^1_0(\M)$ such that $[F] = [F']$, that is, there is a $\chi \in \Omega^1_{0,d}(\M)$ such that $F = F' + \chi$.  Then,
	\begin{align*}
	\kappa_m(F) &= \kappa_m (F') \eqqcolon \kappa_m ([F]) \quad\text{and hence}\\
	\phi_{m,0}(F) &= \phi_{m,0}(F') \eqqcolon \phi_{m,0}([F]) \formspace,
	\end{align*}
and
\begin{center}
	the limit $\lim\limits_{m \to 0^+}  \phi_{m,0}([F]) \coloneqq \lim\limits_{m \to 0^+} \big[\big( 0 , \kappa_m([F]) , 0 , 0, \dots\big)\big]^\text{CCR}_\sim $ exists\\[2mm]  if and only if there exists a representative $\tilde{F}$ of $[F]$ with ${\delta{\tilde{F}}} = 0 $.
\end{center}
Furthermore, the existence of the limit is independent of the choice of the Cauchy surface $\Sigma$.
\end{theorem}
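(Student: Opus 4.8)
The plan is to assemble the statement from the weak existence result Lemma~\ref{lem:mass-zero-limit-existence_weak} together with the observation that closed test one-forms are invisible to the source-free fields, exactly mirroring the classical Theorem~\ref{thm:limit_existence_sourcefree_classical}. First I would verify that $\kappa_m$ and $\phi_{m,0}$ descend to the quotient by closed forms. Recall from the computation preceding Theorem~\ref{thm:limit_existence_sourcefree_classical} that for any closed $\chi\in\Omega^1_{0,d}(\M)$ one has $G_m\chi = E_m\left(\tfrac{d\delta}{m^2}+1\right)\chi = \tfrac{1}{m^2}E_m(\delta d + d\delta + m^2)\chi = 0$, where $\delta d\chi=0$ is used in the second step and $E_m(\square+m^2)=0$ in the third. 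Hence $\kappa_m(\chi)=(\rhoz G_m\chi,\rhod G_m\chi)=0$, and by linearity of $\kappa_m$ we obtain $\kappa_m(F)=\kappa_m(F')$ whenever $F-F'=\chi$ is closed; passing to the CCR-class then gives $\phi_{m,0}(F)=\phi_{m,0}(F')$. Thus $\kappa_m([F])$ and $\phi_{m,0}([F])$ are well defined on the equivalence classes.

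For the equivalence I would invoke Lemma~\ref{lem:mass-zero-limit-existence_weak}: the limit $\lim_{m\to 0^+}\phi_{m,0}(F)$ exists if and only if $F=F'+F''$ with $dF'=0=\delta F''$. The backward direction is immediate, since if $[F]$ has a co-closed representative $\tilde F$ then $\tilde F=0+\tilde F$ is such a decomposition and $\phi_{m,0}([F])=\phi_{m,0}(\tilde F)$ converges. For the forward direction, existence of $\lim_{m\to0^+}\phi_{m,0}([F])$ means the limit exists for some, hence every, representative, so $F=F'+F''$ with $F''$ co-closed, whence $[F]=[F'']$ and $F''$ is the sought co-closed representative. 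I would then record, as in the classical case, that this representative is unique: if $\tilde F,F''$ are co-closed and differ by a closed form, then $\tilde F-F''$ is both closed and co-closed, hence a compactly supported solution of the massless wave equation $(\delta d + d\delta)(\tilde F - F'')=0$, so it vanishes by \cite[Corollary 3.2.4]{baer_ginoux_pfaeffle}.

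Finally, independence of the Cauchy surface follows at once from the criterion just derived: existence of the limit is equivalent to $[F]$ possessing a co-closed representative, a condition phrased entirely in terms of $[F]$ and $\M$ with no reference to $\Sigma$. The same conclusion can be reached structurally from Lemma~\ref{lem:continuity-independence-source-free}, or from the fact that, via Lemma~\ref{lem:limit_existence_quantum_equivalence}, existence reduces to the $\Sigma$-free condition that $\lim_{m\to0^+}\tfrac{1}{m^2}d\delta E_m F$ exist. Since the genuinely analytic content — the equivalence of algebraic and component-wise convergence established through the symmetrization argument underlying Lemma~\ref{lem:limit_existence_quantum_equivalence} — is already in hand, the only delicate point remaining is bookkeeping: one must check that quotienting by closed forms neither creates nor destroys limits, which is precisely guaranteed by the well-definedness step and the uniqueness of the co-closed representative. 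This is where I would focus the care of the write-up.
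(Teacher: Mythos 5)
Your proposal is correct and follows essentially the same route as the paper: well-definedness on the quotient via $G_m\chi=0$ for closed $\chi$, reduction to Lemma~\ref{lem:mass-zero-limit-existence_weak}, and the identification of the unique co-closed representative exactly as in the classical Theorem~\ref{thm:limit_existence_sourcefree_classical}. Your Cauchy-surface-independence argument (the criterion is phrased purely in terms of $[F]$ and $\M$) is a slightly cleaner packaging of the paper's observation that $G_mF=E_mF$ for co-closed $F$ and that $\kappa_m^{(\Sigma)}(F)$ is continuous for every $\Sigma$, but it is not a different proof.
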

\begin{proof}
	Recall from Theorem \ref{thm:limit_existence_sourcefree_classical} that for closed one-forms $F \in \Omega^1_{0,d}(M)$ it holds $G_m F = 0$, which directly yields $\kappa_m(F) = 0$ and hence $\phi_{m,0}(F) = 0$. Ergo, due to the linearity of the fields, two test one-forms that differ by a closed test one-form give rise to the same physical observable.
	Therefore, we can, without losing any observables, divide out the test one-forms that are closed. \\
	By Lemma \ref{lem:mass-zero-limit-existence_weak}  we know that the limit exists if and only if $F$ is a sum of a closed and a co-closed test one-form.
	Hence, the proof of the existence of the limit $\lim\limits_{m \to 0^+}  \phi_{m,0}([F])$ follows in complete analogy to the proof of Theorem \ref{thm:limit_existence_sourcefree_classical}, that is, basically showing that 	\begin{align}
	\frac{\Omega^1_{0,d}(\M) + \Omega^1_{0,\delta}(\M)}{\Omega^1_{0,d}(\M)} \cong \Omega^1_{0,\delta}(\M) \formspace.
	\end{align}
For the second part, let $F \in \Omega^1_{0,\delta}(\M)$ be a co-closed test one-form. Then it is clear the corresponding limit $\lim\limits_{m \to 0^+} \big[\big( 0 , \kappa_m(F) , 0 , 0, \dots\big)\big]^\text{CCR}_\sim $ exists regardless of the choice of the Cauchy surface, since $G_m F = E_m F$ and the $\rho^{(\Sigma)}_{(\cdot)}$ are continuous for any Cauchy surface $\Sigma$. Furthermore, as already argued, for a fixed $F \in \Omega^1_0(\M)$, continuity of the map $m \mapsto \phi_m(F) = \big[\big( 0 , \kappa_m(F) , 0 , 0, \dots\big)\big]^\text{CCR}_\sim$ is independent of the choice of the Cauchy surface since for two Cauchy surfaces $\Sigma, \Sigma'$ both $m \mapsto \kappa_m^{(\Sigma)}(F)$ and $m \mapsto \kappa_m^{(\Sigma')}(F)$ are continuous. Therefore, the statements in Lemma \ref{lem:limit_existence_quantum_equivalence} are independent of the choice of the Cauchy surface $\Sigma$.
\end{proof}
We find that it is sufficient as well as necessary for the mass zero limit to exist in the source free case to restrict to co-closed test one-forms. And just as in the classical case, as discussed in Section \ref{sec:zero-mass-limit-existence-classical-vanishing-source}, this implements the gauge equivalence of the Maxwell theory at the classical level. But also in the quantization of the vector potential of the Maxwell theory, restricting to co-closed test one-forms $F$ is a way to implement the gauge equivalence is the theory as presented in \cite{Sanders} or \cite{fewster_pfenning_quantum_weak}. Hence also in the quantum case, the limit exists only if we implement the gauge equivalence! Before we investigate the algebraic structures of the zero mass limit fields, we discuss the existence of the limit in the general case, including external sources.
\subsubsection{Existence of the limit in the general case with current}\label{sec:mass-dependence-j-general}
In this next step we would like to include non vanishing external currents $j \neq 0$. So far, most of the quantum investigation, in particular the construction of a notion of continuity of the theory with respect to the mass, has been done for the source-free case $j = 0$. So before we can investigate a zero mass limit for the general theory, we need to again find a notion of continuity for the general field algebra with respect to the mass $m$. Fortunately, we can make use of the existent notion, since we have already constructed a homeomorphism of the general field algebra $\BUmj$ and the BU-algebra of initial data $\Quotientscale{\BU\big(\Dzs\big)}{\ICCR}$ in Section \ref{sec:field-algebra-topology}. This homeomorphism was constructed using a classical solution $\varphi_{m,j}$ to the inhomogeneous Proca equation $(\delta d + m^2) \varphi_{m,j} = j$. Recalling the algebraic structure (see Diagram \ref{dia:final_structure}) we can formulate a natural notion of continuity analogously to the source free case.
\begin{definition}[Continuity of a family of observables with respect to the mass]\label{def:field_continuity_general}
	Let $j \in \Omega^1(\M)$ be fixed. Let $K_m$, $\Gamma_{m,j,\varphi}$, $\Psi_{m,j,\varphi}$ and $\Lambda_{m}$ be defined as  in Section \ref{sec:field-algebra-topology}, and let $\left\{ \varphi_{m,j} \right\}_m$ specify a continuous family of classical solutions to the inhomogeneous Proca equation $(\delta d + m^2) \varphi_{m,j} = j$.\\
	We call a function
	\begin{align}
	\eta : \IR_+ &\to \bigcup_m \BUmj\\
	m &\mapsto \eta_m  \in \BUmj\notag
	\end{align}
	continuous if the map
	\begin{align}
	\widetilde{\eta} : \IR_+ &\to {\Quotientscale{\BU\big( \Dzs \big) }{\ICCR}} \\
	m &\mapsto \big( \Lambda_m \comp  \Psi_{m,j,\varphi}^{-1} \big) (\eta_m)\notag
	\end{align}
	is continuous.\\
	Equivalently, one can identify $\eta_m = [f_m]_{m,j}$ for some family $\left\{ f_m \right\}_m \subset \BUOmega$ and define the map $\eta$ to be continuous if the map
	\begin{align}
	\hat{\eta} : \IR_+ &\to {\Quotientscale{\BU\big( \Dzs \big) }{\ICCR}} \\
	m &\mapsto \big[  \big( K_m \comp  \Gamma_{m,j,\varphi}^{-1} \big) (f_m) \big]_\sim^\text{CCR}\notag
	\end{align}
	is continuous. The latter will again be the more practical definition.
\end{definition}
First, we again would like to check if, with that notion of continuity, the map $m \mapsto \phi_{m,j}(F)$ is continuous for a fixed $F \in \Omega^1_0(\M).$
Hence, we need to check whether $m \mapsto \big[ \big( K_m \comp \Gamma^{-1}_{m,j,\varphi}\big)\big( (0,F,0,0,\dots) \big) \big]_\sim^\text{CCR}$ is continuous. Assume we have fixed a classical solution $\varphi_{m,j} \in \Omega^1(\M)$ by specifying vanishing initial data on the Cauchy surface $\Sigma$. Shortly, we will show that the notion of continuity does not depend on the choice of the classical solution $\varphi_{m,j}$, so we can indeed chose arbitrary initial data without loss of generality. According to Theorem \ref{thm:solution_proca_unconstrained}, the solution is then specified by
\begin{align}
	\langle \varphi_{m,j} , F \rangle = \sum\limits_\pm \langle j , G_m^\mp F \rangle_{\Sigma^\pm} \formspace.
\end{align}
For $m > 0$, this clearly depends continuously on $m$. We therefore find
\begin{align}
\left( K_m \comp \Gamma^{-1}_{m,j,\varphi}\right)\big( (0,F,0,0,\dots) \big)
&= K_m \Big( \big(   \sum\limits_\pm \langle j , G_m^\mp F \rangle_{\Sigma^\pm} , F,0,0,\dots\big)  \Big) \notag\\
&=  \big(   \sum\limits_\pm \langle j , G_m^\mp F \rangle_{\Sigma^\pm} , \kappa_m(F),0,0,\dots\big)  \formspace,
\end{align}
which again is continuous in $m$. Since $[\cdot]_\sim^\text{CCR}$ does not depend on the mass, we find that $m \mapsto \phi_{m,j}(F)$ is continuous. Therefore, the notion of continuity seems appropriate!\\
We have yet to check that the notion is independent of the choice of the Cauchy surface $\Sigma$ and the classical solution $\varphi_{m,j}$.
\begin{lemma}
	Let $\Sigma$, $\Sigma'$ be Cauchy surfaces, $\left\{\varphi_{m,j} \right\}_m$ specify a continuous family of classical solutions to the inhomogeneous Proca equation and $a,b \in \IR^+$ be arbitrary. Let $\left\{ f_m\right\}_m$ be a family in $\BUOmega$. Then
	\begin{align*}
	\eta^{(\Sigma)} : [ a , b] \to {\Quotientscale{\BU(\Dzs  )}{\ICCR}}\hspace{1mm}, \quad m \mapsto \big[ \big( K^{(\Sigma)}_m \comp \Gamma^{-1}_{m,j,\varphi} \big) \big(f_m\big) \big]_\sim^{\text{CCR}, \Sigma}
	\end{align*}
	is continuous if and only if
	\begin{align*}
	\eta^{(\Sigma')} : [ a , b] \to {\Quotientscale{\BU(\Dzsp)}{\ICCR}}, \quad m \mapsto\big[ \big( K^{(\Sigma')}_m \comp \Gamma^{-1}_{m,j,\varphi} \big) \big(f_m\big) \big]_\sim^{\text{CCR}, \Sigma'}
	\end{align*}
	is continuous.
	That means, the notion of continuity defined in Definition \ref{def:field_continuity_general} is independent of the choice of the Cauchy surface $\Sigma$ for $m$ being an element of a compact set.
	Furthermore, the notion of continuity is independent of the choice of the classical solutions $\left\{\varphi_{m,j} \right\}_m$.
\end{lemma}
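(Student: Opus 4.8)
The plan is to dispatch both independence claims by reducing them to machinery already in place, exploiting the fact that the source-dependent notion of continuity is obtained from the source-free one merely by precomposing with the mass-dependent homeomorphism $\Gamma_{m,j,\varphi}^{-1}$. Once this is recognized, the Cauchy-surface independence transfers verbatim from Lemma \ref{lem:continuity-independence-source-free}, while the solution-independence requires one additional reduction followed by the same Banach--Steinhaus argument.

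First I would treat the independence of the Cauchy surface. Observe that $\Gamma_{m,j,\varphi}$ depends only on the pairing $\langle\varphi_{m,j},\cdot\rangle_\M$ and hence carries no reference to any Cauchy surface. Fixing the family $\{\varphi_{m,j}\}_m$, I set $g_m \coloneqq \Gamma_{m,j,\varphi}^{-1}(f_m)\in\BUOmega$, so that by the definition of $\hat\eta$ we have $\hat\eta^{(\Sigma)}(m)=\big[\,(K_m^{(\Sigma)}\comp\Gamma_{m,j,\varphi}^{-1})(f_m)\,\big]_\sim^{\text{CCR},\Sigma}=\big[\,K_m^{(\Sigma)}(g_m)\,\big]_\sim^{\text{CCR},\Sigma}$, and the \emph{same} family $\{g_m\}_m$ appears for $\Sigma'$. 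Since $\{g_m\}_m$ is an arbitrary family in $\BUOmega$, the asserted equivalence of continuity of $m\mapsto[K_m^{(\Sigma)}(g_m)]_\sim^{\text{CCR},\Sigma}$ and of $m\mapsto[K_m^{(\Sigma')}(g_m)]_\sim^{\text{CCR},\Sigma'}$ is precisely the content of Lemma \ref{lem:continuity-independence-source-free}. This settles the Cauchy-surface independence with no further work.

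Next, for independence of the chosen classical solution, let $\{\varphi_{m,j}\}_m$ and $\{\varphi'_{m,j}\}_m$ be two continuous families solving $(\delta d+m^2)\varphi=j$, and put $\psi_m\coloneqq\varphi_{m,j}-\varphi'_{m,j}$, a continuous family solving the \emph{source-free} Proca equation. A short computation on degree-zero and degree-one generators shows that $\Gamma_{m,j,\varphi'}^{-1}\comp\Gamma_{m,j,\varphi}$ equals the source-free map $\Gamma_{m,0,\psi}:(0,F,0,\dots)\mapsto(-\langle\psi_m,F\rangle_\M,F,0,\dots)$. Because $\psi_m$ is homogeneous, Theorem \ref{thm:solution_proca_unconstrained} with $j=0$ gives $\langle\psi_m,F\rangle_\M=\langle\rhoz\psi_m,\rhod G_mF\rangle_\Sigma-\langle\rhod\psi_m,\rhoz G_mF\rangle_\Sigma$, exhibiting it as a continuous linear functional $\ell_m$ of the initial data $\kappa_m(F)=(\rhoz G_mF,\rhod G_mF)$. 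I would then lift $\ell_m$ to the homeomorphism $S_m$ of $\BU(\Dzs)$ that implements the central shift $(0,v,0,\dots)\mapsto(-\ell_m(v),v,0,\dots)$; by the computation already carried out in the proof of Theorem \ref{thm:field-algebra-homeomorphy} it maps the ideal $\ICCR$ into itself and thus descends to a homeomorphism $\bar{S}_m$ of $\Quotient{\BU(\Dzs)}{\ICCRS}$, while $K_m\comp\Gamma_{m,0,\psi}=S_m\comp K_m$ holds on generators. Tracing $f_m$ through the diagram then yields the clean relation $\hat\eta_{\varphi'}(m)=\bar{S}_m\big(\hat\eta_\varphi(m)\big)$.

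The remaining, and genuinely nontrivial, step is to upgrade the pointwise-in-$m$ continuity of $\bar{S}_m$ to continuity of the composite $m\mapsto\bar{S}_m(\hat\eta_\varphi(m))$, and I expect this joint-continuity issue to be the main obstacle, exactly as in Lemma \ref{lem:continuity-independence-source-free}. Here I would argue that $\psi_m$, hence $\rhoz\psi_m$ and $\rhod\psi_m$, hence $\ell_m$, depend continuously on $m$ (using Assumption \ref{ass:propagator_continuity} and the continuity of the $\rho_{(\cdot)}$), so that $\{\bar{S}_m\}_{m\in[a,b]}$ is separately continuous and pointwise bounded on any compact mass interval. Since $\Quotient{\BU(\Dzs)}{\ICCRS}$ is barreled by Lemma \ref{lem:BU-algebra-barreled} together with the closedness of $\ICCR$, the Banach--Steinhaus theorem \cite[Theorem 33.1]{treves} makes the family equicontinuous, delivering joint continuity; composing with the assumed continuity of $\hat\eta_\varphi$ gives continuity of $\hat\eta_{\varphi'}$. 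The converse is immediate by symmetry, since $\bar{S}_m^{-1}$ is the shift by $-\ell_m$ and therefore of the same form. Together the two reductions establish independence of both the Cauchy surface and the classical solution.
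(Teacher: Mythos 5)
Your proof is correct, and both halves rest on the same two structural facts the paper uses --- that $\Gamma_{m,j,\varphi}$ carries no reference to any Cauchy surface, and that the difference $\varphi_{m,j}-\varphi'_{m,j}$ of two solutions is a source-free solution whose pairing with a test form can be rewritten via Theorem \ref{thm:solution_proca_unconstrained} in terms of its initial data and $\kappa_m(F)$ --- but the execution differs from the paper's in both parts. For the Cauchy-surface independence the paper first reduces to homogeneous degree-one families and splits the class into $\langle\varphi_{m,j},F_m\rangle_\M\cdot[\mathbbm{1}]_\sim^{\text{CCR},\Sigma}+[(0,\kappa_m(F_m),0,\dots)]_\sim^{\text{CCR},\Sigma}$, arguing term by term; your substitution $g_m=\Gamma^{-1}_{m,j,\varphi}(f_m)$ makes the claim a literal instance of Lemma \ref{lem:continuity-independence-source-free} applied to the family $\{g_m\}_m$, which is cleaner and also sidesteps the paper's somewhat delicate reduction of an arbitrary family to degree-one ones. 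For the solution independence the paper does \emph{not} run a second equicontinuity argument: it invokes the symmetric-element argument from the proof of Lemma \ref{lem:limit_existence_quantum_equivalence} to upgrade continuity of the equivalence class to degree-wise continuity of the symmetric representative, reads off continuity of $m\mapsto\langle\varphi_{m,j},F_m\rangle_\M$, $m\mapsto\rhoz G_mF_m$ and $m\mapsto\rhod G_mF_m$ separately, and then exhibits $\langle\varphi'_{m,j},F_m\rangle_\M$ as a sum of continuous terms. Your route through the central-shift homeomorphisms $\bar S_m$ and a second Banach--Steinhaus application buys uniformity (it handles arbitrary families in $\BUOmega$ without any degree-one reduction, and the identity $K_m\comp\Gamma_{m,0,\psi}=S_m\comp K_m$ on generators together with $S_m(\ICCR)\subseteq\ICCR$ is verified exactly as in Theorem \ref{thm:field-algebra-homeomorphy}), at the price of re-running the barreledness/equicontinuity machinery; the paper's version buys a more elementary second half at the price of leaning on the symmetrization lemma. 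Both arguments are sound.
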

\begin{proof}
	Note that both $K_m$ and $\Gamma_{m,j,\varphi}$ are defined as BU-algebra-homomorphisms. Furthermore, since they act on homogeneous elements of degree zero as the identity map, the statement clearly holds for those elements. Since every BU-algebra element can be decomposed into a sum of products of homogeneous degree zero and degree one elements, it suffices to prove the statement for homogeneous elements of degree one.
	Let $\left\{ F_m\right\}_m$ be a family in $\Omega^1_0(\M)$ specifying a family $\left\{ f_m\right\}_m= \left\{(0, F_m,0,0,\dots)\right\}_m$ of BU-algebra elements. Let $a,b \in \IR^+$ and $\Sigma, \Sigma'$ be Cauchy surfaces. The map $\eta: [a,b] \to \BUmj$, $\quad \eta : m\to \eta_m = [f_m]_{m,j}$ is assumed to be continuous in the sense that
\begin{align}
	m \mapsto  \left[ \left( K^{(\Sigma)}_m \comp \Gamma^{-1}_{m,j,\varphi} \right) (f_m) \right]_\sim^{\text{CCR}, \Sigma}
\end{align}
is continuous. We calculate
\begin{align}
\big[ \big( K^{(\Sigma)}_m \comp &\Gamma^{-1}_{m,j,\varphi} \big) \big( (0,F_m,0,0,\dots) \big) \big]_\sim^{\text{CCR}, \Sigma} \notag\\
&=  \left[  K^{(\Sigma)}_m  \big( (-\langle \varphi_{m,j},F_m\rangle_\M,F_m,0,0,\dots) \big) \right]_\sim^{\text{CCR}, \Sigma} \notag\\
&=  \langle \varphi_{m,j} , F_m \rangle_\M \cdot \big[  \mathbbm{1}_{\BU(\Dzs)} \big]_\sim^{\text{CCR}, \Sigma}   + \left[ \big(0,\kappa_m(F_m),0,0,\dots\big) \right]_\sim^{\text{CCR}, \Sigma} \formspace.
\end{align}
We note that, using Lemma \ref{lem:continuity-independence-source-free}, the continuity of $m \mapsto  \left[ \big(0,\kappa_m(F_m),0,0,\dots\big) \right]_\sim^{\text{CCR}, \Sigma}$ is independent of the choice of the Cauchy surface $\Sigma$ for $m \in [a,b]$. Moreover, it is obvious that the continuity of $m \mapsto \langle \varphi_{m,j} , F_m \rangle_\M \cdot \big[  \mathbbm{1}_{\BU(\Dzs)} \big]_\sim^{\text{CCR}, \Sigma} $ is also independent of the choice of $\Sigma$ since it is determined by the continuity of $m \mapsto \langle \varphi_{m,j} , F_m \rangle_\M$ which is independent of the Cauchy surface $\Sigma$. Ergo, the defined notion of continuity is independent of the choice of the Cauchy surface $\Sigma$ as long as $m \in [a,b]$.\par
Now, we can check if the notion depends on the choice of the classical solution $\varphi_{m,j}$. \\
Let $\left\{\varphi_{m,j} \right\}_m$ and $\left\{\varphi'_{m,j} \right\}_m$ specify continuous families of classical solutions to the inhomogeneous Proca equation and assume that
\begin{align}
	m &\mapsto \left[ \left( K^{(\Sigma)}_m \comp \Gamma^{-1}_{m,j,\varphi} \right) (f_m) \right]_\sim^{\text{CCR}, \Sigma}
	=\left[   \big( \langle \varphi_{m,j} , F_m \rangle_\M ,\kappa_m(F_m),0,0,\dots\big) \right]_\sim^{\text{CCR}, \Sigma}
\end{align}
is continuous. Since trivially$\big( \langle \varphi_{m,j} , F_m \rangle_\M ,\kappa_m(F_m),0,0,\dots\big)$ is a symmetric element, this is equivalent to
\begin{align}
	m \mapsto   \big( \langle \varphi_{m,j} , F_m \rangle_\M ,\kappa_m(F_m),0,0,\dots\big)
\end{align}
being continuous, following the proof of Lemma \ref{lem:limit_existence_quantum_equivalence} part 1 b) in complete analogy. Since in the topology of the BU-algebra continuity is equivalent to continuity in every degree and recalling $\kappa_m(F_m) = (\rhoz G_m F_m, \rhod G_m F_m)$, we find that
\begin{align}
	m &\mapsto \langle \varphi_{m,j} , F_m \rangle_\M \\
	m &\mapsto \rhoz G_m F_m \quad\text{and}\\
	m &\mapsto \rhod G_m F_m
\end{align}
are continuous. Let $\phi_m,\phi_m'$, $\pi_m,\pi_m'$ denote the initial data of $\varphi_{m,j}$ and $\varphi'_{m,j}$ respectively. By assumption these are continuous in $m$. We calculate
\begin{align}
	\langle \varphi'_{m,j} , F_m \rangle_\M
	&= \langle \varphi'_{m,j} - \varphi_{m,j} , F_m \rangle_\M + \langle \varphi_{m,j} , F_m \rangle_\M \\
	&= \langle \phi'_m - \phi_m, \rhod G_m F_m\rangle_\Sigma - \langle \pi'_m - \pi_m, \rhoz G_m F_m\rangle_\Sigma + \langle \varphi_{m,j} , F_m \rangle_\M \notag
\end{align}
using Theorem \ref{thm:solution_proca_unconstrained}. Using the assumptions on continuity stated above we directly find that $m \mapsto \langle \varphi'_{m,j} , F_m \rangle_\M $ is continuous and hence
\begin{align}
	m \mapsto \left[   \big( \langle \varphi'_{m,j} , F_m \rangle_\M ,\kappa_m(F_m),0,0,\dots\big) \right]_\sim^{\text{CCR}, \Sigma}
\end{align}
is continuous. Therefore, the notion of continuity is independent of the choice of the classical solution $\varphi_{m,j}$.
This completes the proof.
\end{proof}
Choosing vanishing initial data of the classical solution $\varphi_{m,j}$ we will therefore, as in the source free case, from now on identify a field $\phi_{m,j}$ with its initial data formulation $\phi_{m,j}(F) =  \big[\big( \sum_{\pm}\langle j , G_m^\mp F \rangle_{\Sigma^\pm} , \kappa_m(F) , 0 , 0, \dots\big)\big]^\text{CCR}_\sim$. Now, with the work done for the quantum source free case and the general classical case, the result for the general quantum case follows easily:
\begin{theorem}[Existence of the zero mass limit in the general case with sources]\label{thm:limit_existence_general}
Let $F \in \Omega^1_0(\M)$ and $j \in \Omega^1(\M)$. Then,
\begin{center}
	the limit $\lim\limits_{m \to 0^+}  \phi_{m,j}(F) \coloneqq \lim\limits_{m \to 0^+} \big[\big( \sum_{\pm}\langle j , G_m^\mp F \rangle_{\Sigma^\pm} , \kappa_m(F) , 0 , 0, \dots\big)\big]^\text{CCR}_\sim $ exists\\[2.5mm] if and only if ${\delta F} = 0 $.
\end{center}
Furthermore, the existence of the limit is independent of the choice of the Cauchy surface $\Sigma$.
\end{theorem}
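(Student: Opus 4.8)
The plan is to reduce the existence of the quantum limit to the existence of the classical limits already classified in Theorem \ref{thm:limit_existence_general_classical}, exactly mirroring the reduction carried out for the source-free case in Lemma \ref{lem:limit_existence_quantum_equivalence}. The representative $\big( \sum_{\pm}\langle j , G_m^\mp F \rangle_{\Sigma^\pm} , \kappa_m(F) , 0 , 0, \dots\big) \in \BU(\Dzs)$ is symmetric, since its only nonvanishing components sit in degree zero and degree one, which are trivially symmetric. Hence the hard direction of Lemma \ref{lem:limit_existence_quantum_equivalence}, which rests on the symmetrization argument of Lemma \ref{lem:symmetrization-of-fields} that the only fully symmetric element of $\ICCR$ is zero, applies verbatim: if the equivalence class $[\,\cdot\,]_\sim^{\text{CCR}}$ converges, then the symmetric representative itself converges in $\BU(\Dzs)$. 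This turns the quotient-level statement into a statement about convergence in $\BU(\Dzs)$.

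First I would use that convergence in the inductive-limit topology of $\BU(\Dzs)$ is convergence in every degree, and that convergence in the direct-sum topology of $\Dzs = \Omega^1_0(\Sigma) \oplus \Omega^1_0(\Sigma)$ is convergence in each summand. Thus $\lim_{m\to 0^+} \phi_{m,j}(F)$ exists if and only if the three classical limits $\lim_{m\to 0^+} \sum_\pm \langle j, G_m^\mp F\rangle_{\Sigma^\pm}$ (degree zero), $\lim_{m\to 0^+} \rhoz G_m F$ and $\lim_{m\to 0^+} \rhod G_m F$ (degree one) exist. For the degree-zero term I would reproduce the observation of Section \ref{sec:zero-mass-limit-existence-classical-general-source}: demanding that the limit be independent of the Cauchy surface forces $\lim_{m\to 0^+}\langle j, G_m^+ F\rangle_{\Sigma^-}$ and $\lim_{m\to 0^+}\langle j, G_m^- F\rangle_{\Sigma^+}$ to converge separately, which in turn is encoded by requiring $\lim_{m\to 0^+} G_m^\pm F$ to exist.

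With these reductions in place, the statement follows by invoking Theorem \ref{thm:limit_existence_general_classical} directly: the simultaneous existence of $\lim_{m\to 0^+} G_m^\pm F$, $\lim_{m\to 0^+} \rhoz G_m F$ and $\lim_{m\to 0^+} \rhod G_m F$ holds if and only if $\delta F = 0$. For the sufficiency direction I would note concretely that $\delta F = 0$ gives $G_m^\pm F = E_m^\pm F$, since $G_m^\pm = \tfrac{1}{m^2}(d\delta + m^2)E_m^\pm$ and $\delta$ commutes with $E_m^\pm$, so Assumption \ref{ass:propagator_continuity} makes every component converge, while the support property $\supp{E_m^\mp F} \subset J^\mp(\supp{F})$ keeps the domains of the degree-zero integrals controlled so that those converge as well. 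Independence of the Cauchy surface then follows from the corresponding classical statement in Theorem \ref{thm:limit_existence_general_classical} together with the transport argument of Lemma \ref{lem:continuity-independence-source-free}, since $\kappa_m^{(\Sigma)}(F)$ and the source integral are continuous in $m$ for every Cauchy surface.

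The main obstacle I expect is purely the bookkeeping of the degree-zero source term: unlike the initial-data terms, its convergence is a smeared integral statement, and one must argue carefully, via the Cauchy-surface-independence requirement, that it genuinely detects $\lim_{m\to 0^+} G_m^\pm F$ rather than some weaker averaged limit. Everything quantum-specific, namely the symmetrization reduction, has already been absorbed into Lemma \ref{lem:limit_existence_quantum_equivalence} and Lemma \ref{lem:symmetrization-of-fields}, so no genuinely new algebraic difficulty arises here.
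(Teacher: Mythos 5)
Your proposal is correct and follows essentially the same route as the paper: reduce the quotient-level limit to convergence of the symmetric representative via Lemma \ref{lem:limit_existence_quantum_equivalence} and Lemma \ref{lem:symmetrization-of-fields}, use degree-wise convergence in $\BU(\Dzs)$ to split into the degree-zero source term and the initial-data terms, and then invoke Theorem \ref{thm:limit_existence_general_classical}. If anything, you are slightly more explicit than the paper about why the degree-zero term must be read as requiring $\lim_{m\to 0^+} G_m^\pm F$ to exist separately (via Cauchy-surface independence), a point the paper delegates to Section \ref{sec:zero-mass-limit-existence-classical-general-source} with the phrase ``this corresponds exactly to the classical situation.''
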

\begin{proof}
	First, we note that since the notion of continuity is independent of the choice of the classical solution $\varphi_{m,j}$ of Proca's equation, we can choose $\varphi_{m,j}$ such that it has vanishing initial data on the Cauchy surface $\Sigma$. It then holds for all test one-forms $F$, using Theorem \ref{thm:solution_proca_unconstrained}, that $\langle \varphi_{m,j} , F \rangle_\M = \sum_{\pm}\langle j , G_m^\mp F \rangle_{\Sigma^\pm}$. Therefore, by the notion of continuity defined above, the zero mass limit is identified with
	\begin{align}
	\lim\limits_{m \to 0^+}  \phi_{m,j}(F)
	&= \lim\limits_{m \to 0^+} \big[\big( K_m \comp \Gamma^{-1}_{m,j,\varphi} \big)\big(  (0,F,0,0,\dots)   \big)\big]^\text{CCR}_\sim \notag \\
	&= \lim\limits_{m \to 0^+} \big[K_m \big(  (\sum_{\pm} \langle \varphi_{m,j} , F \rangle_\M,F,0,0,\dots)   \big)\big]^\text{CCR}_\sim 	    \notag\\
	&= \lim\limits_{m \to 0^+} \big[\big( \sum_{\pm} \langle j , G_m^\mp F \rangle_{\Sigma^\pm} , \kappa_m(F) , 0 , 0, \dots\big)\big]^\text{CCR}_\sim  \formspace.
	\end{align}
Now, using the same argument as presented in the proof of Lemma \ref{lem:limit_existence_quantum_equivalence}, because $\big( \sum_{\pm} \langle j , G_m^\mp F \rangle_{\Sigma^\pm} , \kappa_m(F) , 0 , 0, \dots\big)$ is symmetric, the limit
\begin{align}
\lim\limits_{m \to 0^+} \big[\big( \sum_{\pm} \langle j , G_m^\mp F \rangle_{\Sigma^\pm} , \kappa_m(F) , 0 , 0, \dots\big)\big]^\text{CCR}_\sim
\end{align} exists if and only the limit
\begin{align}
\lim\limits_{m \to 0^+} \big( \sum_{\pm} \langle j , G_m^\mp F \rangle_{\Sigma^\pm} , \kappa_m(F) , 0 , 0, \dots\big)
\end{align}
exists.
	Since convergence in the BU-algebra means convergence in every degree, we find that the desired limit exists if and only if the limits $\lim\limits_{m \to 0^+}\sum_{\pm} \langle j , G_m^\mp F \rangle_{\Sigma^\pm}$ and $\lim\limits_{m \to 0^+} \kappa_m(F) $ exist. Recalling Lemma \ref{lem:limit_existence_quantum_equivalence} the latter is equivalent to the existence of the limit $\lim\limits_{m \to 0^+} \frac{1}{m^2}E_m d \delta F$. Together, this corresponds exactly to the classical situation and using Theorem \ref{thm:limit_existence_general_classical} completes the proof.
\end{proof}
Just as in the classical case, also with external sources present, the limit exists if and only if we implement the gauge equivalence of Maxwell's theory as discussed in Section \ref{sec:zero-mass-limit-existence-classical-vanishing-source}.
\subsubsection{Algebra relations, dynamics and the zero mass limit}\label{sec:zero-mass-limit-quantum-algebra-relations}
Now that we have classified the existence of the zero mass limit of the quantum Proca field in curved spacetimes we want study the algebra relations of the fields in the zero mass limit. We would like to compare these with the ones obtained from the quantization of the Maxwell field.\par
Identifying the fields $\phi_{0,j}(F)$ with the zero mass limit of
\begin{align*}
\phi_{m,j}(F) = \big[\big( \sum_{\pm}\langle j , G_m^\mp F \rangle_{\Sigma^\pm} , \kappa_m(F) , 0 , 0, \dots\big)\big]^\text{CCR}_\sim \formspace,
\end{align*}
we define the algebra $\mathscr{A_0}$ as the algebra generated by $\mathbbm{1}$ and the symbols $\phi_{0,j}(F)$ for any co-closed test one-form $F$. Using the algebra relations of the fields $\phi_{m,j}$, it is clear that in the zero mass limit the fields are subject to the relations
\begin{align}
	1.)\; &\phi_{0,j}(\alpha F + \beta F') = \alpha \phi_{0,j}(F) + \beta \phi_{0,j}(F') 														 \\
	2.)\; &\phi_{0,j}(F)^* = \phi_{0,j}(\skoverline{F}\,)
\end{align}
for all $F \in \Omega^1_{0,\delta}(\M)$ and $\alpha, \beta \in \IC$, corresponding to the linearity and the real field property.
For the canonical commutation relations we calculate, $F,F' \in \Omega^1_{0,\delta}(\M)$,
\begin{align}
\gls{Ezcurly}(F,F')
&=\langle F, E_0 F' \rangle_\M \notag\\
&= \langle E_0 F' , F \rangle_\M \notag\\
&= \langle \rhoz E_0 F' , \rhod E_0 F  \rangle_\Sigma + \langle \rhodelta E_0 F' , \rhon E_0 F \rangle_\Sigma \notag \notag\\
&\phantom{=I}- \langle  \rhod E_0 F' , \rhoz E_0 F \rangle_\Sigma - \langle \rhon E_0 F', \rhodelta E_0 F \rangle_\Sigma \formspace.
\end{align}
We have used that $E_0 F'$ solves a homogeneous massless wave equation to which the solution is determined by initial data using Theorem \ref{thm:solution_wave_equation}. Using $\rhodelta E_0 F' = i^* \delta E_0 F' = i^* E_0 \delta F' = 0$ and the analogue expression for $F$, we find
\begin{align}
\Ez{F}{F'}
&= \langle \rhoz E_0 F' , \rhod E_0 F  \rangle_\Sigma - \langle  \rhod E_0 F' , \rhoz E_0 F \rangle_\Sigma \notag \\
&= \lim\limits_{m \to 0^+} \Big( \langle \rhoz E_m F' , \rhod E_m F  \rangle_\Sigma - \langle  \rhod E_m F' , \rhoz E_m F \rangle_\Sigma  \Big) \notag \\
&= \lim\limits_{m \to 0^+} \Big( \langle \rhoz G_m F' , \rhod G_m F  \rangle_\Sigma - \langle  \rhod G_m F' , \rhoz G_m F \rangle_\Sigma  \Big) \notag \\
&= \lim\limits_{m \to 0^+}  \Gm{F}{F'} \formspace.
\end{align}
Again we have used that for co-closed test one-forms $F$ it holds $G_m F = E_m F$.
Since for co-closed test one-forms $F \in \Omega^1_{0,\delta}$ the fundamental solutions $E^\pm_0$ of the massless Klein-Gordon operator are actually also fundamental solutions to Maxwell's equation, $E_0^\pm \delta d F = E_0^\pm (\delta d + d \delta) F = F$, we find that the fields in the zero mass limit are subject to the correct canonical commutation relations
\begin{align}
	\big[ \phi_{0,j}(F) ,\phi_{0,j}(F')  \big]
	&= \lim\limits_{m \to 0^+} \big[ \phi_{m,j}(F) ,\phi_{m,j}(F')  \big]  \notag\\
	&=\i \cdot \lim\limits_{m \to 0^+} \Gm{F}{F'}\cdot \mathbbm{1} \notag\\
	&= \i \, \Ez{F}{F'}\cdot \mathbbm{1}
\end{align}
for all $F,F' \in \Omega^1_{0,\delta}(\M)$. So far, this also corresponds perfectly to the relations presented in \cite[Definition 4.5]{Sanders}.
Note that now $\Ez{F}{F'}$, for $F,F' \in \Omega^1_{0,\delta}(\M)$, is in general degenerate, hence the quantum field theory associated with $\phi_{0,j}$ will in general fail to be local in the sense defined in Definition \ref{def:generally-coveriant-qftcs}. This is in detail discussed in \cite{Sanders}. \par
It remains to check the dynamics of the theory. We want to check if the fields solve Maxwell's equation in a distributional sense, that is if $\phi_{0,j}(\delta d F) = \langle j , F\rangle_\M$ holds for all $F\in \Omega^1_0(\M)$. Since it holds for all $F \in \Omega^1_0(\M)$ that $\delta d F$ is co-closed, the limit
\begin{align}
	\phi_{0,j} (\delta d F) =  \lim\limits_{m \to 0^+} \big[\big( \sum_{\pm}\langle j , G_m^\mp \delta d F \rangle_{\Sigma^\pm} , \kappa_m(\delta d F) , 0 , 0, \dots\big)\big]^\text{CCR}_\sim
\end{align}
exists for all test one-forms $F$. We use $G_m \delta d F = E_m \delta d F$ and find
\begin{align}
\phi_{0,j} (\delta d F) = \lim\limits_{m \to 0^+} \big[\big( \sum_{\pm}\langle j , E_m^\mp \delta d F \rangle_{\Sigma^\pm} , (\rhoz E_m \delta d F, \rhod E_m \delta d F) , 0 , 0, \dots\big)\big]^\text{CCR}_\sim \formspace.
\end{align}
Using $E^\pm_0 \delta d F = F - (d \delta + m^2) E^\pm_0 F$, we find
\begin{align}
\phi_{0,j} &(\delta d F) \notag \\
&= \lim\limits_{m \to 0^+}\Big(  \big[\big( \sum_{\pm}\langle j ,  F \rangle_{\Sigma^\pm} - \sum_{\pm}\langle j , E_m^\mp d\delta  F \rangle_{\Sigma^\pm} ,- (\rhoz E_m d \delta F, \rhod E_m d \delta F) , 0 , 0, \dots\big)\big]^\text{CCR}_\sim \notag \\
&\phantom{=I} - m^2 \big[\big( \sum_{\pm}\langle j ,  E_m^\mp F \rangle_{\Sigma^\pm}  , (\rhoz E_m  F, \rhod E_m  F) , 0 , 0, \dots\big)\big]^\text{CCR}_\sim\Big) \formspace.
\end{align}
Since the term proportional to $m^2$ is continuous and bounded in every degree and hence vanishes in the limit and using that $\rhod E_m d \delta F = \rhon d E_m d \delta F =0$, we find
\begin{align}\label{eqn:zero-mass-limit-field-temp}
\phi_{0,j} (\delta d F)
&= \langle j ,  F \rangle_{\M} - \lim\limits_{m \to 0^+}\big[ \big(  \sum_{\pm}\langle j , E_m^\mp d\delta  F \rangle_{\Sigma^\pm} , (d_{(\Sigma)} \rhoz E_m  \delta F, 0) , 0 , 0, \dots\big)\big]^\text{CCR}_\sim \notag\\
&= \langle j ,  F \rangle_{\M} - \big[ \big(  \sum_{\pm}\langle j , E_0^\mp d\delta  F \rangle_{\Sigma^\pm} , (d_{(\Sigma)} \rhoz E_0  \delta F, 0) , 0 , 0, \dots\big)\big]^\text{CCR}_\sim  \formspace.
\end{align}
Note that there appears a remainder $\big[ \big(  \sum_{\pm}\langle j , E_0^\mp d\delta  F \rangle_{\Sigma^\pm} , (d_{(\Sigma)} \rhoz E_0  \delta F, 0) , 0 , 0, \dots\big)\big]^\text{CCR}_\sim$ which, in order for the quantum fields to solve the correct dynamics, should vanish. It turns out that this is in general not the case. We have encountered a similar situation in the investigation of the classical theory in Section \ref{sec:limit_dynamics_classical}. There we could get rid of similar remaining terms by restricting the initial data such that the Lorenz constraint is well behaved in the limit.
In the quantum scenario this will also partly solve the problem, but note that the construction of the quantum theory is in that point quite different from the classical construction as we directly impose the dynamics by dividing out corresponding ideals, rather than first solving a wave equation and restrict to those solutions that fulfill the Lorenz constraint. Therefore, at the quantum level, the Lorenz constraint does not appear directly. It does, however, appear indirectly as we have fixed a classical solution $\varphi_{m,j}$ to map the source dependent theory to the source free theory. And it is with this homeomorphism where one part of the problem lies: \\
Recall that we have mapped the source dependent theory to the source free theory via the homeomorphism $\Gamma_{m,j,\varphi}$, choosing a classical solution to Proca's equation $(\delta d + m^2)\varphi_{m,j} = j$. We have used that the notion of continuity does not depend on the initial data that we choose for $\varphi_{m,j}$ and, for simplicity, we chose vanishing initial data. Using Theorem \ref{thm:solution_proca_unconstrained} for the classical solution $\varphi_{m,j}$ with vanishing initial data, this gave for some $F \in \Omega^1_0(\M)$
\begin{align}
	\Gamma^{-1}_{m,j,\varphi}\big( (0,F,0,0,\dots) \big)
	&= \langle \varphi_{m,j} , F \rangle_\M \cdot \mathbbm{1} + (0,F,0,0,\dots)   \notag \\
	&= \sum_{\pm}\langle j ,G_m^\mp F \rangle_{\Sigma^\pm} \cdot \mathbbm{1} + (0,F,0,0,\dots) \formspace.
\end{align}
For non-zero $m$, this is not a problem. But in the zero mass limit, $\lim\limits_{m \to 0^+} \langle \varphi_{m,j} , F \rangle_\M = \lim\limits_{m \to 0^+}  \sum_{\pm}\langle j ,G_m^\mp F \rangle_{\Sigma^\pm}$ will not specify a solution to Maxwell's equation as the Lorenz constraint expressed by the initial data,
\begin{align}
	\rhodelta \varphi_{m,j} &= \frac{1}{m^2}\rhodelta j \; , \quad \text{and} \\
	\rhon \varphi_{m,j} &= \frac{1}{m^2}\left( \rhon j  + \delta_{(\Sigma)} \rhod \varphi_{m,j} \right) \formspace,
\end{align}
is not well behaved (as we have chosen $\rhod \varphi_{m,j}=0$). As discussed in the classical Section \ref{sec:limit_dynamics_classical}, we need to impose $\delta j=0$ and $\rhon j= -\delta_{(\Sigma)} \pi$, where $\pi = \rhod \varphi_{m,j}$. Choosing vanishing initial data, we have violated the latter for non-zero external sources! Ergo, the homeomorphism between the source free and source dependent theory will only be well behaved in the limit if we set
\begin{align}
	\Gamma^{-1}_{m,j,\varphi} \big( (0,F,0,0,\dots) \big)
	&= \langle \varphi_{m,j} , F \rangle_\M \cdot \mathbbm{1} +(0,F,0,0,\dots)  \\
	&= \big(  \sum_{\pm}\langle j ,G_m^\mp F \rangle_{\Sigma^\pm} - \langle \pi , \rhoz G_m F\rangle_\Sigma \big) \cdot \mathbbm{1} +(0,F,0,0,\dots) \formspace,\notag
\end{align}
such that $\delta j=0$ and $\rhon j= -\delta_{(\Sigma)} \pi$.\\
Concluding, we have to define the zero mass limit field as
\begin{align}
	\phi_{0,j}(F) = \lim_{m \to 0^+} \big[\big( \sum_{\pm}\langle j , G_m^\mp F \rangle_{\Sigma^\pm} - \langle \pi , \rhoz G_m F \rangle_\Sigma, \kappa_0(F) , 0 , 0, \dots\big)\big]^\text{CCR}_\sim \formspace,
\end{align}
such that $\delta j=0$ and $\rhon j= -\delta_{(\Sigma)} \pi$. The limit exists if and only if $F$ is co-closed. With this we obtain
\begin{align}
	\phi_{0,j} (\delta d F)
	&= \langle j ,  F \rangle_{\M} \notag \\
	&\phantom{=I}- \big[ \big(  \sum_{\pm}\langle j , E_0^\mp d\delta  F \rangle_{\Sigma^\pm} - \langle \pi , \rhoz E_0 F \rangle_\Sigma , (d_{(\Sigma)} \rhoz E_0  \delta F, 0) , 0 , 0, \dots\big)\big]^\text{CCR}_\sim \notag \\
	&= \langle j ,  F \rangle_{\M}
	- \big[ \big( 0 , (d_{(\Sigma)} \rhoz E_0  \delta F, 0) , 0 , 0, \dots\big)\big]^\text{CCR}_\sim
\end{align}
since $\sum_{\pm}\langle j , E_0^\mp d\delta  F \rangle_{\Sigma^\pm} = \langle \pi , \rhoz E_0 F \rangle_\Sigma$ as we have shown in Section \ref{sec:limit_dynamics_classical}.\par
At this point, the remainder $\big[ \big( 0 , (d_{(\Sigma)} \rhoz E_0  \delta F, 0) , 0 , 0, \dots\big)\big]^\text{CCR}_\sim $ does not seem to vanish naturally. Its appearance might be explained by gauge equivalence:
Note that $-E_0 \delta dF$ solves the source free Maxwell equation, $-\delta d E_0 \delta d F = \delta d  E_0 d \delta F = 0$, and has initial data $-(\rhoz E_0 \delta d F, \rhod E_0 \delta d F) =  (d_{(\Sigma)} \rhoz E_0  \delta F, 0)$ as we have calculated before. From the classical investigation, see for example \cite{Sanders} or \cite{pfenning} which works in the same initial data formalism as we do here, we know that two solutions $A, A'$ to Maxwell's equation are gauge equivalent if and only if their initial values are gauge equivalent, that is, if  $\Az = \Azp + d\chi$ for some $\chi \in \Omega^0_0(\Sigma)$ (see \cite[Proposition 2.13]{pfenning}). With this we find that the solution $-E_0 \delta dF$ is gauge equivalent to zero and $(d_{(\Sigma)} \rhoz E_0  \delta F, 0) \sim_\text{gauge} 0$. This would give rise to a quantum field with the correct dynamics implemented. But this gauge equivalence on the level of the observables rather than the fields does not seem to come out of the limiting procedure naturally! It might be possible to find natural conditions that hold in the limiting procedure (as we have demanded that the homeomorphism $\Gamma_{m,j,\varphi}$ should be well behaved in the zero mass limit) to obtain the correct dynamics in the limit. Another possibility is to include states in the investigation and formulate a similar limiting process which might give rise to conditions identifying the remaining observables with zero. Alternatively, one could even go as far as concluding that the Proca field is unsuitable to describe massive electrodynamics with a well defined zero mass limit. A recent study \cite{stueckelberg_curvedST} suggests to abandon the investigation of Proca's theory as it is unsuited for the quantum investigation and rather examine Stueckelberg massive electromagnetism. As it is argued in \cite{stueckelberg_curvedST}, Proca's theory is nothing more than Stueckelberg's in a certain gauge that seems unfit for the investigation of the zero mass limit. It might therefore be of interest to apply the construction presented in this thesis to Stueckelberg electromagnetism, but since Stueckelberg's theory involves interaction with a scalar field, it is a priori not clear if that is possible.\par
Unfortunately within the (time) scope of this thesis, these ideas were not further studied but they are worthwhile investigating in future research projects.

\section{Conclusion and Outlook} \label{chpt::conclusion}
In this thesis we have studied the Proca field in curved spacetimes, including external sources and without restrictive assumptions on the topology of the spacetime, in the classical and the quantum case. We have rigorously constructed classical solutions to the Proca equation by decomposing the equation into a wave equation and a Lorenz constraint. After solving the wave equation we implemented the constraint by restricting the initial data. Investigating the classical zero mass limit, we found that the limit exists if we implement a gauge invariance of the distributional solutions to Proca's equation by exact distributional one-forms. This gauge is independently argued to be the correct physical gauge by Sanders, Dappiaggi and Hack \cite{Sanders} as it accounts for phenomena such as the Aharonov-Bohm effect, as opposed to a gauge invariance by closed distributional one-forms. In the zero mass limit, we find the correct Maxwell dynamics if we make sure that the Lorenz constraint is well behaved in the limit. This naturally leads to conservation of current and a restriction of the initial data as found in the Maxwell case by Pfenning \cite{pfenning}.\par
Studying the quantum problem, we first constructed the generally covariant QFTCS in the framework of Brunetti, Fredenhagen and Verch and proved that the theory is local. Choosing the Borchers-Uhlmann algebra as the algebra of observables we rigorously constructed an initial data formulation of the quantum Proca theory in curved spacetimes. With this initial data formulation we were able to define a precise notion of continuity of the Proca fields with respect to the mass: Using specifically constructed BU-algebra homeomorphism, we were able to map a family of Proca fields at different masses, initially elements in different BU-algebras, into one topological space, the BU-algebra of initial data. In the BU-algebra of initial data there is a natural notion of continuity provided by the topology. Using this notion of continuity, we studied the zero mass limit and showed that the limit exists if and only if we restrict the class of test one-forms to those that are co-closed. Analogously to the classical case, this effectively implements a gauge invariance by exact distributional fields. In the zero mass limit, we find that the obtained fields fulfill the basic properties of linearity, the real field property and the correct CCR, that is the same as in the Maxwell case, but we do not find the expected Maxwell dynamics. Unlike in the classical case, this is not caused by ill behaved constraints, since in the quantum case the dynamics are implemented directly by dividing out appropriate dynamical ideals instead of first defining solutions to a wave equation and implement a Lorenz constraint. It is not clear how to obtain the Maxwell dynamics naturally in the zero mass limit. Within the presented framework, it might be possible to find natural additional assumptions, for example demanding continuity of the defined BU-algebra homeomorphisms with respect to the mass, to restore the dynamics. Other approaches, for example an investigation of states and the zero mass limit, are worth considering in future research projects and might lead to a full description of Maxwell's theory as a limit of Proca's theory. As we argue in Appendix \ref{app:weyl-algebra}, a C*-Weyl algebra approach is ill-suited for the investigation of the zero mass limit. A recent argument by Belokogne and Folacci \cite{stueckelberg_curvedST} states that the Proca theory is indeed unfit to study the zero mass limit in the quantum case and should be replaced by Stueckelberg electromagnetism. It is our hope that the presented construction can be adapted to Stueckelberg's theory, but as Stueckelberg's theory includes interaction with a scalar field, it is not clear whether this is possible. \par
Further possible application of the presented construction is the investigation of locality in the zero mass limit. Since Proca's theory is local, as opposed the Maxwell's \cite{Dappiaggi2012,Sanders}, one might gain insight into this issue with the presented initial data formulation and the zero mass limit.\par
Moreover, it is of crucial importance to this thesis that Assumption \ref{ass:propagator_continuity} holds. While this seems reasonable, it is suggested to investigate a proof of the assumption, for example by a use of energy estimates. Within the scope of this thesis a deeper investigation of the assumption was not possible and is left open for future projects.



\newpage
\appendix

\section{The C*-Weyl Algebra as the Field Algebra}\label{app:weyl-algebra}
In this chapter we investigate the C*-Weyl algebra as the field algebra for the quantum Proca field theory.  The Weyl algebra is often used as the field algebra, both in algebraic quantum field theory in Minkowski space and in quantum field theory in curved spacetimes: Field operators obtained by GNS-construction (for details see e.g. \cite[Chapter III.]{haag} or \cite[Chapter III.14]{fragoulopoulou}) from the Weyl algebra are bounded, unlike the ones obtained from the field-algebra constructed from the free algebra. Since the product of such unbounded operators, which appear for example when calculating the commutator of field operators, are in general not well defined, it is useful to go over to bounded ones. Furthermore, the Weyl algebra is a normed algebra which makes it, at first glance, suitable for our purposes of finding a notion of continuity of the Proca field theory with respect to the mass. It turns out that similar investigations have been made in the literature:  In \cite{rieckers_honegger_deformation}, \name{Binz}, \name{Honegger} and \name{Rieckers} investigate the limit $\hbar \to 0$ for a family of Weyl algebras generated over an arbitrary pre-symplectic space, for which they introduce the notion of continuous fields of C*-algebras, which we will adapt to formulate a mass dependence of the Proca field theory.\par
In Section \ref{sec:continuous-families-of-symplectic-forms} we will generalize some mathematical results from \cite{rieckers_honegger_deformation} regarding continuous families of pre-symplectic forms and the corresponding C*-Weyl algebras. At this point, the dynamics of the theory are not implemented. This is done in a second step in Section \ref{sec:weyl-algebra-dynamics}. Unfortunately, it turns out that a notion of continuity with respect to the mass for the Weyl algebra formulation does not work in the wanted generality. This is why the use of the Weyl algebra is unsuited for the investigation of the zero mass limit and was discarded. Nevertheless, we present the results in this appendix to on the one hand illustrate why the ansatz is not suited for our problem and on the other hand to present the mathematical results in Section \ref{sec:continuous-families-of-symplectic-forms} that to our knowledge have not been discussed in the literature.
Most of the notation in this chapter is adapted from \cite{rieckers_honegger_deformation}.
\subsection{On C*-Weyl algebras and continuous family of pre-symplectic forms}\label{sec:continuous-families-of-symplectic-forms}
In this section we will generalize results from in \cite{rieckers_honegger_deformation} and \cite{rieckers_honegger_construction} regarding C*-Weyl algebras generated over a \emph{continuous family} of pre-symplectic spaces.
First, we briefly review the general construction of the C*-Weyl algebra over a pre-symplectic space. This field algebra will depend on the mass, even though we will only implement the commutation relations and not the dynamics yet. We would then like to find a notion of comparing these Weyl algebras at different masses with each other, using the notion of continuous fields of C*-algebras, similar to \cite{rieckers_honegger_deformation}. In our case, these algebras are generated over a \emph{continuous family} of pre-symplectic spaces, which makes it necessary to generalize the results obtained in \cite{rieckers_honegger_deformation} and \cite{rieckers_honegger_construction}. Just as in the discussion of the quantum problem using the BU-algebra, we need the Proca propagator to be continuous in the mass, that is, we need Assumption \ref{ass:propagator_continuity} to hold.
To start, we investigate the Weyl algebra generated over the real \emph{pre}-symplectic space\footnote{Unlike in the previous chapters, here we will view $\Omega^1_0(\M)$ as a real vector space.} $\big(\Omega^1_0(\M),\mathcal{G}_m\big)$ and, for simplicity, we write in this chapter $\gls{E} \equiv \Omega^1_0(\M)$ and keep in mind that, of course, there is an underlying manifold structure. 
Generalizing to a pre-symplectic space, rather than a symplectic one, does not have any effects on the construction (and uniqueness) of the C*-Weyl algebra of observables (see \cite{rieckers_honegger_construction}), but the constructed algebra will not be simple.
\begin{definition}[Weyl elements]
\label{def:weyl_elements}
For any $m >0$ and $F \in E$ we define the linearly independent \emph{Weyl-elements} \gls{WmF}, such that for all $F,F' \in E$ it holds
\begin{align}
\textrm{(i)} \quad&W_m(F)W_m(F') = \e^{-\i \Gm{F}{F'}/2 } W_m(F + F') \formspace,\\
\textrm{(ii)} \quad&W_m(F)^* = W_m(-F) \formspace.
\end{align}
\end{definition}
Condition (i) of the above definition is also known as the \emph{Weyl-form} of the CCR.
From these properties, it immediately follows that
\begin{align} 
W_m(0) W_m(F) 
&= W_m(F) W_m(0)\notag\\
&= \e^{-\i \Gm{F}{0}/2} W_m(F)\notag\\
&= W_m(F) \\  
\implies W_m(0) &= \mathbbm{1} \formspace,
\end{align}
from which it follows that
\begin{align}
\mathbbm{1}
&= W_m(0) \notag\\
&= W_m(F-F)\notag\\
&= W_m(F)^* W_m(F)\notag\\
&= W_m(F) W_m(F)^* \formspace,
\end{align}
that is, the Weyl elements are unitary.
Using these linearly independent Weyl elements, one can define the span of the Weyl elements:
\begin{align}
\gls{Wmtilde} \coloneqq \Span{W_m(F) : F \in E} \formspace.
\end{align}
Together with the above defined (twisted) product and $^*$-operation, $\widetilde{\mathcal{W}}_m$ becomes a unital $^*$-algebra, where the unit is given by $\mathbbm{1} = W_m(0)$.
Now, we would like to endow $\widetilde{\mathcal{W}}_m$ with a (unique) norm to define a C*-algebra as the norm closure of $\widetilde{\mathcal{W}}_m$. For this, we need the notion of a \emph{state} on $\widetilde{\mathcal{W}}_m$:
%
\begin{definition}[States]\label{def:states}
	Define \gls{CEGm} as the convex set of normalized, projectively positive functions $C : E \to \IC$. That is, for $C \in \mathcal{C}(E,\Gm{})$ it holds by definition:
	 \begin{subequations} 
	 	\begin{align}
	 		& C(0) = 1 																															&\textrm{(normalization)} \\
	 		& \sum\limits_{i,j=1}^N \bar{z}_i z_j \,\e^{\i \Gm{F_i}{F_j}/2}\, C(F_j - F_i) \geq 0 		&\textrm{(positiveness)}	 		
	 	\end{align}
	 	for every $N \in \IN$, $z_i \in \IC$  and $F_i \in E$.
	 \end{subequations}
	 Spanning the convex set of these function means that for every linear combination of $C$'s the corresponding coefficients add up to one. This is to ensure that linear combinations of states are also normalized.
	 To each $C \in \mathcal{C}(E,\Gm{})$ we associate a unique positive linear functional $\gls{omegaC} : \widetilde{\mathcal{W}}_m \to \IC $ via
	 \begin{align}
	 \omega_C \big(W_m(F)\big) = C(F)
	 \end{align}
	 for all $F \in E$. By the properties of $C$ it then holds for every $A \in \widetilde{\mathcal{W}}_m$ :
	 	\begin{align}
	 	\quad & \omega_C(\mathbbm{1}) = 1 			\\																												
	 	\quad & \omega_C (A^* A) \geq 0 		\formspace.	 	
	 	\end{align}	 
\end{definition}
In fact, every state $\omega_C$ on $\widetilde{\mathcal{W}}_m $ corresponds to a \emph{unique} $C \in \mathcal{C}(E,\Gm{})$	(see for example \cite[Chapter 3]{rieckers_honegger_deformation}). With this notion of a state we are able to define a C*-norm on the algebra $\widetilde{\mathcal{W}}_m$:
\begin{definition}[Weyl Algebra]
	Let $A \in \widetilde{\mathcal{W}}_m$ and $\omega_C$ be a state corresponding to a $C \in \mathcal{C}(E,\Gm{})$. On the *-algebra $\widetilde{\mathcal{W}}_m$ we define the (unique) C*-norm $\| . \|_m : \widetilde{\mathcal{W}}_m \to \IR$ by
	\begin{align}
		\| A \|_m \coloneqq \sup\left\{\sqrt{\omega_C(A^* A)} : C \in \mathcal{C}(E,\Gm{})\right\} \formspace.
	\end{align}
	The $\|.\|_m$-closure of $\widetilde{\mathcal{W}}_m$ is called the \emph{Weyl algebra} and will be denoted by \gls{Wm}.
\end{definition}
\noindent Since we are interested in a change of the mass parameter $m$ we need to build a mathematical structure in which we are able to compare the Weyl algebras for different masses in a continuous way. This can be done in the bundle $\W = \bigcup\limits_m \Wm$ with the notion of a \emph{continuous field of C*-algebras}. 
\begin{definition}[Sections and continous fields of C*-algebras]\label{def:continous_field_algebra}
	The set of sections $K$ of the bundle $\W$ is defined as
	\begin{align}
		\prod_{m} \Wm = \left\{ K : \IR^+ \ni m \mapsto K(m) \in \Wm \right\} \formspace.
	\end{align}	
	A \emph{continuous field of C*-algebras} is a tuple $(\left\{ \Wm \right\}_m , \K)$ consisting of a family $\left\{ \Wm \right\}_m $ of C*-algebras and a sub *-algebra $\K$ of $\prod_{m} \Wm$ such that
	\begin{enumerate}
		\item The map $\IR^+ \ni m \mapsto \|K(m)\|_m$ is continuous for all $K \in \K$, that is, $K$ is a continuous section.
		\item For every $m \in \IR^+$ the set $\left\{ K(m) : K \in \K \right\}$ is dense in $\Wm$.
		\item Let $K \in \prod_{m} \Wm $. If for every $m_0 \in \IR^+$ and every $\epsilon > 0$ there exists a section $H \in \K$ and a neighborhood $U_0$ of $m_0$ such that for all $m \in U_0$ it holds $\| K(m) - H(m)\|_m < \epsilon$, then $K \in \K$.
	\end{enumerate}
\end{definition}
We will in the following sometimes denote the sections explicitly by $[m \mapsto K(m)] \in \prod_{m} \Wm$.
In order to construct the desired field of C*-algebras, the following lemma is essential. It allows us to construct a specific sub *-algebra of $\prod_{m} \Wm$ containing only continuous sections that are point-wise dense in $\Wm$, which then guarantees the existence of a continuous field of C*-algebras:
\begin{lemma}\label{lem:cont_field_existance}
	Let $\D$ be a sub *-algebra of $\prod_m \Wm$ such that the conditions (i) and (ii) of the above Definition \ref{def:continous_field_algebra} are fulfilled with $\K$ replaced by $\D$. 
	Then 
	\begin{center}
there exists a unique continuous field of C*-algebras $(\left\{ \Wm \right\}_m , \K)$\\ such that $\D \subseteq \K$. 
	\end{center}
	In fact, $\K$ contains only those $K \in \prod_m \Wm$ such that for every $m_0 \in \IR^+$ and every $\epsilon > 0$ there exists a section $H \in \D$ and a neighborhood $U_0$ of $m_0$ such that for all $m_0 \in U_0$ it holds $\| K(m) - H(m)\|_m < \epsilon$.
\end{lemma}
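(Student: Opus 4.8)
The plan is to \emph{define} $\K$ to be precisely the set of sections described in the last sentence of the statement, namely
\begin{align}
\K \coloneqq \big\{\, K \in \textstyle\prod_m \Wm \;:\;\ &\text{for all } m_0 \in \IR^+ \text{ and } \epsilon>0 \text{ there exist } H \in \D \notag\\
&\text{and open } U_0 \ni m_0 \text{ with } \|K(m)-H(m)\|_m<\epsilon \ \forall\, m\in U_0 \,\big\},
\end{align}
and then to verify that $(\{\Wm\}_m, \K)$ satisfies the three conditions of Definition \ref{def:continous_field_algebra}, that $\D \subseteq \K$, and finally that $\K$ is the \emph{unique} such field. The inclusion $\D \subseteq \K$ is immediate (take $H = K$), and $\K$ is a sub-*-algebra of $\prod_m \Wm$ because $\D$ is one and the algebraic operations on each $\Wm$ are norm-controlled: if $K,K'$ are locally uniformly approximated near $m_0$ by $H,H' \in \D$, then $H+H'$, $H^*$ and $HH'$ approximate $K+K'$, $K^*$ and $KK'$ on a common neighborhood, where for the product one uses that $\|K(m)\|_m \le \|H(m)\|_m + \epsilon$ is locally bounded together with the estimate $\|KK'-HH'\|_m \le \|K\|_m\|K'-H'\|_m + \|K-H\|_m\|H'\|_m$. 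These are routine and I would only sketch them.

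For condition (i), I would use that for $H \in \D$ the map $m \mapsto \|H(m)\|_m$ is continuous by hypothesis, and that the reverse triangle inequality $\big|\,\|K(m)\|_m - \|H(m)\|_m\,\big| \le \|K(m)-H(m)\|_m$ exhibits $m \mapsto \|K(m)\|_m$ as a local uniform limit of continuous functions near each $m_0$, hence continuous. Condition (ii) is inherited directly: since $\D \subseteq \K$, the set $\{K(m):K\in\K\}$ contains the dense set $\{H(m):H\in\D\}$ and is therefore dense in $\Wm$. Condition (iii) is a standard two-step (diagonal) approximation: given $K$ locally uniformly approximable by sections of $\K$, fix $m_0,\epsilon$, pick $H' \in \K$ with $\|K(m)-H'(m)\|_m<\epsilon/2$ on a neighborhood, then pick $H \in \D$ with $\|H'(m)-H(m)\|_m<\epsilon/2$ on a smaller neighborhood, and combine by the triangle inequality on the intersection to conclude $K \in \K$.

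The uniqueness step carries the real content. If $(\{\Wm\}_m,\K')$ is any continuous field with $\D \subseteq \K'$, then $\K \subseteq \K'$ because each $K \in \K$ is locally uniformly approximable by $\D \subseteq \K'$ and $\K'$ satisfies (iii). For the delicate reverse inclusion $\K' \subseteq \K$, I would take $K \in \K'$, fix $m_0$ and $\epsilon$, and first use the \emph{pointwise} density of $\{H(m_0):H\in\D\}$ in $\mathcal{W}_{m_0}$ to choose $H \in \D$ with $\|K(m_0)-H(m_0)\|_{m_0}<\epsilon$. The key observation is that $K-H \in \K'$ (since $\D \subseteq \K'$ and $\K'$ is closed under subtraction), so condition (i) \emph{for} $\K'$ makes $m \mapsto \|K(m)-H(m)\|_m$ continuous; being $<\epsilon$ at $m_0$, it stays $<\epsilon$ on a whole neighborhood $U_0$, which is exactly the local uniform bound placing $K$ in $\K$. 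I expect this to be the main obstacle: the crux is realizing that continuity of the norm-section (condition (i)) is precisely what upgrades pointwise approximation at a single mass to the local uniform approximation defining $\K$, and that applying (i) to the difference $K-H$ requires the algebraic closure of $\K'$.
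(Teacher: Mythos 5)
Your proof is correct and is essentially the standard argument that the paper itself does not spell out but merely cites (Dixmier, Propositions 10.2.3 and 10.3.2): define $\K$ as the local-uniform closure of $\D$, verify the three axioms, and obtain uniqueness by using pointwise density of $\D$ together with continuity of $m \mapsto \|K(m)-H(m)\|_m$ for the difference section to upgrade pointwise to local uniform approximation. You correctly identified that last step as the crux, and your verification of it is sound.
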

\begin{proof}
	The above definition and lemma in this form are due to \cite[Chapter 2]{rieckers_honegger_deformation}. The lemma follows from \cite[Proposition 10.2.3]{dixmier} (which gives the same statement for continuous fields of Banach spaces) and applying \cite[Proposition 10.3.2]{dixmier} with use of the subset $\D$.
\end{proof}
Now, the procedure is as follows: We will define a specific sub *-algebra of $\prod_m \Wm$ that contains the sections $[m \mapsto W_m(F)]$, that is, it contains all physically interesting observables, and we will then show that this subset fulfills the necessary conditions, such that the above Lemma \ref{lem:cont_field_existance} is applicable. The non trivial part is to show that all the contained sections are continuous. For that, we need a notion on how to compare the norm of a section at different masses $m$ with each other. More specifically, we need to know how to relate states $C_m \in \mathcal{C}(E, \Gm{} )$ to states $C_{m_0} \in \mathcal{C} (E, \Gmz{})$. This is done in the following lemma, which provides some new insight on the Weyl algebra bundle over vector spaces endowed with a \emph{continuous family} of pre-symplectic forms that has, to our knowledge, not yet been discussed in the literature:
\begin{lemma}[On continuous families of pre-symplectic forms]\label{lem:on_contiuous_families}
	Let $E$ be a $\IR$-vector space of arbitrary dimension (including infinite dimensional). Let $I \in \IR$ open be an index set. Let $\left\{ \sigma_i \right\}_{i \in I}$ be a continuous family of pre-symplectic forms (that is, for all $F,F' \in E$ the map $I \ni i \mapsto \sigma_i(F,F')$ is continuous). 
	Then the following holds:
	\begin{enumerate}
		\item $\rho \coloneqq \sigma_i + \sigma_j$, for some $i,j \in I$, defines a pre-symplectic form on $E$. \\If $C_i \in \mathcal{C}(E,\sigma_i)$ and $C_j \in \mathcal{C}(E,\sigma_j)$, then
		\begin{align}
			C = C_i C_j \in \mathcal{C}(E,\rho) \formspace.
		\end{align}
		\item If $\left\{s_i\right\}_{i \in I}$ is a family of symmetric, positive, $\IR$-bilinear forms on $E$, such that for all $i \in I$ and all $F,F' \in E$ it holds $\sigma_i(F,F')^2 \leq s_i(F,F) s_i(F',F')$, then for all $F \in E$ it holds
		\begin{align}
			[F \mapsto \e^{-s_i(F,F)/2}] \in \mathcal{C}(E,\sigma_i)\formspace.
		\end{align}
		Furthermore, if $E$ is finite dimensional, there exists a family $\left\{s_i\right\}_{i \in I}$ with the above properties and such that $I \ni i \mapsto s_i (F,F')$ is continuous for all $F,F' \in E$.
	\end{enumerate}
\end{lemma}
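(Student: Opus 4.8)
The plan is to handle the two parts by quite different means: part (i) is a positive-definiteness statement that reduces to the Schur product theorem, whereas the constructive half of part (ii) is an explicit finite-dimensional argument. For part (i), I would first observe that $\rho = \sigma_i + \sigma_j$ is a sum of antisymmetric $\IR$-bilinear forms, hence itself antisymmetric and bilinear, so it is pre-symplectic (no non-degeneracy is needed). For the state claim I would recast the positivity condition of Definition \ref{def:states} in matrix form: a normalized $C \in \mathcal{C}(E,\sigma)$ is exactly a function such that, for every finite family $F_1,\dots,F_N \in E$, the matrix $M_{kl} = \e^{\i\sigma(F_k,F_l)/2}\,C(F_l-F_k)$ is Hermitian and positive semi-definite. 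Hermiticity is forced because $\sum_{k,l}\bar z_k z_l M_{kl}$ is real for all $z$ (which for a complex matrix already implies $M=M^*$), using the antisymmetry of $\sigma$ together with the relation $C(-F)=\overline{C(F)}$ that the positivity condition entails. Substituting $C = C_iC_j$ and $\rho = \sigma_i+\sigma_j$, the matrix attached to $\rho$ factors entrywise as $M^\rho_{kl} = M^i_{kl}\,M^j_{kl}$, i.e.\ it is the Hadamard product of the two Hermitian PSD matrices coming from $C_i$ and $C_j$. By the Schur product theorem this Hadamard product is again Hermitian PSD, which is precisely the positivity requirement for $C_iC_j$; normalization is immediate from $C_i(0)C_j(0)=1$.

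For the first assertion of part (ii), I would appeal to the classical theory of quasi-free (Gaussian) states rather than reproduce the calculation: the hypothesis $\sigma_i(F,F')^2 \le s_i(F,F)\,s_i(F',F')$ is exactly the domination bound guaranteeing that $F\mapsto \e^{-s_i(F,F)/2}$ is projectively positive, and I would cite \cite{rieckers_honegger_deformation} (and the standard Weyl-algebra literature) for this. Normalization is again trivial since $s_i(0,0)=0$.

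The genuinely constructive claim is the existence of a continuous dominating family $\{s_i\}_{i\in I}$ when $\dim E < \infty$. Here I would fix, once and for all, a positive-definite inner product $g$ on $E$ (the Euclidean form in a chosen basis), and represent each $\sigma_i$ by the $g$-antisymmetric operator $S_i$ defined via $\sigma_i(F,F') = g(F,S_iF')$. Continuity of $i\mapsto\sigma_i(F,F')$ on the finite-dimensional space $E$ forces $i\mapsto S_i$ to be continuous in any (hence every) matrix norm, so in particular $i\mapsto\|S_i\|_g$, the $g$-operator norm, is continuous. Setting $s_i := (\|S_i\|_g + 1)\,g$ then gives a symmetric positive-definite form depending continuously on $i$, and the Cauchy--Schwarz estimate $|\sigma_i(F,F')| = |g(F,S_iF')| \le \|S_i\|_g\,|F|_g\,|F'|_g \le s_i(F,F)^{1/2}s_i(F',F')^{1/2}$ supplies the required domination.

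I expect the main obstacle to lie in the bookkeeping of part (i): one must verify carefully that the positivity condition as stated genuinely encodes Hermitian positive semi-definiteness of the cocycle-twisted kernel $M$ (including deriving $C(-F)=\overline{C(F)}$), because only then does the Schur product theorem apply without gaps. The finite-dimensionality hypothesis in the last claim is likewise essential and worth flagging: it is precisely what makes $i\mapsto S_i$ continuous and uniformly bounded in a single norm, and the operator-norm construction does not obviously survive passage to infinite-dimensional $E$, which is presumably why the statement is restricted to the finite-dimensional case.
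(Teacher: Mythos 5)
Your proof is correct, and on the substantive new content of the lemma --- the construction of a continuous dominating family $\{s_i\}$ in finite dimensions --- it coincides with the paper's argument: both fix an auxiliary inner product, represent $\sigma_i(F,F') = \langle \Lambda_i F, F'\rangle$ via Riesz, set $s_i$ proportional to the inner product scaled by $\|\Lambda_i\|$, obtain the domination from Cauchy--Schwarz, and deduce continuity of $i \mapsto \|\Lambda_i\|$ from entrywise continuity of the representing matrices (the paper does this with an explicit $\epsilon$--$\delta$ estimate on matrix entries; your observation that all norms on a finite-dimensional matrix space are equivalent is a cleaner way to say the same thing, and your ``$+1$'' regularization is harmless). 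The one place you genuinely diverge is part (i): the paper simply cites Honegger--Rieckers for the product-state claim, whereas you supply the proof --- recasting projective positivity as positive semi-definiteness of the twisted kernel $M_{kl} = \e^{\i\sigma(F_k,F_l)/2}C(F_l-F_k)$, noting that the kernel for $(\rho, C_iC_j)$ is the Hadamard product of the kernels for $(\sigma_i,C_i)$ and $(\sigma_j,C_j)$, and invoking the Schur product theorem. That is the standard argument behind the cited result, and your care in checking Hermiticity (from reality of $z^*Mz$ for all $z$) is exactly the point that needs attention before Schur applies; so your write-up is, if anything, more self-contained than the paper's.
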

\begin{proof}
(i) The proof is given for fixed pairs of pre-symplectic forms $\sigma_i , \sigma_j$ in \cite[Theorem 2.3]{rieckers_honegger_partially_classical_states}. \par 
(ii) For a fixed $i \in \IN$, the statement is proven in \cite[Theorem 3.4]{petz}\footnote{Actually, the proof in this reference is given to hold for symplectic forms rather than pre-symplectic forms, but generalizes directly to the pre-symplectic case since the non-degeneracy property is not used neither within the proof directly nor in the lemma that is needed for the proof.}. Here, we want to show that, in the finite dimensional case, the family $\left\{ s_i \right\}$ exists and can be chosen to be continuous:
Let $E$ be of finite dimension $N \in \IN$, let $I \in \IR$, open, be an index set and let $\left\{\sigma_i \right\}_{i \in I}$ be a continuous family of pre-symplectic forms on $E$. In the following we will assume to have chosen a basis $\left\{ e_i \right\}_{i=1}^{N}$ of $E$, $\norm{e_i} = 1$,  which induces a scalar product by $\langle e_i , e_j \rangle = \delta_{ij}$.
To clarify the structure of the proof, we start off with a claim: \\ 
\emph{Claim: There is an operator $\Lambda_i : E \to E$ for every $i \in I$ such that $\sigma_i(F,F') = \langle \Lambda_i F , F' \rangle$.} \\
We will use the operator $\Lambda_i$ explicitly to construct the symmetric form $s_i$. To see that the claim holds, define for a fixed $i \in I$ the operator
\begin{align}
\tilde{\sigma}_i :E &\to E^*  \\ 
F &\mapsto \tilde{\sigma}_i (F) \coloneqq \sigma_i (F , \cdot) \formspace, \notag
\end{align}
where $E^*$ denote the dual space of $E$. By Riesz' representation theorem (see for example \cite[Chapter II.16]{akhiezer-linear_ops_on_HS}), there exists a unique dual vector of $\tilde{\sigma}_i(F) \equiv \tilde{\sigma}_{i,F} $ given by $\tilde{\sigma}_i(F) ^*$ such that
\begin{align}
	\tilde{\sigma}_{i,F}(F') = \langle \tilde{\sigma}_i(F)^* , F' \rangle \formspace. 
\end{align}
Now, define the desired operator
\begin{align}
	\Lambda_i : E &\to E  \\
	F &\mapsto \Lambda_i(F) \coloneqq \tilde{\sigma}_i (F)^* \formspace. \notag
\end{align}
The claim then follows by construction: $\sigma_i(F,F') = \tilde{\sigma}_{i,F}(F') = \langle \Lambda_i F , F' \rangle $. \\
Now we choose, using the operator norm $\| \Lambda_i \| = \sup\left\{ \frac{\| \Lambda_i F\|}{\|F \|}: {F \in E}   \right\}$, 
\begin{align}
	s_i (F , F') \coloneqq \norm{\Lambda_i} {\cdot }\langle F , F' \rangle \formspace.
\end{align}
Clearly, this fulfills, using the Cauchy-Schwarz inequality:
\begin{align}
	\sigma_i (F , F')^2 
	&= \langle \Lambda_i F , F' \rangle ^2 \notag \\
	&\leq \langle \Lambda_i F, \Lambda_i F \rangle \, \langle F', F' \rangle\notag \\
	&\leq \| \Lambda_i \|^2 {\cdot} \langle F, F \rangle \, \langle F', F' \rangle \notag\\
	&= s_i(F,F) \, s_i(F' , F') \formspace.
\end{align}
This proves existence. We are left to show that this particular choice of $\left\{ s_i \right\}$ is continuous:
Let $ \left\{ x_i \right\}_{i \in \IN} \subset \IR $ be a continuous sequence in $\IR$ with $\lim\limits_{i \to \infty} \left( x_i -x_0\right) = 0$. It then holds by definition for all fixed $F,F' \in E$ that 
\begin{align}
0&=\lim\limits_{i \to \infty} \Big( \sigma_{x_i}(F,F') -\sigma_{x_0}(F,F') \Big) \notag	\\
\iff 0&= \lim\limits_{i \to \infty} \Big(  \langle \Lambda_{x_i} F, F' \rangle -  \langle \Lambda_{x_0} F, F' \rangle    \Big) \notag \\
\iff 0&= \lim\limits_{i \to \infty}      \langle (\Lambda_{x_i} - \Lambda_{x_0} ) F, F' \rangle \notag \\
\iff 0&=    \langle ( \lim\limits_{i \to \infty}  \Lambda_{x_i} - \Lambda_{x_0} ) F, F' \rangle \formspace.
\end{align}
Specifying $F' = \lim\limits_{i \to \infty}  (\Lambda_{x_i} - \Lambda_{x_0} ) F  )$,  this implies
\begin{align}
\implies 0&= \norm{\lim\limits_{i \to \infty}     (\Lambda_{x_i} - \Lambda_{x_0} ) F }\notag \\
\implies 0&= \lim\limits_{i \to \infty}     \| \left( \Lambda_{x_i} - \Lambda_{x_0} \right) F\|  
\end{align}
holds for all $F \in E$. We have used the continuity of the scalar product. Consequently, in the chosen basis, the matrix elements of $(\Lambda_{x_i} - \Lambda_{x_0})$ converge to zero, that is, for each pair $(i,j) \in I \times I$ it holds: For all $\epsilon > 0$ there exists a $i_0 \in \IN_0$, such that if $i \geq i_0$, then 
\begin{align}
	\abs{(\Lambda_{x_i} - \Lambda_{x_0})_{ij}} < \epsilon / N^2 \formspace.
\end{align}
We now choose $F \in E$ such that $\| F \| = 1$, that is, in the given basis we find $F = \sum_{n=1}^N f_n \, e_n$, where $f_n = \langle e_n , F \rangle$ and $\sum_{n=1}^N f_n ^2 = 1$. Therefore, it holds for all $n= 1,2, \dots, N$ that $\abs{f_n} \leq 1$. We find for $i \geq i_0$ that:
\begin{align}
	\| \Lambda_{x_i} - \Lambda_{x_0} \|
	&= \sup \left\{ \norm{\big( \Lambda_{x_i} - \Lambda_{x_0} \big) F} : F \in E, \norm{F} = 1 \right\} \notag\\
	&= \sup \left\{\norm{  \sum_{m,n=1}^N \big( \Lambda_{x_i} - \Lambda_{x_0} \big)_{mn}\;  f_n \, e_n  } : F \in E, \norm{F} = 1 \right\} \notag\\
	&\leq \sup \left\{ \sum_{m,n=1}^N \abs{\big( \Lambda_{x_i} - \Lambda_{x_0} \big)_{mn}}\cdot  \underbrace{\abs{f_n}}_{\leq 1} \cdot \underbrace{\norm{e_n}}_{=1}   : F \in E, \|F\| = 1 \right\} \notag\\
	&\leq  \sum_{m,n=1}^N \abs{\big( \Lambda_{x_i} - \Lambda_{x_0} \big)_{mn}}\notag\\
	&< N^2 \cdot \epsilon / N^2 \notag\\
	&= \epsilon \formspace.
\end{align}
Hence,  $i \mapsto \| \Lambda_i \| $ is continuous. 
Therefore, by construction, $i \mapsto s_i(F,F') = \norm{ \Lambda_i}{\cdot} \langle F , F' \rangle $ is continuous for all fixed $F,F' \in E$.
\end{proof}
With this lemma at our disposal we are able to compare states at different masses $m$ with each other. This is crucial for the proof of the following theorem:
\begin{theorem}[Continuous Weyl C*-field]\label{thm:cont_field_W}
	There exists a unique continuous field $\left( \left\{ \Wm \right\}_m , \K \right)$ of C*-algebras, such that for every $F \in E$ it holds $\left[ m \mapsto W_m(F) \right] \in \K $.
\end{theorem}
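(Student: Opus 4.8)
The plan is to apply Lemma \ref{lem:cont_field_existance}, so I would exhibit a sub-$*$-algebra $\D \subseteq \prod_m \Wm$ consisting only of continuous sections, whose point-wise evaluations are dense and which contains the sections $[m \mapsto W_m(F)]$. A natural candidate is
\[
\D := \Big\{ \Big[ m \mapsto \sum_{k=1}^n f_k(m)\, W_m(F_k) \Big] : n \in \IN,\ F_k \in E,\ f_k : \IR^+ \to \IC \text{ continuous} \Big\}.
\]
Using the Weyl relations of Definition \ref{def:weyl_elements}, the product of two such sections again has this form, with structure coefficient $\e^{-\i \Gm{F}{F'}/2}$ depending continuously on $m$ by Assumption \ref{ass:propagator_continuity}; together with $W_m(F)^* = W_m(-F)$ this shows $\D$ is a sub-$*$-algebra containing all $[m \mapsto W_m(F)]$. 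Condition (ii) of Definition \ref{def:continous_field_algebra} is then immediate, since $\{K(m) : K \in \D\} \supseteq \widetilde{\mathcal{W}}_m$, which is dense in $\Wm$ by construction. What remains, and this is the heart of the matter, is condition (i): continuity of $m \mapsto \|K(m)\|_m$ for every $K \in \D$.

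To establish this I would first reduce to fixed coefficients: by unitarity of the Weyl elements, $\big\| \sum_k (f_k(m) - f_k(m_0)) W_m(F_k) \big\|_m \le \sum_k |f_k(m) - f_k(m_0)| \to 0$, so it suffices to treat $A_m := \sum_k c_k W_m(F_k)$ with fixed $c_k \in \IC$. Next I would reduce to finite dimensions: the $F_k$ span a finite-dimensional $E_0 \subseteq E$, and the sub-C*-algebra generated by $\{W_m(F) : F \in E_0\}$ is the Weyl algebra over $(E_0, \Gm{}|_{E_0})$, which embeds isometrically into $\Wm$ (see \cite{rieckers_honegger_construction}), so that $\|A_m\|_m$ may be computed entirely within $\mathcal{W}(E_0, \Gm{}|_{E_0})$. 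This reduction is precisely what makes the continuous dominating forms of Lemma \ref{lem:on_contiuous_families}(ii) available.

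On $E_0$ I would then prove continuity by a symmetric state-transport estimate, split into lower and upper semicontinuity. Expanding with the Weyl relations gives $A_m^* A_m = \sum_{k,l} \bar c_k c_l\, \e^{\i \Gm{F_k}{F_l}/2}\, W_m(F_l - F_k)$, whence $\|A_m\|_m^2 = \sup_{C} \omega_C(A_m^* A_m)$ is a supremum of expressions with coefficients continuous in $m$. For lower semicontinuity at $m_0$ I would pick a near-optimal state $C \in \mathcal{C}(E_0, \Gmz{})$ and set $\tilde C_m := C \cdot D_m$, where $D_m(F) = \e^{-s_m(F,F)/2}$ is a Gaussian state dominating the difference form $\Gm{} - \Gmz{}$; such an $s_m$ exists on the finite-dimensional $E_0$ and can be chosen with $s_m \to 0$ as $m \to m_0$, by Lemma \ref{lem:on_contiuous_families}(ii) together with Assumption \ref{ass:propagator_continuity} (since $\Gm{} - \Gmz{} \to 0$). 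By Lemma \ref{lem:on_contiuous_families}(i), $\tilde C_m \in \mathcal{C}(E_0, \Gm{})$ is a genuine state at mass $m$, and as $D_m(F_l - F_k) \to 1$ and $\Gm{F_k}{F_l} \to \Gmz{F_k}{F_l}$ one obtains $\omega_{\tilde C_m}(A_m^* A_m) \to \omega_{C}(A_{m_0}^* A_{m_0})$, hence $\liminf_{m\to m_0} \|A_m\|_m^2 \ge \|A_{m_0}\|_{m_0}^2$. The upper bound is symmetric: for any state $C$ at mass $m$, the product $C \cdot D'_m$ with $D'_m$ dominating $\Gmz{} - \Gm{}$ is a state at $m_0$, and using the a priori bound $|C(F)| \le 1$ (which follows from the positivity in Definition \ref{def:states} with two arguments) one estimates the difference of the two expectation values uniformly in $C$, giving $\limsup_{m\to m_0}\|A_m\|_m^2 \le \|A_{m_0}\|_{m_0}^2$.

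With (i) and (ii) verified for $\D$, Lemma \ref{lem:cont_field_existance} then yields a unique continuous field $(\{\Wm\}_m, \K)$ of C*-algebras with $\D \subseteq \K$; since $[m \mapsto W_m(F)] \in \D$ for every $F \in E$, the theorem follows. I expect the main obstacle to be this norm-continuity step, and within it two points especially: justifying the isometric reduction to the finite-dimensional subspace $E_0$, without which Lemma \ref{lem:on_contiuous_families}(ii) cannot be invoked, and making the state-transport estimate \emph{uniform} over all states at the comparison mass. It is this uniformity, resting on $|C(F)| \le 1$ and on the ability to take $s_m \to 0$, that upgrades the two one-sided semicontinuities into genuine continuity.
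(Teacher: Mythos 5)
Your proposal is correct and follows essentially the same route as the paper's proof: the same invocation of Lemma \ref{lem:cont_field_existance} on a sub-$*$-algebra $\D$ of coefficient-modulated Weyl sections, the same isometric reduction to the finite-dimensional subspace spanned by the relevant $F_k$ so that Lemma \ref{lem:on_contiuous_families} applies, and the same two-sided product-state transport $C \mapsto C\cdot \e^{-s_m(\cdot,\cdot)/2}$ combined with $\abs{C(F)}\le 1$ to get norm continuity. The only (harmless) difference is that you allow arbitrary continuous coefficient functions and strip them off by a triangle-inequality reduction to constant coefficients, whereas the paper restricts $\D$ to spans of continuous unimodular phases and carries those phases through the estimate.
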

\begin{proof}
	The proof presented here is a generalization of the proof given in \cite[Theorem 5.2]{rieckers_honegger_deformation}. It was even possible, even though we are looking at a more general setup,  to simplify the given proof quite a bit. The main idea presented in \cite{rieckers_honegger_deformation}, that is, to use Lemma \ref{lem:cont_field_existance} and \ref{lem:on_contiuous_families}, was nevertheless essential for the proof.
	As mentioned, we will proof the statement by applying Lemma \ref{lem:cont_field_existance} to a specific sub *-algebra $\D$ of $\W$. For this we define:
	\begin{align}
		\D \coloneqq \Span{ \left[  m \mapsto \e^{-\i \alpha(m)} \, W_m(F)  \right]: F\in E, \alpha : \IR \to \IR_+ \text{ continuous} }  \formspace.
	\end{align}
	Clearly, as desired, we find for every $F \in E$ that $\left[ m \mapsto W_m(F) \right] \in \D $. 
	First, we need to check that $\D$, equipped with point-wise defined algebra relations, really constitutes a sub *-algebra of $\W$: 
	An arbitrary element of $\D$ is of the form
	\begin{align}
		K = \sum\limits_{i=1}^{N} \sum\limits_{j=1}^{M} z_{ij} \left[ m \mapsto \e^{-\i \alpha_{ij}(m) }\,W_m(F_i) \right] \label{def:K_arb_element_D}
	\end{align}
	for some $z_{ij} \in \IC$, $F_i \in E$ and some continuous maps $\alpha_{ij} : \IR_+ \to \IR$. 
	Clearly, for every $m \in \IR_+$ it holds that $K(m) \in \Wm$, and hence $\D \subset \W$. The nontrivial part is to show that $\D$ is closed under the point-wise defined algebra relations.
	Let $K, \tilde{K} \in \D$ be of the form specified above. We find
	\begin{align}
		(K + \tilde{K})(m)
		&\coloneqq K(m) + \tilde{K}(m) \notag \\
		&= \sum\limits_{i,j} z_{ij} \e^{-\i \alpha_{ij}(m) }\,W_m(F_i)   + \sum\limits_{k,l} \tilde{z}_{kl} \e^{-\i \tilde{\alpha}_{kl}(m) }\,W_m(\tilde{F}_k)  \notag \\
		&= \sum\limits_{pq} w_{pq} \e^{-\i \alpha'_{pq}(m) }\,W_m(F'_{p}) 
	\end{align} 
	for some $z'_{pq} \in \IC$, $F'_{p} \in E$ and some continuous $\alpha'_{pq} : \IR_+ \to \IR$ (because the composition of continuous maps is continuous)\footnote{In particular: if for some $i, k$ we find $F_i$ = $\tilde{F_k}$, then we certainly find $w_{pq}$ and continuous maps $\alpha'_{pq} : \IR_+ \to \IR$, such that $\sum_q w_{pq} \e^{-\i \alpha'_{pq}(m) } = \sum_j z_{ij} \e^{-\i \alpha_{ij}(m) } + \sum_l \tilde{z}_{kl} \e^{-\i \tilde{\alpha}_{kl}(m) }$.}.	Therefore, $\D$ is closed with respect to addition. \\
	Next, we find for the multiplication, using the Weyl-relations :
	\begin{align}
		(K \cdot \tilde{K})(m) 
		&\coloneqq K(m) \cdot \tilde{K}(m) \notag \\
		&= \sum\limits_{i,j,k,l}z_{ij}\,\tilde{z}_{kl} \,\e^{-\i \left( \alpha_{ij}(m) + \tilde{\alpha}_{kl}(m)  + \Gm{F_i}{\tilde{F}_k}/2   \right) }\,W_m(F_i + \tilde{F}_k) \formspace.
	\end{align}
	This also is clearly an element in $\D$, since the combination of continuous maps is again continuous\footnote{Here, of course, we need $\Gm{}$ to be continuous.}.
	In the same fashion, it follows trivially that $\D$ is also closed under involution, hence $\D$ does indeed constitute a sub *-algebra of $\W$.\par
	Now, the main effort is to show that $\D$ fulfills the assumptions (i) and (ii) of Lemma \ref{def:continous_field_algebra}: 
	First, it is trivial to see that,  since $\left[ m \mapsto W_m(F) \right] \in \D$ for all $F\in E $ and since for every $m \in \IR_+$ the $W_m(F)$ are dense in $\Wm$ by construction, 
	we find that for every $m \in \IR^+$ the set $\left\{ K(m) : K \in \D \right\}$ is dense in $\Wm$.
	The non trivial part is to show that every section $K \in \D$ is continuous: 
	Again, let $K \in \D$ denote an arbitrary element of the form specified in equation (\ref{def:K_arb_element_D}).
	Further more, since only finitely many $F_i$'s contribute to the construction of $K(m)$, we can equivalently view $K(m)$ as an element of a subspace of $\Wm$ that is generated by only finitely many $W_m(F_i)$'s. In particular, define $E_N = \Span{F_i, i=1,2,\dots,N }$, where the $F_i$'s correspond to the ones contributing to $K(m)$. Then $\Wm(E_N,\Gm{})$ is a sub-C*-algebra of $\Wm(E,\Gm{})$  and we can evaluate the norm of $K(m)$ in terms of states $C \in \C (E_N, \Gm{})$ (see e.g. \cite[Section III.B]{rieckers_honegger_construction}), and the norms of $K(m)$ on $\Wm(E_N,\Gm{})$ and $\Wm(E,\Gm{})$ agree.
		Next, we define the pre-symplectic form $\sigma_m = \Gm{} - \Gmz{}$ on $E$. Since for fixed $F,F' \in E$, $\Gmz{F,F'}$ is constant and $m \mapsto \Gm{F}{F'}$ is continuous by assumption, it is clear that $\left\{\sigma_m \right\}_{m \in \IR_+}$ constitutes a continuous family of pre-symplectic forms on $E$, hence it also constitutes a continuous family of pre-symplectic forms on $E_N$. 
		Moreover, we find
		\begin{align}
			\lim\limits_{m \to m_0} \sigma_m (F,F') = 0
		\end{align}
		for all fixed $F,F' \in E_N$. 
		Now, since by construction $E_N$ is finite dimensional, we can make use of the additional feature of Lemma \ref{lem:on_contiuous_families}, that is, there exists a continuous family $\left\{ s_m \right\}_{m \in \IR_+}$ of symmetric, positive, bilinear forms on $E_N$ such that
			\begin{align}
				E_N \ni F \mapsto \e^{- s_m(F,F) /2} \in \C (E_N,\sigma_m) \formspace.
			\end{align}
	Knowing the existence of such a family $\left\{ s_m \right\}_{m \in \IR_+}$ and the corresponding exponential states is the reason why we go over to a finite dimensional subspace $E_N$ of $E$.
	By construction we moreover know that $\Gm{} = \sigma_m + \Gmz{}$ is a sum of pre-symplectic forms on $E_N$. Making use of the first part of Lemma \ref{lem:on_contiuous_families}, we conclude that for an arbitrary state $C_0 \in \C (E_N , \Gmz{})$ there exists a state $C \in \C (E_N , \Gm{})$ of the form
	\begin{align}
		C (F) = C_0(F) \, \e^{- s_m(F,F) /2} \formspace.
	\end{align}
	With this notion we are able to compare states, and hence the norm of a section of the Weyl algebra bundle, at different masses $m$. But it is worth mentioning that not necessarily \emph{all} states $C \in \C (E_N , \Gm{})$ are of the above product form. This subtlety has to be kept in mind. 
	Let us now look at the expectation value of the section $K$ at point $m$ and at point $m_0$, inserting the product form of the state $C \in \C (E_N , \Gm{})$:
	\begin{align}
		\omega_C\big( K^*(m) K(m) \big)
		&= \sum_{ijkl} \bar{z}_{ij}z_{kl} \, \e^{-\i \big(  \alpha_{kl}(m)  - \alpha_{ij}(m) + \Gm{F_i}{F_k}/2\big)} \notag \\ &\phantom{M}\cdot \e^{-s_m(F_k - F_i , F_k - F_i) /2} \, C_0(F_k - F_i)  \formspace,\\
		\omega_{C_0}\big( K^*(m_0) K(m_0) \big)
		&= \sum_{ijkl} \bar{z}_{ij}z_{kl} \, \e^{-\i \big(  \alpha_{kl}(m_0)  - \alpha_{ij}(m_0) + \Gmz{F_i}{F_k}/2\big)} \, C_0(F_k - F_i) \formspace.
		\end{align}	
Since for all $i,j$ and all $F_i, F_k$, the maps $m \mapsto \Gm{F_k}{F_i}$ and $m \mapsto S_m(F_k,F_i)$ as well as the maps $\alpha_{ij}, \alpha_{kl}$ are continuous, we conclude that
\begin{align}
	\Lambda_{ijkl}(m) \coloneqq \e^{-\i \big(  \alpha_{kl}(m)  - \alpha_{ij}(m) + \Gm{F_i}{F_k}/2\big)} \, \e^{-s_m(F_k - F_i , F_k - F_i) /2}
\end{align}
is continuous. In particular it holds that
\begin{align}
	\lim\limits_{m \to m_0} 	\Lambda_{ijkl}(m)  = \e^{-\i \big(  \alpha_{kl}(m_0)  - \alpha_{ij}(m_0) + \Gmz{F_i}{F_k}/2\big)} \eqqcolon \Lambda_{ijkl}(m_0)  \formspace.
\end{align}
This means, for all $i,k= 1,\dots,N$ and $j,l=1,\cdots,M$ and for every $\epsilon >0$ we find a $\delta > 0 $, such that for every $m \in \IR_+$ with $\abs{m - m_0} < \delta$ it holds
\begin{align}
\abs{ \Lambda_{ijkl}(m) - \Lambda_{ijkl}(m_0) }< \epsilon/\left( M^2 N^2 \,\max\limits_{i,j,k,l}\abs{\bar{z}_{ij} z_{kl}} \right)  \formspace.
\end{align}
With this we can write (here $C$ is a state of product form)
\begin{align}
\omega_{C_0}&\big( K^*(m_0) K(m_0) \big) \notag \\
&= \sum_{ijkl} \bar{z}_{ij}z_{kl} \, \Lambda_{ijkl}(m_0)\, C_0(F_k - F_i) \notag \\
&= \omega_{C}\big( K^*(m) K(m) \big) -  \sum_{ijkl} \bar{z}_{ij}z_{kl} \, \big( \Lambda_{ijkl}(m) - \Lambda_{ijkl}(m_0)\big) \, C_0(F_k - F_i) \formspace,
\end{align}
where in the second step we have just inserted a zero term.  Since by definition $\omega_{C_0}\big( K^*(m_0) K(m_0) \big)  \geq 0$ we conclude that
\begin{align}
\omega_{C_0}&\big( K^*(m_0) K(m_0) \big)\notag  \\
&=\abs{\omega_{C}\big( K^*(m) K(m) \big) -  \sum_{ijkl} \bar{z}_{ij}z_{kl} \, \big( \Lambda_{ijkl}(m) - \Lambda_{ijkl}(m_0)\big) \, C_0(F_k - F_i)}  \notag  \\
&\leq \omega_{C}\big( K^*(m) K(m) \big) + \sum_{ijkl} \abs{\bar{z}_{ij}z_{kl}} \, \abs{\big( \Lambda_{ijkl}(m) - \Lambda_{ijkl}(m_0)\big)} \, \abs{C_0(F_k - F_i)} \notag  \\
&\leq \omega_{C}\big( K^*(m) K(m) \big) + \epsilon \formspace,
\end{align}
where the last step holds for $\abs{m - m_0} < \delta$. We have made use of the estimate 
\begin{align}
\abs{C(F)} \leq 1 \formspace,
\end{align}
as stated in Lemma \ref{lem:states_estimate}. Having found an estimate of $\omega_{C_0}\big( K^*(m_0) K(m_0) \big)$, we can estimate the norm of $K$ at $m_0$, where here $C$ is of the above specified product form:
\begin{align}
	\norm{K(m_0)}^2 
	&= \sup\Big\{ \omega_{C_0}\big( K^*(m_0) K(m_0) \big) : C_0 \in \C(E_N , \Gmz{})\Big\} \notag \\
	&= \sup\Big\{ \omega_{C}\big( K^*(m) K(m) \big) - \omega_{C}\big( K^*(m) K(m) \big) + \omega_{C_0}\big( K^*(m_0) K(m_0) \big)   :\notag \notag \\
	&\phantom{M} C=C_0\, e^{s_m(\cdot, \cdot)/2}, C_0 \in \C(E_N , \Gmz{})\Big\}\notag \\
	&\leq \sup\Big\{ \omega_{C}\big( K^*(m) K(m) \big)+\epsilon   : C=C_0\, e^{s_m(\cdot, \cdot)/2}, C_0 \in \C(E_N , \Gmz{})\Big\} \notag\\	
	& \leq  \sup\Big\{ \omega_{C}\big( K^*(m) K(m) \big) + \epsilon : C \in \C(E_N , \Gm{})\Big\} \notag\\
	&= 	\norm{K(m)}^2 + \epsilon.
\end{align}
The calculation might seem a bit too detailed at first, but one has to be a careful in which order one makes the estimate for the proof to work, due to the mentioned subtlety that not \emph{all} states at mass $m$ are of product form.
In the exact same fashion, interchanging the roles of $C_0$ and $C$, we define (using the same notation as before) the pre-symplectic form
\begin{align}
	\sigma_m = \Gmz{} -\Gm{} \formspace.
\end{align}
In the same line of arguments as before, we can express states at $m_0$ by $C_0 = C\, \e^{s_m(\cdot, \cdot)/2}$, where here actually the symmetric forms $s_m$ are the same as before, since, by construction, they only depend on the norm of $\sigma_m$, and, reversing the roles of $m$ and $m_0$, we only have flipped the sign of $\sigma_m$. 
We find
\begin{align}
	\omega_{C}\big( K^*(m) K(m) \big) 
	&= \Big\lvert\omega_{C_0}\big( K^*(m_0) K(m_0) \big)  \notag \\
	&\phantom{M}-  \sum_{ijkl} \bar{z}_{ij}z_{kl} \, \Big( \Lambda_{ijkl}(m)\,\e^{s_m(F_k - F_i, F_k - F_i)/2} \notag \\
		&\phantom{M}- \Lambda_{ijkl}(m_0) \, \e^{-s_m(F_k - F_i, F_k - F_i)/2}\Big) \, C(F_k - F_i) \Big\rvert\notag\\
	&\leq \omega_{C_0}\big( K^*(m_0) K(m_0) \big)  + \epsilon
\end{align}
for $\abs{m-m_0} < \delta$.\\
By an estimate of the same fashion as before we find for $\abs{m-m_0}<\delta$:
\begin{align}
\norm{K(m)}^2  \leq  \norm{K(m_0)}^2 + \epsilon \formspace.
\end{align}
Combining the two estimates, we find the wanted result, namely that for all $\epsilon >0$ there exists a $\delta > 0$ such that for all $m \in \IR_+$ with $\abs{m-m_0}<\delta$ it holds:
\begin{align}
	\abs{ \norm{K(m)}^2 - \norm{K(m_0)}^2} < \epsilon \formspace,
\end{align}
hence, $m \mapsto \norm{K(m)}^2$ is continuous. Since $0 \leq \norm{K(m)}^2$ it follows that $m \mapsto \norm{K(m)}$ is also continuous. This concludes the proof.
\end{proof}
\subsection{Dynamics and the C*-Weyl algebra} \label{sec:weyl-algebra-dynamics}
Having found a notion of continuity of the fields with respect to the mass, the task is now to implement the dynamics of the system, that is, the Weyl operators should solve the Proca equation in a suitable sense. Motivated by Definition \ref{def:algebra-A(M)} of the field operators $\A(F)$ we want to implement the relation $W_m\big( (\delta d + m^2) F\big) = \e^{\langle j , F' \rangle} \mathbbm{1}$ (the relation follows from Definition \ref{def:algebra-A(M)} by heuristically viewing the Weyl operators as $\e^{\A(F)}$). For now, we specify to the source free case $j = 0$. To formulate the implementation of the above relation mathematically precisely, we define the set 
\begin{align}
	\JM \coloneqq \left\{ F \in E : \exists F' \in E : F = (\delta d + m^2)F'  \right\} \equiv \left\{ F \in E : G_mF=0  \right\} \formspace.
\end{align} 
	The dynamical theory is then implemented in the Weyl algebra $\Wm\left({\Quotientscale{E}{\JM}}, \Gm{}\right)$, where the elements $\left[ F \right]_m \in {\Quotientscale{E}{\JM}}$ are equivalence classes with respect to the equivalence relation $F \sim_m F' :\iff (F-F') \in \JM$. Identifying $[0]_m = \JM$, we have implemented the wanted field equation $W_m([0]_m) = \mathbbm{1}$. Note, that we use the same symbol $W_m$ for the Weyl operators in $\Wm$ and $\Wmdyn$ since it is clear from the context, that is, whether they act on functions $F$ or equivalence classes $[F]_m$, in which algebra the corresponding Weyl elements lie.\par 
	The question is how to apply the results we have gained on the algebra $\Wm$ and $\W$ to the dynamical algebras $\Wmdyn$ and $\Wdyn$. To accomplish this we define the following *-algebra homomorphism:
	\begin{align}
		\tilde{\alpha}_m : \;&\Wm \to \Wmdyn  \\
										& W_m(F) \mapsto W_m([F]_m) \notag
	\end{align}
	and analogously for linear combinations of Weyl elements. It is easily checked, that indeed $\tilde{\alpha}_m$ is a *-algebra homomorphism:
	Let $F, F' \in E$, then
	\begin{align}
		\tilde{\alpha}_m\big( W_m(F) W_m(F') \big) 
		&= \e^{-\i \Gm{F}{F'}} \, \tilde{\alpha}_m\big( W_m(F + F')\big) \notag\\
		&= \e^{-\i \Gm{F}{F'}} \,  W_m\big([F]_m + [F']_m\big) \notag\\
		&= \e^{-\i \Gm{F}{F'}}\e^{\i \Gm{[F]_m}{[F']_m}} \,  W_m\big([F]_m\big) W_m\big([F']_m\big) \notag\\
		&= W_m\big([F]_m\big) W_m\big([F']_m\big) \formspace.
	\end{align}
	Also, it holds that
	\begin{align}
	\tilde{\alpha}_m\big( W_m(F) ^* \big) 
	&=\tilde{\alpha}_m\big( W_m(-F) \big) \notag\\
	&= W_m([-F]_m)  \notag\\
	&=  W_m(-[F]_m)  \notag\\
	 &= W_m([F]_m)^* \formspace.
	\end{align}
	Now, from the first isomorphism theorem, since clearly the image of $\tilde{\alpha}_m$ is the whole algebra $\Wmdyn$, it holds that $\Wmdyn = \tilde{\alpha}_m\left( \Wm \right) \simeq {\Quotientscale{\Wm}{\Ker{\tilde{\alpha}_m}}}$ .
	In particular, $\Ker{\tilde{\alpha}_m}$ is a two sided ideal in $\Wm$ and one can show that this ideal corresponds to the ideal that is generated by the relation $\big( W_m(F) -\mathbbm{1} \big)$ for $F \in \JM$. So the morphism $\tilde{\alpha}$ implements the wanted dynamics. \\[4mm]  
	To see this, let us explicitly compare the two ideals:
	\begin{align}
		\Ker{\tilde{\alpha}_m} &= \Big\{A = \sum z_i W_m(F_i) : 0 = \sum z_i W_m\big([F_i]_m\big)\Big\}  \formspace,\\
		\IM &=  \Big\{A\, \big(W_m(F) - \mathbbm{1}\big)\,  B  : A, B  \in \Wm, F \in E \Big\} \notag  \\
									& = \Big\{A\, \big(W_m(F) - \mathbbm{1}\big)  : A  \in \Wm, F \in E \Big\} \notag\\
									& =  \Big\{\sum z_i  \big( W_m(F_i + H ) - W_m(F_i)\big) :  z_I \in \IC, F_i \in E , H \in \JM\Big\} \formspace.
	\end{align}
	\begin{lemma}
		Let $\tilde{\alpha}_m$ and $\IM$ be defined as above. Then
		\begin{align}
		 \Ker{\tilde{\alpha}_m} = \IM \formspace.
		\end{align}
	\end{lemma}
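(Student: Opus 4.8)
The plan is to prove the set equality by establishing the two inclusions separately, after first recording a structural observation that removes the commutator phases. The crucial preliminary fact is that every $H \in \JM$ is \emph{degenerate} for the pre-symplectic form: $\Gm{F}{H} = \langle F, G_m H\rangle_\M = 0$ for all $F \in E$, since $G_m H = 0$ by definition of $\JM$. Consequently $W_m(H)$ is central in $\Wm$ and the Weyl relation collapses to $W_m(F)W_m(H) = W_m(F+H)$ with trivial phase. This is precisely what lets the two-sided ideal generated by $\{W_m(H) - \mathbbm{1} : H \in \JM\}$ reduce to the stated form $\{\sum_i z_i(W_m(F_i + H_i) - W_m(F_i)) : F_i \in E,\ H_i \in \JM\}$, and it also shows that $\Gm{}$ descends to a well-defined pre-symplectic form on ${\Quotientscale{E}{\JM}}$, so that the target Weyl elements $W_m([F]_m)$ are meaningful.

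For the inclusion $\IM \subseteq \Ker{\tilde{\alpha}_m}$ I would simply apply $\tilde{\alpha}_m$ to a generic generator. Since $\tilde{\alpha}_m$ is a $*$-homomorphism with $\tilde{\alpha}_m(W_m(G)) = W_m([G]_m)$, and since $H_i \in \JM$ forces $[F_i + H_i]_m = [F_i]_m$, each term $W_m([F_i + H_i]_m) - W_m([F_i]_m)$ vanishes identically. Hence the generator maps to $0$ and lies in the kernel.

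The substantive direction is $\Ker{\tilde{\alpha}_m} \subseteq \IM$. I would take $A = \sum_{i=1}^n z_i W_m(F_i)$ (collecting terms with equal $F_i$, so the $F_i$ are pairwise distinct) and assume $\tilde{\alpha}_m(A) = \sum_i z_i W_m([F_i]_m) = 0$. The key step is the \emph{linear independence of the Weyl elements} in $\Wmdyn$, which is part of their defining property in Definition \ref{def:weyl_elements} now read over the quotient: grouping the indices into the classes $I_c = \{i : [F_i]_m = c\}$ gives $\tilde{\alpha}_m(A) = \sum_c\big(\sum_{i \in I_c} z_i\big) W_m(c)$, so independence forces $\sum_{i \in I_c} z_i = 0$ for every class $c$. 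Fixing a representative index $r(c) \in I_c$ and writing $F_i = F_{r(c)} + H_i$ with $H_i = F_i - F_{r(c)} \in \JM$, I would rewrite
\begin{align}
A = \sum_c \sum_{i \in I_c} z_i\big(W_m(F_{r(c)} + H_i) - W_m(F_{r(c)})\big) + \sum_c\Big(\sum_{i \in I_c} z_i\Big) W_m(F_{r(c)}) \formspace .
\end{align}
The second sum vanishes since each inner coefficient-sum is zero, while the first sum is manifestly of the generating form of $\IM$; hence $A \in \IM$.

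The main obstacle is purely the linear-independence input: one must be sure that distinct classes in ${\Quotientscale{E}{\JM}}$ yield linearly independent Weyl elements in $\Wmdyn$, which is exactly where the well-definedness of $\Gm{}$ on the quotient (from the preliminary observation) and the standard construction of the Weyl algebra over a pre-symplectic space are used. The remaining bookkeeping — reducing to distinct $F_i$, the class decomposition, and the representative substitution — is routine once that fact is in place.
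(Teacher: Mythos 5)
Your proof is correct and follows essentially the same route as the paper: the easy inclusion by applying $\tilde{\alpha}_m$ to the generators, and the converse by grouping terms into equivalence classes, extracting $\sum_{i\in I_c} z_i = 0$ from the linear independence of the Weyl elements over the quotient, and rewriting relative to a fixed representative. Your preliminary observation that $\Gm{F}{H}=0$ for $H\in\JM$ (so $W_m(H)$ is central and $\Gm{}$ descends to ${\Quotientscale{E}{\JM}}$) is a welcome addition that the paper leaves implicit when it states the simplified form of $\IM$.
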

	\begin{proof}
1.) For the first direction of the proof let $A \in \IM$. Then
\begin{align}
	\tilde{\alpha}_m(A) = \sum z_i  \big( W_m([F_i + H]_m ) - W_m([F_i]_m)\big)
	=0
\end{align}
since $[F_i + H]_m = [F_i]_m + [H]_m = [F_i]_m$, because $H \in \JM$ by assumption. Hence, $\IM \subset \Ker{\tilde{\alpha}_m}$. \par 
2.) For the other direction, let $A \in \Ker{\tilde{\alpha}_m}$. \\
An arbitrary $A \in \Wm$ is of the form $A = \sum z_i \, W_m(F_i)$. We can reorder this finite sum, such that we group together the $F_i$'s that are in the same equivalence class, thus writing $A = \sum_j A_j$, where $A_j = \sum_i z_{ij} W_m(F_{ij})$ with $[F_{ij}] = [F_{i'j'}] \iff j=j'$. With this notation it is easier to classify elements of the kernel of $\tilde{\alpha}$. By construction $\tilde{\alpha}(A) = 0 \iff \tilde{\alpha}(A_j) =0$ for all $j \iff \sum_i z_{ij} = 0$ for all $j$. It therefore suffices to show that $A_j \in \IM$ for all $j$:
\begin{align}
	A_j 
	&= \sum_i z_{ij} W(F_{ij})  \notag\\		
	&= \sum_i z_{ij} \big( \underbrace{W(F_{ij}) - W(F_{i'j})}_{\in \IM} +W(F_{i'j}) \big) , \text{ for some }i' \notag \\
	&= \underbrace{\sum_i z_{ij}}_{=0}  W(F_{i'j}) + \underbrace{X}_{\in \IM}
\end{align}
hence, $A_j \in \IM$ for all $j$, from which we conclude $\Ker{\tilde{\alpha}_m} \subset \IM$ which completes the proof. 
	\end{proof}
We now want to construct a homomorphism from the Weyl algebra bundle $\W$ to $\Wdyn$. We do this by a point-wise definition using the homomorphism $\tilde{\alpha}$. 
Define:
\begin{align}
	\alpha : \;&\W \to \Wdyn  \\
	& K \mapsto \alpha(K), \text{ such that }\big(  \alpha(K) \big) (m) = \tilde{\alpha}_m \big( K(m)\big) \formspace. \notag
\end{align}
Since the algebra relations in $\W$ and $\Wdyn$ are defined point-wise and we have furthermore already seen that $\tilde{\alpha}_m$ is a *-algebra-homomorphism for each $m$ we conclude that $\alpha$ is a *-algebra homomorphism. By the same argument as before we also conclude that $\Wmdyn \simeq {\Quotientscale{\W}{\Ker{\alpha}}}$.\par 
Now we are interested if we can find continuous fields of C*-algebras on $\Wdyn$ such that the physically interesting observables are contained. In particular we would like to have a theorem of the kind of Theorem \ref{thm:cont_field_W} for the algebra $\Wdyn$. Unfortunately it turns out that a theorem of the same generality as for the non-dynamical case is not possible. This is stated in form of the following theorem:
\begin{theorem}
	There cannot exist a continuous field $\big( \left\{\Wmdyn \right\} , \K^\textrm{dyn} \big)$ of C*-algebras such that for all $[F]_m \in {\Quotientscale{E}{\JM}}$ it holds $\big[ m \mapsto W_m([F]_m)\big] \in \K^\textrm{dyn}$.
\end{theorem}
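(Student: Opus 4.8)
The strategy is to assume such a continuous field $\big(\{\Wmdyn\}_m, \K^\text{dyn}\big)$ exists and derive a contradiction with the norm-continuity axiom (i) of Definition~\ref{def:continous_field_algebra}. The decisive structural input is that, by Lemma~\ref{lem:propagator-non-degenerate}, the form $\Gm{}$ descends to a \emph{non-degenerate} symplectic form on the quotient $\Quotientscale{E}{\JM}$ over which $\Wmdyn$ is built. I would first record the resulting distance relation between the dynamical Weyl generators and the unit. Writing $A = W_m([F]_m) - \mathbbm{1}$ and using $W_m(-[F]_m)W_m([F]_m) = \mathbbm{1}$ gives $A^*A = 2\,\mathbbm{1} - W_m([F]_m) - W_m(-[F]_m)$, so for every state $\omega_C$,
\[
\omega_C(A^*A) = 2 - 2\,\textrm{Re}\,C([F]_m).
\]
Taking the supremum over states and invoking non-degeneracy (together with $\abs{C([F]_m)} \le 1$, Lemma~\ref{lem:states_estimate}, which guarantees a state attaining $C([F]_m) = -1$ whenever the class is non-trivial) yields
\[
\norm{W_m([F]_m) - \mathbbm{1}}_m =
\begin{cases}
2, & G_m F \neq 0,\\
0, & G_m F = 0,
\end{cases}
\]
since $[F]_m = [0]_m$ holds exactly when $G_m F = 0$.

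Second, I would construct a single test one-form $F \in E$ that is annihilated by $G_m$ at one mass but not at another. By Lemma~\ref{lem:fundamental_solution_proca_operator}, $G_m(\delta d + m^2) = (G_m^- - G_m^+)(\delta d + m^2) = \mathbbm{1} - \mathbbm{1} = 0$. Fix two masses $m_0 \neq m_1$ and, using non-degeneracy of $\mathcal{G}_{m_1}$ (Lemma~\ref{lem:propagator-non-degenerate}, so that $G_{m_1}$ is not the zero map), choose $F' \in E$ with $G_{m_1}F' \neq 0$. Set $F \coloneqq (\delta d + m_0^2)F'$. Then $G_{m_0}F = 0$, whereas writing $(\delta d + m_0^2) = (\delta d + m_1^2) + (m_0^2 - m_1^2)$ gives
\[
G_{m_1}F = G_{m_1}(\delta d + m_1^2)F' + (m_0^2 - m_1^2)\,G_{m_1}F' = (m_0^2 - m_1^2)\,G_{m_1}F' \neq 0.
\]

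Finally I would assemble the contradiction. By hypothesis the section $s_F \coloneqq [m \mapsto W_m([F]_m)]$ lies in $\K^\text{dyn}$, and taking $F = 0$ shows the constant unit section $[m \mapsto \mathbbm{1}]$ lies in $\K^\text{dyn}$ as well; since $\K^\text{dyn}$ is a sub-$*$-algebra, hence a linear subspace, $s_F - [m \mapsto \mathbbm{1}] \in \K^\text{dyn}$. Its norm function $m \mapsto \norm{W_m([F]_m) - \mathbbm{1}}_m$ takes only the values $0$ and $2$ by the first step, and by the second step attains $0$ at $m_0$ and $2$ at $m_1$; a $\{0,2\}$-valued function is continuous only if constant, so this section violates axiom (i), and no such continuous field can exist. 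I expect the main obstacle to be the careful justification of the norm identity, precisely the claim that non-degeneracy of $\Gm{}$ on $\Quotientscale{E}{\JM}$ supplies a state $C$ with $C([F]_m) = -1$ for every non-trivial class (equivalently, that distinct Weyl generators lie at maximal distance $2$, so that $-1$ belongs to the spectrum of each $W_m([F]_m)$ with $[F]_m \neq [0]_m$); the construction of $F$ and the concluding topological argument are then routine.
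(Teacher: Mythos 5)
Your proof is correct and follows essentially the same route as the paper's: exhibit a section of $\K^\text{dyn}$ (a difference of Weyl-generator sections) whose norm function jumps between $0$ and $2$, which the paper does with $K = [m \mapsto W_m([F]_m) - W_m([H]_m)]$ for classes coinciding only at $m_0$, citing $\norm{zW_m([F]_m)+wW_m([H]_m)}_m = \abs{z}+\abs{w}$ for the distance-$2$ fact. Your version is if anything more explicit — you take $H=0$ and actually construct $F = (\delta d + m_0^2)F'$, which the paper only asserts exists and mentions afterwards — and the one step you flag (that $\norm{W_m([F]_m)-\mathbbm{1}}_m = 2$ for a nontrivial class, i.e.\ that $-1$ lies in the spectrum of the generator) is precisely the cited known property of C*-Weyl algebras over arbitrary pre-symplectic spaces, so it holds without invoking non-degeneracy of $\mathcal{G}_m$.
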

\begin{proof}
	Let $\K^\text{dyn}$ be a sub *-algebra of $\prod_m \Wmdyn$ such that for all $[F]_m \in {\Quotientscale{E}{\JM}}$ it holds $\big[ m \mapsto W_m([F]_m)\big] \in \K^\textrm{dyn}$. In particular all sections $K \in \K$ are continuous in the sense that $m \mapsto \norm{K(m)}_m$ is continuous. It is clear that for all $[F]_m \in {\Quotientscale{E}{\JM}}$ the sections $\left[ m \mapsto W_m([F]_m)\right]$ are continuous. But, since $\K$ is an algebra, also linear combinations of $\left[ m \mapsto W_m([F]_m)\right]$ are elements of $\K$ and are ought to be continuous. This does not hold as we shall see:\\
	Let $K = \Big( \big[ m \mapsto W_m([F]_m) \big] - \big[ m \mapsto W_m([H]_m) \big]  \Big) \in \K$ where we choose $F, H \in E$ such that $[F]_{m_0} = [H]_{m_0}$ for some $m_0 \in \IR$ but $[F]_{m_0} \neq [H]_{m_0}$ for $m \neq m_0$.\\
	Then by construction we find $\norm{K(m_0)}_{m_0} = 0$ but, because for Weyl elements it holds $\norm{z W_m([F]_m) + w W_m([H]_m)}_m = \abs{z} + \abs{w}$ (see e.g. \cite[Proposition 3.10]{rieckers_honegger_construction}), for $m \neq m_0$ it holds $\norm{K(m)}_m = 2$, hence $K(m)$ is not continuous. \Lightning\\
\end{proof}
Note, that this does not mean that there cannot be a continuous field $\K^\textrm{dyn}$ of C*-algebras such that at least for \emph{some} $[F]_m\in {\Quotientscale{E}{\JM}}$ the sections $\big[ m \mapsto W_m([F]_m)\big] \in \K^\textrm{dyn}$. By the argument in the above proof, we certainly need to leave out all the elements that can for some $m_0$ and some $F' \in E$ be written as $F = (\delta d + m_0^2)F'$. In particular this includes all closed test forms $F$: Choosing $m_0 =1$ and $F' = F$ one finds $(\delta d + 1)F' = F$. But there are certainly a lot more $F$ of the above form that we would have to discard. Also, when removing some $\big[ m \mapsto W_m([F]_m)\big]$ from the field of C*-algebras it is not clear, whether we are still able to fulfill the property that the set $\big\{  K(m) : K \in \K^\text{dyn}\big\}$ is dense in $\Wmdyn$. 
We would have to include some sections that take at every $m$ a value in $\Wmdyn$ that cannot be represented by a linear combination of $W_m([F]_m)$'s. These elements exist due to $\Wmdyn$ being defined as the $\norm{\cdot}_m$-closure of the span of the Weyl elements $W_m([F]_m)$. Such elements cannot be written down in a fashion that we have used to characterize sections so far.\par
Due to these reasons, the C*-Weyl algebra does not seem to be suitable for our problem and the ansatz of finding continuous fields of C*-algebras to investigate the zero mass limit is hereby discarded.
\section{Additional Lemmata}\label{app:lemmata}
\begin{lemma}\label{lem:epsilon_contraction}
	Let $\N$ be a $N$ dimensional manifold with metric $k$. It holds for any $j \in {1,2,\dots,N-1}$
	\begin{align}
		\detk ^2 \; \epsilon\indices{^{\alpha_1\dots\alpha_j \alpha_{j+1} \dots \alpha_N }} \epsilon\indices{_{\alpha_1\dots\alpha_j \beta_{j+1} \dots \beta_N} }
		= (-1) ^s \, (N-j)! j! \; \delta\indices{^{[\alpha_{j+1}}_{\beta_{j+1}}}\,\delta\indices{^{\alpha_{j+2}}_{\beta_{j+2}}}\, \dots \, \delta\indices{^{\alpha_N ] }_{\beta_N}} \formspace \notag
	\end{align}
	where $s$ is the number of negative eigenvalues of the metric $k$.
\end{lemma}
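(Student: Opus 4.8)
The plan is to prove the identity by relating the Levi-Civita \emph{symbol} $\epsilon$ (a numerical object with $\epsilon_{12\dots N}=1$) to the Levi-Civita \emph{tensor}, and to reduce the statement to the classical contraction formula for the numerical symbols. First I would recall the standard contraction identity for the Levi-Civita symbols on an $N$-dimensional space: the fully contracted product of two symbols over $j$ indices equals a generalized Kronecker delta,
\begin{align}
	\epsilon^{\alpha_1\dots\alpha_j \alpha_{j+1}\dots\alpha_N}\,\epsilon_{\alpha_1\dots\alpha_j \beta_{j+1}\dots\beta_N}
	= j!\,(N-j)!\;\delta\indices{^{[\alpha_{j+1}}_{\beta_{j+1}}}\dots\delta\indices{^{\alpha_N]}_{\beta_N}} \formspace,
\end{align}
where the raised symbol $\epsilon^{\alpha_1\dots\alpha_N}$ is understood here as the numerical symbol with the \emph{same} numerical values as the lowered one. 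This is a purely combinatorial fact that I would state as known (or prove quickly by counting: the contraction over $j$ shared indices vanishes unless the free index sets agree as sets, and the normalization $j!(N-j)!$ records the number of ways the summed indices can be permuted together with the antisymmetrization).

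The only genuine content beyond this combinatorial identity is accounting for the metric factors and the sign $(-1)^s$. The key step is to express the raising of indices in terms of the metric and the determinant. Since $\epsilon_{\alpha_1\dots\alpha_N}$ is a tensor density, raising all its indices with the inverse metric produces
\begin{align}
	k^{\alpha_1\beta_1}\cdots k^{\alpha_N\beta_N}\,\epsilon_{\beta_1\dots\beta_N}
	= \det(k^{\mu\nu})\,\epsilon^{\alpha_1\dots\alpha_N}
	= \frac{1}{\det(k_{\mu\nu})}\,\epsilon^{\alpha_1\dots\alpha_N} \formspace,
\end{align}
where $\epsilon^{\alpha_1\dots\alpha_N}$ on the right is again the numerical symbol. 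Here is where the signature enters: with $s$ negative eigenvalues one has $\det(k_{\mu\nu}) = (-1)^s\,\abs{\det(k_{\mu\nu})}$, so that $\det(k_{\mu\nu}) = (-1)^s\,(\detk)^2$ in the notation of the paper where $\detk = \sqrt{\abs{k_{\mu\nu}}}$. I would therefore carefully track which of the two $\epsilon$'s appearing in the statement carries raised indices: the factor $(-1)^s$ in the final answer is precisely the sign produced by the single determinant $\det(k_{\mu\nu})$ that relates the tensorial raising to the numerical-symbol contraction, while the two explicit factors $\detk$ multiplying the left-hand side cancel the $1/\abs{\det(k_{\mu\nu})}$ arising from raising.

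The main computation then consists of inserting the determinant relation into the left-hand side, so that $\detk^2\,\epsilon^{\alpha_1\dots\alpha_N}\,\epsilon_{\alpha_1\dots\alpha_N}$ becomes (up to the sign $(-1)^s$) the purely numerical contraction, and then invoking the combinatorial identity above to produce the antisymmetrized Kronecker deltas. I expect the main obstacle to be purely bookkeeping: keeping the sign $(-1)^s$ and the factor $\detk^2$ attached to the correct $\epsilon$, and making sure the numerical contraction identity is applied with the right normalization $j!\,(N-j)!$ and the correct index ranges (the $j$ contracted indices $\alpha_1,\dots,\alpha_j$ versus the $N-j$ free indices). No analytic difficulty arises; the entire proof is an algebraic identity once the density/tensor distinction and the signature sign are handled, so I would present it as a short direct calculation rather than an induction.
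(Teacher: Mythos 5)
Your argument is correct. Note, however, that the paper does not actually prove this lemma: its ``proof'' consists of a citation to Wald's equation (B.2.13), which states the analogous contraction identity for the Levi-Civita \emph{tensor} (volume element). Your proposal therefore supplies genuine content that the paper omits, namely the translation between the two conventions: you reduce the tensorial statement with its $\detk^2$ prefactor to the purely combinatorial identity $\tilde\epsilon^{\alpha_1\dots\alpha_N}\tilde\epsilon_{\alpha_1\dots\alpha_j\beta_{j+1}\dots\beta_N}=j!\,(N-j)!\,\delta\indices{^{[\alpha_{j+1}}_{\beta_{j+1}}}\cdots\delta\indices{^{\alpha_N]}_{\beta_N}}$ for numerical symbols, with the factor $\det(k^{\mu\nu})=(-1)^s/(\detk)^2$ from raising all $N$ indices accounting for both the cancellation of $\detk^2$ and the sign $(-1)^s$. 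This is exactly the right mechanism, and it is consistent with how the paper actually uses the lemma (e.g.\ in the computation of $**$, where the upper-index $\epsilon$ arises by metric raising). The only things worth making explicit in a written version are (i) that the raised $\epsilon^{\alpha_1\dots\alpha_N}$ in the statement is indeed the metric-raised symbol and not the bare numerical symbol with upper indices --- otherwise the sign and the surviving $\detk^2$ would not match --- and (ii) a one-line justification (or reference) for the numerical contraction identity itself, since that is where the normalization $j!\,(N-j)!$ and the antisymmetrization come from. Neither point is a gap in the logic; both are bookkeeping remarks.
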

\begin{proof}
The above formula is shown to hold in \cite[Equation (B.2.13)]{wald_GR}
\end{proof}
\begin{lemma} \label{lem:injective_hom_trivial_kernel}
	Let $\mathscr{A},\mathscr{B}$ be unital $^*$-algebras and let $\alpha : \mathscr{A} \to \mathscr{B}$ be a unit preserving $^*$-algebra homomorphism. Then:\\
	\begin{center}
		$\alpha$ is injective if and only if the kernel of $\alpha$ is trivial, that is, $\Ker{\alpha} = \{0\}$.
	\end{center}
\end{lemma}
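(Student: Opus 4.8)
The statement to prove is the standard fact that a unit-preserving $^*$-algebra homomorphism between unital $^*$-algebras is injective if and only if its kernel is trivial. This is an elementary linear-algebraic property, and the plan is to exploit linearity of $\alpha$ together with the fact that the kernel is closed under the difference operation.

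The proof proceeds by a two-way implication. First I would handle the direction $(\Rightarrow)$: suppose $\alpha$ is injective. Since $\alpha$ is a unit-preserving algebra homomorphism it is in particular additive and maps $0$ to $0$, so $0 \in \Ker{\alpha}$ always holds. If some $a \in \Ker{\alpha}$ were nonzero, then $\alpha(a) = 0 = \alpha(0)$ with $a \neq 0$, contradicting injectivity. Hence $\Ker{\alpha} = \{0\}$.

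For the converse direction $(\Leftarrow)$, suppose $\Ker{\alpha} = \{0\}$. To show injectivity, take any $a,b \in \mathscr{A}$ with $\alpha(a) = \alpha(b)$. Using the linearity (additivity) of the homomorphism $\alpha$ we compute
\begin{align}
	\alpha(a-b) = \alpha(a) - \alpha(b) = 0 \formspace,
\end{align}
so that $(a-b) \in \Ker{\alpha} = \{0\}$. This forces $a-b = 0$, that is $a = b$, which is precisely injectivity.

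This statement is genuinely routine and I expect no serious obstacle; the only point requiring care is to make explicit which structural properties of $\alpha$ are actually invoked. The argument uses only that $\alpha$ is linear (additive) — the multiplicative, unit-preserving, and $^*$-compatibility properties play no role in this particular equivalence and need not be used. The whole content is that for a linear map, injectivity is equivalent to having trivial kernel, applied here in the $^*$-algebra setting to conclude the desired result.
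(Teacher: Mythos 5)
Your proof is correct and follows essentially the same route as the paper's: the forward direction uses $\alpha(0)=0$ plus injectivity, and the converse applies additivity to $a-b$ and the triviality of the kernel. Your remark that only linearity is actually needed is accurate and matches what the paper's argument implicitly uses.
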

\begin{proof}
	1.) Assume that $\alpha$ is injective, that is, for $a,\tilde{a} \in \mathscr{A}, \alpha(a) = \alpha(\tilde{a}) \implies a = \tilde{a}$.
	Then, because $\alpha$ is a homomorphism, it follows that $\alpha(0) = 0$.
	Let $g \in \Ker{\alpha}$. Then
	\begin{align}
		\alpha(g) &= 0 = \alpha(0) \notag \\
		\implies g &= 0 \quad \textrm{by injectivity} \notag \\
		\implies \Ker{\alpha} &= \{ 0 \} \formspace,
	\end{align}
	which completes the proof of the first direction. \par
2.) Assume that $\Ker{\alpha}=\{ 0 \}$.
	Let $a,\tilde{a} \in \mathscr{A},$ such that $\alpha(a) = \alpha(\tilde{a})$. Then
	\begin{align}
		\alpha(a) - \alpha(\tilde{a}) &= 0 \notag\\
		\implies \alpha(a - \tilde{a} ) &= 0 	\quad\textrm{because }\alpha \textrm{ is homomorphism} \notag \\
		\implies a - \tilde{a} &= 0 					\quad\textrm{because the kernel is trivial} \formspace,
	\end{align}
	therefore, $\alpha$ is injective. This completes the proof.
\end{proof}
\begin{lemma}\label{lem:injective_mor_simple_algebra}
	Let $\mathscr{A},\mathscr{B}$ be unital $^*$-algebras and let $\alpha : \mathscr{A} \to \mathscr{B}$ be a unit preserving $^*$-algebra homomorphism.
	Then: \\
	\begin{center}
		$\alpha$ is injective if $\mathscr{A}$ is simple.
	\end{center}
\end{lemma}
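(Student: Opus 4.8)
The plan is to reduce the statement to the preceding Lemma \ref{lem:injective_hom_trivial_kernel}, which already tells us that a unit-preserving $^*$-algebra homomorphism is injective precisely when its kernel is trivial. So the entire task is to show that simplicity of $\mathscr{A}$ forces $\Ker{\alpha} = \{0\}$. First I would verify that $\Ker{\alpha}$ is a two-sided ideal of $\mathscr{A}$: if $g \in \Ker{\alpha}$ and $a \in \mathscr{A}$, then since $\alpha$ is a homomorphism, $\alpha(ag) = \alpha(a)\alpha(g) = 0$ and likewise $\alpha(ga) = \alpha(g)\alpha(a) = 0$, so $ag, ga \in \Ker{\alpha}$; closure under addition follows from linearity of $\alpha$. (One may also note it is a $^*$-ideal, using $\alpha(g^*) = \alpha(g)^* = 0$, though this is not strictly needed.)

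Next I would invoke the definition of simplicity: the only two-sided ideals of $\mathscr{A}$ are the trivial ones, namely $\{0\}$ and $\mathscr{A}$ itself. Hence $\Ker{\alpha}$ is either $\{0\}$ or all of $\mathscr{A}$. The crucial step — and really the only place where anything must be checked rather than merely recalled — is to exclude the possibility $\Ker{\alpha} = \mathscr{A}$. Here I would use that $\alpha$ is \emph{unit preserving}, so $\alpha(\mathbbm{1}_{\mathscr{A}}) = \mathbbm{1}_{\mathscr{B}}$. Since in a unital algebra the unit is by convention nonzero (equivalently, as noted in the footnote to the definition of simplicity, the multiplication is not uniformly zero), we have $\mathbbm{1}_{\mathscr{B}} \neq 0$, and therefore $\mathbbm{1}_{\mathscr{A}} \notin \Ker{\alpha}$. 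This rules out $\Ker{\alpha} = \mathscr{A}$, leaving $\Ker{\alpha} = \{0\}$.

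Finally, applying Lemma \ref{lem:injective_hom_trivial_kernel} to the now-established fact that the kernel is trivial yields that $\alpha$ is injective, completing the argument. I do not anticipate any genuine obstacle: the proof is structurally short and rests entirely on the ideal property of kernels together with unit preservation. The one point worth stating carefully is the appeal to $\mathbbm{1}_{\mathscr{B}} \neq 0$, which is exactly why the hypothesis that $\alpha$ be \emph{unit preserving} (rather than an arbitrary homomorphism, which could be the zero map) is indispensable here.
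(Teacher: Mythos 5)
Your proposal is correct and follows essentially the same route as the paper's proof: show $\Ker{\alpha}$ is a two-sided ideal, invoke simplicity to conclude it is $\{0\}$ or $\mathscr{A}$, rule out the latter via $\alpha(\mathbbm{1}) = \mathbbm{1} \neq 0$, and finish with Lemma \ref{lem:injective_hom_trivial_kernel}. Your explicit remark on why unit preservation is indispensable (excluding the zero map) is a nice touch the paper states only implicitly.
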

\begin{proof}
	The proof follows straight forward from Lemma \ref{lem:injective_hom_trivial_kernel}:
	First, note that for every homomorphism $\alpha : \mathscr{A} \to \mathscr{B}$, the kernel of $\alpha$ is an ideal in $\mathscr{A}$:\\
	Let $a \in \mathscr{A}$ arbitrary and $g \in \Ker{\alpha}$. Then
	\begin{align}
		\alpha(ag)
		&= \alpha(a) \alpha(g) \notag\\
		&= \alpha(a)\,0 \notag\\
		&= 0 \notag\\
		\implies (ag) &\in \Ker{\alpha}\formspace.
	\end{align}
	And an analogous result follows for $(ga)$ which shows that $\Ker{\alpha}$ is a two sided ideal in $\mathscr{A}$.\\
	Now, since $\mathscr{A}$ is simple, $\alpha$ has either full or trivial kernel, that is, $\Ker{\alpha} = \{0 \}$ or $\Ker{\alpha} = \mathscr{A}$. But since $\alpha(\mathbbm{1}) = \mathbbm{1} \neq 0$ it follows that $\Ker{\alpha} = \{0 \}$ and by Lemma \ref{lem:injective_hom_trivial_kernel} that $\alpha$ is injective.
\end{proof}
\begin{lemma}\label{lem:BU-algebra-barreled}
	Let $\mathfrak{X}$ a complex vector bundle over a smooth differential manifold. Then, the Borchers-Uhlmann algebra $\BU(\Gamma_0(\mathfrak{X}))$ is barreled.
\end{lemma}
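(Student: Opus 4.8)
The plan is to reduce the claim, via the explicit construction of the topology on $\BU(\Gamma_0(\mathfrak{X}))$, to the barreledness of a few elementary building blocks, and then to invoke the permanence properties of barreled spaces under the operations used in that construction. Recall that a locally convex space is \emph{barreled} precisely when every barrel (a closed, absorbing, balanced, convex set) is a neighbourhood of zero, equivalently when the Banach–Steinhaus theorem holds; the relevant stability facts I would cite (all in \cite{treves}, and already used throughout Section \ref{sec:BU-algebra}) are: every Fréchet space is barreled by the Baire category theorem, a locally convex inductive limit of barreled spaces is barreled, and a finite or countable direct sum of barreled spaces is barreled. Since by construction $\BU_n = \IC \oplus \bigoplus_{i=1}^n \big(\Gamma_0(\mathfrak{X})\big)^{\otimes i}$ carries the direct sum topology and $\BU(\Gamma_0(\mathfrak{X})) = \varinjlim_n \BU_n$ carries the inductive limit topology, it suffices to prove that each homogeneous component $\big(\Gamma_0(\mathfrak{X})\big)^{\otimes i}$, equipped with its projective topology, is barreled: the direct sum with the trivially barreled line $\IC$ then gives $\BU_n$ barreled, and the inductive limit gives the full algebra barreled.

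The base case $i=1$ is immediate. By the very construction recalled above, $\Gamma_0(\mathfrak{X})$ is the locally convex inductive limit $\varinjlim_K \Gamma(\mathfrak{X},K)$ over an exhausting (fundamental) sequence of compact sets $K$, where each $\Gamma(\mathfrak{X},K)$ is a Fréchet space. Hence $\Gamma_0(\mathfrak{X})$ is an LF-space, and the two stability facts above (Fréchet $\Rightarrow$ barreled, inductive limit of barreled $\Rightarrow$ barreled) show it is barreled.

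The substantive step, and what I expect to be the main obstacle, is the barreledness of the higher tensor powers $\big(\Gamma_0(\mathfrak{X})\big)^{\otimes i}$ in the projective topology for $i \geq 2$. The difficulty is that the \emph{algebraic} projective tensor product of two infinite-dimensional Fréchet spaces is a proper dense subspace of its completion, and is therefore not automatically barreled, so one cannot naively tensor the Fréchet building blocks of the previous paragraph. The route I would take is to exploit that the projective tensor product commutes with locally convex inductive limits (Grothendieck; see \cite{treves}), which lets me write $\big(\Gamma_0(\mathfrak{X})\big)^{\otimes i} = \varinjlim_{K_1,\dots,K_i} \Gamma(\mathfrak{X},K_1)\otimes_\pi \cdots \otimes_\pi \Gamma(\mathfrak{X},K_i)$, and then to identify each factorwise tensor product, using the nuclearity of the spaces $\Gamma(\mathfrak{X},K)$ together with the external-tensor-bundle description $f^{(i)} \in \Gamma_0\big(\mathfrak{X} \boxtimes \cdots \boxtimes \mathfrak{X}\big)$ over $\M^i$ noted in the construction, with the corresponding space of compactly supported sections over the product manifold. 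The latter is again an LF-space, hence barreled, so $\big(\Gamma_0(\mathfrak{X})\big)^{\otimes i}$ emerges as a countable inductive limit of barreled spaces and is therefore barreled.

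With every homogeneous component barreled, the direct-sum and inductive-limit permanence properties assemble the statement exactly as sketched in the first paragraph. The delicate point throughout — and the step I would verify most carefully — is the topological identification in the tensor-power stage, namely matching the projective topology on the algebraic tensor product with the natural LF-topology on test sections over $\M^i$, since it is precisely the interplay of nuclearity with the passage to the inductive limit that rescues barreledness where the uncompleted Fréchet tensor product alone would fail.
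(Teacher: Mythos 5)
Your overall strategy is exactly the paper's: reduce to the homogeneous components, observe that spaces of compactly supported sections of a vector bundle are LF-spaces and hence barreled, and then assemble the result from the permanence of barreledness under direct sums and locally convex inductive limits. The paper disposes of the tensor powers in one line, by invoking the dense embedding $\Gamma_0(\mathfrak{X})^{\otimes n}\subset\Gamma_0(\boxtimes^n\mathfrak{X})$ and treating the latter as an LF-space; you are right to be uneasy about that step, since barreledness does not pass to dense subspaces (any non-barreled normed space is dense in its Banach completion).

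However, the repair you propose does not close the gap you yourself identified. The identification furnished by nuclearity is
$\Gamma(\mathfrak{X},K_1)\,\hat{\otimes}_\pi\cdots\hat{\otimes}_\pi\,\Gamma(\mathfrak{X},K_i)\cong\Gamma\bigl(\boxtimes^i\mathfrak{X},K_1\times\cdots\times K_i\bigr)$,
i.e.\ a statement about the \emph{completed} projective tensor product. The uncompleted algebraic tensor product is a proper dense subspace of that section space (already $e^{xy}$ is not a finite sum of products $f_k(x)g_k(y)$), so after commuting $\otimes_\pi$ with the inductive limits you are left with an inductive limit of dense, a priori non-barreled subspaces of Fr\'echet spaces — which is precisely the question you started with, namely whether the uncompleted projective tensor product of Fr\'echet spaces is barreled. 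That is a genuinely nontrivial permanence problem (a question going back to Grothendieck, treated in P\'erez Carreras and Bonet's monograph on barrelled spaces), not a consequence of the LF/direct-sum/inductive-limit toolbox. To make the argument airtight one must either cite such a result for the uncompleted $\otimes_\pi$, or define the degree-$n$ component of the BU-algebra as the completed space $\Gamma_0(\boxtimes^n\mathfrak{X})$ from the outset, in which case the one-line LF argument (the paper's and yours) goes through verbatim. As written, your proof and the paper's share this residual gap; yours at least has the merit of making it visible.
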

\begin{proof}
	Let $\mathfrak{X}$ a complex vector bundle over a smooth differential manifold $\N$. Recall from chapter \ref{sec:BU-algebra} that we have defined the BU-algebra over $\Gamma_0(\mathfrak{X})$ as the tensor algebra $\BU(\Gamma_0(\mathfrak{X})) = \IC \oplus \bigoplus\limits_{n= 1}^\infty \Gamma_0(\mathfrak{X})^{\otimes n}$ and endowed it with the inductive limit topology of the spaces
\begin{align}
\BU_N = \IC \oplus \bigoplus\limits_{n= 1}^N \Gamma_0(\mathfrak{X})^{\otimes n} \formspace.
\end{align}
	 Equivalently, one can densely embed the tensor products $\Gamma_0(\mathfrak{X})^{\otimes n} \subset \Gamma_0( \boxtimes^n \mathfrak{X} )$, where $\boxtimes$ denotes the outer tensor product of vector bundles (see \cite[Chapter 3.3]{verch_sahlman}). The space of compactly supported section of a complex vector bundle is a LF-space, as it is defined as the inductive limit of the Frech\'et spaces of sections with support in some compact $K_l$ where $\left\{ K_l \right\}_l$ is a fundamental sequence of compact $K_l \subset \N$ (see \cite[17.2.2 and 17.3.1]{dieudonne_3}). Since LF-spaces are barreled \cite[Chapter 33, Corollary 3]{treves} and the direct sum of barreled spaces is again barreled \cite[18.11]{kelly-namioka}, we find for any $N \in \IN$ that $\BU_N$ is barreled. Additionally, the inductive limit of barreled spaces is barreled \cite[Chapter V, Proposition 6]{robertson}, hence the BU-algebra over smooth compactly supported sections $\Gamma_0(\mathfrak{X})$ over a complex vector bundle $\mathfrak{X}$ is barreled.
\end{proof}
\begin{lemma}[Symmetrization of fields] \label{lem:symmetrization-of-fields}
	Let $\left\{f_m\right\}_m \subset \BU\left(\Dzs \right)$ be a continuous family in the Borchers-Uhlmann algebra algebra of initial data $\Dzs$.
	Then
	\begin{align}
		f_m = f_{m,\text{sym}} + g_m \formspace,
	\end{align}
	where $\left\{f_{m,\text{sym}}\right\}_m$ is a continuous family of symmetric algebra elements in $\BU\left(\Dzs \right)$, and $\left\{ g_m\right\}_m$ is a continuous family in the ideal $\ICCR$ as defined in section \ref{sec:field-algebra-topology}.
\end{lemma}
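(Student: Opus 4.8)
The plan is to reduce the statement to a single, mass-independent algebraic fact and then read off the continuity for free. The key preliminary observation is that, once the commutation relations are written on the space of initial data, the generators of $\ICCR$,
\begin{align*}
\big(-\i(\langle \pi,\varphi'\rangle_\Sigma - \langle\varphi,\pi'\rangle_\Sigma),\, 0,\, (\varphi,\pi)\otimes(\varphi',\pi') - (\varphi',\pi')\otimes(\varphi,\pi),\, 0,\dots\big),
\end{align*}
carry \emph{no} dependence on the mass $m$: the symplectic form $\langle\pi,\varphi'\rangle_\Sigma - \langle\varphi,\pi'\rangle_\Sigma$ on $\Dzs$ is mass independent, as is the tensor product. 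Thus $\ICCR$ is one fixed two-sided ideal in $\BU(\Dzs)$, and the whole splitting will be obtained by applying \emph{one and the same} family of continuous linear maps to $f_m$, independently of $m$, so that continuity of the output is inherited from continuity of $\{f_m\}_m$.

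First I would prove the purely algebraic statement that $\BU(\Dzs) = \mathrm{Sym}\oplus\ICCR$ as vector spaces, where $\mathrm{Sym}$ consists of the elements that are symmetric in every degree. The heart of this is the identity, valid for any $a,b\in\BU(\Dzs)$ and $\psi=(\varphi,\pi),\psi'=(\varphi',\pi')\in\Dzs$,
\begin{align*}
a\cdot\big(\psi\otimes\psi' - \psi'\otimes\psi\big)\cdot b = a\cdot\gamma_{\psi,\psi'}\cdot b + \i\big(\langle\pi,\varphi'\rangle_\Sigma-\langle\varphi,\pi'\rangle_\Sigma\big)\,a\cdot b,
\end{align*}
where $\gamma_{\psi,\psi'}\in\ICCR$ denotes the generator above; since $\ICCR$ is two-sided, the first term lies in $\ICCR$ while the second has its degree lowered by two. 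Reading this as ``an adjacent transposition costs an ideal element plus a strictly lower-degree term'', I would symmetrise by induction on the top degree $N$: replacing $f^{(N)}$ by its symmetrisation $\tfrac{1}{N!}\sum_\sigma P_\sigma f^{(N)}$ changes $f$ by an element of $\ICCR$ plus a term of degree $\leq N-1$, to which the induction hypothesis applies.

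That this splitting is genuinely a direct sum, i.e. $\mathrm{Sym}\cap\ICCR=\{0\}$, is exactly the permutation-averaging argument already carried out in the proof of Lemma \ref{lem:limit_existence_quantum_equivalence}: a symmetric element of $\ICCR$, averaged over all permutations of its highest degree, vanishes because each generator $\gamma$ is antisymmetric in its degree-two part and the summands cancel pairwise. This yields a well-defined pair of linear projections $P_{\mathrm{sym}}:\BU(\Dzs)\to\mathrm{Sym}$ and $P_{\mathrm{ideal}}:\BU(\Dzs)\to\ICCR$ with $\mathrm{id}=P_{\mathrm{sym}}+P_{\mathrm{ideal}}$, both assembled from the fixed, $m$-independent symmetriser, the bilinear form on $\Dzs$, and algebra multiplication.

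Finally I would set $f_{m,\mathrm{sym}}:=P_{\mathrm{sym}}(f_m)$ and $g_m:=P_{\mathrm{ideal}}(f_m)$. Since a continuous curve into the inductive-limit space $\BU(\Dzs)$ stays, locally in $m$, inside a single step $\BU_N$ and is continuous degree by degree there, only finitely many degrees and finitely many of the correction steps above are involved on each compact $m$-interval; as each step is a fixed continuous linear operation, both $m\mapsto f_{m,\mathrm{sym}}$ and $m\mapsto g_m$ are continuous, and $g_m\in\ICCR$ by construction. The main obstacle I anticipate is the bookkeeping needed to turn the inductive symmetrisation into honest, globally defined continuous linear maps on the LF-topology rather than a procedure carried out element by element: one must verify that the lower-degree corrections assemble into continuous linear maps and that the construction respects the local containment in a fixed $\BU_N$. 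Granting the direct-sum decomposition and this continuity, the lemma follows.
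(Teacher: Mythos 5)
Your proposal is correct, and its algebraic core coincides with the paper's: both symmetrize degree by degree, trade each adjacent transposition for a generator of $\ICCR$ plus a term two degrees lower (your identity $a\cdot(\psi\otimes\psi'-\psi'\otimes\psi)\cdot b = a\cdot\gamma_{\psi,\psi'}\cdot b + \i\, G(\psi,\psi')\,a\cdot b$ is exactly what the paper encodes via the contraction maps $G^{(N)}_k$ in equations (\ref{eqn:inserting-commutator-zero}) and (\ref{eqn:big_ideal_element})), and both close by induction on the degree. The genuine difference is the packaging. You upgrade the construction to a direct-sum decomposition $\BU(\Dzs)=\mathrm{Sym}\oplus\ICCR$ by invoking the fact that the only symmetric element of $\ICCR$ is zero (which the paper proves inside Lemma \ref{lem:limit_existence_quantum_equivalence}, but does not need for the symmetrization lemma itself, since that lemma asserts only existence of a splitting); this makes the two components canonical, yields honest $m$-independent linear projections $P_{\mathrm{sym}}$ and $P_{\mathrm{ideal}}=\mathrm{id}-P_{\mathrm{sym}}$, and reduces continuity of $m\mapsto f_{m,\text{sym}}$ and $m\mapsto g_m$ to the single observation that fixed continuous linear maps preserve continuous families --- whereas the paper re-verifies continuity of every permutation operator, contraction and sum at each inductive step. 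What your route buys is uniqueness, linearity and a one-line continuity argument; what it costs is the extra input $\mathrm{Sym}\cap\ICCR=\{0\}$ and the obligation you correctly flag of showing $P_{\mathrm{sym}}$ is continuous on the inductive-limit topology. That obligation is discharged the same way the paper does it: since $(\Dzs)^{\otimes N}$ is the algebraic tensor product, your identity extends by finite linearity, $P_{\mathrm{sym}}$ restricted to each $\BU_N$ is a finite composition of the symmetrizer and the continuous contraction maps, and a continuous family on a compact $m$-interval lives in a single $\BU_N$. So there is no gap, only a reorganization of where the continuity bookkeeping is done.
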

\emph{Note:} An element $f_{\text{sym}} \in \BU(\Dzs)$ is symmetric, if it is symmetric in every degree.
\begin{proof}
First, one notes that it suffices to show the statement for any continuous family of \emph{homogeneous} elements of degree $N$, that is, $f_m = (0,0,\dots,0,f^{(N)}_m,0,0,\dots)$, for $f_m^{(N)} \in \left(\Dzs\right)^{\otimes N}$, since an arbitrary continuous family of elements in $\BU\left(\Dzs \right)$ can be written as a finite sum of continuous families of homogeneous elements. The statement then follows, because the sum of symmetric elements remains symmetric, and the ideal $\ICCR$ is a closed sub-*-algebra of $\BU\left(\Dzs \right)$ and hence by definition closed under addition. In the context of this proof, continuous families of elements in the BU-algebra are always meant to be continuous in $m$. 	\par
The main idea for this proof is then to write a (homogeneous) element as the sum of something symmetrized plus something containing commutations of indices. One can then use the commutator to identify an element of degree $N$ with an element of degree $N-2$ plus some terms that lie in the ideal $\ICCR$. Furthermore we need to show that all these operations are continuous in order to get the result for continuous families. To formulate this precisely, we need to introduce some notation.\par
Recall that we have defined test differential one-forms on the Cauchy surface $\Sigma$ as smooth sections of the cotangent bundle $\TsS$ with compact support, that is, $\Omega^1_0(\Sigma) = \Gamma_0(\TsS)$. We can therefore identify the space of initial data $\Dzs$ as
\begin{align}
	\Dzs &= \Omega^1_0(\Sigma) \oplus \Omega^1_0(\Sigma) \notag\\
	&= \Gamma_0(\TsS) \oplus \Gamma_0(\TsS) \notag\\
	&= \Gamma_0(\TsS \oplus \TsS) \formspace,
\end{align}
where $\TsS \oplus \TsS$ denotes the Whitney sum of vector bundles as defined in section \ref{sec:spacetime_geometry}. Furthermore, we can identify tensor products of initial data as
\begin{align}
	\Dzs \otimes \Dzs &= \Gamma_0\left( \TsS \oplus \TsS \right) \otimes \Gamma_0\left( \TsS \oplus \TsS \right) \notag\\
	&\subset \Gamma_0\big( (\TsS \oplus \TsS) \boxtimes (\TsS \oplus \TsS) \big) \formspace,
\end{align}
where the canonical embedding is dense (c.f. \cite[Chapter 3.3]{verch_sahlman}). The outer tensor product $\boxtimes$ of vector bundles is also defined in section \ref{sec:spacetime_geometry}. We may therefore identify an element in $\Dzs \otimes \Dzs$ with an element in $\Gamma_0\big( (\TsS \oplus \TsS) \boxtimes (\TsS \oplus \TsS) \big) $. We can furthermore re-write this in terms of components as
\begin{align}
\Gamma_0\big( &(\TsS \oplus \TsS) \boxtimes (\TsS \oplus \TsS) \big)\notag \\
  &= \underbrace{\Gamma_0\big( (\TsS \boxtimes \TsS) \big) }_{\eqqcolon \Omega_{11}} \oplus \underbrace{\Gamma_0\big( (\TsS \boxtimes \TsS) \big) }_{\eqqcolon \Omega_{12}}  \oplus \underbrace{\Gamma_0\big( (\TsS \boxtimes \TsS) \big)}_{\eqqcolon \Omega_{21}}    \oplus \underbrace{\Gamma_0\big( (\TsS \boxtimes \TsS) \big)   }_{\eqqcolon \Omega_{22}}  \formspace.
\end{align}
We define the corresponding continuous projectors
\begin{align}
	\text{pr}_{ij} : \Gamma_0\big( &(\TsS \oplus \TsS) \boxtimes (\TsS \oplus \TsS) \big) \to \Omega_{ij} \;,\quad i,j=1,2 \formspace.
\end{align}
As an example, let $\psi = (\varphi, \pi) \otimes (\varphi',\pi') \in \Gamma_0\big( (\TsS \oplus \TsS) \boxtimes (\TsS \oplus \TsS) \big)$, then $\text{pr}_{11}(\psi) = \varphi \otimes \varphi'$, $\text{pr}_{12}(\psi) = \varphi \otimes \pi'$, etcetera.\\
We generalize this to higher order tensor products of the space of initial data to obtain for any $N \in \IN$
\begin{align}
	\left(\Dzs\right)^{\otimes N}
	&\subset \Gamma_0\big( (\TsS \oplus \TsS)^{\boxtimes N} \big) \notag\\
	 &= \underbrace{\Gamma_0\big( (\TsS)^{\boxtimes N} \big) }_{\eqqcolon \Omega_{11\dots11}} \oplus \underbrace{\Gamma_0\big( (\TsS)^{\boxtimes N} \big) }_{\eqqcolon \Omega_{11\dots12}}  \oplus \cdots \oplus \underbrace{\Gamma_0\big( (\TsS)^{\boxtimes N} \big)   }_{\eqqcolon \Omega_{22\dots22}} \formspace,
\end{align}
where there are $2^N$ summands in the direct sum. By construction, $ (\TsS \oplus \TsS)^{\boxtimes N}$ is a vector bundle over $\Sigma^N$. Again, we define the continuous projectors $\text{pr}_{a_1 a_2 \dots a_N} : \Gamma_0\big( (\TsS \oplus \TsS)^{\boxtimes N} \big) \to \Omega_{a_1 a_2 \dots a_N}$, $a_i = 1, 2$ for $i=1,2,\dots,N$. To simplify notation for later use, we also define for any $k=1,\dots,N-1$, $i,j=1,2$,
\begin{align}
	\text{pr}_{k,ij} &:  \Gamma_0\big( (\TsS \oplus \TsS)^{\boxtimes N} \big) \to  \Gamma_0(\mathfrak{X}) \notag \\
	\text{pr}_{k,ij} &\coloneqq \bigoplus\limits_{\substack{a_i = 1,2 \\ i \neq k,k+1}} \text{pr}_{a_1 a_2 \dots a_{k-1} i j a_{k+2} \dots a_N}  \formspace,
\end{align}
where naturally, $\text{pr}_{k,ij}$ is a map into $\bigoplus^{2^{N-2}} \Gamma_0 (\TsS^{\boxtimes N})$ that we embed into $\Gamma_0(\mathfrak{X}) = \Gamma_0\big( (\TsS \oplus \TsS)^{\boxtimes (k-1)} \boxtimes \TsS \boxtimes \TsS \boxtimes (\TsS \oplus \TsS)^{\boxtimes(N-k-1)} \big)$. This notation makes clear that basically $\text{pr}_{k,ij}$ corresponds to a projection of the $k$-th and $(k+1)$-th $i$- respectively $j$-component of a tensor of degree $N$.\par
With this notation at our disposal we may continue the actual proof. We want to prove that, for any $N \in \IN$, a continuous family $\big\{ (0,0,\dots,f^{(N)}_m,0,0,\dots) \big\}_m$  of homogeneous elements can be decomposed into a sum of a continuous family $f_{m,\text{sym}}$ of symmetric elements and a continuous family in the ideal $\ICCR$. We will do this by induction in steps of $2$.\par
Base case:\par
1.) $N=0$ \\
Let $\big\{(f_m^{(0)} , 0 , 0 ,\dots)\big\}_m$ be a family of homogeneous elements of degree zero, $f_m^{(0)} \in \IC$. Trivially, the element is symmetric and by assumption continuous in $m$, and since $0 \in \ICCR$ by definition, the statement follows directly. \par
2.) $N=1$\\
Let $\big\{(0,f_m^{(1)} , 0 , 0 ,\dots)\big\}_m$ be a family of homogeneous elements of degree one, $f_m^{(1)} \in \Dzs$. Again, this is already symmetric and continuous in $m$, and the statement follows trivially. \par
We now proceed with the induction step.\\
The induction assumption is that for $N\in \IN$, every continuous family of homogeneous elements $(0,0,\dots,0,f_m^{(N)},0,0,\dots )$, where $f_m^{(N)} \in (\Dzs)^{\otimes N}$, can be written as a sum of a continuous family of fully symmetric elements in the field algebra and a continuous family of elements in the ideal. \par
Now let $f_m^{(N+2)} \in (\Dzs)^{\otimes(N+2)}$, which we identify with an element
\begin{align}
f_m^{(N+2)} (p_1,p_2,\dots,p_{N+2}) \in \Gamma_0\big( (\TsS \oplus \TsS)^{\boxtimes (N+2)} \big) \formspace,
\end{align}specify a continuous family of homogeneous elements $(0,0,\dots,0,f^{(N+2)}_m ,0,0,\dots)$ in the BU-algebra of initial data. In order to re-write this into the desired form, we start by looking at the symmetrized element
\begin{align}
	f_{m,\text{sym}}^{(N+2)}(p_1,p_2,\dots,p_{N+2})
	&\coloneqq \frac{1}{(N+2)!} \sum\limits_\sigma f_m^{(N+2)} (p_{\sigma(1)},p_{\sigma(2)},\dots,p_{\sigma(N+2)}) \formspace,
\end{align}
where the sum is taken all permutations $\sigma$ of $\left\{ 1,2,\dots,N+2\right\}$. Introducing the permutation operator $P_\sigma : \Gamma_0\big( (\TsS \oplus \TsS)^{\boxtimes (N+2)} \big)  \to \Gamma_0\big( (\TsS \oplus \TsS)^{\boxtimes (N+2)} \big) $, such that $(P_\sigma \psi)(p_1,p_2,\dots,p_{N+2}) = \psi (p_{\sigma(1)},p_{\sigma(2)},\dots,p_{\sigma(N+2)}) $, we may write this in the short hand notation as
\begin{align}
f_{m,\text{sym}}^{(N+2)}
&=  \frac{1}{(N+2)!} \sum\limits_\sigma P_\sigma f_m^{(N+2)} \notag\\
&= f_{m}^{(N+2)} +\frac{1}{(N+2)!} \sum\limits_\sigma  (P_\sigma - 1) f_m^{(N+2)} \formspace,
\end{align}
and therefore
\begin{align}\label{eqn:symmetrization_of_field}
f_{m}^{(N+2)} = f_{m,\text{sym}}^{(N+2)} - \frac{1}{(N+2)!} \sum\limits_\sigma  (P_\sigma - 1) f_m^{(N+2)}  \formspace.
\end{align}
Since the topology of $\Gamma_0\big( (\TsS \oplus \TsS)^{\boxtimes (N+2)} \big) $ is invariant under the swapping of variables, we find that if $f_m^{(N+2)} (p_1,p_2,\dots,p_{N+2})$ is continuous in $m$, then also the family associated with $f_m^{(N+2)} (p_{\sigma(1)},p_{\sigma(2)},\dots,p_{\sigma(N+2)})$ is continuous in $m$ for any permutation $\sigma$. Since taking sums is continuous, we conclude that $f_{m,\text{sym}}^{(N+2)}$ as defined above is continuous in $m$ as well as $\sum_\sigma  (P_\sigma - 1) f_m^{(N+2)}$. \par
So, we have successfully decomposed $(0,0,\dots,0,f_m^{(N+2)},0,0,\dots)$ into a sum of a continuous family of symmetric elements, and a continuous family of permutations. We have now left to show that this residual second term of permutations
\begin{align}\label{eqn:permuted_homogenous_element}
	\big(0,0,\dots,0,\sum_\sigma  (P_\sigma - 1) f_m^{(N+2)},0,0,\dots \big)
\end{align}
can be decomposed into the desired form.\par
We note that every permutation $\sigma$ can be written as the composition of transpositions of neighbor indices $\tau_i$, $i=1,2,\dots,l$, for some $l \in \IN$. For example. $\tau(1,2,3,4,5) = (1,2,4,3,5)$. Decomposing $\sigma = \tau_1 \comp \tau_2 \comp \cdots \comp \tau_l$, we naturally find $P_\sigma = P_{\tau_1}\cdot P_{\tau_2}\cdots P_{\tau_l}$. We can therefore write, by expanding the expression into a telescoping series,
\begin{align}\label{eqn:commuted-elements}
(P_\sigma - 1) f_m^{(N+2)} = \sum_{i=1}^l (P_{\tau_i} - 1)\,P_{\tau_{i+1}}\cdots P_{\tau_{l}}\,   f_m^{(N+2)} \formspace.
\end{align}
This is now a sum over differences of elements where only two indices are swapped, for example a difference like
\begin{align}
\psi_m^{(N+2)}(p_1,p_2,\dots,p_i,p_{i+1},\dots,p_{N+2}) - \psi_m^{(N+2)}(p_1,p_2,\dots,p_{i+1},p_{i},\dots,p_{N+2}) \formspace,
\end{align}
where $\psi_m^{(N+2)} = P_{\tau_{i+1}}\cdots P_{\tau_{l}}\,   f_m^{(N+2)}$ for some transpositions $\tau_i$.
We now want to use a generalized notion of the commutator to reduce this to something of degree $N$.\par
We define the map $G^{(N)}_k : \Gamma_0\big( (\TsS \oplus \TsS)^{\boxtimes (N+2)} \big) \to \Gamma_0\big( (\TsS \oplus \TsS)^{\boxtimes N} \big)$, for all $k=1,2,\dots,N-1$, by
\begin{align}
G^{(N)}_k (\psi) =& \int\limits_\Sigma h^{ij}(p) \left( \text{pr}_{k,21} \psi - \text{pr}_{k,12} \psi  \right)_{b_1 b_2 \dots b_{k-1} i j b_{k+2}\dotsb_{N+2}}(\dots,p,p,\dots)\,\dvolh(p) \,\notag \\
&\cdot \; dx^{b_1} \otimes \cdots \otimes  dx^{b_{k-1}}\otimes  dx^{b_{k+2}} \otimes \cdots \otimes  dx^{b_{N+2}}   \formspace,
\end{align}
where for the point $(\dots,p,p,\dots) \in \Sigma^{(N+2)}$ the $p$'s are put in the $k$-th and $(k+1)$-th entry. By definition of the projectors the map $G^{(N)}_k (\psi)$ applied gives an element in $\Gamma_0\big( (\TsS \oplus \TsS)^{\boxtimes N} \big)$. By construction, if $f_m^{(N+2)}$ specifies a continuous family of algebra elements, so does $G_k^{(N)}\left( f_m^{(N+2)} \right)$: Continuity in the space of smooth test one-forms means continuity in all orders of derivatives, using a partition of unity of the compact support and local charts. In particular this yields continuity in the components itself. Since the projectors are continuous, $G^{(N)}_k$ acts as a distribution and is therefore continuous .Using this generalized propagator, we find for an arbitrary $\tilde{f}_m^{(N+2)} \in (\Dzs)^{\otimes(N+2)}$ and an arbitrary $\tau$ that swaps the $k$-th and $(k+1)$-th component of $\psi^{(N+2)}$
\begin{align}\label{eqn:inserting-commutator-zero}
	\big( 0,0,\dots,0, (P_\tau - 1)& \tilde{f}_m^{(N+2)},0,0,\dots \big) \notag \\
&= \left(0,0,\dots,0,-\i G^{(N)}_k\big( \tilde{f}_m^{(N+2)}\big)  , 0, (P_\tau - 1)\tilde{f}_m^{(N+2)} ,0,0,\dots\right) \notag \\
&+ \left( 0,0,\dots,0,\i G^{(N)}_k\big( \tilde{f}_m^{(N+2)}\big)  , 0, 0,\dots \right) \formspace,
\end{align}
where the second term is naturally a homogeneous element of degree $N$ that is continuous in $m$.\\
The first term, also continuous in $m$, may be explicitly worked out by choosing a pure $\psi^{(N+2)} = \psi_{(1)} \otimes \psi_{(2)} \otimes \cdots \otimes \psi_{(N+2)}$, where $\psi_{(i)} = (\varphi_{(i)}, \pi_{(i)} )\in \Dzs$ for $i=1,\dots,N+2$.  By construction, we find
\begin{align}
	G^{(N)}_k &\big( \psi^{(N+2)} \big)  \\
	&= \Big( {\langle \pi_{(k)} , \varphi_{(k+1)} \rangle_\Sigma - \langle \varphi_{(k)} , \pi_{(k+1)} \rangle_\Sigma}\Big) \cdot  \psi_{(1)} \otimes \cdots \otimes \psi_{(k-1)} \otimes \psi_{(k+2)} \otimes \cdots \otimes \psi_{(N+2)} \formspace, \notag
\end{align}
and hence we obtain the product
\begin{align} \label{eqn:big_ideal_element}
\big(0,0,\dots,0,-\i  G^{(N)}_k&\big( \psi^{(N+2)}\big)  , 0, (P_\tau - 1) \psi^{(N+2)} ,0,0,\dots\big) \notag \\
	&=\left(0,0,\dots,0,  \psi_{(1)} \otimes \cdots \otimes \psi_{(k-1)} ,0,0,\dots \right) \notag \\
	&\quad \cdot  \big( -\i G(\psi_{(k)}  , \psi_{(k+1)})  ,  0, \psi_{(k)} \otimes  \psi_{(k+1)} - \psi_{(k+1)} \otimes \psi_{(k)},0,0,\dots   \big)   \notag \\
	&\quad\cdot \left(0,0,\dots,0, \psi_{(k+2)} \otimes \cdots \otimes \psi_{(N+2)} ,0,0,\dots \right) \formspace,
\end{align}
where we have introduced $G(\psi_{(k)}  , \psi_{(k+1)}) = \langle \pi_{(k)} , \varphi_{(k+1)} \rangle_\Sigma - \langle \varphi_{(k)} , \pi_{(k+1)} \rangle_\Sigma$ as a shorthand notation in analogy to the classical propagator $\mathcal G (\cdot,\cdot)$.
Now the construction ``pays off'' and we find that element in the middle of the above product,
\begin{align}
&\big( -\i G(\psi_{(k)}  , \psi_{(k+1)})   ,  0, \psi_{(k)} \otimes  \psi_{(k+1)} - \psi_{(k+1)} \otimes \psi_{(k)},0,0,\dots   \big)   \notag \\
&=  \big( -\i ( \langle \pi_{(k)} , \varphi_{(k+1)} \rangle_\Sigma - \langle \varphi_{(k)} , \pi_{(k+1)} \rangle_\Sigma )  ,  0, \notag \\
&\quad\quad (\varphi_{(k)},\pi_{(k)}) \otimes  (\varphi_{(k+1)},\pi_{(k+1)}) - (\varphi_{(k+1)},\pi_{(k+1)}) \otimes (\varphi_{(k)},\pi_{(k)}),0,0,\dots   \big)  \formspace,
\end{align}
is clearly an element of $\ICCR$ by definition. Hence, using equation (\ref{eqn:big_ideal_element}), we find that $\big(0,0,\dots,0,-\i  G^{(N)}_k\big( \psi^{(N+2)}\big)  , 0, (P_\tau - 1)\psi^{(N+2)} ,0,0,\dots\big)$ is an element of the ideal $\ICCR$. Since an arbitrary element of $(\Dzs)^{\otimes(N+2)}$ can be written as a sum of simple tensor product elements $\psi^{(N+2)}$ and using again that $\ICCR$ is closed under addition, we find that
\begin{align}
\left(0,0,\dots,0,-\i G^{(N)}_k\big( \tilde{f}_m^{(N+2)}\big)  , 0, (P_\tau - 1) \tilde{f}_m^{(N+2)} ,0,0,\dots\right)
\end{align}
is an element of the ideal $\ICCR$. \\
By equation (\ref{eqn:inserting-commutator-zero}) we find that the continuous family of homogeneous elements associated with $(P_\tau -1) \tilde{f}_m^{(N+2)}$ can be written as a sum of a continuous family of elements in the ideal $\ICCR$, and a continuous family of homogeneous elements of degree $N$. By equation (\ref{eqn:commuted-elements}) and again using that the ideal is closed under addition and that the sum of homogeneous elements of degree $N$ is again a homogeneous element of degree $N$, we find that the homogeneous element associated with $\sum_{\sigma} (P_\sigma -1) f^{(N+2)}_m$ can be written as the sum of an element in the ideal $\ICCR$ plus a homogeneous element of degree $N$ and overall using equation (\ref{eqn:symmetrization_of_field}), we find the result that
\begin{align}
	(0,0,\dots,0,f_m^{(N+2)},0,0,\dots) &= (0,0,\dots,0,f_{m,\text{sym}}^{(N+2)},0,0,\dots) \notag \\
																&\phantom{M}+ g_m + (0,0,\dots,0,f_m^{(N)},0,0,\dots) \formspace,
\end{align}
where all the summands give rise to families that are continuous in $m$, $g_m \in \ICCR$ and $f_m^{(N)} \in (\Dzs)^{\otimes N}$. We can now apply the induction hypothesis to the homogeneous element of degree $N$ and write it as a sum of a continuous family of symmetric elements, and a continuous family of elements in the ideal $\ICCR$. Finally, once again using that the sum of symmetric elements is still symmetric, that the ideal $\ICCR$ is closed under addition, and that the sum of continuous families is continuous, we obtain the desired result.
\end{proof}
\begin{lemma}\label{lem:states_estimate}
Let $\W (E,\sigma)$ be the Weyl algebra generated over some real, pre-symplectic space $(E, \sigma)$. For all states $C \in \C(E,\sigma)$ it holds
\begin{align}
	\abs{C(F)} \leq 1 \formspace.
\end{align}
\end{lemma}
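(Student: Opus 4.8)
The plan is to read the bound directly off the projective positivity condition of Definition~\ref{def:states}, specialised to two points. First I would apply that inequality with $N=2$, the choice $F_1 = 0$, $F_2 = F$, and arbitrary coefficients $z_1, z_2 \in \IC$. Since $\sigma$ is anti-symmetric and bilinear we have $\sigma(F,F)=0$ and $\sigma(0,F)=\sigma(F,0)=0$, so every exponential phase collapses to $1$; combined with the normalization $C(0)=1$, the positivity axiom reduces to
\begin{align}
	\abs{z_1}^2 + \abs{z_2}^2 + \bar{z}_1 z_2\, C(F) + \bar{z}_2 z_1\, C(-F) \geq 0
\end{align}
for all $z_1, z_2 \in \IC$.

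The next step is to recognise this as a positive-semidefiniteness statement for a $2\times 2$ matrix. Because the left-hand side is required to be real and non-negative for every choice of $z_1, z_2$, the underlying coefficient matrix must be Hermitian; comparing off-diagonal entries then forces $C(-F) = \overline{C(F)}$. (Equivalently, this expresses the hermiticity $\omega_C(A^*) = \overline{\omega_C(A)}$ of the positive functional associated with $C$, applied to the Weyl element $W(F)$ using $W(F)^* = W(-F)$.) The inequality above therefore says precisely that the Hermitian matrix
\begin{align}
	M = \begin{pmatrix} 1 & C(F) \\ \overline{C(F)} & 1 \end{pmatrix}
\end{align}
is positive semi-definite.

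Finally, positive semi-definiteness of $M$ requires $\det M \geq 0$, that is $1 - \abs{C(F)}^2 \geq 0$, which is exactly the claimed bound $\abs{C(F)} \leq 1$. I do not expect a genuine obstacle here: the only point needing a little care is the reality/Hermiticity step yielding $C(-F) = \overline{C(F)}$, but this is immediate from the fact that the quadratic form in the positivity axiom takes real values for all coefficients. Note that the argument uses neither non-degeneracy of $\sigma$ nor any topological input, so it applies verbatim in the pre-symplectic setting as stated.
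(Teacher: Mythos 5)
Your proposal is correct and is in substance the same argument as the paper's: the paper evaluates $\omega_C(A^*A)\geq 0$ for $A=\mathbbm{1}+zW(F)$ and chooses the phase of $z$ optimally, which is exactly your $N=2$ positivity condition with $F_1=0$, $F_2=F$ read off as positive semi-definiteness of the $2\times 2$ Gram matrix (with the determinant criterion replacing the explicit choice of $z$, and the Hermiticity step $C(-F)=\overline{C(F)}$ made explicit rather than implicit). No gaps.
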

\begin{proof}
		Let $z \in \IC$ and $F \in E$. Define $A = \big( \mathbbm{1} + z W(F) \big) \in \W(E,\sigma)$. Then, for any state $C \in \C(E,\sigma)$ we find
		\begin{align}
			\omega_C (A^* A) = \Big(1+ z\bar{z} + 2 \Real{z\,C(F)}   \Big)\formspace.
		\end{align}
		We can now choose $z$ with $\abs{z} =1$, such that $z\,C(F) = - \abs{C(F)}$ and find the result
		\begin{align}
			0 & \leq 	\omega_C (A^* A)  = 2 -2\, \abs{C(F)} \notag\\
			\implies   \abs{C(F)}  & \leq 1 \formspace.
		\end{align}
		This completes the proof.
\end{proof}

\section{Acknowledgement}
I thank Prof. Stefan Hollands for giving me the opportunity to write this thesis in a fascinating, recent research area. I thank my research supervisor, Dr. Ko Sanders, in particular for his constant support and weekly meetings. Without his help in figuring out detailed calculations and his insight in the framework of QFTCS I would have surely been hopelessly lost in the subject. I would also like to thank Prof. Klaus Sibold for sharing his knowledge of Minkowski QFT and his understanding of the zero mass limit. Furthermore, I thank Florian Knoop and Janos Borst for proofreading parts of the manuscript and giving me helpful feedback. Finally, I thank my parents as without their support my carefree studying, and enjoying other pleasant parts of a student's life, would not have been possible. I know that this was quite a privileged situation.

\newpage

\section{Bibliotheks- und Selbstst\"andigkeitserkl\"arung}
\subsection*{Bibliothekserkl\"arung}
Hiermit erkläre ich, Maxmilian Schambach, mein Einverständnis, dass die von mir erstellte Masterarbeit in die Bibliothek der
Universit\"at Leipzig eingestellt werden darf.\\[2ex]

\noindent
Leipzig, den
\vspace{3cm}

\subsection*{Selbstst\"andigkeitserkl\"arung}
Hiermit versichere ich, Maximilian Schambach, dass ich diese Arbeit selbstständig verfasst und keine anderen, als die angegebenen Quellen und Hilfsmittel benutzt und die wörtlich oder inhaltlich übernommenen Stellen als solche kenntlich gemacht habe. \\[2ex]

\noindent
Leipzig, den


\newpage
\printglossaries
\newpage \noindent
\printbibliography[heading=bibintoc,title={References}]
%

\end{document}